\newtheorem{theorem}{Theorem}[section]
\newtheorem{notheorem}[theorem]{``Wishful-thinking Statement''}
\newtheorem{lemma}[theorem]{Lemma}
\newtheorem{claim}[theorem]{Claim}
\newtheorem{proposition}[theorem]{Proposition}
\newtheorem{observation}[theorem]{Observation}
\theoremstyle{definition}
\newtheorem{definition}[theorem]{Definition}
\newcommand{\N}{\mathbb{N}}
\newcommand{\Oh}{\ensuremath{\mathcal{O}}}
\renewcommand{\leq}{\leqslant}
\renewcommand{\geq}{\geqslant}
\renewcommand{\le}{\leqslant}
\newcommand{\famgenus}{\mathcal{F}_{\mathrm{genus}}}
\newcommand{\Vtw}{V_{\mathrm{tw}}}
\newcommand{\Vgenus}{V_{\mathrm{genus}}}
\newcommand{\funq}{\widehat{q}}
\newcommand{\funrtime}{\widehat{f}_{\mathrm{time}}}
\newcommand{\ctime}{c_{\mathrm{time}}}
\newcommand{\cgenus}{c_{\mathrm{genus}}}
\newcommand{\funtw}{f_{\mathrm{tw}}}
\newcommand{\fungenus}{f_{\mathrm{genus}}}
\newcommand{\funchain}{f_{\mathrm{chain}}}
\newcommand{\funtime}{f_{\mathrm{time}}}
\newcommand{\funcut}{f_{\mathrm{cut}}}
\newcommand{\funstep}{f_{\mathrm{step}}}
\newcommand{\tw}{\mathrm{tw}}
\newcommand{\bnd}{\partial}
\def\cqedsymbol{\ifmmode$\lrcorner$\else{\unskip\nobreak\hfil
\penalty50\hskip1em\null\nobreak\hfil$\lrcorner$
\parfillskip=0pt\finalhyphendemerits=0\endgraf}\fi} 
\newcommand{\cqed}{\renewcommand{\qed}{\cqedsymbol}}
\newcommand{\executeiffilenewer}[3]{%
\ifnum\pdfstrcmp{\pdffilemoddate{#1}}%
{\pdffilemoddate{#2}}>0%
{\immediate\write18{#3}}\fi%
} 
\newcommand{%
\executeiffilenewer{figures/.svg}{figures/.pdf}%
{inkscape -z -D --file=figures/.svg %
--export-pdf=figures/.pdf --export-latex}%
{\input{figures/.pdf_tex}}}[1]{%
\executeiffilenewer{figures/#1.svg}{figures/#1.pdf}%
{inkscape -z -D --file=figures/#1.svg %
--export-pdf=figures/#1.pdf --export-latex}%
{\input{figures/#1.pdf_tex}}}%
\newcommand{\cO}{\mathcal{O}\xspace}
\newcommand{\pot}{\Phi}
\newcommand{\bagthresh}{\Upsilon}
\newcommand{\bigadh}{\eta}
\newcommand{\bigbag}{\zeta}
\newcommand\Nat{\mathbb{N}}
\title{Fixed-parameter tractability of Graph Isomorphism\\in graphs with an excluded minor}
\author{
  Daniel Lokshtanov\thanks{
    University of California, Santa Barbara, USA, \texttt{daniello@ucsb.edu}.
    Supported by BSF award 2018302 and NSF award CCF-2008838.
  }
  \and
  Marcin Pilipczuk\thanks{
    Institute of Informatics, University of Warsaw, Poland, \texttt{marcin.pilipczuk@mimuw.edu.pl}.
 This work is 
a part of project CUTACOMBS that has received funding from the European Research Council (ERC) 
under the European Union's Horizon 2020 research and innovation programme (grant agreement No.~714704).
  }
  \and
  Micha\l{} Pilipczuk\thanks{
    Institute of Informatics, University of Warsaw, Poland, \texttt{michal.pilipczuk@mimuw.edu.pl}.
 This work is 
a part of projects TOTAL and BOBR that have received funding from the European Research Council (ERC) 
under the European Union's Horizon 2020 research and innovation programme (grant agreements No.~677651 and~948057, respectively).
  }
  \and 
  Saket Saurabh\thanks{
    Institute of Mathematical Sciences, India, \texttt{saket@imsc.res.in}, and
    Department of Informatics, University of Bergen, Norway, \texttt{Saket.Saurabh@ii.uib.no}. Supported by the European Research Council (ERC) under the European Union's Horizon 2020 research and innovation programme (grant agreement No. 819416), and Swarnajayanti Fellowship (No. DST/SJF/MSA01/2017-18).
  }
}
\date{}
\newcommand{\imp}[2]{#1^{\langle#2\rangle}}
\newcommand{\conn}{\mu}
\newcommand{\parent}{\mathrm{parent}}
\begin{document}

\begin{titlepage}
\def\thepage{}
\thispagestyle{empty}
\maketitle

\begin{textblock}{20}(0, 12.7)
\includegraphics[width=40px]{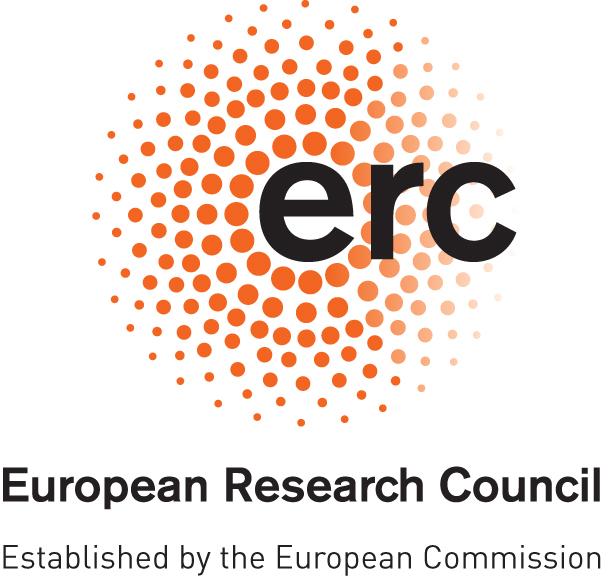}%
\end{textblock}
\begin{textblock}{20}(-0.25, 13.1)
\includegraphics[width=60px]{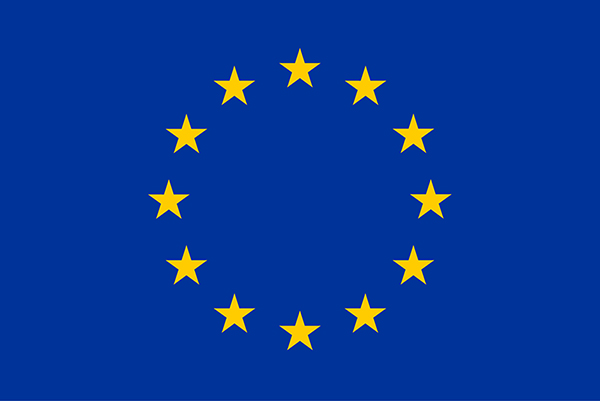}%
\end{textblock}

\begin{abstract}
We prove that \textsc{Graph Isomorphism} and \textsc{Canonization} in graphs excluding a fixed graph $H$ as a minor can be solved by an algorithm working in time $f(H)\cdot n^{\Oh(1)}$, where $f$ is some function. In other words, we show that these problems are fixed-parameter tractable when parameterized by the size of the excluded minor, with the caveat that the bound on the running time is not necessarily computable. The underlying approach is based on decomposing the graph in a canonical way into {\em{unbreakable}} (intuitively, well-connected) parts, which essentially provides a reduction to the case where the given $H$-minor-free graph is unbreakable itself. This is complemented by an analysis of unbreakable $H$-minor-free graphs, performed in a second subordinate manuscript, which reveals that every such graph can be canonically decomposed into a part that admits few automorphisms and a part that has bounded treewidth.

\end{abstract}
\newpage
\tableofcontents
\end{titlepage}

\section{Introduction}
The \textsc{Graph Isomorphism} problem is arguably the most widely known problem whose membership in $\mathsf{P}$ is unknown, but which is not believed to be $\mathsf{NP}$-hard. 
After decades of research, a quasi-polynomial time algorithm was proposed by Babai in 2015~\cite{Babai16}.

While the existence of a polynomial-time algorithm on general graphs is still elusive, the complexity of \textsc{Graph Isomorphism} has been well understood on several classes of graphs, where structural properties of graphs in question have been used to design polynomial-time procedures solving the problem.
Classic results in this area include an $n^{\Oh(d)}$-time algorithm
on graphs of maximum degree~$d$~\cite{BabaiL83,Luks82}, 
a polynomial-time algorithm for planar graphs~\cite{HopcroftT72,HopcroftT73,HopcroftW74,Weinberg},
an $n^{\Oh(g)}$-time algorithm on graphs of Euler genus~$g$~\cite{FilottiM80,Miller80},
an $\Oh(n^{k+4.5})$-time algorithm for graphs of treewidth~$k$~\cite{Bodlaender90},
an $n^{\Oh(k)}$-time algorithm for graphs of rankwidth~$k$~\cite{rankwidth1,rankwidth2},
an $n^{f(|H|)}$-time algorithm for graphs excluding a fixed graph $H$ as a minor~\cite{ponomarenko},
and an $n^{f(|H|)}$-time algorithm for graphs excluding a fixed graph $H$ as a topological
minor~\cite{marx-grohe} (where $f$ is some computable function). 

In all the results mentioned above, the degree of the polynomial bound on the running time depends
on the parameter --- maximum degree, genus, treewidth\footnote{In order to avoid an avalanche of definitions in the introduction, we give formal definitions for the concepts used in the paper in Section~\ref{sec:prelims}}, rankwidth, 
   or the size of the excluded (topological) minor --- 
in at least a linear fashion. Since the parameter can be as high as linear in the size of the graph,
for large values of the parameter the running time bound of the quasi-polynomial-time algorithm
of Babai~\cite{Babai16}, which works on general graph, is preferable.
During the last few years, there has been several successful attempts of bridging this gap by using the group-theoretic approach of Babai in conjunction with structural insight about considered graph classes. This led to algorithms with running time of the form $n^{\mathrm{polylog}(p)}$, where $p$ is any of the following parameters: maximum degree~\cite{GroheNS18}, Euler genus~\cite{Neuen20}, treewidth~\cite{Wiebking20}, and the size of a fixed graph $H$ excluded as a minor~\cite{GroheNW20}.
We refer to a recent survey~\cite{GN-survey} for an excellent exposition.

A parallel line of research is to turn the aforementioned algorithms into fixed-parameter tractable
algorithms for the parameters in question. That is, instead of a running time bound of the form $n^{f(p)}$ for a computable function $f$ and a parameter $p$, we would like to have an algorithm with running time bound
$f(p) \cdot n^c$ for a universal constant $c$. In other words, the degree of the polynomial governing the running time
bound should be independent of the parameter; only the leading multiplicative factor may depend on it.

In this line of research, the current authors developed an FPT algorithm for \textsc{Graph Isomorphism} parameterized by treewidth~\cite{LokshtanovPPS17}. This result has been subsequently improved and simplified~\cite{GroheNSW20}, as well as used to give a slice-wise logspace algorithm~\cite{ElberfeldS17}.
In 2015, Kawarabayashi~\cite{Kawarabayashi15} announced an FPT algorithm for \textsc{Graph Isomorphism} parameterized by the Euler genus of the input graph, with a linear dependency of the running time on the input size. Very recently, Neuen~\cite{Neuen21} proposed a different and simpler algorithm for this case, which runs in time $2^{\Oh(g^4\log g)}\cdot n^{c}$, for some constant $c$.
The recent survey~\cite{GN-survey} mentions obtaining FPT algorithms with parameterizations by the size of an excluded minor, maximum degree, and the size of an excluded topological minor as important open problems. (Note that the last parameter, the size of an excluded topological minor, generalizes the other two.)

\paragraph{Our contribution.} In this work we essentially solve the first of these open problems by proving the following statement.

\begin{theorem}\label{thm:main}
There exists an algorithm that given a graph $H$ and an $H$-minor-free graph $G$, works in time $f(H)\cdot n^{\Oh(1)}$ for some function $f$ and outputs a canonical labeling of $G$.
\end{theorem}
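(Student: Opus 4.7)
The plan is to prove Theorem~\ref{thm:main} via a two-level reduction exactly along the lines suggested by the abstract: first canonically decompose the input $H$-minor-free graph $G$ into \emph{unbreakable} parts (bags admitting no small balanced separator), and then invoke the structural analysis of unbreakable $H$-minor-free graphs to canonize each bag separately.

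First I would build a canonical rooted tree decomposition $(T,\{X_t\}_{t\in V(T)})$ of $G$ whose adhesions have size bounded by some function of $|H|$, and such that each torso $G[X_t]$ (with adhesion cliques added) is $\abu$-unbreakable for an appropriate $k = k(|H|)$. Candidate tools for the canonical construction include iterated extraction of canonically chosen balanced separators, or tangle-based decomposition schemes that by design produce automorphism-invariant outputs. The crucial requirement is that the decomposition is an isomorphism invariant of $G$, so that any isomorphism $\phi \colon G \to G'$ carries the decomposition of $G$ to that of $G'$; canonization then reduces, via a standard bottom-up dynamic program along $T$, to canonizing each torso together with the labels inherited from the adhesions to its neighboring bags.

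Second, for an unbreakable torso --- which is itself $H$-minor-free --- I would apply the companion manuscript's structure theorem referenced in the abstract: such a torso canonically decomposes into a part of bounded treewidth and a part admitting only a bounded (in $|H|$) number of automorphisms. For the bounded-treewidth part, I would invoke the existing FPT canonization algorithm for treewidth established by the present authors and subsequently simplified in the literature. For the few-automorphisms part, since the relevant automorphism group is small, I can explicitly enumerate its elements and take (for instance) the lexicographically least induced labeling. The two canonical forms are then glued along the canonical interface provided by the structure theorem, yielding a canonical label for the whole torso which respects the colors imposed on its adhesions.

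The main obstacle is the first step: producing a decomposition into unbreakable parts that is genuinely canonical (with all adhesions canonically identified) and computable in FPT time. Canonical decomposition theorems of this flavor are notoriously delicate, because any arbitrary tie-breaking among equally valid separators destroys canonicity; standard extremal-separator or tangle constructions typically require additional symmetry-breaking arguments to yield outputs invariant under $\mathrm{Aut}(G)$, and care is needed so that the process remains polynomial with only an $f(H)$ multiplicative overhead. A secondary difficulty is consistently propagating canonical labels through adhesion vertices shared by many bags, and composing them with the canonical form of each torso in a way that respects the boundary markings passed down from the global dynamic program on $T$. Once these pieces are in place, the bottom-up pass along $T$ follows the well-established isomorphism-via-tree-decomposition template.
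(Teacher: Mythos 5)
Your step (2) is essentially the paper's: the subordinate structure theorem splits an unbreakable $H$-minor-free graph into an isomorphism-invariant bounded-treewidth part and a ``rigid'' part carrying a small isomorphism-invariant family of labelings, and these are combined with the FPT bounded-treewidth canonizer and a lexicographic minimum. (One correction: the rigid part does not have a bounded automorphism group; it admits an isomorphism-invariant family of $f(H)\cdot n^{\Oh(1)}$ labelings, so you enumerate that family rather than $\mathrm{Aut}$, but the gluing you describe then goes through.)

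The genuine gap is step (1). You posit a canonical, FPT-computable tree decomposition of $G$ into unbreakable torsos with small adhesions, and offer ``iterated extraction of canonically chosen balanced separators'' or ``tangle-based schemes'' as candidate tools. This is precisely the statement the paper labels a \emph{wishful-thinking theorem}: the authors state explicitly that they do not know how to find even a single order-$\le k$ separation with both sides large in an isomorphism-invariant way in FPT time --- that subtask already looks as hard as canonization itself, because any tie-breaking among the exponentially many equally extremal separators destroys invariance, and known unbreakable-decomposition constructions (e.g.\ the lean-decomposition theorem) are inherently non-canonical. You correctly flag this as ``the main obstacle,'' but flagging it is not resolving it, and it is where essentially all of the paper's work lies. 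The paper circumvents the wishful statement by \emph{recursive understanding}: it never materializes a global canonical decomposition, but instead repeatedly finds one isomorphism-invariant star decomposition (via clique-atom decompositions, improved graphs, important-separator extensions, and stable separators), recursively canonizes the leaves, and then \emph{unpumps} each leaf --- replacing it by a constant-size representative that preserves folios, cut signatures, and extendable permutations --- so that the shrunken graph inherits unbreakability and can be handed to the base algorithm; a lifting procedure then transports the canonical labeling back. This is done in two stages (general $\to$ improved-clique-unbreakable $\to$ unbreakable), neither of which is visible in your outline. Without a concrete replacement for this machinery, your proof does not go through.
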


Here, a {\em{canonical labeling}} of $G$ is a labeling of the vertices of $G$ with labels from $[|V(G)|]=\{1,\ldots,|V(G)|\}$ so that for any two isomorphic graphs $G$ and $G'$, the mapping matching vertices with equal labels in $G$ and $G'$ is an isomorphism between $G$ and $G'$. Thus, our algorithm solves the more general \textsc{Canonization} problems, while \textsc{Graph Isomorphism} can be solved by computing the canonical labelings for both input graphs and comparing the obtained labeled graphs.

The caveat in Theorem~\ref{thm:main} is that we are not able to guarantee that $f$ is computable. This makes the algorithm formally fall outside of the usual definitions of fixed-parameter tractability (see e.g.~\cite{platypus}), but we still allow ourselves to call it an FPT algorithm for \textsc{Graph Isomorphism} and {\sc{Canonization}} parameterized by the size of an excluded minor. We stress that we obtain a single algorithm that takes $H$ on input, and not a different algorithm for every fixed $H$.

We now describe the ideas standing behind the proof of Theorem~\ref{thm:main}. First, we need to take a closer look at the FPT algorithm for \textsc{Graph Isomorphism} and \textsc{Canonization} for graphs of bounded treewidth~\cite{LokshtanovPPS17,GroheNSW20}. There, the goal is to obtain an \emph{isomorphism-invariant} tree decomposition of the input graph of width bounded in parameter; then, to test isomorphism of two graphs it suffices to test isomorphism of such decompositions.
Unfortunately, it is actually impossible to find one such isomorphism-invariant tree decomposition of the input graph; for instance, in a long cycle one needs to arbitrarily break symmetry at some moment. However, up to technical details, it suffices and is possible to find a small isomorphism-invariant family of tree decompositions.
To achieve this goal, the algorithm of~\cite{LokshtanovPPS17} heavily relies on the existing understanding of parameterized algorithms for approximating the treewidth of a graph.

The initial idea behind our approach is to borrow more tools from the literature on parameterized graph separation problems, in particular from the work on the technique of {\em{recursive understanding}}~\cite{KawarabayashiT11,ChitnisCHPP16,CyganLPPS19,lean-decomp}. This work culminated in the following decomposition theorem for graphs, proved by a superset of authors.

\begin{theorem}[\cite{lean-decomp}]\label{thm:lean}
Given a graph $G$ and an integer $k$, one can in time $2^{\Oh(k \log k)} n^{\Oh(1)}$ compute
a tree decomposition of $G$ where every adhesion is of size at most $k$
and for every bag $S$ the following holds: for every separation $(A,B)$ in 
$G$ of order at most $k$, either $|A \cap S| \leq |A \cap B|$ or $|B \cap S| \leq |A \cap B|$.
\end{theorem}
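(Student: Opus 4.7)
The plan is to compute the desired tree decomposition by iterative refinement, starting from the trivial decomposition with a single bag $V(G)$ and repeatedly splitting bags along violating separations until none remain. Formally, I would maintain a tree decomposition $(T,\{S_t\}_{t\in V(T)})$ of $G$ with every adhesion of size at most $k$, and call a bag $S_t$ \emph{bad} if there exists a separation $(A,B)$ of $G$ of order at most $k$ with $|A \cap S_t| > |A\cap B|$ and $|B\cap S_t| > |A\cap B|$. As long as a bad bag $t$ with witness $(A,B)$ exists, I would replace $t$ with two adjacent nodes $t_A$ and $t_B$ carrying bags $S_{t_A}=(A\cap S_t)\cup (A\cap B)$ and $S_{t_B}=(B\cap S_t)\cup (A\cap B)$; the neighbours of $t$ are redistributed between $\{t_A,t_B\}$ according to the side of $(A,B)$ on which their adhesion with $S_t$ lies. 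A short calculation shows that the new adhesion $S_{t_A}\cap S_{t_B}=A\cap B$ has size at most $k$, and the two strict inequalities defining badness yield $|S_{t_A}|<|S_t|$ and $|S_{t_B}|<|S_t|$.

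For this redistribution to be well-defined, no child adhesion of $t$ may straddle the separator $(A,B)$. Here I would rely on the submodularity of graph separators: given a child adhesion $X$ incident to $t$ with $|X|\leq k$, one can uncross $(A,B)$ with any separation that has $X$ on one of its sides to obtain a new witness separation of no larger order, lying entirely on one side of $X$. Performing these uncrossings one child adhesion at a time produces a witness separation compatible with all child adhesions of $t$. Termination follows from a Bellenbaum--Diestel-style potential argument: the multiset of bag sizes, compared lexicographically after sorting in non-increasing order, strictly decreases at each split, because each new bag is strictly smaller than the one it replaces. Since the potential is polynomially bounded, the total number of iterations is $n^{\Oh(1)}$.

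The main obstacle, and the source of the $2^{\Oh(k \log k)}$ factor in the running time, is algorithmic: given a bag $S_t$, decide in the prescribed time whether it is bad and produce a witness if so. My plan is to enumerate all \emph{profiles} of a candidate bad separation, namely ordered partitions of an auxiliary set of $\Oh(k)$ tokens describing which separator vertex each token represents and whether it lies in $A\setminus B$, $B\setminus A$, or $A\cap B$. There are $2^{\Oh(k\log k)}$ such profiles, and for each I would run a single max-flow computation between terminals tagged according to the profile to obtain a minimum-order separator realising it, and then check the badness inequalities against $S_t$. Correctness requires showing that any bad separation is realised by some profile, which uses a bounded-depth branching on the $\leq k$ adhesion vertices together with important-separator machinery. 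Combined with the polynomial iteration bound, this yields the claimed running time; the delicate parts I expect to be proving that the potential strictly decreases under uncrossed splits and showing that the profile enumeration is complete.
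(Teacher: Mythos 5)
You should first note that this paper does not prove Theorem~\ref{thm:lean} at all: it is imported verbatim from the cited work \cite{lean-decomp}, so there is no in-paper proof to compare against. Your sketch does follow the broad strategy of that work (a Bellenbaum-style iterative refinement that splits breakable bags along witnessing separations, combined with an FPT subroutine for detecting breakability), so the architecture is right. The local computations you do carry out are also correct: the new adhesion is $A\cap B$, and the strict inequalities $|A\cap S_t|>|A\cap B|$ and $|B\cap S_t|>|A\cap B|$ do give $|S_{t_A}|<|S_t|$ and $|S_{t_B}|<|S_t|$.

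The central gap is your termination bound. The sorted multiset of bag sizes under lexicographic comparison is a well-order, so it proves termination, but it is \emph{not} ``polynomially bounded'' in the sense you need: strictly decreasing chains in this order can have exponential length (a bag of size $n$ can split into two bags of size $n-1$, each of which splits again, and so on, yielding $2^{\Omega(n)}$ refinement steps). This is exactly the obstacle that separates the classical existential lean-decomposition theorem of Thomas/Bellenbaum from the algorithmic statement here, and overcoming it requires designing a different potential --- one that is polynomially bounded at the start and provably drops by a fixed amount per refinement --- which your argument does not supply. Two further steps are asserted rather than proved. First, uncrossing $(A,B)$ against one child adhesion can destroy both the compatibility with previously handled adhesions and the badness inequalities themselves; since $t$ may have $\Theta(n)$ children, you need a single argument that produces one witness separation compatible with \emph{all} of them while remaining bad, and submodularity alone does not hand you this (you must choose the right corner at each uncrossing and show the badness quantities are preserved). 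Second, the ``profile'' enumeration is not specified precisely enough to verify: the separator vertices are not known in advance, so it is unclear what a token ``represents,'' and the completeness claim (``any bad separation is realised by some profile'') is exactly the nontrivial content of the $2^{\Oh(k\log k)}$ subroutine. As written, the proposal is a plausible roadmap but not a proof.
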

The last property of Theorem~\ref{thm:lean} says that a bag $S$ of the tree decomposition
cannot be broken by a separation of order at most $k$ into two parts that both contain a large portion of $S$. This property is often called {\em{unbreakability}}.
More formally,
let us recall the notion of a $(q,k)$-unbreakable set introduced in~\cite{ChitnisCHPP16,CyganLPPS19}.
Given a graph $G$ and integers $q \geq k \geq 0$, we say that a set $S \subseteq V(G)$
is \emph{$(q,k)$-unbreakable} if for every separation $(A,B)$ in $G$ of order at most $k$,
   we have $|A \cap S| \leq q$ or $|B \cap S| \leq q$.
Theorem~\ref{thm:lean} of~\cite{lean-decomp} can be seen as a simpler and cleaner version
of an analogous result of~\cite{CyganLPPS19}, where the bags are only guaranteed to be 
$(q,k)$-unbreakable for some $q$ bounded exponentially in $k$, and adhesion sizes are also bounded only exponentially in $k$.

Theorem~\ref{thm:lean} and its predecessor from~\cite{CyganLPPS19} have been used to design parameterized algorithms for several graph separation problems~\cite{CyganLPPS19,lean-decomp}, most notably for {\sc{Minimum Bisection}}. In most cases, when looking for a deletion set of size at most $k$, one performs dynamic programming on the  tree decomposition provided by Theorem~\ref{thm:lean}, where every step handling a single bag can be solved using color-coding thanks to the unbreakability of the bag. More recent applications include a parameterized approximation scheme for {\sc{Min $k$-Cut}}~\cite{Lokshtanov0S20}, as well as algorithms and data structures for problems definable in first-order logic with connectivity predicates~\cite{PilipczukSSTV21}.

In this paper we propose to use the ideas behind the tree decomposition of Theorem~\ref{thm:lean} in the context of \textsc{Graph Isomorphism} and \textsc{Canonization} in order to provide a reduction to the case when the input graph is suitably unbreakable. 
The next statement is a wishful-thinking theorem that in some variant we were able to prove,
but in the end the result turned out to be too cumbersome to use. 
In essence, it says the following: in FPT time one can compute a tree decomposition with the same qualitative properties as the one provided by Theorem~\ref{thm:lean}, but in an isomorphism-invariant way. 

\begin{notheorem}\label{thm:intro:decomp}
There is a computable function $f$ and an algorithm that,
given a graph $G$ and an integer $k$, in time $f(k) n^{\Oh(1)}$ computes
an isomorphism-invariant tree decomposition of $G$ such that
\begin{itemize}[nosep]
\item the sizes of adhesions are bounded by $f(k)$; and
\item every bag is either of size at most $f(k)$ or is $(f(k),k)$-unbreakable. 
\end{itemize}
\end{notheorem}

Instead of proving and using (a formal and correct version of) ``Theorem''~\ref{thm:intro:decomp}, we resort to the strategy used in the earlier works, namely recursive understanding.
Here, in some sense we compute a variant of the tree decomposition of ``Theorem''~\ref{thm:intro:decomp}
on the fly, shrinking the already processed part of the graph into constant-size representatives. Importantly, the run of this process is isomorphism invariant.
An overview of this approach is provided in Section~\ref{sec:overview}, while the full statement and its proof are in Section~\ref{sec:DS} and Section~\ref{sec:recursiveUnderstanding}.

At first glance,
``Theorem''~\ref{thm:intro:decomp} may seem unrelated to the setting of graphs excluding a fixed minor. However, let us recall one of the main results of the Graph Minor Theory, namely the Structure Theorem for graphs excluding a fixed minor, proved by Robertson and Seymour in~\cite{GM16}. 
Informally speaking, this statement says that if a graph $G$ excludes a fixed graph $H$ as a minor, then $G$ admits a tree decomposition (called henceforth the \emph{RS-decomposition}) in which the sizes of adhesions are bounded in terms of $H$ and the torso of every bag is nearly embeddable in a surface of Euler genus bounded in terms of $H$.
Without going into the precise definition of ``nearly'', let us make the following observation:
if we are given an $H$-minor-free graph $G$ and we apply to $G$ the algorithm of Theorem~\ref{thm:lean}
with $k$ equal to the expected bound on adhesion sizes in an RS-decomposition,
then the resulting tree decomposition should roughly resemble the RS-decomposition.
In particular, 
one may expect that the ``large'' bags of the decomposition of Theorem~\ref{thm:lean}
should also be nearly embeddable in some sense, as (due to unbreakability) they cannot be partitioned much finer in an RS-decomposition whose adhesions are of size at most $k$. 

For the \textsc{Graph Isomorphism} and \textsc{Canonization} problems,
the main issue now is that the output of Theorem~\ref{thm:lean} is not
necessarily isomorphism-invariant; this is where the wishful-thinking ``Theorem''~\ref{thm:intro:decomp} comes into play.
In particular, one could expect that the ``large'' bags of the decomposition
of ``Theorem''~\ref{thm:intro:decomp} would be nearly embeddable in some sense.
We are able to carry this intuition through 
the recursive understanding point of view, and show the following.

\begin{theorem}[simplified variant of Theorem~\ref{thm:mainReduction}]\label{thm:intro:toUnbreakable}
For every graph $H$ there exists a function $q_H\colon \mathbb{N}\to \mathbb{N}$ such that the following holds. Suppose there exists an integer $k$ and a canonization algorithm ${\mathcal{A}}$ that works on all graphs that are $H$-minor-free and $(q_H(i),i)$-unbreakable for all $i\leq k$. Then there is also a canonization algorithm ${\mathcal{B}}$ for all $H$-minor-free graphs. Furthermore, ${\mathcal{B}}$ can be chosen to run in time upper bounded by $g(H)\cdot n^{\Oh(1)}$ plus the total time taken by at most $g(H)\cdot n^{\Oh(1)}$ invocations of ${\mathcal{A}}$ on $H$-minor-free graphs with no more than $n$ vertices, where $g$ is some function.
\end{theorem}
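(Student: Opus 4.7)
The plan is to build $\mathcal{B}$ from $\mathcal{A}$ by instantiating the recursive understanding paradigm sketched in the introduction. The main invariant is to work on a \emph{pointed} $H$-minor-free graph $G'$ equipped with a linearly ordered boundary set $B$ of size at most $k$, where $k$ is a constant depending only on $H$, chosen to accommodate the adhesion bound from the Robertson--Seymour structure theorem for $H$-minor-free graphs. The function $q_H(i)$ is defined to grow sufficiently fast in $i$ so that the recursive arithmetic --- in which bounded-size gadgets progressively accumulate at the frontier --- still leaves enough room to certify unbreakability at the base of the recursion.

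The core recursive procedure on $(G', B)$ proceeds as follows. If $|V(G')|$ is bounded by an appropriate function of $H$ and $k$, then $G'$ is canonized by brute force. Otherwise, we search for an isomorphism-invariant family of separations $(X, Y)$ of $G'$ of order $i \leq k$ such that $B$ lies entirely on one side and the opposite side contains more than $q_H(i)$ vertices; such a family must exist whenever the pointed graph is not $(q_H(i), i)$-unbreakable for some $i \leq k$. For every separation in the family we recurse on the smaller side (with $X \cap Y$ as its new ordered boundary), obtain a canonical labeling of that side, and contract it to a bounded-size gadget attached to $X \cap Y$ that faithfully encodes its canonical type. The recursion then continues on the reduced graph. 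If no such family can be found, then the pointed graph is $(q_H(i), i)$-unbreakable for every $i \leq k$ in the required sense: we encode $B$ and its order as an isomorphism-invariant gadget attached to $G'$, feed the resulting $H$-minor-free graph to $\mathcal{A}$, and relay the output labeling. Across the branches induced by the enumerated family, we select the lexicographically smallest canonical labeling as the final output.

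The main obstacle is guaranteeing strict isomorphism-invariance at every branching point and gadget replacement. Concretely, (i) the family of candidate separations enumerated at each recursive step must be defined so that any isomorphism $\phi \colon G_1 \to G_2$ of two input graphs maps the family for $G_1$ bijectively to the family for $G_2$; (ii) the gadget encoding a canonized child must represent its isomorphism type in a combinatorially rigid and invariant way; and (iii) the gadget used to present $B$ to $\mathcal{A}$ must simultaneously preserve $H$-minor-freeness, so that $\mathcal{A}$'s preconditions remain valid. For (i), we enumerate a polynomial-size invariantly defined family of ``most-balanced minimum'' $(q_H(i), i)$-witness separations; for (ii) and (iii), we use a bounded-size gadget whose structure is rich enough to rigidly distinguish the boundary vertices in the canonical order, yet simple enough to avoid creating an $H$-minor. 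The latter design is the most delicate technical point, since both the adhesion $X \cap Y$ and the global boundary $B$ must be ``rigidified'' without polluting the minor-freeness of the enclosing graph.

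The runtime analysis is then a standard recursion-tree argument. Each internal node of the recursion either replaces a piece of more than $q_H(i)$ vertices by a gadget of bounded size, or terminates by issuing a single call to $\mathcal{A}$; the branching factor is the size of the enumerated family, which is polynomial in $n$. Consequently, the total number of calls to $\mathcal{A}$ is bounded by $g(H) \cdot n^{\Oh(1)}$, the inputs to $\mathcal{A}$ all have at most $n$ vertices and remain $H$-minor-free, and the remaining bookkeeping --- enumeration of separations, gadget construction, and lexicographic selection --- runs in time $g(H) \cdot n^{\Oh(1)}$, as promised.
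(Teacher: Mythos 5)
Your proposal reproduces the high-level recursive-understanding template from the paper's overview, but it glosses over exactly the points that the paper identifies as the real obstacles, and two of your steps fail as stated. First, the branching scheme is not isomorphism-invariant in any way you have justified, and it does not exist in the form you need: a witnessing separation for $(q_H(i),i)$-breakability need not have $B$ entirely on one side, and there is no reason a ``polynomial-size invariantly defined family of most-balanced minimum witness separations'' exists --- there can be exponentially many minimum-order separations, and the paper explicitly states that finding even one large--large separation invariantly looks as hard as canonization itself. The paper's actual route never enumerates witness separations at all: it constructs a \emph{single} isomorphism-invariant central bag $B$ via the Elberfeld--Schweitzer decomposition into $k$-atoms and minimal clique separators of the \emph{improved} graph (hence the intermediate notion of improved-clique-unbreakability), and, in the unbreakable-boundary case, via $k$-important separator extensions and stable separations; the components of $G-B$ then form a canonical star decomposition. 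This also breaks your runtime analysis: branching over a polynomial family and recursing on both sides of each separation, then taking a lexicographic minimum over branches, yields a recursion tree of superpolynomial size, whereas the paper's star decomposition makes the children partition the non-boundary vertices, so a potential argument bounds the recursion tree by $\Oh(n^2)$ nodes.

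Second, your gadget replacement is impossible as described: a gadget of size bounded in $H$ and $k$ cannot ``faithfully encode the canonical type'' of an arbitrarily large child, by pigeonhole. The paper's representatives preserve only the $f$-folio, the (weak or strong) cut signature, and the set of extendable boundary permutations; non-isomorphic children that receive the same representative are then distinguished by \emph{coloring} the representative's vertices with the rank of the child in the lexicographic order of canonized children, and a separate lifting lemma shows that canonical labelings of the reduced graph pull back to canonical labelings of $G$. Using colors rather than extra graph structure is also what makes preservation of $H$-minor-freeness and of the cut structure unproblematic, which is the ``delicate point'' you flag but do not resolve. Finally, you do not address why the graph eventually handed to $\mathcal{A}$ is $(q_H(i),i)$-unbreakable for all $i\le k$: in the paper this requires showing that unbreakability of the center bag, together with the fact that each replaced leaf is ``unbreakability-tied'' to its boundary, implies unbreakability of the whole unpumped graph --- it is not simply the absence of a witness separation in the current pointed graph.
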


The proof of Theorem~\ref{thm:intro:toUnbreakable} is sketched in Section~\ref{sec:overview}. The full version --- Theorem~\ref{thm:mainReduction} --- is proved in Sections~\ref{sec:DS} and~\ref{sec:recursiveUnderstanding}. In essence, the full version differs from the one above in that
it is uniform in $H$: if there is a single algorithm ${\mathcal{A}}$ that works for all $H$, then there is one resulting algorithm ${\mathcal{B}}$
that works for all $H$.

With Theorem~\ref{thm:intro:toUnbreakable} understood, 
it suffices to ``only'' provide a \textsc{Canonization} algorithm working on graphs that are $H$-minor-free and $(q(i),i)$-unbreakable for all $i\leq k$.
We comment here on the quantifier order: the algorithm needs to accept any unbreakability guarantee $q$, and then we have to choose the threshold $k$ to which this guarantee will be used based on $H$ and $q$. However,
both $k$ and the values of $q$ (up to $q(k)$) can be included as parameters in the final running time bound. That said, we prove the following statement.

\begin{theorem}[simplified variant of Theorem~\ref{thm:unbrk-minor}]\label{thm:intro:wrapper}
For every graph $H$ and function $q\colon \mathbb{N}\to \mathbb{N}$ there exists a constant $k$ and an algorithm that given an $H$-minor-free graph $G$ with a promise that $G$ is $(q(i),i)$-unbreakable for all $i\leq k$, outputs a canonical labeling of $G$ in time $f(\sum_{i=1}^{k} q(i))\cdot n^{\Oh(1)}$, where $f$ is a computable function.
\end{theorem}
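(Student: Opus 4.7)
The plan is to combine a structural canonical decomposition of unbreakable $H$-minor-free graphs, which is the content of the subordinate manuscript and is invoked as a black box here, with existing canonization algorithms for bounded-treewidth and bounded-genus graphs. The decomposition partitions an unbreakable $H$-minor-free graph canonically into a "rigid" piece whose automorphism group has size bounded by a function of $H$, and a "tree-like" piece of treewidth bounded in terms of $H$ and $q$, glued along a canonical interface of bounded size.

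First, I would fix $k$ to be the unbreakability threshold required by the structural decomposition; this constant depends only on $H$. Given $G$ satisfying the unbreakability promise, invoking the decomposition yields a canonical triple $(R, T, I)$ consisting of a rigid subgraph $R$, a bounded-treewidth subgraph $T$, and their interface $I$ of size bounded by a function of $H$. Next, I would canonize $R$ by brute-force enumeration over its automorphism group, which has size bounded by a function of $H$, selecting the lexicographically minimum labeling. In parallel, I would canonize $T$ using the FPT canonization algorithm for graphs of bounded treewidth from~\cite{LokshtanovPPS17,GroheNSW20}. Finally, I would align these two labelings via the canonical interface $I$ and assemble them into a canonical labeling of $G$. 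The running time is dominated by the structural decomposition and the bounded-treewidth canonization, both of the form $f(\sum_{i\le k} q(i)) \cdot n^{\Oh(1)}$.

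The main obstacle is the structural decomposition itself, deferred to the subordinate manuscript. Obtaining a Robertson-Seymour-style nearly-embeddable structure canonically is delicate: the standard construction involves many arbitrary choices — which vertices to declare as apices, how to embed the host graph in the surface, how to cut vortex cycles to flatten them, and how to orient the surface. The role of the unbreakability hypothesis is precisely to force most of these choices: any alternative choice of apex set or cut cycle would yield a small balanced separation, contradicting $(q(i),i)$-unbreakability for $i$ up to the adhesion sizes used by the decomposition. What remains uncanonical after exploiting unbreakability is the bounded automorphism group of the nearly-embeddable host — essentially, the symmetries of a bounded-genus surface together with a bounded number of apices — which becomes the rigid piece $R$; everything that the unbreakability hypothesis cannot force, namely the bounded-pathwidth vortex interiors and any "hanging" subgraphs glued via small adhesions, ends up in the bounded-treewidth piece $T$.

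Once the decomposition is accepted as a black box, the algorithm above is essentially straightforward, and the running-time bound follows since $|R|$'s automorphism group and $|I|$ are bounded by functions of $H$, while $\tw(T)$ is bounded by a function of $H$ and $q$, so the bounded-treewidth canonization subroutine runs in time $f(\sum_{i\le k} q(i)) \cdot n^{\Oh(1)}$ as required.
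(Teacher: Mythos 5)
Your high-level skeleton --- split $G$ canonically into a ``rigid'' part and a bounded-treewidth part, canonize each, and combine --- is the same as the paper's (Section~\ref{sec:unbrk-minor}, deducing Theorem~\ref{thm:unbrk-minor} from Theorem~\ref{thm:rigid}). However, the black box you posit is not the one that is actually available, and the differences break your assembly step. First, the rigid part $\Vgenus$ does \emph{not} have automorphism group of size bounded by a function of $H$; what Theorem~\ref{thm:rigid} provides is an isomorphism-invariant family $\famgenus$ of labelings of $\Vgenus$ of size $\fungenus(\cdot)\cdot n^{\Oh(\cgenus)}$, i.e.\ \emph{polynomially many} candidates (already a large planar triangulation forces $\Theta(n)$ choices of a starting flag). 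So ``brute-force over the automorphism group of $R$'' is not a step you can carry out as described --- computing that group is not given, and the correct move is to iterate over the polynomially many labelings in $\famgenus$ and take a lexicographic minimum at the very end. Second, there is no interface $I$ of size bounded by a function of $H$: the two parts may be joined by many edges (e.g.\ vortices attach to the embedded part along long boundary segments), and the statement of Theorem~\ref{thm:rigid} makes no bounded-interface claim.

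Because of this, ``canonize $T$ in parallel and then align via $I$'' does not work: a canonical labeling of $G[\Vtw]$ computed in isolation carries no information about how $\Vtw$ attaches to $\Vgenus$, and there is no small interface through which to realign it afterwards. The paper's fix is to fold the attachment information into the treewidth instance \emph{before} canonizing it: for each $\lambda\in\famgenus$, every $v\in\Vtw$ is colored with the set $\{\lambda(w): w\in \Vgenus\cap N_G(v)\}$, and only then is the colored bounded-treewidth canonization of~\cite{LokshtanovPPS17} invoked; the resulting labeling of $\Vtw$ is concatenated with $\lambda$ on $\Vgenus$, and the lexicographically smallest labeled graph over all $\lambda\in\famgenus$ is returned. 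With that modification your argument goes through, and the running-time accounting is as you state (one treewidth canonization per member of $\famgenus$, hence $f(\sum_{i\le k} q(i))\cdot n^{\Oh(1)}$ total). Your paragraph speculating on how unbreakability forces the near-embedding is reasonable intuition for the subordinate paper but plays no role in this deduction.
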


Note that Theorem~\ref{thm:intro:wrapper} combined with Theorem~\ref{thm:intro:toUnbreakable} yield a non-uniform version of Theorem~\ref{thm:main}. Again, the full version of Theorem~\ref{thm:intro:wrapper} --- Theorem~\ref{thm:unbrk-minor} --- is an appropriately uniform statement that together with Theorem~\ref{thm:mainReduction} proves Theorem~\ref{thm:main} in full generality.

Let us now discuss the proof of Theorem~\ref{thm:unbrk-minor}.
Observe that if we set $k$
higher than the bound on adhesion sizes in an RS-decomposition for $H$-minor-free graphs, then
we expect any RS-decomposition of a $(q(k),k)$-unbreakable $H$-minor-free graph $G$
to have one central bag (with a nearly embeddable torso) that may be huge, while all other bags
have sizes at most $q(k)$. 
Thus, it seems natural that a procedure for  \textsc{Canonization} on nearly embeddable graphs can be lifted to a procedure working on graphs as above.

Proving Theorem~\ref{thm:intro:wrapper} requires involved analysis of near-embeddings of highly
unbreakable graphs with a fixed excluded minor. We perform it in a subordinate paper~\cite{subordinate}.
The main result of~\cite{subordinate} can be phrased in the following simplified form.

\begin{theorem}[simplified version of the main result of~\cite{subordinate}, stated formally as Theorem~\ref{thm:rigid}]\label{thm:intro:rigid}
For every graph $H$ and function $q\colon \mathbb{N}\to \mathbb{N}$ there exists a constant $k$,
   computable functions $\funtime,\funtw,\fungenus\colon \mathbb{N}\to \mathbb{N}$,
   and an algorithm that given a graph $G$ that is $H$-minor-free and $(q(i),i)$-unbreakable for all $i\leq k$, works in time bounded by $\funtime(\sum_{i=1}^{k} q(i))\cdot n^{\Oh(1)}$ and computes a partition of 
$V(G) = \Vtw \uplus \Vgenus$ and a nonempty family $\famgenus$ of bijections $\Vgenus \mapsto [|\Vgenus|]$ 
such that 
\begin{enumerate}[nosep]
\item the partition $V(G) = \Vtw \uplus \Vgenus$ is isomorphism-invariant;
\item $G[\Vtw]$ has treewidth bounded by $\funtw(\sum_{i=1}^{k} q(i))$;
\item $|\famgenus| \leq \fungenus(\sum_{i=1}^{k} q(i)) \cdot n^{\Oh(1)}$; and
\item $\famgenus$ is isomorphism-invariant.
\end{enumerate}
\end{theorem}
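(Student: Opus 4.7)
The plan is to start from the Robertson--Seymour structure theorem: applied to an $H$-minor-free graph $G$, it yields a tree decomposition $\mathcal{T}$ in which every adhesion has size at most some constant $c_H$ and every torso is nearly embeddable in a surface of Euler genus at most $g_H$, i.e.\ consists of a main part drawn on the surface together with a bounded number of apices and vortices of bounded width. I would set $k$ slightly larger than $c_H$; the $(q(i),i)$-unbreakability assumption then immediately implies that at most one bag $B^\star$ of $\mathcal{T}$ has more than $q(k)$ vertices, since every edge of $\mathcal{T}$ gives a separation of $G$ of order at most $c_H \leq k$, forcing one side to have at most $q(c_H)$ vertices. Structurally, $G$ becomes a single nearly-embedded central torso together with peripheral attachments, each of bounded size.

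I would then read off the partition as follows. Put into $\Vtw$ all peripheral attachment vertices together with all apices and all vertices lying in the interior of a vortex, and let $\Vgenus$ be the vertices drawn on the surface of the central torso. Then $G[\Vtw]$ has bounded treewidth: peripheral attachments decompose along $\mathcal{T}$ with small bags, apices contribute only $\Oh(1)$ vertices, and each vortex has bounded pathwidth. For each rotation system realizing the embedding of $G[\Vgenus]$ on the surface I would produce a labeling via a canonical surface traversal from a canonically chosen rooted dart; since a bounded-genus embedded graph admits only polynomially many combinatorial embeddings, $|\famgenus|$ stays polynomial in $n$.

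The main obstacle, and the true content of the subordinate paper~\cite{subordinate}, is making the entire construction isomorphism-invariant. Robertson--Seymour decompositions are notoriously non-canonical: apex sets, vortex decompositions, and even the ambient surface can vary between near-embeddings of the same graph. The key conceptual point is that strong unbreakability rigidifies the situation, because the flexibility in a generic RS-decomposition arises from small separators, which $(q(i),i)$-unbreakability essentially forbids from cutting non-trivially into the graph. I expect the hardest step to be making this rigidity quantitative: proving that under $(q(i),i)$-unbreakability for all $i \leq k$, the near-embedding of the central torso is unique up to a small, controllable group of combinatorial symmetries, via a careful analysis of how vortices, apices, and dongles interact with small separations, and an argument that any alternative near-embedding would either expose a forbidden small separator or create an $H$-minor. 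Once rigidity is established, the isomorphism invariance of both $\Vtw \uplus \Vgenus$ and of $\famgenus$ reduces to reading off canonical data from the essentially unique embedded skeleton, and the claimed running time follows because each step is either a single polynomial-time subroutine or an enumeration over a polynomial-sized set of embeddings.
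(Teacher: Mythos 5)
There is a genuine gap: what you have written is a plan whose decisive step is announced rather than proved. You correctly identify that the entire difficulty is making the Robertson--Seymour near-embedding isomorphism-invariant, and you correctly guess that unbreakability is what rigidifies it --- this matches the intuition the paper itself offers in the introduction. But you then write that you ``expect the hardest step to be making this rigidity quantitative'' and assert that once rigidity is established everything follows. That rigidity argument \emph{is} the theorem; it is the main result of the subordinate manuscript~\cite{subordinate}, and the present paper does not prove Theorem~\ref{thm:intro:rigid} either --- it imports it as Theorem~\ref{thm:rigid} and only derives Theorem~\ref{thm:unbrk-minor} from it. So there is nothing in your proposal that can be checked against a proof: the analysis of how vortices, apices, and the surface part interact with small separations, and the argument that alternative near-embeddings are excluded, are exactly the missing content.

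Two concrete points show the sketch as written would not go through even as an outline. First, you set $k$ ``slightly larger than $c_H$,'' but the formal statement (Theorem~\ref{thm:rigid}) requires an \emph{unbreakability chain}: a sequence $k_0=\funcut(H)$, $k_{i+1}=\funstep(q(k_i)+k_i)$ of length $\funchain(H)$, so the threshold at each scale depends on the unbreakability guarantee $q$ at the previous scale. A single threshold just above the adhesion bound is not what the rigidity analysis consumes, and this is visible already in the quantifier structure of the statement ($k$ depends on $q$, not only on $H$). Second, the claim that a bounded-genus embedded graph has only polynomially many combinatorial embeddings is false in general (low-connectivity graphs, e.g.\ trees, have exponentially many); polynomial bounds require 3-connectivity or high representativity, which again must be extracted from unbreakability before the enumeration of $\famgenus$ can be bounded. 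Both issues are instances of the same problem: every quantitative claim in your outline secretly presupposes the rigidity lemma you have not supplied.
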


The statement of Theorem~\ref{thm:intro:rigid} is quite technical, so let us provide more intuitive explanation. We are given an $H$-minor-free graph $G$ with sufficiently strong 
unbreakability properties. Then the claim is that $G$ can be vertex-partitioned  in an isomorphism-invariant way
into two parts $\Vtw$ and $\Vgenus$. The part $\Vgenus$ is {\em{rigid}} in the sense that the subgraph induced by it admits a polynomially-sized isomorphism-invariant family of labelings. The other part $\Vtw$ may have multiple automorphisms, but the subgraph induced by it has bounded treewidth. 
Informally speaking, the bounded-treewidth part $\Vtw$ always contains all the ``near-'' elements of a 
near-embedding: vortices, apices, etc., while the rigid part $\Vgenus$ contains the core of the embedded part; its rigidity is witnessed by the embedding.

Theorem~\ref{thm:intro:wrapper} follows quite easily from Theorem~\ref{thm:intro:rigid} combined with the FPT canonization procedure on graphs of bounded treewidth~\cite{LokshtanovPPS17}
(formal arguments can be found in Section~\ref{sec:unbrk-minor}). So the main weight of argumentation is contained in the proof of Theorem~\ref{thm:intro:rigid}. This proof is
the main result of the subordinate paper~\cite{subordinate}.

\medskip

Finally, let us remark that Theorem~\ref{thm:mainReduction} --- the full version of Theorem~\ref{thm:intro:toUnbreakable} --- is stated and proved in terms of graph classes defined by forbidding {\em{topological minors}}. Thus, it can be readily used also to reduce the {\sc{Canonization}} problem on $H$-topological-minor-free graphs to the same problem on suitably unbreakable $H$-topological-minor-free graphs. For the latter setting, one could apply the Structure Theorem of Grohe and Marx~\cite{marx-grohe} to reason that unbreakable $H$-topological-minor-free graphs are either close to being nearly embeddable (and this case is treated in this paper), or they essentially have bounded maximum degree. In spirit, this reduces the problem of finding an FPT algorithm for {\sc{Canonization}} on $H$-topological-minor-free graphs to the problem of finding such an algorithm on graphs of bounded maximum degree. We refrain from expanding this discussion in this paper, as it is tangential to the main direction of our work.

\section{Overview}\label{sec:overview}

In this section we provide an informal overview of the proof of Theorem~\ref{thm:intro:toUnbreakable}; the formal statement and full proof can be found in Section~\ref{sec:DS}. For convenience, we restate Theorem~\ref{thm:intro:toUnbreakable} here. One important simplification made in the statement below is that we only consider excluding a {\em single} graph $H$ as a {\em minor}, instead of a finite set of graphs as topological minors. 

\medskip
\noindent
{\bf Theorem~\ref{thm:intro:toUnbreakable} (restated, simplified variant of Theorem~\ref{thm:mainReduction}) }{\em 
For every graph $H$ there exists a function $q_H\colon \mathbb{N}\to \mathbb{N}$ such that the following holds. Suppose there exists an integer $k$ and a canonization algorithm ${\mathcal{A}}$ that works on graphs that are $H$-minor-free and $(q_H(i),i)$-unbreakable for all $i\leq k$. Then there is also a canonization algorithm ${\mathcal{B}}$ for $H$-minor-free graphs. Furthermore, ${\mathcal{B}}$ runs in time upper bounded by $g(H)\cdot n^{\Oh(1)}$ plus the total time taken by at most $g(H)\cdot n^{\Oh(1)}$ invocations of ${\mathcal{A}}$ on $H$-minor-free graphs with no more than $n$ vertices, where $g$ is some function.
}

\smallskip
The proof of Theorem~\ref{thm:intro:toUnbreakable} is based on the {\em recursive understanding} technique, pioneered in~\cite{GroheKMW11,KawarabayashiT11}, and explicitly defined and further refined in~\cite{ChitnisCHPP16}.

We first discuss the essence of the recursive understanding technique, as it has been applied in the past, in the context of some generic problem. Much of this discussion does {\em not} apply to {\sc Canonization}, but it helps motivate our approach and naturally introduces the crucial notions of representatives and replacement. 

Recursive understanding based algorithms proceed as follows. Either the input graph is already $(q, k)$-unbreakable, then we are already done, because the goal is to reduce the original problem to the problem on unbreakable graphs. 
Otherwise there exists a separation $(A, B)$ of order at most $k$ such that both $A$ and $B$ have size $q$, which is much bigger than $k$.
The algorithm calls itself recursively on $G[A]$, and after thoroughly analyzing the graph $G[A]$, produces a {\em representative} $G_A^R$ for it. 

The representative $G_A^R$ is a tiny graph on at most $f(k) < q$ vertices, and all of the vertices of $G_A^R$ are new vertices that are not present in $G$, except that $G_A^R$ shares the vertices $D = A \cap B$ with $G$ (we will require that $G_A^R[D] = G[D]$.) 
The algorithm then {\em replaces} $G[A]$ in $G$ with $G_A^R$. This means that the algorithm removes all vertices of $A \setminus B$ from $G$, and attaches $G_A^R$ to $D$ instead. We call the resulting graph $G^\star$. Note that $|V(G^\star)| \leq |B| + q$. 
The algorithm then calls itself recursively on $G^\star$, and finally lifts the solution on $G^\star$ to a solution of $G$.

We upper bound the running time of the algorithm by a function $T(n, k)$ of the number of vertices and the parameter $k$ - the size of the separators. $T(n, k)$ satisfies the following recurrence.
\begin{align}\label{eqn:recUndRecurrence}
T(n, k) \leq S(n, k) + T(|A|, k) + T(|B|+f(k), k) + L(n, k)
\end{align}
Here, $S(n, k)$ denotes the time to find the separation $(A, B)$ or decide that such a separation does not exist, $T(|A|,k)$ is the running time of the algorithm on $G[A]$, $T(|B|+f(k),k)$ is the running time of the algorithm on $G^\star$, and $L(n, k)$ is the time it takes to lift a solution to $G^\star$ back to a solution of $G$.
A simple recurrence analysis of Equation~\ref{eqn:recUndRecurrence} shows that $T(n, k) \leq (S(n,k) + L(n,k)) \cdot g(k) \cdot n$. 
Thus, as long as we are able to find a separator in FPT time, lift solutions of $G^\star$ back to $G$ in FPT time, and solve the base case of $(q, k)$ unbreakable graphs in FPT time, we get an FPT algorithm for the original problem.

Being able to execute the above scheme rests on a crucial property of the relationship between the graph $G[A]$ and the representative $G_A^R$ that we replace it with. 
In particular a solution to the reduced graph $G^\star$ needs to be useful for the lifting procedure in order to recover the solution to $G$. This is tricky, because the algorithm needs to compute $G_A^R$ from $G[A]$ without looking at $G[B]$.
In particular this means that the reduction from $G[A]$ to $G_A^R$ has to work for {\em every} possible $G[B]$. 
This is the reason for the name ``{\em recursive understanding}'' - the procedure needs to ``understand'' $G[A]$ to such an extent that it can efficiently lift a solution of $G^\star$ to a solution of $G$, pretty much independently of what $G^\star$ is (since $G[A]$ and $G^\star$ only have $k$ vertices in common). This typically means that our algorithm should not just solve the original problem, but a generalized version of the problem that takes as input the graph $G[A]$, together with the set $D \subseteq A$, and the promise that ``the rest of the graph'' will attach to $G[A]$ (and $G_A^R$) only via $D$. The task is to compute sufficient information about $(G[A], D)$ to be able to produce $G_A^R$.

While this task may appear much more difficult than the original problem, for many computational problems the ``recursive understanding'' task is no harder than the original problem in the formal sense that an FPT algorithm for the original problem implies the existence of an FPT algorithm for recursive understanding~\cite{LokshtanovR0Z18}. However no such results were previously known for {\sc Graph Isomorphism} or {\sc Canonization}, and it was far from obvious that such a statement would even be true for these problems. 

The first main hurdle in applying recursive understanding for {\sc Canonization} is that we need to be able to find a separation $(A, B)$ of order at most $k$ and satisfying $\min\{|A|, |B|\} > q$ in an {\em isomorphism invariant} way. 
Here, by isomorphism invariant we mean that if two graphs $G$ and $\hat{G}$ are isomorphic, then if we run the algorithm on $G$ and find a separation $(A, B)$ of $G$, and on $\hat{G}$, and find a separation $(\hat{A}, \hat{B})$ of $\hat{G}$ then {\em every} isomorphism from $G$ to $\hat{G}$ maps $A$ to $\hat{A}$ and $B$ to $\hat{B}$. 
We have no idea whatsoever how to find a single separation $(A, B)$ of $G$ of order at most $k$ with $\min\{|A|, |B|\} \geq q$ (or determine that such a separation does not exist) in an isomorphism invariant way in FPT time. Indeed, the task of finding such a separation looks as difficult as canonizing $G$.

For this reason we turn to an easier problem, finding a single set $B \subseteq V(G)$ in an isomorphism invariant way such that (i) $B$ is either small (that is has size at most $f(H)$), or it has some nice properties - for example $B$ could be $(q, k)$-unbreakable in $G$ (for some $q$ and $k$ upper bounded by a function of $H$), and (ii) every connected component $C$ of $G-B$ has at most $k$ neighbors in $B$.

It is convenient to formalize the decomposition of $G$ into $B$ and the remaining components in terms of tree decompositions. We follow the notation of~\cite{marx-grohe} for tree decompositions, for formal definitions see Section~\ref{sec:treeDecAndStarDeco}. We say that a {\em star decomposition} is a tree decomposition $(T, \chi)$ where $T$ is a star rooted at its center node $b$ (the unique node of degree larger than $1$ in $T$).
The {\em bags} of the decomposition are the sets $\{\chi(v) : v \in V(T)\}$ and the {\em adhesions} are the sets $\{ \sigma(\ell) = \chi(\ell) \cap \chi(b) : \ell \in V(T) \setminus \{b\} \}$. 

We will frequently need to find an isomorphism invariant star decomposition $(T, \chi)$ of $G$ with center bag $b$ such that $\chi(b)$ has some nice properties, and all adhesions have size at most $k$. Here isomorphism invariant means that if $G$ and $\hat{G}$ are isomorphic with star decompositions $(T, \chi)$ and $(\hat{T}, \hat{\chi})$ respectively, then for every isomorphism $\phi$ of $G$ to $\hat{G}$ there exists an isomorphism $\phi_T : V(T) \rightarrow V(\hat{T})$ such that for every vertex $u \in V(G)$ and node $v \in V(T)$ it holds that $u \in \chi(v)$ if and only if $\phi(u) \in \hat{\chi}(\phi_T(v))$. Informally, every isomorphism $\phi$ that maps $G$ to $\hat{G}$ also maps $(T, \chi)$ to $(\hat{T}, \hat{\chi})$.
We will defer the discussion of how exactly we find such decompositions to the end of this section, because there are many different cases, and how exactly we find $(T, \chi)$ depends on the context. 

\paragraph{Isomorphism Invariant Recursive Understanding.}
We are now ready to flesh out the overall scheme used in our algorithm. From now on, we will assume that all parameters that we introduce are upper bounded by some function of $H$, unless explicitly stated otherwise. 

We are trying to find a canonical labeling for a graph $G$. However, just as in the recursive understanding scheme discussed in the beginning of this section, our algorithm will also take as input a distinguished set $D$. The interpretation is again that $G$ is not necessarily the entire instance that we are trying to solve, and that the remaining ``hidden'' part of the instance attaches to $G$ via $D$.

The first step is to compute an isomorphism invariant star decomposition $(T, \chi)$ of $G$ with central bag $b \in V(T)$, such that $B = \chi(b)$, $D \subseteq B$, and $B$ either has size upper bounded by $f(H)$ or has some nice properties (such as being $(q, k)$-unbreakable in $G$, or that for every pair of vertices $u$, $v$ in $B$ there is no $u$-$v$ separator of size at most $k$.) Additionally we require all adhesions of $(T, \chi)$ to have size at most $f(k)$. 
Once we have identified $(T, \chi)$ we run the algorithm recursively on $G[\chi(\ell)]$ for every leaf $\ell$ in $V(T)$. Having ``{\em understood}'' each of the leaves we need to use this understanding to ``{\em understand}'' $G$ with distinguished set $D$. 

At this point we need to unpack precisely what we mean by understanding in the context of canonization. 
Because we know that every automorphism of $G$ maps $(T, \chi)$ to itself we know that the center bag of $(T, \chi)$ maps to itself, while every leaf $\ell$ of $(T, \chi)$ maps in its entirety to itself, or to some other leaf $\ell'$ of $(T, \chi)$.
We would like to use the ``understanding'' from the recursive calls in order to determine which leaves $\ell'$ the leaf $\ell$ can map to. Additionally, an automorphism of $G$ might map $\ell$ to itself, but permute the vertices of the adhesion $\sigma(\ell)$. Therefore we also need to be able to use the ``understanding'' from the recursive calls to determine the set of permutations from $\sigma(\ell)$ to $\sigma(\ell)$ that can be completed to an automorphism of $G[\chi(\ell)]$.

It can be shown that both of these goals can be achieved if we formalize the ``understanding'' task for the leaves as follows: for every bijection $\pi : \sigma(\ell) \rightarrow [|\sigma(\ell)|]$ we need to find a canonical labeling of $G[\chi(\ell)]$ that coincides with $\pi$ on $\sigma(\ell)$.
Because the distinguished set $D$ is actually one such $\sigma(\ell)$ in the recursive call corresponding to the parent of the current call in the recursion tree, the formalization of the ``understanding'' task for $G$ with distinguished set $D$ becomes to compute for every bijection $\pi : D \rightarrow [|D|]$ a canonical labeling of $G$ that coincides with $\pi$ on $D$.

The recursive scheme (compute $(T, \chi)$, understand all leaves recursively) now leaves us with the following task.
We have as input a graph $G$ and vertex set $D$ and an isomorphism invariant star decomposition $(T, \chi)$ with central bag $b$, such that $D \subseteq \chi(b)$, together with a set of canonical labelings 
$$\{\lambda_{\pi, \ell} : \ell \in V(T) \setminus \{b\} \mbox{ and } \pi : \sigma(\ell) \rightarrow [|\sigma(\ell)|] \mbox{ is a bijection}\}.$$
Here $\lambda_{\pi, \ell}$ is a canonical labeling of $G[\chi(\ell)]$ that coincides with $\pi$ on $\sigma(\ell)$.
The goal is to compute for every permutation $\pi : D \rightarrow [D]$ a canonical labeling $\lambda_\pi$ of $G$ that coincides with $\pi$ on $D$.

At this point some case work is in order. The easy case is when the size of the center bag $B$ is upper bounded by a function of $H$. In this case we can proceed in a manner almost identical to a single step of the dynamic programming step in the algorithm of~\cite{LokshtanovPPS17} or even~\cite{Bodlaender90}. The idea is to simply brute force over all permutations of $B$. 

The hard case is when $B$ is large. In this case we need to use the fact that $B$ has some ``nice'' property. For now, let us not worry precisely what the property is, because the next crucial step of the algorithm works independently of it. 

\paragraph{Unpumping and Lifting.}
The case when the central bag $B$ is large is the only step of the algorithm in which we actually employ the ``replacing a graph by its representative'' part of the recursive understanding technique. 
The {\em unpumping} procedure takes as input $G$ with a distinguished set $D$, $(T, \chi)$, as well as the set of labelings $\{\lambda_{\pi, \ell} \}$ and produces the graph $G^\star$ by replacing every leaf $G[\chi(\ell)]$ with distinguished set $\sigma(\ell)$ by representative $G_\ell^R$ whose size is upper bounded by a function only of $k$.
We will shortly discuss in more detail the properties of the representatives, but before that let us state the properties of the unpumped graph $G^\star$ that we need.

\begin{enumerate}
\setlength{\itemsep}{-2pt}
    \item (Lifting) For every permutation $\pi : D \rightarrow D$ a canonical labeling $\lambda_\pi^\star$ of $G^\star$ that coincides with $\pi$ on $D$ can be lifted in polynomial time to a canonical labeling $\lambda_\pi$, which coincides with $\pi$ on $D$, of $G$.
    
    \item (Feasibility) If $G$ is $H$-minor-free then $G^\star$ is $H$-minor-free.
    
    \item (Maintains Cut Properties Of $B$) If $B$ is $(q, k)$-unbreakable in $G$ then $B$ is also $(q, k)$-unbreakable in $G^\star$. If, for every $u, v \in B$ there is no $u$-$v$ separator of size at most $k$ in $G$ then there is no such separator in $G^\star$.
    
    \item (Small Leaves) $G^\star$ has a star decomposition $(T, \chi^\star)$ with the same decomposition star $T$ as $G$, with $\hat{\chi}(b) = \chi(b)$ and $|\hat{\chi}(\ell)| \leq g(k)$ for some function $g$. {\em Remark}: This function $g$ is possibly not computable. 
\end{enumerate}

\paragraph{Representatives For Canonization.}
Before proceeding to discussing how the algorithm uses the unpumping/lifting procedure we need to discuss how it is able to guarantee the key properties of $G^\star$. These properties follow by how we define the representative $G_\ell^R$ for each leaf $(G[\chi(\ell)], \sigma(\ell))$. Specifically, when we compute a representative of $\ell$ we need to ensure that:

\begin{itemize}
\item (Lifting I) The isomorphism class of $G_\ell^R$ depends only on the isomorphism class of $(G[\chi(\ell)], \sigma(\ell))$.

\item (Lifting II) For all permutations $\iota : \sigma(\ell) \rightarrow \sigma(\ell)$, $(G[\chi(\ell)], \sigma(\ell))$ has an automorphism that coincides with $\iota$ on $\sigma(\ell)$ if and only if $G_\ell^R$ does.

\item (Feasibility) $G_\ell^R$ contains exactly the same set of minors on at most $|V(H)|$ vertices, and these minors intersect with $\sigma(\ell)$ in exactly the same way in  $G_\ell^R$ and in  $G[\chi(\ell)]$.

\item (Maintains Cut Properties) For every separation $(L, R)$ of $G[\sigma(\ell)]$ the minimum order of a separation of $G[\chi(\ell)]$ that coincides with $(L, R)$ on $\sigma(\ell)$ and the minimum order of a separation of $G_\ell^R$ that coincides with $(L, R)$ on $\sigma(\ell)$ are the same.

\item (Small Leaves) $|V(G_\ell^R)|$ is upper bounded by a function only of $H$ and $k$, and therefore only of $H$.
\end{itemize}

It is fairly easy to see that the properties (Feasibility), (Maintains Cut Properties) and (Small Leaves) of unpumping follow from the corresponding properties of representatives. 
On the other hand, an attentive reader should be worried about (Lifting) for the unpumping property to follow from properties (Lifting I) and (Lifting II) of representatives.
Indeed, the graph $G[\chi(\ell)]$ is possibly a large graph on up to $n$ vertices, while the size of $G_\ell^R$ is bounded as a function of $H$.
Thus, by pigeon hole principle non-isomorphic graphs $G[\chi(\ell)]$ may end up having the same representative $G_\ell^R$.

How can we now ensure that such leaves do not get mapped to each other by automorphisms of $G^\star$? Well, before unpumping we can use the canonical labelings of $G[\sigma(\ell)]$ for every leaf $\ell$ to determine which leaves $\ell$ and $\ell'$ have isomorphic graphs $G[\sigma(\ell)]$ and $G[\sigma(\ell')]$. Then we encode non-isomorphism of non-isomorphic leaves in $G$ by assigning in $G^\star$ colors to vertices in the representatives
so that the representatives of non-isomorphic leaves receive distinct colors. 
Specifically, assign an ``identification number" to each isomorphism class of leaves $G[\sigma(\ell)]$. For each leaf $\ell$, think of colors as integers and add to the color of (every vertex in) $G^\star[\sigma(\ell)]$ the identification number of the isomorphism class of $G[\sigma(\ell)]$.

\paragraph{$G^\star$ inherits Properties of $B$.}
As we already have alluded to before, the properties of $B$ that we will be maintaining are either that $B$ is $(q, k)$-unbreakable in $G$ and in $G^\star$, or even stronger, that there is no $x$-$y$ separator of size at most $k$ in $G$ and in $G^\star$ for every pair of vertices $x$, $y$ in $B$. 

This, together with the property (Small Leaves) can be used to show that $G^\star$ itself is $(q', k)$-unbreakable for $q'$ that is only a function of $q$ and the size of the leaves (in the first case), or is is $(q, k, k)$-{\em improved clique unbreakable} (in the second case). Here $(q, k, k)$-improved clique unbreakable means that there is no separation $(L, R)$ of $G$ of order at most $k$ such that $\min\{|L|, |R|\} \geq q$ and for every pair $x$, $y$ of vertices in $L \cap R$ the minimum order of an $x$-$y$ separation is at least $k+1$.
Thus, the good property of the bag $B$ in $G$ is in an approximate sense inherited by the entire unpumped graph $G^\star$.

\paragraph{Overall Scheme, Again.}
Our reduction from general graphs to unbreakable graphs proceeds in two steps. In the first step we reduce from general graphs to improved clique-unbreakable graphs, while in the second step we reduce from improved clique-unbreakable graphs to unbreakable graphs.

In the first step, the reduction from general graphs to improved clique-unbreakable graphs, all we still have to do is to design an algorithm that given a general graph $G$ and set $D$, outputs an isomorphism invariant star decomposition $(T, \chi)$ of $G$ such that the central bag $B$ of $(T, \chi)$ satisfies that there is no $x$-$y$ separator of size at most $k$ in $G$ for every pair of vertices $x$, $y$ in $B$. If we can achieve this, then the argument in the previous section yields that $G^\star$ is improved clique unbreakable, and we have achieved a reduction to improved clique unbreakable graphs.  This can be done by essentially observing that it was already done by  Elberfeld and Schweitzer~\cite{ElberfeldS17}. However we need to modify their algorithm to work in FPT time rather than log space and $O(n^k)$ time.

In the second case we have $(G,D)$ and the promise that $G$ is improved clique unbreakable.  What we need is to find a star decomposition $(T,\chi)$ where center $B$ is unbreakable, because then $G^\star$ will be unbreakable as desired.

\paragraph{Finding next bag in improved clique unbreakable case.}

Instead of considering the $(q, k, k)$-improved clique unbreakable case, we will make the simplifying assumption that $G$ has no separator $(L, R)$ of order at most $k$, such that for every $x$, $y$ in $L \cap R$ the minimum order of an $x$-$y$ separation is at least $k+1$. 

Again we are given as input a graph $G$ with a distinguished vertex set $D$ and the task is to find a isomorphism invariant star decomposition $(T, \chi)$ with central bag $B$ such that $D \subseteq B$ and $B$ is $(q, k)$-unbreakable in $G$.

There are two cases, either $D$ is small ($|D| \leq k$) or $D$ is large. If $D$ is small and there exists a pair $x$, $y$ such that the minimum order of an $x$-$y$ separation in $G$ is at most $k$, we can proceed in a similar manner as the corresponding case in the algorithm of~\cite{LokshtanovPPS17}. 

On the other hand, if $D$ is small and no such pair exists, then $D$ yields precisely the type of separation in $G$ that we just assumed does not exist. Therefore this case does not apply. This is the only place in the algorithm where we use the assumption that $G$ is $(q, k, k)$-improved clique unbreakable.

If $B$ is large and breakable (there exists a separation $(L, R)$ of order $i < \min(|L \cap B|,|R \cap B|)$ then we can use the ``notion of stable separators'' defined in \cite{LokshtanovPPS17} and make progress. Finally, when $B$ is large and unbreakable, we observe that the idea of important separator extension used in designing an algorithm for {\sc Minimum Bisection}~\cite{CyganLPPS13} gives an unbreakable bag $B$ in an isomorphism invariant way. 

In either case we are able to find a star decomposition $(T, \chi)$ where the central bag $B$ is either small or unbreakable. When it is small we can brute force, while when it is big and unbreakable, the unpumped graph $G^\star$ is unbreakable and so we can solve the problem for the unbreakable graph $G^\star$ and lift the solution back to $G$ using the lifting algorithm. This concludes the proof sketch of Theorem~\ref{thm:intro:toUnbreakable}.

\section{Preliminaries}\label{sec:prelims}
In most cases, we use standard graph notation, see e.g.~\cite{diestel}. For a graph $G$ we use $V(G)$ and $E(G)$ to denote the set of vertices and edges respectively. We use $n$ to denote $|V(G)|$ and $m$ to denote $|E(G)|$. For an undirected graph $G$ the neighborhood of a vertex $v$ is denoted by $N(v)$ and defined as $N(v) = \{u \colon uv \in E(G)\}$. The {\em closed neighborhood} of v$ $ is $N[v] = N(v) \cup \{v\}$. For a vertex set $S$, $N[S] = \bigcup_{v \in S} N[v]$ while $N(S) = N[S] \setminus S$. For a vertex set $S$ the {\em subgraph of $G$ induced by $S$} is the graph $G[S]$ with vertex set $S$ and edge set $\{uv \in E(G) ~:~ \{u,v\} \subseteq S \}$. {\em Deleting} a vertex set $S$ from $G$ results in the graph $G - S$ which is defined as $G[V(G) \setminus S]$. Deleting an edge set $S$ from $G$ results in the graph $G - S $ with vertex set $V(G)$ and edge set $E(G) \setminus S$. Deleting a single vertex or edge from a graph $G$ is the same as deleting the set containing only the vertex or edge from $G$. A {\em subgraph} of $G$ is a graph $H$ which can be obtained by deleting a vertex set and an edge set from $G$. {\em Contracting} an edge $uv$ in a graph $G$ results in the graph $G / uv$ obtained from $G - \{u, v\}$ by adding a new vertex $w$ and adding edges from $w$ to all vertices in $N(\{u, v\})$. Contracting a set $S$ of edges results in the graph $G / S$, obtained from $G$ by contracting all edges in $S$. It is easy to see that the order of contraction does not matter and results in the same graph, and therefore the graph $G / S$ is well defined. If $H  = G / S$ for some edge set $S$ we say that $H$ is a {\em contraction} of $G$. $H$ is a {\em minor} of $G$ if $H$ is a contraction of a subgraph of $G$. 
A {\em partition} of a set $S$ is a family ${\cal P}$ of pairwise disjoint subsets of $S$ such that $S = \bigcup_{P \in {\cal P}} P$.

A {\em topological minor model} of a graph $H$ in a graph $G$ is a pair $(f, P)$ where $f : V(H) \rightarrow V(G)$ is an injective function and $P$ is a function that for every edge $uv \in E(H)$ outputs a path $P(uv)$ in $G$ from $f(u)$ to $f(v)$ such that $P(uv)$ does not contain any vertices from $f(V(H))$ other than $f(u)$ and $f(v)$, and for every pair $uv$, $u'v'$ of edges in $E(H)$ the two paths $p(uv)$ and $p(u'v')$ do not have any internal vertices in common. If there exists a topological minor model of $H$ in $G$ we say that $H$ is a {\em topological minor} of $G$, and we denote it by $H \sqsubseteq G$.

\paragraph{Functions, Inverses, Compositions}
A {\em function} $f \colon D \rightarrow R$ assigns a value $f(d)$ from $R$ to every element $d \in D$. The sets $R$ and $D$ are called the domain and range of $f$ respectively. A function $f \colon D \rightarrow R$ is {\em injective} if for every pair of distinct elements $d_1, d_2 \in D$ we have $f(d_1) \neq f(d_2)$. It is {\em surjective} if for every $r \in R$ there exists a $d \in D$ so that $f(d) = r$, and it is {\em bijective} if it is both injective and surjective. 

We will frequently abuse notation and a consider a function $f \colon D \rightarrow R$ to act also on subsets of $D$. In that case, for $S\subseteq D$ we write $f(S) = \{f(d)~\colon~ d \in S\}$. The {\em inverse} of a function $f$ is the function $f^{-1}$ that assigns to every element $r \in R$ the {\em set} $\{d \in D ~\mid~ f(d) = r\}$. For injective functions $f$ we have that $f^{-1}(r)$ is either the empty set or a set of size $1$. For injective functions we will often assign a different meaning to the function $f^{-1}$, namely it is the function $f^{-1} \colon f(D) \rightarrow D$ where for any $r\in f(D)$ we set $f^{-1}(r)$ to be the unique $d\in D$ satisfying $f(d)=r$. Which of the meanings we assign to the inverse will be always clear from context.

For two functions $f \colon D \rightarrow R$ and $g : R \rightarrow R'$ we define the {\em composition} of $f$ and $g$ to be the function $g \circ f$ where $(g \circ f)(x) = g(f(x))$ for every $x \in D$. We will often drop the $\circ$ and simply write $gf$ for the function $g \circ f$. We will sometimes compose functions $f$ and $g$ even if the domain of $g$ does not contain the full range of $f$. In particular if  $f \colon D \rightarrow R_f$ and $g \colon D_g \rightarrow R_g$ are functions then $g \circ f$ (and $gf$) is the function with domain $D_{gf} = \{d \in D ~\mid~ f(d) \in D_g\}$ so that $gf(d) = g(f(d))$ for every $d \in D_{gf}$.

\subsection{Separations, Separators, and Clique Separators} We recall here standard definitions and facts about separations and separators in graphs.
For two disjoint vertex sets $A,B \subseteq V(G)$, by $E(A,B)$ we denote the set of edges with one endpoint in $A$ and the second endpoint in $B$.

\begin{definition}[\bf separation, separator]
A pair $(A,B)$ where $A \cup B = V(G)$ is called a {\em separation}
if $E(A \setminus B, B \setminus A) = \emptyset$.
The {\em{separator}} of $(A,B)$ is $A\cap B$ and the {\em order} of a separation $(A,B)$ is $|A \cap B|$.
\end{definition}

Let $X,Y\subseteq V(G)$ be two not necessarily disjoint subsets of vertices. Then a separation $(A,B)$ is an {\em{$X$-$Y$ separation}} if $X\subseteq A$ and $Y\subseteq B$. The classic Menger's theorem states that for given $X,Y$, the minimum order of an $X$-$Y$ separation is equal to the maximum vertex-disjoint flow between $X$ and $Y$ in $G$. This minimum order is denoted further $\conn(X,Y)$. Moreover, among the $X$-$Y$ separations of minimum order there exists a unique one with inclusion-wise minimal $A$, and a unique one with inclusion-wise minimal $B$. We will call these minimum-order $X$-$Y$ separations {\em{pushed towards $X$}} and {\em{pushed towards $Y$}}, respectively.

For two vertices $x,y\in V(G)$, by $\conn(x,y)$ we denote the minimum order of a separation $(A,B)$ in $G$ such that $x\in A\setminus B$ and $y\in B\setminus A$. Note that if $xy\in E(G)$ then such a separation does not exist; in such a situation we put $\conn(x,y)=+\infty$. Again, classic Menger's theorem states that for nonadjacent $x$ and $y$, the value $\conn(x,y)$ is equal to the maximum number of internally vertex-disjoint $x$-$y$ paths that can be chosen in the graph. The notions of minimum-order separations pushed towards $x$ and $y$ are defined analogously as before.
If the graph we are referring to is not clear from the context, we put it in the subscript by the symbol $\conn$.

We emphasize here that, contrary to $X$-$Y$ separations, in an $x-y$ separation we require $x \in A \setminus B$ and $y \in B \setminus A$, that is, the separator $A \cap B$
cannot contain $x$ nor $y$. This, in particular, applies to minimum-order $x-y$ separations pushed towards $x$ or $y$.
It is known that the Ford-Fulkerson algorithm can actually find the aforementioned separations efficiently, as explained in the statement below.
\begin{lemma}\label{lem:f-f}
Given a graph $G$, two vertex sets $X,Y \subseteq V(G)$, and an integer $k$,
one can in time $\Oh(k(|V(G)| + |E(G)|))$ find the minimum-order $X$-$Y$-separation pushed towards $X$,
or correctly conclude that the minimum order of such separation is greater than $k$.
An analogous claim holds for vertices $x,y \in V(G)$ and the minimum-order $x$-$y$ separations pushed towards $x$.
\end{lemma}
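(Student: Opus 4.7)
The plan is to encode the problem as a vertex-capacitated $s$-$t$ maximum flow instance and invoke Ford–Fulkerson with a cap on the number of augmenting iterations. First I would build a directed auxiliary digraph $G'$ as follows: replace each $v\in V(G)\setminus(X\cup Y)$ by an arc $v^{\mathrm{in}}\to v^{\mathrm{out}}$ of unit capacity, keep vertices of $X\cup Y$ as single nodes of capacity $+\infty$ (or replicate them but saturate the internal arc), and for every edge $uv\in E(G)$ add arcs $u^{\mathrm{out}}\to v^{\mathrm{in}}$ and $v^{\mathrm{out}}\to u^{\mathrm{in}}$, each of unit capacity. Add a super-source $s$ with infinite-capacity arcs into every $x^{\mathrm{in}}$ for $x\in X$, and a super-sink $t$ with infinite-capacity arcs from every $y^{\mathrm{out}}$ for $y\in Y$. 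Vertices in $X\cap Y$ are connected both from $s$ and to $t$ directly. Then an $X$-$Y$ separation of order $\le k$ in $G$ corresponds exactly to an $s$-$t$ cut in $G'$ of capacity $\le k$ whose cut edges are vertex-internal arcs; Menger's theorem ensures that minimum such cuts coincide with $\conn(X,Y)$.

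Next I would run Ford–Fulkerson, looking for at most $k+1$ augmenting paths by BFS/DFS in the residual graph. Each augmentation costs $\Oh(|V(G)|+|E(G)|)$ and increases the flow by one unit, so after at most $k+1$ rounds either we certify $\conn(X,Y)>k$ (when the $(k+1)$-st augmentation succeeds) or we terminate with a maximum flow $F$ of value $\conn(X,Y)\le k$. This yields the promised $\Oh(k(|V(G)|+|E(G)|))$ running time.

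To extract the separation pushed towards $X$, I would then perform one more BFS/DFS in the residual graph of $F$ starting from $s$, and let $R\subseteq V(G')$ be the set of reachable nodes. Define $A=\{v\in V(G)\ :\ v^{\mathrm{in}}\in R\}\cup(X\cap R)$ and $B=V(G)\setminus(A\setminus S)$, where $S=A\cap B$ is recovered from the saturated internal arcs on the frontier of $R$. Standard max-flow/min-cut reasoning shows $(A,B)$ is an $X$-$Y$ separation of order exactly $\conn(X,Y)$, and the key uniqueness/minimality fact is that $R$ is the unique inclusion-wise minimal source-side of a minimum $s$-$t$ cut; this translates to $A$ being inclusion-wise minimal among minimum $X$-$Y$ separators, i.e.\ the separation pushed towards $X$. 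For the vertex version with $x,y\in V(G)$, I would apply the same construction with $X=\{x\}$, $Y=\{y\}$, with the additional stipulation that the internal arcs of $x$ and $y$ themselves have infinite capacity (so they never appear in the cut); if $xy\in E(G)$ this immediately returns $\conn(x,y)=+\infty>k$, matching the convention in the preliminaries.

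The main technical point to get right is the \emph{pushed towards $X$} property, since correctness and efficiency of Ford–Fulkerson itself is textbook. I would argue it via the standard lattice structure of minimum $s$-$t$ cuts: the set of source-sides of all minimum cuts forms a distributive lattice under union/intersection, and the source-side obtained from residual reachability is its unique minimum element. Translating back through the vertex-splitting gadget then gives exactly the inclusion-wise minimal $A$ among all minimum-order $X$-$Y$ separations, which is the separation pushed towards $X$. The vertex case is entirely analogous.
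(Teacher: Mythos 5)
Your overall strategy -- vertex-splitting into a unit-capacity flow network, running Ford--Fulkerson for at most $k+1$ augmentations, and reading off the pushed separation from residual reachability -- is exactly the standard argument this lemma appeals to (the paper states it without proof as a known property of Ford--Fulkerson), and the lattice/minimal-source-side reasoning for the ``pushed towards $X$'' property is correct. However, your construction assigns the capacities backwards for the set version. Under the paper's definition, an $X$-$Y$ separation only requires $X\subseteq A$ and $Y\subseteq B$, so the separator $A\cap B$ \emph{may} contain vertices of $X$ and $Y$, and it \emph{must} contain $X\cap Y$; the paper explicitly emphasizes that it is only the two-vertex $x$-$y$ version in which the endpoints are excluded from the separator. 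Consequently every vertex, including those in $X\cup Y$, must receive a unit-capacity internal arc. With your infinite capacities on $X\cup Y$, the flow value you compute is not $\conn(X,Y)$: if $X\cap Y\neq\emptyset$ there is no finite $s$-$t$ cut at all and your algorithm reports $\conn(X,Y)>k$, whereas in fact $\conn(X,Y)\leq\min(|X|,|Y|)$ always (take $A=X$, $B=V(G)$). The infinite-capacity stipulation belongs only to the $x$-$y$ case, where you correctly state it.

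A secondary issue: you give the edge arcs $u^{\mathrm{out}}\to v^{\mathrm{in}}$ unit capacity. To guarantee that the minimum cut -- and in particular the cut induced by residual reachability -- consists solely of vertex-internal arcs (so that it translates back to a vertex separator), the edge arcs should have infinite capacity, or at least capacity $k+1$. With unit edge arcs the frontier of the reachable set $R$ can contain edge arcs, and then the set you extract need not be an $X$-$Y$ separator of the claimed order. Both issues are routine to repair, but as written the reduction computes the wrong quantity in the set case.
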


We now move to the notion of a \emph{clique separation}.

\begin{definition}[\bf clique separation]
A separation $(A,B)$ is called a {\em{clique separation}} if $A\setminus B\neq \emptyset$, $B\setminus A\neq \emptyset$, and $A\cap B$ is a clique in $G$.
\end{definition}

Note that an empty set of vertices is also a clique, hence any separation with an empty separator is in particular a clique separation. We will say that a  graph is {\em{clique separator free}} if it does not admit any clique separation. Such graphs are often called also {\em{atoms}}, see e.g.~\cite{Leimer93,BerryPS10}.  From the previous remark it follows that every clique separator free graph is connected.

Next we define the notion of an important separator and of unbreakability.
\begin{definition}[\bf important separator]
An inclusion-wise minimal $X-Y$ separator $W$ is called
an {\emph{important}} $X-Y$ separator if there is no
$X-Y$ separator $W'$ with $|W'| \le |W|$ and
$R_{G\setminus W}(X \setminus W) \subsetneq R_{G \setminus W'}(X \setminus W')$,
where $R_H(A)$ is the set of vertices reachable from $A$ in the graph $H$.
\end{definition}

\begin{lemma}[\cite{imp-seps-chen,multicut}]
\label{lem:imp-seps}
For any two sets $X,Y \subseteq V(G)$ there are at most $4^k$
important $X-Y$ separators of size at most $k$
and one can list all of them in $\Oh(4^k k (n+m))$ time.
\end{lemma}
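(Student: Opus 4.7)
The plan is a classical branching algorithm (following Chen, Liu, and Lu~\cite{imp-seps-chen}) that enumerates candidate separators by branching on membership of a single vertex in the sought important separator, and uses a two-variable potential to bound the branching tree by $4^k$ leaves.

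First, invoke Lemma~\ref{lem:f-f} to compute the minimum-order $X$-$Y$ separation $(A,B)$ pushed towards $Y$ in time $\Oh(k(n+m))$. If the order exceeds $k$, no $X$-$Y$ separator of size at most $k$ exists and we return nothing. Otherwise, let $W = A \cap B$; if $W = \emptyset$, return the single candidate $\emptyset$. If $W \neq \emptyset$, pick an arbitrary $v \in W$ and spawn two recursive calls: (a) on $(G - v,\, X \setminus \{v\},\, Y \setminus \{v\},\, k-1)$, prepending $v$ to every returned separator (this branch enumerates candidates containing $v$); and (b) on $(G,\, X \cup \{v\},\, Y,\, k)$ (this branch enumerates candidates avoiding $v$, which must separate $X \cup \{v\}$ from $Y$). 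At the top level, filter the union to keep only sets that are actually important $X$-$Y$ separators of size at most $k$ in $G$.

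The correctness of branch (a) is immediate, while in branch (b) any important separator of size $\le k$ avoiding $v$ places $v$ on the $X$-side and hence is an $(X \cup \{v\})$-$Y$ separator. To bound the recursion tree, use the measure $\mu^{\star} = 2k - \conn_G(X, Y)$. In branch (a), $k$ decreases by $1$ and $\conn_G(X,Y)$ decreases by at most $1$, so $\mu^{\star}$ drops by at least $1$. In branch (b), $\conn_G(X,Y)$ strictly increases because $v$ lies in the min $X$-$Y$ cut pushed towards $Y$; meanwhile $k$ is unchanged. Hence $\mu^{\star}$ drops by at least $1$ in either branch. Since the initial value satisfies $\mu^{\star}_0 \le 2k$, the binary recursion tree has at most $2^{2k} = 4^k$ leaves, yielding at most $4^k$ output candidates (thus bounding the number of important separators) and running time $\Oh(4^k \cdot k(n+m))$, as the non-recursive work per node is dominated by the min-cut computation from Lemma~\ref{lem:f-f}.

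The main obstacle is the strict-increase claim used in branch (b): one must justify that if $(A,B)$ is the min-order $X$-$Y$ separation pushed towards $Y$, then for any $v \in A \cap B$ we have $\conn_G(X \cup \{v\}, Y) \ge \conn_G(X, Y) + 1$. This follows from the fact that every $v \in A \cap B$ is saturated by every maximum $X$-$Y$ flow and, by the ``pushed towards $Y$'' property, lies on an augmenting path in the residual network to $Y$; equivalently it can be derived from submodularity of the cut function. Once this lemma is in hand, the rest is mechanical: ensuring that candidates enumerated at the leaves are distinct (handled by final filtering) and checking that distinct important separators land at distinct leaves, so that the leaf count also bounds the combinatorial count in the first part of the statement.
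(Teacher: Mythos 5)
The paper does not prove Lemma~\ref{lem:imp-seps} at all: it is imported verbatim from the literature (\cite{imp-seps-chen,multicut}), so there is no in-paper proof to compare against. Your reconstruction is precisely the standard branching argument from those sources --- branch on a vertex $v$ of the minimum-order separator pushed towards $Y$, use the potential $2k-\conn(X,Y)$, and bound the binary recursion tree by $4^k$ leaves --- and it is essentially correct. The one step you assert without justification is the crux of branch~(b): that an \emph{important} separator $S$ with $v\notin S$ necessarily has $v$ reachable from $X$ in $G-S$. This is where importance is actually used, and it requires the standard uncrossing/submodularity lemma that $R_{G-W}(X)\subseteq R_{G-S}(X)$ for every important $X$-$Y$ separator $S$, where $W$ is the minimum separator pushed towards $Y$; since $v\in W$ has a neighbour in $R_{G-W}(X)$, the claim follows. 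You flag the analogous issue for the strict increase of $\conn$ in branch~(b) and sketch its resolution, but this reachability lemma deserves the same treatment --- without it a non-important separator avoiding $v$ could place $v$ on the $Y$-side and the enumeration would be unsound as a proof of the counting bound. With that lemma supplied, your argument matches the cited proofs.
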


Finally, we proceed to the definition of unbreakability.

\begin{definition}[\bf $(q,k)$-unbreakable set]\label{def:unbreakSetSingle}
Let $G$ be an undirected graph.
We say that a subset of vertices $A$ is {\emph{$(q,k)$-unbreakable in $G$}},
if for any separation $(X,Y)$ of order at most $k$
we have 
$|X \cap A| \le q$ or $|Y \cap A| \le q$.
Otherwise $A$ is $(q,k)$-{\emph{breakable}}, and any separation $(X,Y)$ certifying this is called a {\emph{witnessing separation}}.
\end{definition}

\begin{definition}[\bf unbreakable set]
Let $G$ be an undirected graph.
We say that a subset of vertices $A$ is {\emph{unbreakable in $G$}},
if for any separation $(X,Y)$ 
we have  that  $|X \cap Y| \geq \min\{|X \cap A|, |Y \cap A|\}$. 
Otherwise $A$ is {\emph{breakable}}, and any separation $(X,Y)$ certifying this is called a {\emph{witnessing separation}}.
\end{definition}

If a set $A$ of size at least $3q$ is $(q,k)$-unbreakable then removing any $k$ vertices from $G$
leaves almost all of $A$, except for at most $q$ vertices, in the same connected component.
In other words, one cannot separate two large chunks of $A$ with a small separator. 
Observe that if a set $A$ is $(q,k)$-unbreakable in $G$, then any of its subset
$A' \subseteq A$ is also $(q,k)$-unbreakable in $G$. 
Moreover, if $A$ is $(q,k)$-unbreakable in $G$, then $A$ is also $(q,k)$-unbreakable in any supergraph of $G$. 
We will also need a weaker notion of unbreakability where we only care about clique separators in $G$.

\begin{definition} A graph $G$ is $(q, k)$-clique-unbreakable if for every clique separation $(L, R)$ of $G$ order at most $k$ it holds that $\min(|L|, |R|) \leq q$. 
\end{definition}

Our algorithms crucially rely on the following strengthening of the notion of unbreakability from Definition~\ref{def:unbreakSetSingle}. Here we care about separations of all orders $i$ for some $i \leq k$.

\begin{definition}
Let $k$ be an integer and $q : [k] \rightarrow  \mathbb{N}$ be a function.  A graph $G$ is $q$-unbreakable if it is $(q(i),i)$-unbreakable for every $i \leq k$.
\end{definition}

\subsection{Improved Graph}
For a positive integer $k$, we say that a graph $H$ is {\em{$k$-complemented}} if the implication $(\conn_H(x,y) > k) \Rightarrow (xy\in E(H))$ holds for every pair of vertices $x,y\in V(H)$. For every graph $G$ we can construct the {\em{$k$-improved graph}} $\imp{G}{k}$ by having $V(\imp{G}{k})=V(G)$ and $xy\in E(\imp{G}{k})$ if and only if $\conn_G(x,y)> k$. Observe that $\imp{G}{k}$ is a supergraph of $G$, since $\conn_G(x,y)=+\infty$ for all $x,y$ that are adjacent in $G$. Moreover, observe that every $k$-complemented supergraph of $G$ must be also a supergraph of $\imp{G}{k}$. It turns out that $\imp{G}{k}$ is $k$-complemented itself, and hence it is the unique minimal $k$-complemented supergraph of $G$.

\begin{lemma}[\cite{Bodlaender03,ClautiauxCMN03,LokshtanovPPS17}]
\label{lem:improval}
For every graph $G$ and positive integer $k$, the graph $\imp{G}{k}$ is $k$-complemented. Consequently, it is the unique minimal $k$-complemented supergraph of $G$.
\end{lemma}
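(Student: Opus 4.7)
The plan is to prove the two assertions in order: first that $\imp{G}{k}$ is itself $k$-complemented, and then that every $k$-complemented supergraph of $G$ must contain $\imp{G}{k}$, which together give the uniqueness of $\imp{G}{k}$ as the minimal such supergraph.

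For the first assertion I would argue by contrapositive. Fix any pair $x,y\in V(G)$ with $xy\notin E(\imp{G}{k})$; by definition this means $\conn_G(x,y)\le k$, and in particular $xy\notin E(G)$ (since adjacent vertices have $\conn_G=+\infty$). The goal is to show $\conn_{\imp{G}{k}}(x,y)\le k$ as well. Let $(A,B)$ be any minimum-order $x$-$y$ separation in $G$, so $x\in A\setminus B$, $y\in B\setminus A$, and $|A\cap B|\le k$. The key observation is that going from $G$ to $\imp{G}{k}$ cannot insert any edge between $A\setminus B$ and $B\setminus A$: indeed, for any $u\in A\setminus B$ and $v\in B\setminus A$, the separation $(A,B)$ itself witnesses $\conn_G(u,v)\le |A\cap B|\le k$, so $uv\notin E(\imp{G}{k})$ by definition. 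Thus $(A,B)$ remains a separation in $\imp{G}{k}$, and it still separates $x$ from $y$ with order at most $k$, giving $\conn_{\imp{G}{k}}(x,y)\le k$, as desired.

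For the ``consequently'' part, I would verify that any $k$-complemented supergraph $H$ of $G$ satisfies $E(H)\supseteq E(\imp{G}{k})$. Fix an edge $xy\in E(\imp{G}{k})$. If $xy\in E(G)$ we are done since $E(G)\subseteq E(H)$; otherwise $\conn_G(x,y)>k$. Since a separation of $H$ restricts to a separation of any subgraph of $H$ on the same vertex set, connectivity is monotone under edge addition, so $\conn_H(x,y)\ge \conn_G(x,y)>k$. By the $k$-complementedness of $H$, this forces $xy\in E(H)$. Hence $\imp{G}{k}\subseteq H$, and combined with the first part we conclude that $\imp{G}{k}$ is the unique minimal $k$-complemented supergraph of $G$.

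I expect the only delicate point to be the key observation in the first part, namely that the ``improvement'' operation never inserts an edge across the sides of a small separation. The argument is a one-line application of the definitions, but it is the reason the construction closes up after a single step rather than requiring iterated improvement; everything else is essentially Menger-style monotonicity bookkeeping.
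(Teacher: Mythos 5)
Your proof is correct. The paper itself does not prove Lemma~\ref{lem:improval} (it is cited from prior work), but your argument is the standard one, and your key observation --- that a minimum-order $x$-$y$ separation of order at most $k$ in $G$ survives as a separation of $\imp{G}{k}$ because the separation itself certifies $\conn_G(u,v)\le k$ for every cross pair $u,v$ --- is exactly the remark the paper makes immediately before Lemma~\ref{lem:unbreakableInImproved}; the minimality claim likewise matches the observation stated in the paragraph preceding the lemma.
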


Given $G$ and $k$ we can compute $\imp{G}{k}$ using $n^2$ applications of Lemma~\ref{lem:f-f}, leading to the following. 

\begin{lemma}\label{lem:comp-imp}
There exists an algorithm that, given a positive integer $k$ and a graph $G$ on $n$ vertices computes $\imp{G}{k}$ in $\Oh(kn^2(n+m))$ time.
\end{lemma}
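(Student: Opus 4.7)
The strategy is exactly the one hinted at in the sentence preceding the statement: iterate Lemma~\ref{lem:f-f} over all ordered pairs (or unordered pairs) of distinct vertices of $G$, using it as a decision oracle for the predicate ``$\conn_G(x,y) > k$'', and include the edge $xy$ in $\imp{G}{k}$ precisely when the oracle says yes. There is no serious obstacle: the work is essentially arithmetic once one fixes the right interface to Lemma~\ref{lem:f-f}.

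\textbf{Main steps.} First, initialize $\imp{G}{k}$ to be the graph on vertex set $V(G)$ with no edges. Then for every unordered pair $\{x,y\}$ of distinct vertices of $G$, do the following. If $xy\in E(G)$, add $xy$ to $\imp{G}{k}$; this is justified by the convention $\conn_G(x,y)=+\infty>k$ fixed in the preliminaries on separations. If $xy\notin E(G)$, invoke the second part of Lemma~\ref{lem:f-f} with the graph $G$, the vertices $x$ and $y$, and the threshold $k$. Either the algorithm returns a minimum-order $x$-$y$ separation of order at most $k$, in which case $\conn_G(x,y)\leq k$ and we leave $xy$ out of $\imp{G}{k}$; or it reports that every $x$-$y$ separation has order larger than $k$, in which case $\conn_G(x,y)>k$ and we add $xy$ to $\imp{G}{k}$. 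Correctness is immediate from the definition of $\imp{G}{k}$.

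\textbf{Running time.} The number of unordered pairs of vertices is $\binom{n}{2}=\Oh(n^2)$. Each invocation of Lemma~\ref{lem:f-f} with threshold $k$ runs in $\Oh(k(n+m))$ time, and the overhead per pair (checking adjacency, deciding whether to add an edge) is subsumed by this bound. Summing over all pairs yields a total running time of $\Oh(n^2)\cdot \Oh(k(n+m))=\Oh(kn^2(n+m))$, as required. Finally, Lemma~\ref{lem:improval} guarantees that the graph we produce is indeed $\imp{G}{k}$, so the algorithm is complete. The only point that warrants a moment's care is ensuring that Lemma~\ref{lem:f-f} is applied in the vertex-pair variant (where $x$ and $y$ are forbidden from appearing in the separator) and not the set variant; this is the content of the ``analogous claim'' clause in Lemma~\ref{lem:f-f} and matches the definition of $\conn_G(x,y)$ used to specify $\imp{G}{k}$.
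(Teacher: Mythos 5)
Your proposal is correct and matches the paper's intended argument exactly: the paper itself justifies the lemma with the single remark that $\imp{G}{k}$ can be computed via $n^2$ applications of Lemma~\ref{lem:f-f}, which is precisely the pairwise max-flow test you describe, and the running time bookkeeping is the same.
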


A nice property of improved graphs is that the set of separations of order at most $k$ in $G$ are also separations in $\imp{G}{k}$, and vice versa (since an edge from $u \in A \setminus B$ to $v \in B \setminus A$ in $E(\imp{G}{k})$ would imply that $\mu(u, v) > k$, contradicting that the order of $(A, B)$ is less than $k$. This implies the following lemma.

\begin{lemma}\label{lem:unbreakableInImproved}
Let $G$ be a graph, $S$ be a vertex set in $G$ and $q$, $k$ be positive integers. Then $S$ is $(q, k)$-unbreakable in $\imp{G}{k}$  if and only if  $S$ is $(q, k)$-unbreakable in $G$.
\end{lemma}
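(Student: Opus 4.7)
The plan is to reduce the lemma to the following observation: the family of separations of order at most $k$ in $G$ coincides exactly with the family of separations of order at most $k$ in $\imp{G}{k}$. Since $(q,k)$-unbreakability of a set $S$ is defined purely in terms of how separations of order at most $k$ split $S$, this coincidence is enough to conclude.

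For the easy containment, since $\imp{G}{k}$ is a supergraph of $G$, any separation $(A,B)$ of $\imp{G}{k}$ (of any order) is immediately a separation of $G$: the condition $E(A\setminus B, B\setminus A)=\emptyset$ is monotone under edge deletion.

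For the nontrivial containment, I would take an arbitrary separation $(A,B)$ of $G$ with $|A\cap B|\le k$ and argue it remains a separation of $\imp{G}{k}$. Suppose for contradiction that there is an edge $uv\in E(\imp{G}{k})$ with $u\in A\setminus B$ and $v\in B\setminus A$. In particular $uv\notin E(G)$, so by the definition of the improved graph we have $\conn_G(u,v)>k$. On the other hand, $(A,B)$ is a $u$-$v$ separation of $G$ of order at most $k$, so Menger's theorem gives $\conn_G(u,v)\le |A\cap B|\le k$, a contradiction.

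Combining both directions, the sets $\{(A,B) : (A,B)\text{ is a separation of }G\text{ of order}\le k\}$ and $\{(A,B) : (A,B)\text{ is a separation of }\imp{G}{k}\text{ of order}\le k\}$ coincide. Since Definition~\ref{def:unbreakSetSingle} quantifies only over such separations, $S$ is $(q,k)$-unbreakable in $G$ iff $S$ is $(q,k)$-unbreakable in $\imp{G}{k}$. I do not expect any real obstacle here; the lemma is essentially a restatement of the parenthetical remark preceding it, and the only content is the one-line Menger argument above.
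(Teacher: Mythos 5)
Your proof is correct and matches the paper's own argument, which is given in the remark immediately preceding the lemma: separations of order at most $k$ in $G$ and in $\imp{G}{k}$ coincide because an improved edge crossing such a separation would force $\conn_G(u,v)>k$. (Note that invoking Menger is not even necessary, since $\conn_G(u,v)$ is defined directly as the minimum order of a $u$-$v$ separation.)
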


Sometimes when discussing a graph $G$ we are interested in knowing whether $\imp{G}{k}$ is (clique) unbreakable. 

\begin{definition}
A pair $(L, R)$ of vertex sets of $G$ is an $h$-{\em improved clique separation} of $G$ if $(L, R)$ is a clique separation in the $h$-improved graph $\imp{G}{h}$ of $G$. 
The graph $G$ is $(q, k, h)$-improved-clique-unbreakable if for every clique separation $(L, R)$ of $\imp{G}{h}$ of order at most $k$ it holds that $\min(|L|, |R|) \leq q$.
\end{definition}

\subsection{Boundaried Graphs, Gluing and Forgetting, Boundary-Consistency}
For a finite set $\Lambda$ of labels, a \emph{$\Lambda$-boundaried graph} is a pair $(G,\iota)$ where $G$ is a graph and $\iota \colon B \to \Lambda$ is an injective function for some $B \subseteq V(G)$; then $B= \iota^{-1}(\Lambda)$ is called the {\em{boundary}} of $(G,\iota)$ and denoted by $\bnd(G, \iota)$. We will abuse notation and refer to $\bnd(G, \iota)$ by simply $\bnd G$ when the function $\iota$ is clear from context.
Note that by definition, if $\Lambda \subseteq \Lambda'$ then a $\Lambda$-boundaried graph is also a $\Lambda'$-boundaried graph. A \emph{$k$-boundaried graph} is a shorthand for a $[k]$-boundaried graph, where we denote $[k]=\{1,\ldots,k\}$. For $v \in B$, $\iota(v)$ is the \emph{label} of $v$. A boundaried graph is a $\Lambda$-boundaried graph for some set $\Lambda$. We define several operations on boundaried graphs.

If $(G_1,\iota_1)$ and $(G_2,\iota_2)$ are two $\Lambda$-boundaried graphs, then the result of \emph{gluing} them is the boundaried graph $(G_1,\iota_1) \oplus_\Lambda (G_2,\iota_2)$ that is obtained from the disjoint union of $(G_1, \iota_1) $ and $(G_2, \iota_2)$ by identifying vertices of the same label (so that the resulting labeling is again injective). When vertices are identified, for every $i, j \in \Lambda$ the resulting graph $(G_1,\iota_1) \oplus_\Lambda (G_2,\iota_2)$ has an edge between the vertex labeled $i$ and the vertex labeled $j$ if and only if at least one of $(G_1,\iota_1)$  and $(G_2,\iota_2)$ has an edge between the vertex labeled $i$ and the vertex labeled $j$. When the labeling $\Lambda$ is clear from context we will sometimes drop the subscript and refer to $\oplus_\Lambda$ by just $\oplus$.

The gluing operation is extended to the case when $(G_1,\iota_1)$ is a $\Lambda_1$-boundaried graph and $(G_2,\iota_2)$ is a $\Lambda_2$-boundaried graph by considering both  $(G_1,\iota_1)$ and $(G_2,\iota_2)$ as $\Lambda_1 \cup \Lambda_2$-boundaried graphs. If $(G_1,\iota_1)$ is a $\Lambda_1$-boundaried graph and $(G_2,\iota_2)$ is a $\Lambda_2$-boundaried graph and $G_1[\iota_1^{-1}(\iota_1(\bnd G_1) \cap \iota_2(\bnd G_2))] = G_2[\iota_2^{-1}(\iota_1(\bnd G_1) \cap \iota_2(\bnd G_2))]$ then we say that $(G_1,\iota_1)$ and $(G_2,\iota_2)$ are {\em boundary-consistent}.

A boundaried graph $(G, \iota)$ can also be colored - the graph $G$ may come with a coloring $\textsf{col}_G$.
We can extend {\em gluing} to colored boundaried graphs. In particular the result of $(G_1,\iota_1) \oplus (G_2, \iota_2)$ is a colored graph where every vertex $v \in (V(G_1) \setminus V(G_2))$ inherits its color from $G_1$, every vertex $v \in (V(G_2) \setminus V(G_1))$ inherits its color from $G_2$, while every vertex in $V(G_1) \cap V(G_2)$ inherits the smallest color out of its color in $G_1$ and its color in $G_2$. 
We say that  a $\Lambda_1$-boundaried colored graph $(G_1,\iota_1)$ and a $\Lambda_2$-boundaried colored graph $(G_2,\iota_2)$ are {\em boundary-consistent} if  the corresponding boundaried (uncolored) graphs are boundary-consistent and for every label $i \in \Lambda_1 \cap \Lambda_2$ so that $\iota_1^{-1}(i)$ and $\iota_2^{-1}(i)$ are defined we have that the color of $\iota_1^{-1}(i)$ in $G_1$ and of $\iota_2^{-1}(i)$ in $G_2$ are the same.

We now define the $\textsf{forget}$ operation. The $\textsf{forget}$ operation takes a $\Lambda$-boundaried graph $(G, \iota)$ and a set $F$ of labels. It returns the boundaried graph $\textsf{forget}(G, \iota, F) = (G, \iota')$ with boundary $\bnd(G, \iota') = \bnd(G, \iota) \setminus \iota^{-1}(F)$ and $\iota'(v) = \iota(v)$ for every $v \in \bnd(G, \iota')$.

\subsection{Topological Minors Of Boundaried Graphs}
We now extend the notion of topological minors to boundaried graphs. Let $(H, \iota_1)$ be a $\Lambda_H$-boundaried graph and $(G, \iota)$ be a $\Lambda$-boundaried graph such that $\Lambda_H \subseteq \Lambda$. A {\em topological minor model} of $(H, \iota_1)$ in $(G, \iota)$ is a topological minor model $(f,P)$ of $H$ in $G$ such that for every $u \in \bnd H $ we have $f(u) = \iota^{-1}\iota_1(u)$ and for every $uv \in E(H)$ the path $P(uv)$ has no internal vertices in $\bnd G$. If there exists a topological minor model of $(H, \iota_1)$ in $(G, \iota)$ we say that $(H, \iota_1)$ is a {\em topological minor} of $(G, \iota)$, and we denote it by $(H ,\iota_1) \sqsubseteq (G, \iota)$.

The {\em complexity} of a boundaried graph  $(H, \iota_1)$, denoted $\textsf{complexity}(H, \iota_1)$, is defined as the number of vertices in $H$ that are not in $\bnd H$.
For a $\Lambda$-boundaried graph $(G, \iota)$ the {\em folio} of $(G, \iota)$ is the set 
${\bf folio}(G, \iota) = \{(H, \iota')~\mid~(H, \iota') \sqsubseteq (G, \iota)\}$.
For a non-negative integer $q$, the $q$-{\em folio} of $(G, \iota)$ is the set 
$$q\textbf{-folio}(G, \iota) = \{(H, \iota')~:~ \textsf{complexity}(H, \iota') \leq q \mbox{ and } (H, \iota') \sqsubseteq (G, \iota)\}\mbox{.}r$$ 
The notion of folio and $q$-folio extends naurally to normal (unboundaried) graphs as well by treating them as boundaried graphs with an empty boundary. Observe specifically that the $q$-folio of an unboundaried graph $G$ then contains precisely the set of topological minors of $G$ on at most $q$ vertices. 
The following lemma states that the $q$-folio of  $\textsf{forget}(G, \iota, \{i\})$ depends only on the $q$-folio of $(G, \iota)$.

\begin{lemma}\label{lem:qfolioForget}
Let $(G_1, \iota_1)$ and $(G_2, \iota_2)$ be two $\Lambda$-boundaried graphs such that 
$q\textbf{-folio}(G_1, \iota_1) = q\textbf{-folio}(G_2, \iota_2)$. Then, for every $F \subseteq \Lambda$ it holds that 
$$q\textbf{-folio}(\textsf{forget}(G_1, \iota_1, F)) =q\textbf{-folio}(\textsf{forget}(G_2, \iota_2, F)).$$
\end{lemma}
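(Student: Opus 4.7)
The plan is to reduce the claim to a single structural fact: for any $\Lambda$-boundaried graph $(G, \iota)$, the set $q\textbf{-folio}(\textsf{forget}(G, \iota, F))$ is entirely determined by $q\textbf{-folio}(G, \iota)$. I would establish this via a canonical \emph{boundary-completion} procedure that turns every $(H, \iota_H) \in q\textbf{-folio}(\textsf{forget}(G, \iota, F))$ into some $(H^+, \iota_H^+) \in q\textbf{-folio}(G, \iota)$, together with a canonical \emph{projection} from $(H^+, \iota_H^+)$ back to $(H, \iota_H)$. Once this correspondence is in place, the lemma follows by lifting through $(G_1, \iota_1)$, transferring via $q\textbf{-folio}(G_1, \iota_1) = q\textbf{-folio}(G_2, \iota_2)$, and projecting back down through $(G_2, \iota_2)$.

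To build the completion, I would start from a topological minor model $(f, P)$ witnessing $(H, \iota_H) \sqsubseteq \textsf{forget}(G, \iota, F)$. The only reason $(f, P)$ is not already a topological minor model in the \emph{unforgotten} graph $(G, \iota)$ is that vertices of $\iota^{-1}(F)$ are boundary in $(G, \iota)$ but not in $\textsf{forget}(G, \iota, F)$; such vertices can illegally appear either as $f(u)$ for a non-boundary $u$ of $H$, or as internal vertices of some $P(uv)$. I would repair both defects symmetrically: promote every non-boundary $u$ of $H$ with $f(u) \in \iota^{-1}(F)$ to the boundary of $H^+$ with label $\iota(f(u))$; and for each internal vertex $w$ of $P(uv)$ lying in $\iota^{-1}(F)$, insert a new boundary vertex $z_{uv,w}$ labelled $\iota(w)$, subdividing the edge $uv$ at these insertions. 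The lifted model $(f^+, P^+)$ then assigns $f^+(z_{uv,w}):=w$ and chops each $P(uv)$ at the $F$-labelled vertices into subpaths. Since the construction only \emph{adds} boundary vertices (never internal ones), $\textsf{complexity}(H^+, \iota_H^+) \leq \textsf{complexity}(H, \iota_H) \leq q$, so $(H^+, \iota_H^+) \in q\textbf{-folio}(G, \iota)$.

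For the projection step, given any topological minor model $(\tilde f, \tilde P)$ of $(H^+, \iota_H^+)$ in a second $\Lambda$-boundaried graph $(G', \iota')$, I would restrict $\tilde f$ to $V(H)$ and, for every subdivided edge $uv$, reconcatenate the $\tilde P$-subpaths along the chain $u \sim z_{uv,w_1} \sim \cdots \sim z_{uv,w_t} \sim v$ to form $\tilde P'(uv)$. The bridging images $\tilde f(z_{uv,w}) = (\iota')^{-1}(\iota(w))$ all land in $(\iota')^{-1}(F)$, which is \emph{not} part of $\bnd(\textsf{forget}(G', \iota', F))$; this is precisely what legalizes their occurrence as internal vertices of $\tilde P'$. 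Combining the two directions symmetrically in $G_1$ and $G_2$ yields both inclusions of the desired equality.

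The main obstacle I anticipate is the bookkeeping required to confirm that the lifted and projected models actually satisfy all three parts of the topological minor model definition: injectivity of the vertex map, internal vertex-disjointness of the replacement paths, and avoidance of the correct boundary at every step. The nontrivial checks are (i) that the new images $f^+(z_{uv,w}) = w$ are pairwise distinct and disjoint from $f(V(H))$, which reduces to the internal disjointness of the $P$-paths guaranteeing that each $w \in \iota^{-1}(F)$ is internal to at most one path, and (ii) that on the projection side the bridging images $\tilde f(z_{uv,w})$ do not collide with $\tilde f'(V(H))$, which follows from injectivity of $\tilde f$ on the strictly larger set $V(H^+)$. Everything else is a routine unpacking of the definitions of $\textsf{forget}$, $\sqsubseteq$, and $\textsf{complexity}$.
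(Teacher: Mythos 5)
Your proof is correct and uses essentially the same mechanism as the paper's: encode the forgotten boundary vertices touched by the model into the pattern graph (promoting a branch vertex to the boundary when its image is a forgotten boundary vertex, and subdividing an edge at a newly labelled vertex when a forgotten boundary vertex is internal to a path), transfer via the folio equality, and then forget/project back. The only difference is that you handle all labels of $F$ in one batch (hence the extra injectivity and disjointness bookkeeping you flag), whereas the paper reduces to $F=\{i\}$ and iterates, so the single forgotten vertex falls into exactly one of three mutually exclusive cases.
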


\begin{proof}
We prove the statement only for $F = \{i\}$ for some $i \in \Lambda$. The general statement then follows by repeated application of the statement with $F' = \{i\}$ for each $i$ in $F$. We show that if $(H, \iota)$ is in $q\textbf{-folio}(\textsf{forget}(G_1, \iota_1, \{i\}))$ then it is also in $q\textbf{-folio}(\textsf{forget}(G_2, \iota_2, \{i\}))$. Let $(f, P)$ be a model of $(H, \iota)$ in $q\textbf{-folio}(\textsf{forget}(G_1, \iota_1, \{i\}))$. 
We may assume that there exists a boundary vertex $v$ of $G_1$ labeled $i$, since otherwise such a vertex does not exist for $G_2$ either, and the ${\sf forget}$ operation leaves both $G_1$ and $G_2$ unchanged. Thus, let $v = \iota_1^{-1}(i)$ be the boundary vertex in $G_1$ labeled $i$. We consider three cases. 

First, suppose $v = f(u)$ for a vertex $u \in V(H)$. In that case, define $\iota'(x) = \iota(x)$ for $x \in \bnd(H, \iota)$ and set $\iota'(u) = i$. Then $(f, P)$ is a model of the boundaried graph $(H, \iota')$ in $(G_1, \iota)$ and $\textsf{complexity}(H, \iota') = \textsf{complexity}(H, \iota) - 1 \leq q - 1$ since $u$ counts towards the complexity of $(H, \iota)$ but not towards the complexity of $(H, \iota')$. But $(G_1, \iota_1)$ and $(G_2, \iota_2)$ have the same $q$-folio so $(H, \iota')$ is a topological minor of $(G_2, \iota_2)$. Thus $\textsf{forget}(H, \iota', \{i\}) = (H, \iota)$ is in the folio (and therefore the $q$-folio) of $\textsf{forget}(G_2, \iota_2, \{i\})$.

Next suppose $v \notin f(V(H))$, but that $v$ is an internal vertex of a path $P(xy)$ for $xy \in E(H)$. Define $H'$ to be the graph obtained from $H$ by removing the edge $xy$ and adding the vertex $z$ with edges $zy$ and $xz$. For every vertex $r \in \bnd(H, \iota)$ set $\iota'(r) = \iota(r)$ and define $\iota'(z) = i$.
We have that $(H', \iota')$ is a topological minor of $(G_1, \iota_1)$ and that the complexity of $(H', \iota')$ is at most $|V(H') \setminus \bnd(H', \iota')| = |V(H) \setminus \bnd(H, \iota)| \leq q$. But $(G_1, \iota_1)$ and $(G_2, \iota_2)$ have the same $q$-folio so $(H', \iota')$ is a minor of $(G_2, \iota_2)$. Thus $\textsf{forget}(H', \iota', \{i\})$ is in the folio of $\textsf{forget}(G_2, \iota_2, \{i\})$. Finally,  $(H, \iota)$ is a topological minor of $\textsf{forget}(H', \iota', \{i\})$, and thus   $(H, \iota)$ is in the folio of  $(G_2, \iota_2)$. Since the complexity of $(H, \iota)$ is at most $q$ it is also in the  $q$-folio of  $(G_2, \iota_2)$. 

Finally, consider the case that $v \notin f(V(H))$ and that $v$ is not an internal vertex of a path $P(xy)$ for any $xy \in E(H)$. In this case $(f, P)$ is a model of $(H, \iota)$ also in $(G_1, \iota_1)$. Since $(G_1, \iota_1)$ and $(G_2, \iota_2)$ have the same $q$-folio we have that  $(H, \iota)$ is a topological minor of $(G_2, \iota_2)$ and also in $q\textbf{-folio}(\textsf{forget}(G_2, \iota_2, \{i\}))$, because $i$ is not a label of any boundary vertex of $H$. The proof of the reverse implication is symmetric.
\end{proof}

The following lemma states that the $q$-folios of two boundaried graphs $(G_1, \iota_1)$ and $(G_2, \iota_2)$ completely determine the $q$-folio of  $(G_1, \iota_1) \oplus (G_2, \iota_2)$. 

\begin{lemma}\label{lem:topFolioGlue} 
For every integer $q$ and every triple $(G_1, \iota_1)$, $(G_1', \iota_1')$ and $(G_2, \iota_2)$ of $\Lambda$-boundaried graphs such that $q\textbf{-folio}(G_1, \iota_1) = q\textbf{-folio}(G_1', \iota_1')$ it holds that 
$$q\textbf{-folio}((G_1, \iota_1) \oplus (G_2, \iota_2)) = q\textbf{-folio}((G_1', \iota_1') \oplus (G_2, \iota_2))$$
\end{lemma}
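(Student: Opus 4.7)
By symmetry, it suffices to show the inclusion $q\textbf{-folio}((G_1, \iota_1) \oplus (G_2, \iota_2)) \subseteq q\textbf{-folio}((G_1', \iota_1') \oplus (G_2, \iota_2))$. Given $(H, \iota^H)$ on the left-hand side, let $(f, P)$ be a topological minor model of it in $(G_1, \iota_1) \oplus (G_2, \iota_2)$. The key structural observation I would use is that $\bnd(G_1 \oplus G_2) = \bnd G_1 \cup \bnd G_2$, and in particular the shared vertex set $V(G_1) \cap V(G_2)$ is contained in $\bnd(G_1 \oplus G_2)$. Since a boundaried topological minor model forbids paths from internally using boundary vertices of the host, each path $P(uv)$ is either a single edge between two vertices of $\bnd(G_1 \oplus G_2)$, or has all of its internal vertices in $V(G_1) \setminus V(G_2)$, or has all of its internal vertices in $V(G_2) \setminus V(G_1)$. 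In particular, no path of the model crosses between the two sides through the shared boundary.

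The plan is to split the model along this natural partition, substitute $G_1$ by $G_1'$ using the hypothesis, and then re-glue. First I would assign each edge $uv \in E(H)$ to the unique $i \in \{1,2\}$ such that $P(uv)$ lies entirely inside $G_i$, breaking ties (which only occur for a direct edge between two shared-boundary vertices realisable in both $G_1$ and $G_2$) arbitrarily in favour of $i = 1$. Then I would build a boundaried graph $(H_1, \iota^1)$ with vertex set $V(H_1) = \{x \in V(H) : f(x) \in V(G_1)\}$ and boundary consisting of every $x \in \bnd H$ with $f(x) \in V(G_1)$ together with every $x \in V(H)$ with $f(x) \in V(G_1) \cap V(G_2)$, each labeled by $\iota \circ f$ (consistent with $\iota^H$ where the latter is defined); the edges of $H_1$ would be those $uv$ assigned to side $1$. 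Define $(H_2, \iota^2)$ symmetrically. The non-boundary vertices of $H_1$ are then precisely those $x \in V(H) \setminus \bnd H$ with $f(x) \in V(G_1) \setminus V(G_2)$, giving $\textsf{complexity}(H_1, \iota^1) \leq \textsf{complexity}(H, \iota^H) \leq q$, and analogously for $H_2$. The restriction of $(f, P)$ to side $1$ then witnesses $(H_1, \iota^1) \sqsubseteq (G_1, \iota_1)$, and analogously on side $2$.

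By the hypothesis $q\textbf{-folio}(G_1, \iota_1) = q\textbf{-folio}(G_1', \iota_1')$, there is a model $(f_1', P_1')$ of $(H_1, \iota^1)$ in $(G_1', \iota_1')$. I would combine $(f_1', P_1')$ with the untouched $G_2$-side model $(f_2, P_2)$ to obtain a topological minor model of $(H, \iota^H)$ in $(G_1', \iota_1') \oplus (G_2, \iota_2)$: vertices of $V(H_1) \cap V(H_2)$ receive a single consistent image because both sub-models send them to the host vertex with the prescribed label, images of opposite sides are otherwise disjoint because $V(G_1') \setminus V(G_2)$ and $V(G_2) \setminus V(G_1')$ are disjoint, and internal vertices of the concatenated paths stay outside $\bnd(G_1' \oplus G_2) = \bnd G_1' \cup \bnd G_2$ by the same analysis as in the first paragraph. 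The one delicate point is the bookkeeping around vertices of $V(H)$ whose image lies in the shared boundary (these must appear as boundary vertices of both $H_1$ and $H_2$ so that gluing recovers them in $H$) and the ambiguous direct-edge case; both are handled by the conventions fixed above, so I do not expect this to present a real obstacle.
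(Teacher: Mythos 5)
Your proposal is correct and follows essentially the same route as the paper's proof: split the model of $(H,\iota^H)$ along the shared boundary into boundaried topological minors $(H_1,\iota^1)\sqsubseteq (G_1,\iota_1)$ and $(H_2,\iota^2)\sqsubseteq (G_2,\iota_2)$, observe that the complexity of each piece is at most that of $(H,\iota^H)$, replace the $G_1$-side model using the folio equality, and re-glue. Your extra care about the tie-breaking for single shared edges and the consistency of images on the shared boundary only makes explicit what the paper treats implicitly.
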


\begin{proof}
It suffices to show that if $(H, \iota)$ is in $q\textbf{-folio}((G_1, \iota_1) \oplus (G_2, \iota_2))$ then it is also a topological minor of $q\textbf{-folio}((G_1', \iota_1') \oplus (G_2, \iota_2))$, since the proof of the reverse direction follows by interchanging $(G_1, \iota_1)$ with $(G_1', \iota_1')$. Consider a model $(f, P)$ of $(H, \iota)$ in $(G_1, \iota_1) \oplus (G_2, \iota_2)$.

We define the boundaried graph $H_1$ as follows. The vertex set of $H_1$ are all the vertices $u$ of $H$ so that $f(u) \in V(G_1)$. Here we consider the vertex set of $(G_1, \iota_1) \oplus (G_2, \iota_2)$ as $V(G_1) \cup V(G_2)$ where the vertices in $\iota_1^{-1}(\Lambda) \cap \iota_2^{-1}(\Lambda)$ are considered as vertices both in $V(G_1)$ and in $V(G_2)$. The edge set of $H_1$ is defined as $\{uv \in E(H) ~:~ E(P(uv)) \subseteq E(G_1)\}$. Similarly $H_2$ has vertex set $\{u \in V(H) ~\mid~ f(u) \in V(G_2)\}$ and edge set $\{uv \in E(G) ~\mid~ E(P(uv)) \subseteq E(G_2)\}$. Note that no path $P(uv)$ has any internal vertices in $\bnd((G_1, \iota_1) \oplus (G_2, \iota_2))$ and therefore every edge $uv$ in $E(H)$ is an edge of $H_1$ or an edge of $H_2$ (it can belong to both $H_1$ and $H_2$ in case $P(uv)$ consists of one edge shared between $G_1$ and $G_2$).

Define $\iota_1^H$ to have domain $\{u \in V(H_1) ~\mid~ f(u) \in \bnd(G_1, \iota_1)\}$ and set $\iota_1^H = \iota_1(f(u))$. Similarly Define $\iota_2^H$ to have domain $\{u \in V(H_2) ~\mid~ f(u) \in \bnd(G_2, \iota_2)\}$ and set $\iota_2^H = \iota_2(f(u))$.  We have that $(H, \iota) = (H, \iota_1^H) \oplus  (H, \iota_2^H)$,  $(H_1, \iota_1^H)$ is in the $q$-folio of $(G_1, \iota_1)$, and $(H_2, \iota_2^H)$ is in the $q$-folio of $(G_2, \iota_2')$. At the same time the complexity of $(H_1, \iota_1^H)$ is at most the complexity of  $(H, \iota)$ since only non-boundary vertices of $(H, \iota)$ in $(G_1, \iota_1)$ are counted, so  $(H_1, \iota_1^H)$ is in the $q$-folio of $(G_1, \iota_1)$. But then $(H_1, \iota_1^H)$ is in the $q$-folio of $(G_1', \iota_1')$ as well and therefore $(H, \iota) = (H, \iota_1^H) \oplus  (H, \iota_2^H)$ is in the  $q$-folio of $(G_1', \iota_1') \oplus (G_2, \iota_2)$. This completes the proof. 
\end{proof}

We summarize the way that we will use Lemmas~\ref{lem:qfolioForget} and \ref{lem:topFolioGlue} in the next lemma. Specifically, if $(A, B)$ is a separation in a graph $G$ then we may define a labeling $\iota \colon A \cap B \rightarrow |[A \cap B]|$ and view $G$ as $\textsf{forget}((G[A], \iota) \oplus (G[B], \iota), [|A \cap B|])$. The next lemma states that the $q$-folio of $G$ depends only on the $q$-folio of $(G[A], \iota)$ and of $(G[B], \iota)$.

\begin{lemma}\label{lem:topFolioGlueAndForget} 
Let $(G_1, \iota_1)$,  $(G_1', \iota_1')$ and $(G_2, \iota_2)$ be $\Lambda$-boundaried graphs such that $q\textbf{-folio}(G_1, \iota_1) = q\textbf{-folio}(G_1', \iota_1')$. Then, for every subset $F \subseteq \Lambda$
$$q\textbf{-folio}(\textsf{forget}((G_1, \iota_1) \oplus (G_2, \iota_2), F))
= q\textbf{-folio}(\textsf{forget}((G_1', \iota_1') \oplus (G_2, \iota_2), F))$$
\end{lemma}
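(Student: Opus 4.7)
The plan is to derive Lemma~\ref{lem:topFolioGlueAndForget} as a straightforward two-step composition of the two lemmas that immediately precede it in the paper, namely Lemma~\ref{lem:topFolioGlue} (the $q$-folio of a glued graph is determined by the $q$-folios of the pieces) and Lemma~\ref{lem:qfolioForget} (the $q$-folio of a graph after forgetting labels is determined by the $q$-folio of the graph before forgetting). No new combinatorial reasoning about topological minor models is needed; the point is just to chain the two equalities.

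First I would invoke Lemma~\ref{lem:topFolioGlue} with the triple $(G_1,\iota_1)$, $(G_1',\iota_1')$, $(G_2,\iota_2)$. Since by assumption $q\textbf{-folio}(G_1,\iota_1) = q\textbf{-folio}(G_1',\iota_1')$, that lemma gives
\[
q\textbf{-folio}\bigl((G_1,\iota_1) \oplus (G_2,\iota_2)\bigr) \;=\; q\textbf{-folio}\bigl((G_1',\iota_1') \oplus (G_2,\iota_2)\bigr).
\]
Call these two $\Lambda$-boundaried graphs $(H,\kappa)$ and $(H',\kappa')$ respectively; note that both carry the same label set $\Lambda$, so that the $\textsf{forget}$ operation with respect to $F \subseteq \Lambda$ is well defined on each.

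Second, I would feed $(H,\kappa)$ and $(H',\kappa')$ into Lemma~\ref{lem:qfolioForget} together with the label set $F$. Since we have just established that these two boundaried graphs have equal $q$-folios, Lemma~\ref{lem:qfolioForget} yields
\[
q\textbf{-folio}\bigl(\textsf{forget}(H,\kappa,F)\bigr) \;=\; q\textbf{-folio}\bigl(\textsf{forget}(H',\kappa',F)\bigr),
\]
which, after unfolding the definitions of $H$, $\kappa$, $H'$, $\kappa'$, is exactly the claimed equality. There is essentially no obstacle: the only thing to check is that the glued graphs inherit the label set $\Lambda$ (so that $F \subseteq \Lambda$ is still a valid input to $\textsf{forget}$ on them), and this is immediate from the definition of $\oplus_\Lambda$.
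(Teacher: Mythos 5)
Your proposal is correct and is exactly the argument the paper gives: apply Lemma~\ref{lem:topFolioGlue} to equate the $q$-folios of the two glued graphs, then apply Lemma~\ref{lem:qfolioForget} to push that equality through the $\textsf{forget}$ operation. Nothing is missing.
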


\begin{proof}
From Lemma~\ref{lem:topFolioGlue} we obtain that
$q\textbf{-folio}((G_1, \iota_1) \oplus (G_2, \iota_2)) = q\textbf{-folio}((G_1', \iota_1') \oplus (G_2, \iota_2))$
Then, Lemma~\ref{lem:qfolioForget} immediately yields the statement of the Lemma. 
\end{proof}

\subsection{Tree Decompositions and Star Decompositions} 
\label{sec:treeDecAndStarDeco}
In this paper it is most convenient to view tree decompositions of graphs as having a root. 
Let $T$ be a rooted tree and let $t$ be any non-root node of $T$. The parent of $t$ in $T$ will be denoted by $\parent(t)$. 

\begin{definition}[\bf Tree decomposition]
A {\em tree decomposition} of a graph $G$ is a pair $(T,\chi)$, where $T$ 
is a rooted tree and $\chi \colon V(T) \to 2^{V(G)}$ is a mapping such that:
\begin{itemize}
  \item for each node $v \in V(G)$, the set $\{t \in V(T)\ |\ v \in \chi(t)\}$ induces a nonempty and connected subtree of~$T$,
  \item for each edge $e \in E(G)$, there exists $t \in V(T)$ such that $e \subseteq \chi(t)$.
\end{itemize}
\end{definition}

The sets $\chi(t)$ for $t\in V(T)$ are called the {\em{bags}} of the decomposition, while sets $\chi(s)\cap \chi(t)$ for $st\in E(T)$ are called the {\em{adhesions}}. We sometimes implicitly identify a node of $T$ with the bag associated with it. The {\em{width}} of a tree decomposition $T$ is equal to its maximum bag size decremented by one, i.e., $\max_{t\in V(T)} |\chi(t)|-1$. The {\em{adhesion width}} of $T$ is equal to its maximum adhesion size, i.e., $\max_{st\in E(T)} |\chi(s)\cap \chi(t)|$. We additionally define the function $\sigma$ as follows:
\begin{align*}
\sigma(t) & = \begin{cases} \emptyset & \text{if }t\text{ is the root of }T \\ \chi(t) \cap \chi(\parent(t)) & \text{otherwise,}\end{cases} \\
\end{align*}
The {\em{treewidth}} of a graph, denoted $\tw(G)$, is equal to the minimum width among all its tree decompositions. When we define tree decomposition  as $(T,\chi_x)$, 
then the notation carries over to the auxiliary function $\sigma$, that is, we will denote this function by $\sigma_x$.

A {\em star decomposition} of a graph $G$ is a tree decomposition $(T, \chi)$ of $G$ where $T$ is a rooted tree with root $r$, and all vertices of $T$ other than $r$ are leaves. We call $r$ the {\em central node} of the decomposition.  A star decomposition $(T, \chi)$ with central node $r$ is called {\em connectivity-sensitive} if $\chi(\ell) \cap \chi(r) \neq \chi(\ell') \cap \chi(r)$ for every pair of distinct leaves $\ell, \ell'$ in $T$, and for every leaf $\ell$ of $T$ and every connected component $C$ of $G[\chi(\ell) \setminus \chi(r)]$ it holds that $N(C) = \chi(\ell)\cap \chi(r)$. The following observation follows immediately from the definition of connectivity-sensitive star decompositions. 

\begin{observation}\label{obs:uniqueStarDec} For every graph $G$ and set $B \subseteq V(G)$ there exists precisely one (up to isomorphism) connectivity-sensitive star decomposition $(T, \chi)$ with central node $r$ and $\chi(r) = B$. Given $G$ and $B$, $(T, \chi)$ can be computed in polynomial time.
\end{observation}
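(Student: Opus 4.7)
The plan is to give an explicit construction of the decomposition and then argue uniqueness by showing that any connectivity-sensitive star decomposition with center bag $B$ must coincide with the construction. Existence and polynomial-time computability go hand in hand, and uniqueness follows from a straightforward case analysis based on the structure of $G - B$.

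\medskip

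\textbf{Construction.} I would compute the connected components $C_1,\ldots,C_m$ of $G - B$ and for each nonempty subset $S \subseteq B$ that equals $N(C_i)$ for at least one $i$, create a leaf $\ell_S$ and set
\[
\chi(\ell_S) \;=\; S \cup \bigcup\{V(C_i) \,:\, N(C_i) = S\}.
\]
The central node $r$ gets $\chi(r) = B$. Since computing connected components and neighborhoods and grouping them by their neighborhood takes polynomial time, the whole procedure runs in polynomial time. To verify that $(T,\chi)$ is a tree decomposition, I would check the two defining properties: every edge of $G$ is covered (edges inside $B$ by $r$, edges inside a component $C_i$ or between $C_i$ and $N(C_i)$ by $\ell_{N(C_i)}$), and for every vertex $v$ the set of nodes whose bag contains $v$ is nonempty and connected in the star $T$ (for $v \in B$, every bag containing $v$ contains $r$; for $v \notin B$, $v$ lies in exactly one leaf bag $\ell_{N(C_i)}$, where $C_i$ is the component containing $v$). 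Connectivity-sensitivity is immediate: the sets $\chi(\ell_S) \cap \chi(r) = S$ are pairwise distinct by construction, and the connected components of $G[\chi(\ell_S) \setminus B]$ are precisely those $C_i$ with $N(C_i) = S$, each of which has neighborhood exactly $S$ in $G$.

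\medskip

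\textbf{Uniqueness.} Let $(T',\chi')$ be any connectivity-sensitive star decomposition with $\chi'(r')=B$. First I would observe that, since $T'$ is a star with center $r'$, no two leaves are adjacent in $T'$; hence for any vertex $v\notin B$, the set of nodes whose bag contains $v$ consists entirely of leaves and must be connected in $T'$, so it is a single leaf. This already implies that each connected component $C_i$ of $G - B$ is entirely contained in the bag of a single leaf $\ell'$, because consecutive vertices along a path in $C_i$ must share a bag. Moreover, the edges from $C_i$ to $N(C_i)$ force $N(C_i)\subseteq \chi'(\ell')$, hence $N(C_i)\subseteq \chi'(\ell') \cap B$. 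Conversely, the connectivity-sensitivity condition applied to the component $C_i$ (which is a connected component of $G[\chi'(\ell') \setminus B]$ because $\chi'(\ell')\setminus B$ is a disjoint union of such components of $G-B$) gives $N(C_i) = \chi'(\ell')\cap B$. Combined with the fact that distinct leaves have distinct traces on $B$, this forces the leaves of $T'$ to correspond bijectively to the subsets $S \subseteq B$ that occur as $N(C_i)$ for some $i$, with $\chi'(\ell'_S)$ consisting of $S$ together with all components of $G-B$ whose neighborhood is $S$. This matches the constructed decomposition exactly, which proves uniqueness up to isomorphism (only the labelling of the leaves of the star can differ).

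\medskip

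No step here poses a real obstacle: the construction is essentially forced by the two conditions in the definition, and the argument amounts to carefully tracking which vertex must live in which bag. The only subtlety is ensuring that, in the uniqueness argument, one correctly identifies the connected components of $G[\chi'(\ell')\setminus B]$ with the connected components of $G - B$ contained in $\chi'(\ell')$; this uses that distinct components of $G-B$ have no edges between them, so their partition survives intersection with any subset.
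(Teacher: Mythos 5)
Your proof is correct, and it is exactly the argument the paper has in mind: the paper states this as an observation that ``follows immediately from the definition'' and gives no proof, so you have simply written out the construction (group the components of $G-B$ by their neighborhoods in $B$ and make one leaf per neighborhood class) and the forcing argument for uniqueness that the authors treat as immediate. The same construction appears implicitly elsewhere in the paper, e.g.\ in the proofs of Lemmas~\ref{lem:unbreakbleAndstarDeco} and~\ref{lem:unbreakableToCliqueUnbreakable}. One pedantic caveat: as literally written, the definition of connectivity-sensitivity does not exclude a leaf $\ell'$ with $\chi'(\ell')\subseteq B$ (for such a leaf the component condition is vacuous), so strict uniqueness requires implicitly ruling out such degenerate leaves; your uniqueness argument tacitly assumes every leaf bag meets $V(G)\setminus B$. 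This is a defect of the definition rather than of your argument, and it is clearly the intended reading.
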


\newcommand{\obj}{\mathcal{D}}
\newcommand{\strobj}{\mathcal{E}}

\subsection{Isomorphisms of Colored and Labeled Graphs, Lexicographic Ordering}\label{sec:isoPreliminaries}
We say that two graphs $G_1,G_2$ are {\em{isomorphic}} if there exists a bijection $\phi\colon V(G_1)\to V(G_2)$, called {\em{isomorphism}}, such that $xy\in E(G_1)\Leftrightarrow \phi(x)\phi(y)\in E(G_2)$ for all $x,y\in V(G_1)$. An \emph{automorphism} of a graph is an isomorphism from the graph to itself. 

The notions of isomorphism and automorphism naturally extend to the setting of {\em{relational structures}}, which (in this work) consist of a finite universe plus a number of relations on this universe. The set of relations used in a structure is the {\em{signature}}. For instance, graphs can be modelled by taking the vertex set to be the universe and having one binary relation signifying adjacency. Various combinatorial objects, for instance tree decompositions, can be naturally modelled as relational structures. We refrain from introducing a formal notation for relational structures, as in all cases of usage in this paper all the terminology works naturally in the combinatorial setting. 

The {\em{union}} of two relational structures $A$ and $B$, denoted $A\cup B$, consists of the union of universes of $A$ and $B$ and the disjoint union of relations appearing in $A$ and in $B$. Note that in case the universes of $A$ and $B$ share some elements, these are represented only once in $A\cup B$.

In this work we will often consider graphs where some or all vertices are labelled with natural numbers $\mathbb{N}$ (we have already seen such objects in the context of boundaried graphs). Such graphs are modelled as relational structures by adding to the graph signature a new unary relation $R_t$ for every $t\in \mathbb{N}$, which selects all vertices that have label $t$. Thus, isomorphisms and automorphisms on such labelled graphs are expected to preserve the labels, that is, a vertex $u$ is labelled $t$ if and only if the image of $u$ is labelled $t$.

Our algorithms will some times output a special symbol $\bot$, meaning that the algorithm ``{\em failed}''. This symbol can be modeled by a single element $\bot$ and a unary relation $R_\bot$ which selects $\bot$ and only $\bot$. This has the effect that an isomorphism can only take $\bot$ (in the output of the algorithm for one input) to $\bot$ (in the output of the algorithm on another input).

A {\em coloring} of a graph $G$ is a function $f \colon V(G) \rightarrow \mathbb{N}$; thus, all the vertices are assigned a color. A colored graph is a pair $(G, {\sf col}_G)$ where ${\sf col}_G$ is a coloring of the graph $G$. We will frequently refer to colored graphs $(G, {\sf col}_G)$ by just the graph $G$. In this case the coloring of the colored graph $G$ is ${\sf col}_G$. The above definition of isomorphism implies that two colored graphs $G$ and $G'$ are considered isomorphic if there exists an isomorphism $\phi \colon V(G) \rightarrow V(G')$ between $G$ and $G'$ that preserves colors. In other words, ${\sf col}_{G'}(\phi(v)) = {\sf col}_G(v)$ for every $v \in V(G)$.

A {\em labeling} of a graph $G$ is an injective function %
$f : X \rightarrow \mathbb{N}$ for a subset $X \subseteq V(G)$;
in other words, it is an injective coloring of a vertex subset.
A labeling is \emph{full} if its domain is $V(G)$ (i.e., every vertex is labeled) 
and \emph{proper} if additionally its range is $[|V(G)|]$. 
A {\em properly labeled} graph is a pair $(G, {\sf lab}_G)$ where ${\sf lab}_G$ is a proper labeling of $G$. Just as for colorings we will often refer to properly labeled graphs $(G, {\sf lab}_G)$ by just the graph $G$, in which case the labeling of the properly labeled graph $G$ is ${\sf lab}_G$. Our definition of isomorphism implies that two properly labeled graphs $(G_1, f_1)$ and $(G_2, f_2)$ are isomorphic if and only if $\phi \colon V(G_1) \rightarrow V(G_2)$ defined as $\phi(v) = f_2^{-1}(f_1(v))$ is an isomorphism of the (unlabeled) graphs $G_1$ and $G_2$. Since checking isomorphism of properly labeled graphs is trivial, we will often say that two isomorphic properly labeled graphs are {\em{equal}}, even though formally their vertex sets might be different.

We now define a lexicographical order on properly labeled graphs. To that end we will say that a pair $(i, j)$ of natural numbers is {\em lexicographically smaller than or equal to} the pair $(i', j')$ if either $i < i'$, or $i = i'$ and $j \leq j'$.
Given two properly labeled graphs $G_1$, $G_2$ we will say that $G_1$ is {\em lexicographically smaller than or equal to} $G_2$, denoted by $G_1 \leq_{\sf{lex}} G_2$ if $|V(G_1)| < |V(G_2)|$, or $|V(G_1)| = |V(G_2)|$ and for the lexicographically smallest pair $(i, j)$ so that
$${\sf lab}_{G_1}^{-1}(i) {\sf lab}_{G_1}^{-1}(j) \in E(G_1)  \centernot\iff  {\sf lab}_{G_2}^{-1}(i) {\sf lab}_{G_2}^{-1}(j) \in E(G_2)$$
it holds that ${\sf lab}_{G_1}^{-1}(i) {\sf lab}_{G_1}^{-1}(j) \notin E(G_1)$, or no such pair $(i, j)$ exists.
It should be clear that $\leq_{\sf{lex}}$ is a total order on properly labeled graphs, and that if $G_1 \leq_{\sf{lex}} G_2$ and $G_2 \leq_{\sf{lex}}  G_1$ then $G_1$ and $G_2$ are isomorphic (equal) labeled graphs. We say that $G_1$ is {\em lexicographically less than} $G_2$, denoted by $G_1 <_{\sf{lex}} G_2$ if $G_1 \leq_{\sf{lex}} G_2$ and $G_1 \neq G_2$.
We extend the ordering $\leq_{\sf{lex}}$ to colored properly labeled graphs by breaking ties using the coloring. More concretely, for two colored, properly labeled graphs $G_1$ and $G_2$ we have that $G_1 \leq_{\sf{lex}} G_2$ if the uncolored copy of $G_1$ is lexicographically less than the uncolored copy of $G_2$, or the uncolored copies of $G_1$ and $G_2$ are equal, and the smallest $i$ such that 
$${\sf col}_{G_1}({\sf lab}_{G_1}^{-1}(i)) \neq {\sf col}_{G_2}({\sf lab}_{G_2}^{-1}(i))$$
satisfies ${\sf col}_{G_1}({\sf lab}_{G_1}^{-1}(i)) < {\sf col}_{G_2}({\sf lab}_{G_2}^{-1}(i))$ (or such an $i$ does not exist). Again, $\leq_{\sf{lex}}$ is a total order on properly labeled colored graphs and we define $<_{\sf{lex}}$ for colored properly labeled graphs in the same way we did for uncolored ones. 

We extend the lexicographic ordering to {\em unlabeled}, (possibly colored) graphs. Here $G_1 \leq_{\sf{lex}} G_2$ means that there exists a proper labeling ${\sf lab}_{G_1}$ of $G_1$ such that for every proper labeling ${\sf lab}_{G_2}$ of $G_2$ it holds that $(G_1, {\sf lab}_{G_1}) \leq_{\sf lex}  (G_2, {\sf lab}_{G_2})$.

\paragraph{Convention} Henceforth, all graphs in this 
work are possibly colored, unless explicitly noted otherwise. 
This in particular applies to all isomorphism-related claims. 

\subsection{(Weak) Isomorphism Invariance, Canonical Labelings}
We will often say that an function $f$ is {\em{isomorphism-invariant}}. By this we mean that whenever two structures $A$ and $B$ in the domain of $f$ are isomorphic through an isomorphism $\phi$, then $f(A)$ and $f(B)$ are isomorphic as well through an isomorphism $\psi$ that coincides with $\phi$ on the intersections of universes.
An algorithm is isomorphism-invariant if the function that maps inputs of the algorithm to the output produced by the algorithm is isomorphism-invariant.

As an example, consider an algorithm ${\cal A}$ that given a graph $G$ constructs a tree decomposition $(T, \chi)$ of $G$. For ${\cal A}$ to be isomorphism-invariant we require that if $G$ and $G'$ are isomorphic through an isomorphism $\phi$, $(T, \chi)$ is the output of ${\cal A}(G)$, and $(T', \chi')$ is the output of ${\cal A}(G')$, then there exists an isomorphism $\psi$ from $T$ to $T'$ so that for every $u \in V(G)$ and $t \in V(T)$ it holds that $u \in \chi(t)$ if and only if $\phi(u) \in \chi'(\psi(t))$. Most often, isomorphism-invariance of the functions or algorithms will be obvious, since the description does not depend on the internal representation of the graph, nor uses any tie-breaking rules for choosing arbitrary objects. 

We will often need a weaker condition of invariance, defined as follows.
A function $f$ is {\em weakly isomorphism invariant} if for every pair $A$, $B$ of isomorphic objects in the domain of $f$, there exist isomorphisms $\phi$ from $A$ to $B$ and $\psi$ from $f(A)$ to $f(B)$ that coincide on the intersection of universes of $A$ and $f(A)$.
Thus, we do not expect {\em{every}} isomorphism from $A$ to $B$ to extend to an isomorphism from $f(A)$ to $f(B)$, but at least one should extend.
Again, an algorithm is weakly isomorphism-invariant if the function that maps inputs of the algorithm to the output produced by the algorithm is weakly isomorphism-invariant. 

An important example of a function that is weakly isomorphism invariant, but not isomorphism invariant,  is the following: the function that takes as input an unlabeled graph $G$ and returns $G$ together with any proper labeling ${\sf lab}_G$ such that the properly labeled graph $(G, {\sf lab}_G)$ is lexicographically smallest.

A slightly more general version of the following observation will be used throughout our arguments, especially in Sections~\ref{sec:canonBoundaried},
~\ref{sec:unpumping},
~\ref{sec:redToCliqueUnbreakable}, and~\ref{sec:reductionToUnbreakable}.
\begin{lemma}\label{lem:selectWeaklyIsoInvariant}
The function that takes an unordered multiset of properly labeled, colored graphs and outputs the multiset sorted in lexicographical order is weakly isomorphism invariant. %
\end{lemma}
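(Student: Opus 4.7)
The key starting observation is that two properly labeled colored graphs are isomorphic if and only if they are literally equal as labeled colored graphs (since a proper labeling uses every element of $[|V(G)|]$ exactly once, any isomorphism must send the vertex labeled $i$ to the vertex labeled $i$, and colors must agree). Consequently, the relation $\leq_{\sf lex}$ defined in Section~\ref{sec:isoPreliminaries} is a total order on labeled colored graphs that respects isomorphism classes: isomorphic graphs are indistinguishable under $\leq_{\sf lex}$, and any two non-isomorphic graphs are strictly comparable.

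Now let $\mathcal{M}_1$ and $\mathcal{M}_2$ be two isomorphic multisets of properly labeled colored graphs, and let $\phi_0$ be an isomorphism between them. The first step is to observe that $\phi_0$ must send each constituent graph $G \in \mathcal{M}_1$ to some graph $G' \in \mathcal{M}_2$ that is isomorphic to $G$ as a labeled colored graph, hence \emph{equal} to $G$ in the sense above. Therefore, for every equivalence class $\mathcal{C}$ of labeled colored graphs, $\mathcal{M}_1$ and $\mathcal{M}_2$ contain exactly the same number of copies of members of $\mathcal{C}$.

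The second step is to construct the required pair $(\phi, \psi)$. Let $S_1 = f(\mathcal{M}_1)$ and $S_2 = f(\mathcal{M}_2)$ be the two sorted sequences. Since equal labeled graphs are indistinguishable under $\leq_{\sf lex}$, graphs lying in a common equivalence class $\mathcal{C}$ occupy a contiguous block of positions in both $S_1$ and $S_2$, and by the counting observation above these blocks have the same length and occupy the same range of positions. Inside such a block, pick any bijection from the copies of $\mathcal{C}$ in $\mathcal{M}_1$ (viewed as individual members of the multiset with distinct underlying vertex sets) to the copies of $\mathcal{C}$ in $\mathcal{M}_2$; composing the component-wise isomorphisms (each is just the identity on labels) one obtains an isomorphism $\phi$ between $\mathcal{M}_1$ and $\mathcal{M}_2$ that additionally respects the sorted positions. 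Let $\psi$ be the extension of $\phi$ that maps the $i$-th element of $S_1$ to the $i$-th element of $S_2$ for every $i$; by construction $\psi$ is an isomorphism of the sorted sequences, and $\psi$ coincides with $\phi$ on the intersection of the universes of $\mathcal{M}_1$ and $S_1$ (which consists exactly of the vertices of the constituent labeled graphs).

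The only subtlety, and thus the main ``obstacle'', is that a naive choice of the isomorphism $\phi_0$ might pair up copies of the same equivalence class in an order incompatible with the sorted position; this is precisely why the lemma is only \emph{weakly} isomorphism-invariant rather than isomorphism-invariant. The resolution is exactly the re-pairing described above: within each $\mathcal{C}$-block, we are free to choose any matching between the copies in $\mathcal{M}_1$ and those in $\mathcal{M}_2$, and choosing the one dictated by the sorted order ensures that $\phi$ extends consistently to an isomorphism $\psi$ of the sorted outputs. No further case analysis is needed.
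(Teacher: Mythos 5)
Your proof is correct and is exactly the argument the paper has in mind; the paper in fact states this lemma without proof, treating it as an immediate consequence of the fact that isomorphic properly labeled colored graphs are equal, which is precisely your starting observation. Your re-pairing of copies within each equivalence-class block is the right way to handle the only subtlety (and correctly explains why the statement is only \emph{weakly} isomorphism invariant), so nothing is missing.
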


We will often abuse the terminology and say that an object (such as a graph, a set, or a tree-decomposition) is (weakly) isomorphism-invariant in some other object(s). This will mean that the object is the output of an isomporphism-invariant function / algorithm that takes the other objects on input. When we say that an algorithm or function is (weakly) isomorphism invariant with respect to an object this emphasizes (a) what the input domain of the algorithm is, and (b) that the algorithm is (weakly) isomorphim invariant. The following observation will be often used implicitly; the proof is obvious and omitted.

\begin{lemma}
Let %
function $f$ and $g$ be functions with the same domain, where $f$ is weakly isomorphism invariant, $g$ is isomorphism invariant, and the intersection of the universes of $f(A)$ and $g(A)$ is contained in the universe of $A$ for every $A$ in the domain of $f$. Further, let $h$ be a weakly isomorphism invariant function that has $A\cup f(A)$ in the domain for each $A$ in the domain of $f$. Then:
\begin{itemize}
    \item The function $A\mapsto f(A)\cup g(A)$ is weakly isomorphism invariant.
    \item The function $A\mapsto h(A\cup f(A))$ is weakly isomorphism invariant.
\end{itemize}
\end{lemma}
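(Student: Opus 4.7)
For each of the two closure properties the plan is to fix an isomorphic pair $A,B$ in the common domain, instantiate the weak invariance of $f$ once to obtain a pair of compatible witnessing isomorphisms on $A$ and on $f(A)$, and then feed those through the second function ($g$, resp.~$h$) to extend the witness to the enlarged object. The hypothesis that the overlap of $f(A)$ and $g(A)$ sits inside $A$ is what makes the two extensions glue, and the convention that an isomorphism between relational structures of the form $X\cup Y$ preserves the component universes (because each component contributes its own relations, and the universes themselves are part of the structure) is what lets us restrict or decompose isomorphisms of unions.

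For the first bullet, the plan is to apply weak isomorphism invariance of $f$ to the pair $A,B$ to obtain witnesses $\phi_0\colon A\to B$ and $\psi_f\colon f(A)\to f(B)$ that coincide on the intersection of universes of $A$ and $f(A)$. Then I apply isomorphism invariance of $g$ to the specific isomorphism $\phi_0$ already in hand, producing an isomorphism $\psi_g\colon g(A)\to g(B)$ that coincides with $\phi_0$ on the intersection of universes of $A$ and $g(A)$. The key gluing step: on the intersection of universes of $f(A)$ and $g(A)$, which by hypothesis lies inside the universe of $A$, both $\psi_f$ and $\psi_g$ agree with $\phi_0$, hence with each other. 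Therefore $\psi:=\psi_f\cup\psi_g$ is a well-defined bijection on universes; it preserves every relation of $f(A)\cup g(A)$ because those relations come from the disjoint union of relations of $f(A)$ (preserved by $\psi_f$) and of $g(A)$ (preserved by $\psi_g$). Finally, $\psi$ coincides with $\phi_0$ on $A\cap (f(A)\cup g(A))$ by construction, which is exactly the witness required for weak isomorphism invariance of $A\mapsto f(A)\cup g(A)$.

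For the second bullet, the plan is first to promote the weak-invariance witnesses of $f$ to an isomorphism of the union structure. Apply weak isomorphism invariance of $f$ to obtain $\phi_0\colon A\to B$ and $\psi_f\colon f(A)\to f(B)$ coinciding on $A\cap f(A)$; then $\phi_0\cup\psi_f$ is a well-defined bijection between the universes of $A\cup f(A)$ and $B\cup f(B)$, and it preserves the disjoint union of relations, so it is an isomorphism $A\cup f(A)\to B\cup f(B)$. In particular these two unions are isomorphic and lie in the domain of $h$. Now invoke weak isomorphism invariance of $h$ to obtain some isomorphism $\phi'\colon A\cup f(A)\to B\cup f(B)$ together with an isomorphism $\psi_h\colon h(A\cup f(A))\to h(B\cup f(B))$ that coincide on the intersection of universes of $A\cup f(A)$ and $h(A\cup f(A))$. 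Because an isomorphism of union structures respects the decomposition into components, the restriction $\phi'|_A$ is an isomorphism $A\to B$, and $\psi_h$ coincides with $\phi'|_A$ on $A\cap h(A\cup f(A))$, which sits inside the intersection on which $\phi'$ and $\psi_h$ already agree. This furnishes the witnesses needed for weak isomorphism invariance of $A\mapsto h(A\cup f(A))$.

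The only nontrivial point is the structural convention used in the second bullet: an isomorphism from $A\cup f(A)$ to $B\cup f(B)$ must send the sub-universe of $A$ bijectively onto that of $B$. I expect this to be immediate under the paper's relational-structure framework, since each structure's relations (including unary predicates for labels, colors, and the boundary marker $\bot$) constrain every element of $A$ to be mapped into $B$; if the authors' formalism does not tag universes explicitly, one simply augments $A\cup f(A)$ with a unary predicate selecting the universe of $A$, which keeps the construction isomorphism-preserving. Once this convention is fixed, the two bullets follow from the argument above without additional work.
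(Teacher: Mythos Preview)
The paper omits this proof entirely, stating only that it is ``obvious and omitted.'' Your argument is the natural one and is essentially correct; there is nothing in the paper to compare it against.

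One remark on a detail you glossed over in the first bullet: after defining $\psi=\psi_f\cup\psi_g$, you assert it is a bijection of universes without justification. Well-definedness is clear from the agreement on $f(A)\cap g(A)\subseteq A$, and surjectivity onto $f(B)\cup g(B)$ is immediate, but injectivity deserves a line. The point is that $\psi_f$ restricts to a bijection $A\cap f(A)\to B\cap f(B)$ (since $\psi_f=\phi_0$ there and, by a cardinality argument using weak invariance in both directions, $|A\cap f(A)|=|B\cap f(B)|$), hence also to a bijection $f(A)\setminus A\to f(B)\setminus B$; similarly for $\psi_g$. A collision $\psi_f(x)=\psi_g(y)$ with $x\neq y$ would then force the common value into $f(B)\cap g(B)\subseteq B$, pulling both $x$ and $y$ back into $A$, where $\psi_f=\psi_g=\phi_0$ gives $x=y$. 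This is routine, but it is the only place the hypothesis on the intersection of $f(A)$ and $g(A)$ is actually used on the $B$ side.

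Your caveat about the second bullet is well taken: the paper's definition of the union $A\cup f(A)$ takes a disjoint union of relations but does not automatically tag which universe each element came from, so an arbitrary isomorphism $\phi'$ of $A\cup f(A)$ need not restrict to an isomorphism of $A$. The fix you propose---augmenting with a unary predicate marking the universe of $A$---is exactly the kind of implicit convention the paper is relying on when it declares the lemma obvious, and it is harmless for all the applications in the paper.
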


Finally, we  define a {\em canonical labeling} of a (possibly colored) unlabeled graph $G$ to be a weakly isomorphism invariant proper labeling ${\sf lab}_G$ of $G$. %

\paragraph{Compacting Colorings and Labelings. Weak Boundary Consistency.}
We will frequently deal with colored and/or labeled graphs, and we allow the colors and labels of the vertices to be any natural numbers. However, we will need to ensure that the bit length of the colors used in intermediate computations does not exceed polynomial in $n$. We now define some tools for that lets us manipulate the coloring of the input graph without changing its automorphism group. 

Let $G$ be a colored graph with coloring $\textsf{col}_G$. The {\em color partition} of $G$ is the partition ${\cal P}$ of $V(G)$ such that for every pair of vertices $u, v$ in $G$ we have  $\textsf{col}_G(u) = \textsf{col}_G(v)$ if and only if some part  $P \in {\cal P}$ contains $\{u,v\}$. The {\em boundary color partition} of a boundaried colored graph $(G, \iota)$ with coloring $\textsf{col}_G$ is the partition ${\cal P}$ of $\bnd(G, \iota)$ such that for every pair of vertices $u, v$ in $\bnd(G, \iota)$ we have  $\textsf{col}_G(u) = \textsf{col}_G(v)$ if and only if some part  $P \in {\cal P}$ contains $\{u,v\}$. The following observation, whose proof follows from the definition of color partitions, states that from the perspective of isomorphism only the color partition of a coloring matters, not the actual coloring. 

\begin{observation}\label{obs:samePartition}
Let $G$ be a graph and $\mathsf{col}_G^1$ and  $\mathsf{col}_G^2$ be colorings of $G$ with the same color partition. Then the automorphism group of $G$ colored with $\mathsf{col}_G^1$  and of  $G$ colored with $\mathsf{col}_G^2$ are the same. 
\end{observation}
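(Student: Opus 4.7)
The plan is to reduce the observation to the elementary fact that a bijection preserves a coloring if and only if it setwise fixes every block of the associated color partition. Since an automorphism of a colored graph is by definition a graph automorphism that preserves the coloring, this equivalence will immediately give that the two colorings determine the same automorphism group.

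First I would unpack the definitions. A map $\phi\colon V(G)\to V(G)$ is an automorphism of $(G,\mathsf{col})$ precisely when $\phi$ is a graph automorphism of $G$ and $\mathsf{col}(\phi(v))=\mathsf{col}(v)$ for every $v\in V(G)$. I would then show that the second condition depends on $\mathsf{col}$ only through its color partition $\mathcal{P}$. Indeed, if $\mathsf{col}(\phi(v))=\mathsf{col}(v)$ for all $v$, then for each color class $P\in\mathcal{P}$ and each $v\in P$ the image $\phi(v)$ lies in the same color class as $v$, so $\phi(P)\subseteq P$; bijectivity of $\phi$ on the finite set $P$ then yields $\phi(P)=P$. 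Conversely, if $\phi(P)=P$ for every $P\in\mathcal{P}$, then for any $v\in V(G)$ the vertex $\phi(v)$ lies in the same part of $\mathcal{P}$ as $v$, and by definition of the color partition this means $\mathsf{col}(\phi(v))=\mathsf{col}(v)$.

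Applying this equivalence to both colorings, a graph automorphism $\phi$ of $G$ is an automorphism of $(G,\mathsf{col}_G^1)$ if and only if $\phi(P)=P$ for every $P\in\mathcal{P}$, which (by the same equivalence applied to $\mathsf{col}_G^2$, using the hypothesis that $\mathsf{col}_G^1$ and $\mathsf{col}_G^2$ induce the same partition $\mathcal{P}$) is the case if and only if $\phi$ is an automorphism of $(G,\mathsf{col}_G^2)$. Hence the two automorphism groups coincide as subgroups of the symmetric group on $V(G)$. There is no real obstacle here; the statement amounts to the fact that relabeling the names of colors while keeping the color classes fixed does not affect which maps are color-preserving.
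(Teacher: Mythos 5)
Your proof is correct and is precisely the elementary argument the paper has in mind when it says the observation ``follows from the definition of color partitions'' (the paper omits an explicit proof). The key equivalence you isolate --- that a bijection preserves a coloring if and only if it setwise fixes each block of the color partition --- is exactly the right reduction, and the rest follows immediately.
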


We will say that two colored boundaried graphs $((G_1, \iota_1), \textsf{col}_{G_1})$ and $((G_2, \iota_2), \textsf{col}_{G_2})$ are {\em weakly boundary-consistent} if their uncolored versions $(G_1, \iota_1)$ and $(G_2, \iota_2)$ are boundary-consistent and the boundary partition ${\cal P}_1$ of $((G_1, \iota_1), \textsf{col}_{G_1})$ and ${\cal P}_2$ of $((G_2, \iota_2), \textsf{col}_{G_2})$ coincide on their common vertices. In other words, for every $i, j$ in $\iota_1(\bnd(G_1, \iota_1)) \cap \iota_2(\bnd(G_2, \iota_2))$ some part of ${\cal P}_1$ contains $\{\iota_1(i), \iota_1(j)\}$ if and only if some part of ${\cal P}_2$ contains $\{\iota_2(i), \iota_2(j)\}$. 

In light of Observation~\ref{obs:samePartition} we define the {\em color compacting} operation. We will say that a coloring (labeling) of a graph $G$ is {\em compact} if its range is a subset of $[|V(G)|]$. The {\em compacting} operation takes as input a colored graph $(G, \textsf{col}_G)$ and outputs the (compact) coloring $\textsf{col}_G' : V(G) \rightarrow [|V(G)|]$ defined as follows: $\textsf{col}_G'(v) = 1 + |\{u \in V(G) ~:~ \textsf{col}_G(u) <  \textsf{col}_G(v)\}$. The following observation follows immedately from the definition of compacting together with Observation~\ref{obs:samePartition}.

\begin{lemma}\label{lem:compact} The compacting operation is isomorphism invariant. Further, if $\mathsf{col}_G'$ is the result of compacting $(G, \mathsf{col}_G)$ then the automorphism groups of $(G, \mathsf{col}_G)$ and of $(G, \mathsf{col}_G')$ are the same. 
\end{lemma}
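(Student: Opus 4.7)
The lemma has two parts, and both follow from a single structural observation: the compacting operation preserves the color partition of the graph while merely relabeling the colors in a rank-based way.

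First, I would establish the key structural fact that the color partition of $(G, \mathsf{col}_G')$ equals the color partition of $(G, \mathsf{col}_G)$. For any two vertices $u,v \in V(G)$, note that by definition
\[
\mathsf{col}_G'(u) - \mathsf{col}_G'(v) = |\{w : \mathsf{col}_G(w) < \mathsf{col}_G(u)\}| - |\{w : \mathsf{col}_G(w) < \mathsf{col}_G(v)\}|.
\]
If $\mathsf{col}_G(u) = \mathsf{col}_G(v)$, the two sets are identical and so $\mathsf{col}_G'(u) = \mathsf{col}_G'(v)$. Conversely, if $\mathsf{col}_G(u) < \mathsf{col}_G(v)$, then $u$ belongs to $\{w : \mathsf{col}_G(w) < \mathsf{col}_G(v)\}$ but not to $\{w : \mathsf{col}_G(w) < \mathsf{col}_G(u)\}$, while the latter is contained in the former, witnessing $\mathsf{col}_G'(u) < \mathsf{col}_G'(v)$. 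Thus $\mathsf{col}_G(u) = \mathsf{col}_G(v) \iff \mathsf{col}_G'(u) = \mathsf{col}_G'(v)$, so the two color partitions agree.

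The second part of the lemma then follows immediately from Observation~\ref{obs:samePartition}: since $(G, \mathsf{col}_G)$ and $(G, \mathsf{col}_G')$ induce the same color partition on $V(G)$, they have the same automorphism group.

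For the first part (isomorphism invariance), I would unfold the definitions. Suppose $(G_1, \mathsf{col}_{G_1})$ and $(G_2, \mathsf{col}_{G_2})$ are isomorphic colored graphs via $\phi\colon V(G_1) \to V(G_2)$, and let $\mathsf{col}_{G_i}'$ be the result of compacting $(G_i, \mathsf{col}_{G_i})$ for $i=1,2$. I want to show $\phi$ is also a color-preserving isomorphism between the compacted graphs. Since $\phi$ is a bijection and $\mathsf{col}_{G_1}(w) = \mathsf{col}_{G_2}(\phi(w))$ for every $w$, we get
\[
\{w \in V(G_1) : \mathsf{col}_{G_1}(w) < \mathsf{col}_{G_1}(v)\} = \phi^{-1}\bigl(\{w' \in V(G_2) : \mathsf{col}_{G_2}(w') < \mathsf{col}_{G_2}(\phi(v))\}\bigr),
\]
so these sets have equal cardinality, which gives $\mathsf{col}_{G_1}'(v) = \mathsf{col}_{G_2}'(\phi(v))$ for every $v \in V(G_1)$. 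Hence $\phi$ is an isomorphism from $(G_1, \mathsf{col}_{G_1}')$ to $(G_2, \mathsf{col}_{G_2}')$, so the compacting operation is isomorphism invariant. Neither step presents any real obstacle; the entire argument is an unpacking of definitions once one notices that the rank-encoding preserves the color equivalence relation.
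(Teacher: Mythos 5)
Your proof is correct and follows exactly the route the paper intends: the paper itself only remarks that the lemma ``follows immediately from the definition of compacting together with Observation~\ref{obs:samePartition},'' and your write-up is precisely the unpacking of that remark — showing compacting preserves the color partition (giving the automorphism-group claim via Observation~\ref{obs:samePartition}) and verifying directly that a color-preserving isomorphism remains color-preserving after the rank-based relabeling.
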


Lemma~\ref{lem:compact} implies that a canonical labeling of $(G, \textsf{col}_G)$ is also a canonical labeling of $(G, \textsf{col}_G')$ and vice versa. Note that every labeling of a graph is also a coloring. Therefore we can apply compacting also to a labeled graph to obtain a labeling with range $[|V(G)|]$. By Lemma~\ref{lem:compact} the result of compacting a canonical labeling of $G$ is also a canonical labeling of $G$.

\paragraph{Encoding Boundaried Colored Graphs as Colored Graphs}
Let $(G, \iota)$ be a compactly colored boundaried graph with coloring $\textsf{col}_G$. The {\em color-encoding} of $(G, \iota)$ is the colored (but un-boundaried) graph $G$ with coloring $\textsf{col}_G'$ defined as follows.
For every vertex $v \in V(G) \setminus \bnd(G, \iota)$ we set  $\textsf{col}_G' = \textsf{col}_G$. 
For every vertex $v \in \bnd(G, \iota)$  we set $\textsf{col}_G' = n + \textsf{col}_G$. 

It is easy to see that color-encoding has the same nice property as compacting: it is isomorphism invariant and also the automorphism groups of $((G, \iota), \textsf{col}_G)$ and of $(G, \textsf{col}_G')$ are the same. The {\em compact color-encoding} of $(G, \iota)$ is the result of compacting the coloring  $\textsf{col}_G'$. By Lemma~\ref{lem:compact} the property of color-encodings also holds for compact color encodings. We state this below as a lemma. 

\begin{lemma}\label{lem:compactColorEncoding} The compact color encoding operation is isomorphism invariant. Further if $\mathsf{col}_G'$ is the result of compactly color-encoding $(G, \iota)$ with coloring $\mathsf{col}_G$ then the automorphism groups of $((G, \iota), \mathsf{col}_G)$ and of $(G, \mathsf{col}_G')$ are the same. 
\end{lemma}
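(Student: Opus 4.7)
The statement has two claims: isomorphism-invariance of the compact color-encoding, and equality of the automorphism groups of $((G,\iota),\mathsf{col}_G)$ and its compact color-encoding $(G,\mathsf{col}_G')$. My plan is to decompose the compact color-encoding into its two constituent operations (the color-encoding step, followed by compacting) and handle each separately, invoking Lemma~\ref{lem:compact} as a black box for the compacting step.

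For isomorphism-invariance, it suffices to observe that the color-encoding step itself is isomorphism-invariant, since Lemma~\ref{lem:compact} already gives isomorphism-invariance of compacting, and a composition of isomorphism-invariant functions is isomorphism-invariant. The color-encoding step is determined entirely by $G$, the indicator of membership in $\bnd(G,\iota)$, the coloring $\mathsf{col}_G$, and (via the shift of boundary colors) the values of $\iota$ together with $n=|V(G)|$. Hence any isomorphism $\phi$ of colored boundaried graphs $((G_1,\iota_1),\mathsf{col}_{G_1})\to ((G_2,\iota_2),\mathsf{col}_{G_2})$ --- which by definition preserves the graph, the labels $\iota$, and the coloring --- directly translates to a color-preserving isomorphism of the encodings, as $\phi$ sends boundary to boundary and preserves the numerical shift by $n$.

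For the automorphism group equivalence, I will show both inclusions. One direction is immediate: every automorphism $\phi$ of $((G,\iota),\mathsf{col}_G)$ fixes the boundary pointwise (because $\iota$ is injective and must be preserved) and preserves $\mathsf{col}_G$, so it preserves $\mathsf{col}_G'$ on both boundary and non-boundary vertices, and hence is an automorphism of $(G,\mathsf{col}_G')$; after compacting, the partition is unchanged by Lemma~\ref{lem:compact} so this remains an automorphism. For the reverse direction, an automorphism $\psi$ of $(G,\mathsf{col}_G')$ must map $\bnd(G,\iota)$ to itself because the color ranges for boundary ($\geq n+1$) and non-boundary ($\leq n$) are disjoint, and it preserves $\mathsf{col}_G$ on the non-boundary part directly. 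The delicate step is to conclude that $\psi$ also preserves $\iota$ on the boundary --- equivalently, fixes each boundary vertex --- which requires that the coloring on the boundary together with the shift encodes the labeling $\iota$ injectively on $\bnd(G,\iota)$.

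The main obstacle is precisely that last point: verifying that the compact color-encoding retains enough information on the boundary to distinguish vertices carrying different $\iota$-labels. This is where the exact shape of the construction matters, and where the argument must tie the abstract convention on boundaried colored graphs (namely, that the coloring, together with $\iota$, suffices to determine the automorphism structure) to the concrete shift-by-$n$ rule. Once this step is nailed down, both inclusions combine to yield equality of the automorphism groups, and combined with the first part we obtain the full lemma.
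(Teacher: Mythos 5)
Your decomposition---color-encoding first, then compacting via Lemma~\ref{lem:compact}---is exactly the paper's route (the paper dispatches the color-encoding step with ``it is easy to see''). Your isomorphism-invariance argument and the forward inclusion of automorphism groups are fine. The problem is the ``delicate step'' you flag and leave open: it cannot be closed the way you propose. The rule $\mathsf{col}_G'(v)=n+\mathsf{col}_G(v)$ on $\bnd(G,\iota)$ records only \emph{membership} in the boundary (the ranges $[n]$ and $\{n+1,\dots,2n\}$ are disjoint because $\mathsf{col}_G$ is assumed compact); it does not encode $\iota$ at all. Two boundary vertices with equal $\mathsf{col}_G$-color receive equal $\mathsf{col}_G'$-colors, so an automorphism of $(G,\mathsf{col}_G')$ is free to permute them. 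Consequently, if you insist---as your reverse-inclusion argument does---that an automorphism of $((G,\iota),\mathsf{col}_G)$ must fix every boundary vertex because $\iota$ is injective, then the two automorphism groups genuinely differ in general: take two isolated boundary vertices with the same color and distinct labels.

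The way out is not to ``nail down'' an injectivity property that the construction simply lacks, but to read the automorphism group of the boundaried colored graph as the group of color-preserving automorphisms of $G$ that preserve $\bnd(G,\iota)$ \emph{as a set} rather than pointwise (this matches how the paper actually uses these objects; e.g.\ extendable permutations are exactly the boundary permutations realized by such automorphisms). Under that reading both inclusions are immediate: a color-preserving automorphism of $G$ preserves the boundary setwise if and only if it preserves the color classes of $\mathsf{col}_G'$, because the shift by $n$ splits each $\mathsf{col}_G$-class into its boundary and non-boundary parts and does nothing else. Your forward direction survives unchanged, and the reverse direction no longer needs the step you could not supply.
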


Throughout the paper, we assume that ${\cal F}$ denotes a finite family of graphs, and $n_{\cal F}$ denotes the maximum size of a graph (in terms of the number of vertices) in ${\cal F}$.

\section{Statement and Proof of Main Theorems}\label{sec:DS}
In this section we provide full versions of Theorems~\ref{thm:intro:toUnbreakable} and~\ref{thm:intro:wrapper}
and show how they combine into a proof
of Theorem~\ref{thm:main}. 
We also discuss which subsequent sections treat which parts of proofs of these theorems. 

Let $\mathbb{F}^\star$ be the collection of all finite sets of finite graphs. %
The interpretation here is as follows. We will work on some class of graphs that is defined by a finite forbidden set ${\cal F}$ of topological minors.
However, we will want to focus our attention only on some graph classes that can be defined this way (e.g. minor closed classes of graphs). This is encapsulated in the following way: we will assume that the set ${\cal F}$ of forbidden topological minors is an element of a collection $\mathbb{F} \subseteq \mathbb{F}^\star$. Here different choices for $\mathbb{F}$ allows us to restrict our attention to different classes of classes of graphs closed under topological minors. 
For an example, to restrict attention to minor closed classes we may set $\mathbb{F} \subseteq \mathbb{F}^\star$ to be the collection of all finite lists ${\cal F}$ of graphs so that forbidding all graphs in ${\cal F}$ as topological minors defines a minor closed graph class.

\begin{definition}
Let $\mathbb{F} \subseteq \mathbb{F}^\star$ be a (possibly infinite) collection of finite sets of graphs.
An algorithm ${\cal A}$ is a {\em canonization algorithm for $\mathbb{F}$-free classes} if ${\cal A}$ takes as input a finite list ${\cal F} \in \mathbb{F}$ and a colored graph $G$ that excludes every member of ${\cal F}$ as a topological minor, and outputs a canonical (i.e. weakly isomorphism invariant) labeling $\lambda$ of $G$. 
\end{definition}

Since we aim to reduce the canonization problem to unbreakable graphs, we need a precise definition of what a canonization algorithm for (sufficiently) unbreakable graphs is. 

\begin{definition}\label{def:canonUnbreak}
Let $\mathbb{F} \subseteq \mathbb{F}^\star$ be a (possibly infinite) collection of finite sets of graphs and $q \colon \mathbb{F}^\star \times \mathbb{N} \rightarrow \mathbb{N}$ and  $\kappa \colon \mathbb{F} \rightarrow \mathbb{N}$  be functions.
An algorithm ${\cal A}$ is {\em canonization algorithm for $(q, \kappa)$-unbreakable $\mathbb{F}$-free classes} if it has the following properties:
\begin{itemize}
\item ${\cal A}$ takes as input a finite list ${\cal F} \in \mathbb{F}$, an integer $k$,
a function $q' : [k] \rightarrow \mathbb{N}$ so that $q'(i) \leq q({\cal F}, i)$ for every $i \leq k$, and an ${\cal F}$-topological minor free, $q'$-unbreakable colored graph $G$, and outputs either $\bot$ or a canonical labeling $\lambda$ of $G$. 

\item ${\cal A}$ is weakly isomorphism invariant. %

\item If $k \geq \kappa({\cal F})$ then the algorithm outputs a labeling $\lambda$ of $G$.
\end{itemize}
\end{definition}

We can now state
the full version of Theorem~\ref{thm:intro:toUnbreakable}, that essentially reduces the general case to the $q$-unbreakable case.

\begin{theorem}\label{thm:mainReduction}
There exists a function
$q \colon \mathbb{F}^\star \times \mathbb{N} \rightarrow \mathbb{N}$, such that, for every collection $\mathbb{F} \subseteq \mathbb{F}^\star$ of finite sets of graphs,
if there exists a canonization algorithm ${\cal A}$ for $(q, \kappa)$-unbreakable $\mathbb{F}$-free classes (for some  $\kappa \colon \mathbb{F} \rightarrow \mathbb{N}$) 
then there exists a canonization algorithm for $\mathbb{F}$-free classes.
The running time of this algorithm is upper bounded by $g({\cal F})n^{\cO(1)}$ plus the total time taken by at most $g({\cal F})n^{\cO(1)}$ invocations of ${\cal A}$ on $(G, k)$, where $G$ is an ${\cal F}$-free graph on at most $n$ vertices, and $k$ depends only on ${\cal F}$, $\kappa$ and $q$.
\end{theorem}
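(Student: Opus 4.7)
The plan is to execute the recursive understanding scheme outlined in Section~\ref{sec:overview}, working with the more general canonization problem for boundaried instances: given $(G,D)$ with $D\subseteq V(G)$ distinguished, compute, for every bijection $\pi\colon D\to[|D|]$, a canonical labeling of $G$ extending $\pi$. The original canonization problem is the case $D=\emptyset$. The recursion is driven by finding, in FPT time and in an isomorphism-invariant manner, a star decomposition $(T,\chi)$ of $G$ with center $b$ such that $D\subseteq \chi(b)$, all adhesions have size at most $k$, and the central bag $B=\chi(b)$ is either small (of size bounded by a function of $\mathcal{F}$), or structured (clique-unbreakable after improvement, and then truly unbreakable). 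Each leaf $\ell$ is canonized recursively with distinguished set $\sigma(\ell)$, and the top level combines these per-leaf canonizations with a treatment of $B$.

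For small $B$ we brute-force over all proper labelings of $B$, mimicking the dynamic programming step of the treewidth canonization algorithm of~\cite{LokshtanovPPS17}. For large structured $B$ we perform \emph{unpumping}: replace each leaf $G[\chi(\ell)]$ by a representative $G_\ell^R$ of size bounded by a function of $\mathcal{F}$ and $k$, producing $G^\star$. The representative must preserve (i) the isomorphism class of the boundaried graph $(G[\chi(\ell)],\sigma(\ell))$, (ii) the permutations of $\sigma(\ell)$ extendable to automorphisms, (iii) the $|V(H)|$-folio (which by Lemma~\ref{lem:topFolioGlueAndForget} guarantees that $\mathcal{F}$-topological-minor-freeness of $G$ implies the same for $G^\star$), and (iv) the orders of minimum separations across $\sigma(\ell)$ (so that cut properties of $B$ in $G$ pass to $G^\star$). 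Non-isomorphic leaves are kept distinguishable in $G^\star$ by tagging the representative with a color that encodes the isomorphism class of $(G[\sigma(\ell)],\sigma(\ell))$ and of the leaf; this is the reason why the per-leaf canonical labelings computed in the recursion are needed as input to the unpumping step. Lifting a canonical labeling of $G^\star$ back to one of $G$ proceeds leaf by leaf, using the canonical labelings of the leaves and the fact that the replacement preserves exactly the automorphism information of the boundaried leaf.

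The reduction proceeds in two stages. First we reduce general graphs to improved-clique-unbreakable ones: we compute, in an isomorphism-invariant way, a star decomposition whose central bag $B$ satisfies that no pair of its vertices can be separated by a cut of size at most $k$ in $\imp{G}{k}$. Adapting the separator-based construction of Elberfeld and Schweitzer~\cite{ElberfeldS17} to run in FPT time (rather than $n^{\cO(k)}$) yields such a decomposition. After unpumping, the representative property (iv) ensures that the same pairwise-connectivity guarantee holds in $G^\star$, so $G^\star$ is $(\cdot,k,k)$-improved-clique-unbreakable. In the second stage, on improved-clique-unbreakable graphs we produce a star decomposition whose central bag $B$ is $(q(i),i)$-unbreakable for all $i\le k$; here we use stable separators in the sense of~\cite{LokshtanovPPS17} to make progress when $B$ is breakable, and an important-separator-extension argument in the style of the minimum bisection algorithm of~\cite{CyganLPPS13} when $B$ is unbreakable. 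After unpumping, the property (iv) combined with property (Small Leaves) shows that $G^\star$ itself is $q'$-unbreakable for some slightly weaker bound $q'$, and we invoke $\mathcal{A}$ on $G^\star$ at the threshold $\kappa(\mathcal{F})$.

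The main obstacle will be producing these isomorphism-invariant star decompositions with the required central-bag property inside the parameterized budget, since the naive approaches either are not isomorphism-invariant or are too slow; pushing the important-separator technology into an isomorphism-invariant framework is the most delicate ingredient. A secondary obstacle is that the representative $G_\ell^R$ in general has size bounded by a possibly non-computable function of $\mathcal{F}$, which is why the statement of Theorem~\ref{thm:mainReduction} quantifies over a common $q\colon \mathbb{F}^\star\times\mathbb{N}\to\mathbb{N}$ but only produces $g$ as some (not necessarily computable) function of~$\mathcal{F}$. The function $q$ will be defined once and for all as the supremum, over both reduction stages, of the unbreakability thresholds required at each level; since $q$ depends only on $\mathcal{F}$ and $k$ (not on the algorithm $\mathcal{A}$ or on $\kappa$), the construction is uniform. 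The running time analysis follows the standard recursive-understanding recurrence~\eqref{eqn:recUndRecurrence}: each level of the recursion adds a polynomial overhead and shrinks the processed subgraph, so the whole scheme terminates in $g(\mathcal{F})\cdot n^{\cO(1)}$ time plus at most $g(\mathcal{F})\cdot n^{\cO(1)}$ invocations of $\mathcal{A}$ on $\mathcal{F}$-free graphs of size at most $n$, exactly as claimed.
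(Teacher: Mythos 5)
Your proposal follows essentially the same route as the paper: the same two-stage reduction (general $\to$ improved-clique-unbreakable via an FPT adaptation of the Elberfeld--Schweitzer decomposition, then $\to$ unbreakable via stable separators and important-separator extensions), driven by the same boundaried recursive-understanding scheme with representatives preserving extendable permutations, folios, and cut signatures, and with unpumping/lifting and a potential-based recursion analysis. The identification of the non-computable representative-size bound as the sole source of non-computability, and of $q$ as depending only on $\mathcal{F}$ and the level, also matches the paper.
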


We remark here that we are not able to guarantee the computability of the function $q$ of Theorem~\ref{thm:mainReduction}. 
The full version of Theorem~\ref{thm:intro:wrapper}
(deduced from the subordinate paper~\cite{subordinate} in Section~\ref{sec:unbrk-minor}) is as follows.

\begin{theorem}\label{thm:unbrk-minor}
There exists a constant $\ctime$, a computable function $\funrtime$, and 
a weakly isomorphism invariant 
algorithm $\mathcal{A}$ such that the following holds.
\begin{enumerate}
\item Given on input a graph $H$,
an $H$-minor free graph $G$, an integer $k$, and a function $q \colon [k] \to \N$
with a promise that $G$ is $(q(i),i)$-unbreakable for every $1 \leq i \leq k$, the algorithm ${\cal A}$
either returns $\bot$ or computes a 
labeling of $G$. 
The algorithm runs in time bounded by $\funrtime(H, \sum_{i=1}^k q(i)) \cdot |V(G)|^{\ctime}$.
\item For every graph $H$ and for every function $\funq \colon \N \to \N$
there exists $k_{H,\funq} \in \N$
such that the algorithm $\mathcal{A}$ invoked
on input $(H,G,k,q)$ as above never returns $\bot$ provided the following conditions are satisfied: $k \geq k_{H,\funq}$ and $q(i) \leq \funq(i)$ for every $1 \leq i \leq k$.
\end{enumerate}
\end{theorem}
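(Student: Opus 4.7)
The plan is to combine the structural decomposition from the subordinate paper (Theorem~\ref{thm:rigid}, simplified as Theorem~\ref{thm:intro:rigid}) with the known FPT canonization procedure for bounded-treewidth graphs from~\cite{LokshtanovPPS17}. In essence, Theorem~\ref{thm:rigid} reduces the problem to the union of a rigid part (with polynomially many isomorphism-invariant labelings) and a part of bounded treewidth; the bounded-treewidth canonizer then handles the remainder once the rigid part is fixed.

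More concretely, on input $(H, G, k, q)$, the algorithm $\mathcal{A}$ first invokes the algorithm of Theorem~\ref{thm:rigid}. Either that algorithm returns $\bot$, in which case $\mathcal{A}$ also returns $\bot$, or it produces an isomorphism-invariant partition $V(G) = \Vtw \uplus \Vgenus$ together with an isomorphism-invariant, nonempty family $\famgenus$ of bijections $\pi\colon \Vgenus\to[|\Vgenus|]$, with $|\famgenus|\le \fungenus(\sum_{i=1}^k q(i))\cdot n^{\Oh(1)}$ and with $\tw(G[\Vtw])\le \funtw(\sum_{i=1}^k q(i))$. For each $\pi\in\famgenus$, I would then define an auxiliary colored graph $G^\pi$ on vertex set $\Vtw$ with edge set $E(G[\Vtw])$, where every vertex $v\in\Vtw$ is colored by the pair (original color of $v$ in $G$, sorted tuple of labels $\pi(u)$ for $u\in N_G(v)\cap \Vgenus$). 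Since $\pi$ labels $\Vgenus$ injectively, this colored graph encodes the entire attachment of $\Vtw$ to $\Vgenus$, and its treewidth is the same as that of $G[\Vtw]$. Applying the canonization algorithm for colored bounded-treewidth graphs from~\cite{LokshtanovPPS17} yields a canonical labeling $\lambda^\pi_{\Vtw}\colon\Vtw\to[|\Vtw|]$. Combine it with $\pi$ shifted by $|\Vtw|$ into a proper labeling $\lambda^\pi$ of $V(G)$.

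To obtain a single canonical labeling, I would then select from the collection $\{(G,\lambda^\pi) : \pi\in\famgenus\}$ the lexicographically smallest colored properly-labeled graph (using the ordering from Section~\ref{sec:isoPreliminaries}) and output the corresponding $\lambda^\pi$. This selection step is weakly isomorphism invariant by Lemma~\ref{lem:selectWeaklyIsoInvariant}; combined with isomorphism invariance of Theorem~\ref{thm:rigid}'s output, the invariance of the coloring construction, and the canonicity of the bounded-treewidth canonizer, the overall algorithm $\mathcal{A}$ is weakly isomorphism invariant. The running time is dominated by one call to Theorem~\ref{thm:rigid} plus $|\famgenus|$ calls to the bounded-treewidth canonizer, each running in time $h(\funtw(\sum_{i=1}^k q(i)))\cdot n^{\Oh(1)}$; all constants depend only on $H$ and $\sum_{i=1}^k q(i)$, yielding the claimed bound with some computable $\funrtime$ and a universal constant $\ctime$. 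For part~(2), given $H$ and $\funq$, let $k_{H,\funq}$ be the threshold provided by Theorem~\ref{thm:rigid} for these parameters; for $k\ge k_{H,\funq}$ and $q(i)\le\funq(i)$ the subroutine of Theorem~\ref{thm:rigid} succeeds and hence $\famgenus\neq\emptyset$, so $\mathcal{A}$ never returns $\bot$.

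The main technical obstacle is not in the combinatorial glue above, which is straightforward, but in ensuring that Theorem~\ref{thm:rigid} is applied as a black box with the right uniformity in $H$ and $q$; that work has been deferred to the subordinate paper~\cite{subordinate}. A smaller but real concern is that the per-vertex color tuples on $\Vtw$ can be large (their entries range over $[|\Vgenus|]$), so one should compact these colors before invoking the bounded-treewidth canonizer, which is done in an isomorphism-invariant way via Lemma~\ref{lem:compact}. Once these details are in place, the deduction of Theorem~\ref{thm:unbrk-minor} from Theorem~\ref{thm:rigid} is essentially a bookkeeping exercise, with the heavy lifting hidden in the structural result of~\cite{subordinate}.
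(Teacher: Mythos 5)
Your main reduction is the same as the paper's: invoke the structural result, color each vertex of $\Vtw$ by its neighborhood in $\Vgenus$ under each labeling $\lambda\in\famgenus$, canonize the resulting colored bounded-treewidth graph with the algorithm of~\cite{LokshtanovPPS17}, concatenate with $\lambda$ shifted by $|\Vtw|$, and take the lexicographic minimum over $\famgenus$. That part is correct and matches the paper.

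However, there is a genuine gap in how you interface with Theorem~\ref{thm:rigid}, and it is precisely where both the $\bot$ behaviour and the threshold $k_{H,\funq}$ come from. Theorem~\ref{thm:rigid} does not take $(H,G,k,q)$ on input, never returns $\bot$, and provides no threshold: it requires as input an \emph{unbreakability chain} $((q_i,k_i))_{i=0}^{\zeta}$ of prescribed length $\zeta=\funchain(H)$, with step $\funstep$ and starting point $k_0=\funcut(H)$, and always succeeds given such a chain. So writing ``either that algorithm returns $\bot$'' and ``let $k_{H,\funq}$ be the threshold provided by Theorem~\ref{thm:rigid}'' attributes to the black box work that the present proof must do itself. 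Concretely, the algorithm has to build the chain iteratively by setting $q_i\coloneqq q(k_i)$ and $k_{i+1}\coloneqq \funstep(q_i+k_i)$, and it returns $\bot$ exactly when some $k_i$ exceeds $k$ (so the promise $G$ is $(q(k_i),k_i)$-unbreakable is no longer available). For part~(2) one then defines the auxiliary sequence $k_0'=\funcut(H)$, $k_{i+1}'=\funstep(\funq(k_i')+k_i')$, sets $k_{H,\funq}\coloneqq k_\zeta'$, and uses that $\funstep$ is nondecreasing with $\funstep(x)>x$ together with $q(i)\le\funq(i)$ to conclude $k_i\le k_i'$ for all $i$, hence the iteration completes whenever $k\ge k_{H,\funq}$. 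Without this chain construction your part~(2) has no argument, since the cited theorem supplies no $k_{H,\funq}$. This is not deferred to the subordinate paper; it is the nontrivial content of the deduction. The rest of your write-up (including the remark about compacting the neighborhood colors) is fine.
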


Let us see how our main result --- Theorem~\ref{thm:main} --- follows from combining Theorem~\ref{thm:unbrk-minor} with Theorem~\ref{thm:mainReduction}.

\begin{proof}[Proof of Theorem~\ref{thm:main} assuming Theorems~\ref{thm:mainReduction} and~\ref{thm:unbrk-minor}]
Observe that for every graph $H$ there exists a finite list of graphs ${\cal F}_H$ such that a graph $G$ is $H$-minor-free if and only if $G$ does not admit any member of ${\cal F}_H$ as a topological minor. Define
$$\mathbb{F}\coloneqq \{{\cal F}_H\,\colon\, H\textrm{ is a graph}\}\subseteq \mathbb{F}^\star.$$
Let $q\colon \mathbb{F}^\star\times \mathbb{N}\to \mathbb{N}$ be the function provided by Theorem~\ref{thm:mainReduction}. Now Theorem~\ref{thm:unbrk-minor} applied for $H$ and $\funq=q$ provides a canonization algorithm --- in the sense of Definition~\ref{def:canonUnbreak} --- for $\mathbb{F}$-free graphs that are $(q,\kappa)$-unbreakable, where we define $\kappa({\cal F}_H)=k_{H,q}$ (the latter is the constant provided by Theorem~\ref{thm:unbrk-minor}). Now it remains to apply Theorem~\ref{thm:mainReduction} to conclude that there is a canonization algorithm for $\mathbb{F}$-free graphs that works in time $g({\cal F})\cdot n^{\Oh(1)}$, where $g$ is some function. This is equivalent to the conclusion of Theorem~\ref{thm:main}.
\end{proof}

We now discuss how the proof of Theorem~\ref{thm:mainReduction}
breaks into parts in Sections~\ref{sec:firstDecomp}--\ref{sec:lastDecomp}.
To this end, we need to define canonization algorithms for graphs that are not necessarily unbreakable, but that are {\em{improved-clique-unbreakable}}, as explained in the following definitions.

\begin{definition}
A graph $G$ is {\em{$(q,k)$-clique-unbreakable}} if for every clique separation $(A,B)$ in $G$ of order at most $k$, we have $|A|\leq q$ or $|B|\leq q$. Further, $G$ is {\em{$(q,k,\ell)$-improved-clique-unbreakable}} if for every clique separation $(A,B)$ of order at most $k$ in the $\ell$-improved graph $G^{\langle\ell\rangle}$, we have $|A|\leq q$ and $|B|\leq q$.
\end{definition}

\begin{definition}
Let $\mathbb{F} \subseteq \mathbb{F}^\star$ be a (possibly infinite) collection of finite sets of graphs and  $\kappa \colon \mathbb{F} \rightarrow \mathbb{N}$ be a function.
An algorithm ${\cal B}$ is a {\em canonization algorithm for $\kappa$-improved-clique-unbreakable $\mathbb{F}$-free classes} if it has the following properties:
\begin{itemize}
\item ${\cal B}$ takes as input a finite list ${\cal F} \in \mathbb{F}$, integers $k$ and $s$, and an ${\cal F}$-topological-minor-free, $(s, k, k)$-improved-clique-unbreakable colored graph $G$ and outputs either $\bot$ or a labeling $\lambda$ of $G$. %

\item ${\cal B}$ is weakly isomorphism invariant. 

\item If $k \geq \kappa({\cal F})$  then the algorithm outputs a canonical labeling $\lambda$ of $G$.
\end{itemize}
\end{definition}

Theorem~\ref{thm:mainReduction} reduces the task of designing a canonization algorithm for $\mathbb{F}$-free classes to designing a canonization algorithm for unbreakable $\mathbb{F}$-free classes.
As an intermediate step we first reduce the task of designing a canonization algorithm for $\mathbb{F}$-free classes to designing a canonization algorithm for improved clique-unbreakable $\mathbb{F}$-free classes. This is formally stated as the following Lemma, which is proved in Section~\ref{sec:redToCliqueUnbreakable}. 

\begin{lemma}\label{lem:cliqueToGeneral}
Let $\mathbb{F} \subseteq \mathbb{F}^\star$ be a (possibly infinite) collection of finite sets of graphs.
Suppose there exists a canonization algorithm ${\cal B}$ for $\kappa_{\cal B}$-improved-clique-unbreakable
$\mathbb{F}$-free graphs, for some function $\kappa_{\cal B} \colon \mathbb{F} \rightarrow \mathbb{N}$. Then
there exists a canonization algorithm for $\mathbb{F}$-free classes. 
The running time of this latter algorithm is upper bounded by $g({\cal F})n^{\cO(1)}$ (for some function $g \colon \mathbb{F} \rightarrow \mathbb{N}$) plus the time taken by at most $g({\cal F})n^{\cO(1)}$ invocations of ${\cal B}$ on instances $(G', {\cal F}, k, q)$ where $G'$ is a $(q,k,k)$-improved-clique-unbreakable, ${\cal F}$-topological minor free graph on at most $n$ vertices and $k$ and $q$ are integers upper bounded by a function of ${\cal F}$ and $\kappa_{\cal B}$.
\end{lemma}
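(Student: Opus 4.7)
The proof instantiates the recursive understanding template from Section~\ref{sec:overview} with clique separations of order at most $k \coloneqq \kappa_{\mathcal{B}}(\mathcal{F})$ playing the role of the ``nice'' separators. For recursion to work I first strengthen the canonization task to a parameterized form, as in the overview: given $(G, D, \mathcal{F})$ with a distinguished set $D \subseteq V(G)$, compute for every bijection $\pi : D \to [|D|]$ a canonical labeling of $G$ that extends $\pi$. The outer call uses $D = \emptyset$.

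The first substantive step is to compute, in FPT time and in a weakly isomorphism-invariant way, a star decomposition $(T, \chi)$ of $G$ whose central bag $B = \chi(b)$ contains $D$, whose adhesions $\sigma(\ell)$ all have size at most $k$ and are cliques in the $k$-improved graph $G^{\langle k\rangle}$, and whose central bag admits no further improved clique separator of order at most $k$ splitting $B$ into two large parts. This is essentially the decomposition of $G^{\langle k\rangle}$ into its atoms along clique separators of order at most $k$, picking the atom containing $D$ to be the center: a construction carried out by Elberfeld and Schweitzer~\cite{ElberfeldS17} in logspace and time $n^{\Oh(k)}$. I would adapt it to FPT time by enumerating, for every pair $\{x, y\} \subseteq V(G)$ with $\conn_{G^{\langle k\rangle}}(x, y) \le k$, the minimum-order $x$-$y$ separator pushed towards $x$ via Lemma~\ref{lem:f-f}, filtering those that are cliques in $G^{\langle k\rangle}$, and canonically assembling them into the laminar family witnessing the atom containing $D$.

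Next I recurse on each leaf: for every $\ell$, I invoke the algorithm on $(G[\chi(\ell)], \sigma(\ell), \mathcal{F})$ to obtain, for every $\pi_\ell : \sigma(\ell) \to [|\sigma(\ell)|]$, a canonical labeling $\lambda_{\pi_\ell, \ell}$ of $G[\chi(\ell)]$ extending $\pi_\ell$. Using these I apply the unpumping procedure of Section~\ref{sec:unpumping}: each $(G[\chi(\ell)], \sigma(\ell))$ is replaced by a representative $G^R_\ell$ of size bounded by a function of $\mathcal{F}$ and $k$ that has the same $n_{\mathcal{F}}$-folio relative to $\sigma(\ell)$ (preserving $\mathcal{F}$-topological-minor-freeness), realizes the same cut function on $\sigma(\ell)$, and carries colors encoding the boundaried isomorphism class of $(G[\chi(\ell)], \sigma(\ell))$, so that non-isomorphic leaves remain non-isomorphic after replacement. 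The resulting graph $G^\star$ is $\mathcal{F}$-topological-minor-free, and by the ``$G^\star$ inherits properties of $B$'' argument from Section~\ref{sec:overview} it is $(s, k, k)$-improved-clique-unbreakable for some $s$ bounded in $\mathcal{F}$ and $k$. I then invoke $\mathcal{B}$ on $(G^\star, \mathcal{F}, k, s)$ to obtain a canonical labeling $\lambda^\star$ of $G^\star$, and lift: for each leaf $\ell$, $\lambda^\star$ restricted to $\sigma(\ell)$ equals some $\pi_\ell$, and the precomputed $\lambda_{\pi_\ell, \ell}$ extends it canonically to $G[\chi(\ell)]$; gluing these extensions with $\lambda^\star|_B$ gives the desired canonical labeling of $G$.

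The outer parameterization in $\pi$ is handled by branching over all bijections $D \to [|D|]$ when $|D| \le k$ (using the choice as a coloring of $D$) and, when $|D| > k$, by observing that $D$ cannot be split by any clique separator of order $\le k$ and hence sits entirely inside $B$, so that $\pi$ is absorbed into the input colors that $\mathcal{B}$ sees. The standard recursive-understanding recurrence~(\ref{eqn:recUndRecurrence}) then gives the claimed runtime bound: $g(\mathcal{F}) n^{\Oh(1)}$ plus $g(\mathcal{F}) n^{\Oh(1)}$ invocations of $\mathcal{B}$ on graphs of size at most $n$ and parameters bounded in $\mathcal{F}$ and $\kappa_{\mathcal{B}}$. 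The main obstacle I expect is in the decomposition step: extracting the atom decomposition in FPT time rather than $n^{\Oh(k)}$ while keeping it weakly isomorphism invariant requires a careful canonical assembly of the pushed minimum-order clique separators from Lemma~\ref{lem:f-f}, together with bookkeeping that selects the unique atom containing $D$ without any arbitrary tie-breaking choices.
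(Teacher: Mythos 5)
Your overall architecture matches the paper's: a parameterized recursion on $(G,D)$, an isomorphism-invariant decomposition of the improved graph along clique separators into an atom containing $D$, recursion on the leaves, unpumping with representatives, a call to $\mathcal{B}$ on the shrunken graph, and lifting. However, there is a genuine gap at the point where you invoke $\mathcal{B}$ on $G^\star$. In a recursive call, $G$ is only a piece of the original graph and $D=\sigma(\ell)$ is its interface; the facts that $D$ is a clique in the $k$-improved graph and that the central bag $B$ is a $k$-atom hold in the improved graph of the \emph{full} graph, where connectivity between vertices of $D$ may route through the hidden remainder. The graph $G^\star$ in isolation need not be $(s,k,k)$-improved-clique-unbreakable: $\imp{G^\star}{k}$ may admit clique separators of order at most $k$ cutting $B$ that disappear only once the rest of the graph is glued back on, so the precondition of $\mathcal{B}$ is not met. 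The paper resolves this by enumerating all minimum-size boundaried completions $(H,\pi)$ such that $(G^\star,\pi)\oplus(H,\pi)$ is $\mathcal{F}$-free and $D$ is a clique in its $k$-improvement, invoking $\mathcal{B}$ on every such glued graph, and taking the lexicographically smallest restriction to $V(G^\star)$; Claims~\ref{clm:caseIIIisCliqueUnbreakable} and~\ref{clm:caseVatomisUnbreakable} then certify improved-clique-unbreakability of the glued graphs. Relatedly, for this argument to survive unpumping you need \emph{strong} $(k,n_{\cal F})$-representatives, which preserve the strong $k$-cut signature and hence the induced improved graph on the center (Lemma~\ref{lem:sameImproved} and Property~\ref{itm:sameImprovedCenter} of Lemma~\ref{lem:unpumpProperties}); a representative that merely ``realizes the same cut function'' in the weak sense does not preserve $\imp{G}{k}[B]$.

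A secondary issue is your plan for computing the atom decomposition in FPT time: assembling pushed minimum-order $x$-$y$ separators is not how the minimal clique separators are obtained. The paper instead re-runs the Elberfeld--Schweitzer construction using a minimal triangulation (whose minimal separators that are cliques in $G$ are exactly the minimal clique separators of $G$) and the chordality of the auxiliary graph whose maximal cliques are the maximal $c$-atoms; both are polynomially enumerable with degree independent of $k$. Finally, you should not set $k\coloneqq\kappa_{\cal B}({\cal F})$ outright, since $\kappa_{\cal B}$ need not be computable: the paper iterates $k=1,2,\ldots$ and re-runs the boundaried algorithm until it stops returning $\bot$, and it controls the recursion with the explicit potential $2|V(G)|-|D|$ together with a per-level counting argument rather than the balanced recurrence~(\ref{eqn:recUndRecurrence}), which does not directly apply to a star decomposition with many leaves.
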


The second main ingredient of the proof of Theorem~\ref{thm:mainReduction} is reducing the task of  designing a canonization algorithm for improved clique-unbreakable $\mathbb{F}$-free classes to designing a canonization algorithm for unbreakable $\mathbb{F}$-free classes. This is formally stated as the following Lemma, which is proved in Section~\ref{sec:reductionToUnbreakable}.

\begin{lemma}\label{lem:unbreakToClique}
There exists a function $q^\star: \mathbb{F}^\star \times \mathbb{N} \rightarrow \mathbb{N}$ such that the following holds. 
For every collection of finite sets of graphs $\mathbb{F} \subseteq \mathbb{F}^\star$, 
if there exists a canonization algorithm ${\cal A}$ for $(q^\star, \kappa)$-unbreakable $\mathbb{F}$-free classes (for some  $\kappa : \mathbb{F} \rightarrow \mathbb{N}$) 
then there exists a function $\kappa_{\cal B} : \mathbb{F} \rightarrow \mathbb{N}$ and a canonization algorithm ${\cal B}$ for $\kappa_{\cal B}$-improved-clique-unbreakable $\mathbb{F}$-free classes.
The running time of ${\cal B}$ on an instance $(G,{\cal F},k,s)$ is upper bounded by $g({\cal F},k,s)n^{\cO(1)}$ and the total time taken by at most $h({\cal F},k)n^{\cO(1)}$ invocations of ${\cal A}$ on $({\cal F}, G, k/2, q)$, where ${\cal F} \in \mathbb{F}$, $q : [k/2] \rightarrow \mathbb{N}$ is a function such that $q(i) \leq q^\star(i)$ for $i \leq k/2$, and $G$ is an ${\cal F}$-free, $q$-unbreakable graph on at most $n$ vertices.
\end{lemma}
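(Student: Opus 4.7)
\textbf{Proof plan for Lemma~\ref{lem:unbreakToClique}.} The plan is to implement the isomorphism-invariant recursive understanding scheme outlined in Section~\ref{sec:overview}, specialized to the improved-clique-unbreakable setting. The function $q^\star$ will emerge from the bookkeeping below; it will be chosen large enough so that whichever $(q^\star,k')$-unbreakability we can arrange in the graphs handed to $\mathcal{A}$ suffices. Given input $(G,\mathcal{F},k,s)$ with $G$ being $(s,k,k)$-improved-clique-unbreakable, set $k' = k/2$ and run a recursive procedure whose input is a boundaried graph $(G,D)$ with $|D|\leq k'$; the goal at each level is to output, for every bijection $\pi\colon D\to[|D|]$, a canonical labeling of $G$ extending $\pi$. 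This strengthened task is the standard formulation that makes recursive understanding go through for canonization (see the discussion of representatives in Section~\ref{sec:overview}).

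At the top of each recursive call, first compute an isomorphism-invariant star decomposition $(T,\chi)$ of $G$ rooted at a central node $r$ with $D\subseteq\chi(r)$ and adhesions of size at most $k'$, such that the center bag $B=\chi(r)$ is either small (bounded by a function of $\mathcal{F}$, $k$, $s$) or is $(q^\star(\mathcal{F},k'),k')$-unbreakable in $G$. Following the sketch ``Finding next bag in improved clique unbreakable case'' of Section~\ref{sec:overview}, we split on the situation of $D$: if $D$ is small and would witness a low-order separation of $G$, the separator $A\cap B$ forms a clique in $\imp{G}{k}$ (since any missing pair would have $\conn(x,y)\leq k$, contradicting the definition of $\imp{\cdot}{k}$) and thus violates $(s,k,k)$-improved-clique-unbreakability once $k'$ is large enough; otherwise, a bag $B$ with the required properties is produced either via the stable-separator argument of~\cite{LokshtanovPPS17} when $B$ is large but breakable, or via the important-separator extension technique of~\cite{CyganLPPS13} when $B$ is large and unbreakable. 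In either branch, Observation~\ref{obs:uniqueStarDec} turns $B$ into a canonical star decomposition.

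Next we recurse on each leaf $(G[\chi(\ell)],\sigma(\ell))$, obtaining for every bijection of $\sigma(\ell)$ a canonical labeling of $G[\chi(\ell)]$ extending it. Using these labelings we determine the isomorphism classes of leaves and, for each leaf, construct a boundaried \emph{representative} $G_\ell^R$ of size bounded by a function of $\mathcal{F}$ and $k$, satisfying (Lifting~I--II), (Feasibility), (Maintains Cut Properties), and (Small Leaves); the existence of $G_\ell^R$ rests on Lemma~\ref{lem:topFolioGlueAndForget} (feasibility of topological minors in $\mathcal{F}$), Lemma~\ref{lem:f-f} together with Lemma~\ref{lem:improval} (preserving separation orders through $\sigma(\ell)$), and the color-encoding mechanism of Lemma~\ref{lem:compactColorEncoding} to record the isomorphism class of $(G[\chi(\ell)],\sigma(\ell))$ in the boundary coloring. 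Replacing each leaf by its representative produces the unpumped graph $G^\star$, which by construction is $\mathcal{F}$-topological-minor-free, has the same central bag $B$ with all other bags of size bounded in $\mathcal{F}$ and $k$, and, by Lemma~\ref{lem:unbreakableInImproved} together with the cut-preservation property of representatives, is $(q^\star(\mathcal{F},k'),k')$-unbreakable as a \emph{whole} (here the inflation from ``$B$ unbreakable in $G$'' to ``$G^\star$ unbreakable'' is where the slack between $q^\star$ and the various leaf-size bounds is absorbed).

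We then invoke $\mathcal{A}$ on $(\mathcal{F},G^\star,k',q^\star)$ with $\kappa_{\mathcal{B}}(\mathcal{F})$ chosen large enough that $k'\geq\kappa(\mathcal{F})$, so that $\mathcal{A}$ is guaranteed to return a canonical labeling rather than $\bot$. This labeling is lifted to $G$ by replaying the star structure: the isomorphism-invariance of $(T,\chi)$ identifies which leaves are permuted with which, and the canonical labelings of the leaves extending the boundary bijections induced by $\mathcal{A}$'s output on $\sigma(\ell)$ are glued back in. The main obstacle, in my view, is establishing weak isomorphism-invariance of the whole pipeline, and in particular of the ``find the next bag'' subroutine in the large-unbreakable case: one must commit to a canonical $B$ even when many candidate unbreakable bags exist, and then propagate weak invariance through the representative construction, the color-encoding, and Lemma~\ref{lem:selectWeaklyIsoInvariant}-style sortings at every recursive layer. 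Once this is set up, the running-time analysis follows the recurrence~(\ref{eqn:recUndRecurrence}): each recursive call shrinks the graph by removing $|A|-|G_\ell^R|$ vertices, yielding at most $h(\mathcal{F},k)n^{\cO(1)}$ invocations of $\mathcal{A}$ on $q$-unbreakable $\mathcal{F}$-free graphs of at most $n$ vertices, plus $g(\mathcal{F},k,s)n^{\cO(1)}$ additional work.
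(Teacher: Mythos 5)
Your plan follows the same overall route as the paper: an isomorphism-invariant star decomposition with a center that is small, breakable (stable separators), or unbreakable (important-separator extension), recursion on leaves, replacement by representatives, weak unpumping, a call to $\mathcal{A}$ on $G^\star$, and lifting. However, two steps that you treat as routine are exactly where the paper has to do real work, and as written your versions of them fail.

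First, the small-$D$ case. You claim that if $D$ is small and witnesses a low-order separation whose separator is a clique in $\imp{G}{k}$, this ``violates $(s,k,k)$-improved-clique-unbreakability once $k'$ is large enough.'' It does not: an improved clique separation of order at most $k$ only contradicts $(s,k,k)$-improved-clique-unbreakability when \emph{both} sides have more than $s$ vertices, and in a recursive call $G$ is only one piece of the original graph, so one side may well be small. The paper resolves this by threading an invariant through the recursion: each recursive instance $(G,D)$ comes with the promise that some completion $(G,\iota)\oplus(H,\iota)$ is $\mathcal{F}$-free and that the $k$-improved graph of the \emph{completed} graph has a connectivity-sensitive star decomposition whose center is a $k$-atom or small clique meeting $V(H)\setminus V(G)$ and whose leaves have at most $s$ vertices (this promise is manufactured at the top level by the Elberfeld--Schweitzer atom decomposition of $\imp{G}{k}$, Lemma~\ref{lem:unbreakbleAndstarDeco}). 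With that promise, when $D$ is a small clique in the improved graph one deduces $|V(G)|\leq 2^k s + k$ and brute-forces. Without some such invariant your case analysis is not exhaustive.

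Second, the unbreakability of $G^\star$. The function $q^\star(\mathcal{F},\cdot)$ handed to $\mathcal{A}$ must depend only on $\mathcal{F}$ and the separation order $i$ --- not on $k$ or $s$. After unpumping, the leaves of $G^\star$ have size at most $\eta_w(t,n_{\mathcal{F}})$ where the boundary size $t$ can be as large as $k^2 4^k$, so the trivial argument ``the leaves are small, hence $G^\star$ inherits the unbreakability of $B$ up to the leaf size'' yields an unbreakability guarantee that grows with $k$, which is not good enough. The missing idea is the paper's Lemma~\ref{lem:repsAreUnbreakabilityTiedToBoundary}: an $h$-weak representative is $q$-unbreakability-tied to its boundary for a $q(i)$ depending only on $i$ and $h$, proved by taking the $i$-important separator extension of the boundary \emph{inside the representative}, showing the resulting star decomposition must have leaves of size at most $\eta_w(i^2 4^i,h)$ (else unpumping would produce a strictly smaller equivalent graph, contradicting minimality of the representative), and then composing the unbreakability-tied bounds. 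Combined with the chain $D$ unbreakable $\Rightarrow$ the extension $A$ is unbreakability-tied to $D$ $\Rightarrow$ $A$ is unbreakable, this is what makes $G^\star$ $q^\star$-unbreakable with the right quantifiers. Your proposal absorbs this into ``the slack between $q^\star$ and the various leaf-size bounds,'' but the slack cannot absorb a dependence on $k$; a genuine argument is needed here.
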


The proof of Theorem~\ref{thm:mainReduction} follows directly by using the premise of Theorem~\ref{thm:mainReduction} as the premise of Lemma~\ref{lem:unbreakToClique}, the conclusion of Lemma~\ref{lem:unbreakToClique} as the premise of Lemma~\ref{lem:cliqueToGeneral}, and observing that the conclusion of Lemma~\ref{lem:cliqueToGeneral} is the conclusion of Theorem~\ref{thm:mainReduction}. We now give a full proof.

\begin{proof}[Proof of Theorem~\ref{thm:mainReduction}.]
Let $q^\star: \mathbb{F}^\star \times \mathbb{N} \rightarrow \mathbb{N}$ be the function given by Lemma~\ref{lem:unbreakToClique}. We claim that $q^\star$ satisfies the conclusion of Theorem~\ref{thm:mainReduction}

Let $\mathbb{F} \subseteq \mathbb{F}^\star$ be a collection of finite sets of graphs,
$\kappa : \mathbb{F} \rightarrow \mathbb{N}$ be a function and 
${\cal A}$ be a canonization algorithm  for $(q^\star, \kappa)$-unbreakable $\mathbb{F}$-free classes.
By Lemma~\ref{lem:unbreakToClique} there exists a function $\kappa_{\cal B} : \mathbb{F} \rightarrow \mathbb{N}$ and a canonization algorithm ${\cal B}$ for $\kappa_{\cal B}$-improved-clique-unbreakable $\mathbb{F}$-free classes.
By Lemma~\ref{lem:cliqueToGeneral} there exists a canonization algorithm ${\cal C}$ for $\mathbb{F}$-free classes.

The running time of ${\cal C}$ is upper bounded by 
$g({\cal F})n^{\cO(1)}$ (for some function $g \colon \mathbb{F} \rightarrow \mathbb{N}$ as guaranteed by Lemma~\ref{lem:cliqueToGeneral}) plus the time taken by at most $g({\cal F})n^{\cO(1)}$ invocations of ${\cal B}$ on instances $(G', {\cal F}, k, s)$ where $G'$ is a $(s,k,k)$-improved-clique-unbreakable, ${\cal F}$-topological minor free graph on at most $n$ vertices and $k$ and $s$ are integers upper bounded by a function of ${\cal F}$ and $\kappa_{\cal B}$.
However, $\kappa_B$ is a function that depends solely on ${\cal F}$, and therefore $k$ and $s$ are integers upper bounded by a function of ${\cal F}$. We denote these functions by $k({\cal F})$  and $s({\cal F})$ respectively. 
By Lemma~\ref{lem:unbreakToClique}, the running time of each invocation of ${\cal B}$ on an instance $(G', {\cal F}, k({\cal F}), s({\cal F}))$ 
is upper bounded by $g'({\cal F},k({\cal F}),s({\cal F}))n^{\cO(1)}$ and the total time taken by at most $h({\cal F},k({\cal F}))n^{\cO(1)}$ invocations of ${\cal A}$ on $({\cal F}, G, k({\cal F})/2, q)$, where ${\cal F} \in \mathbb{F}$, $q : [k({\cal F})/2] \rightarrow \mathbb{N}$ is a function such that $q(i) \leq q^\star(i)$ for $i \leq k({\cal F})/2$, and $G$ is an ${\cal F}$-free, $q$-unbreakable graph on at most $n$ vertices.
Here the functions $g'$ and $h$ are the functions denoted by $g$ and $h$ in the conclusion of Lemma~\ref{lem:unbreakToClique}, respectively. 
Thus the total running time and total number of invocations of ${\cal A}$ by ${\cal C}$ is upper bounded by $g^\star({\cal F})n^{\cO(1)}$ for some function $g^\star$, as claimed. This concludes the proof.
\end{proof}

\section{Recursive Understanding for Canonization}\label{sec:recursiveUnderstanding}
This part of the paper contains the proof of Theorem~\ref{thm:intro:toUnbreakable}. In Section~\ref{sec:canonBoundaried} we develop the language and basic tools for dealing with canonization of boundaried graphs while also being mindful of cuts (because of unbreakability) and excluding certain (boundaried) graphs as a topological minor. 

In Section~\ref{sec:unpumping} we define the crucial {\em unpumping} and {\em lifting} operations: If we have canonized some large parts of the graph, and these parts have been carefully selected, then unpumping allows us to essentially ``shrink'' the already canonized parts to constant size. We can then continue working on the reduced graph $G^\star$. The {\em lifting} operation then lifts canonical labelings of $G^\star$ back to canonical labelings of $G$.

In Section~\ref{sec:redToCliqueUnbreakable} we prove Lemma~\ref{lem:cliqueToGeneral}, namely that in order to canonize ${\cal F}$-free graphs $G$ it is sufficient to be able to canonize ${\cal F}$-free graphs that additionally are $(k, q)$-improved-clique unbreakable for sufficiently large $k$ and any $q$ depending only on $k$ and ${\cal F}$.

In Section~\ref{sec:reductionToUnbreakable} we prove Lemma~\ref{lem:unbreakToClique}, namely that in order to canonize $(k, q)$-improved-clique unbreakable graphs, it is sufficient to canonize $q^\star$-unbreakable ${\cal F}$-free graphs where $q^\star : [k] \rightarrow \mathbb{N}$ is upper bounded by a function $\tilde{q} : \mathbb{N} \rightarrow \mathbb{N}$ which depends on ${\cal F}$ but is {\em independent of} $k$.

\subsection{Canonization Tools for Boundaried  Graphs}\label{sec:canonBoundaried}\label{sec:firstDecomp}
For technical reasons, we define two types of signatures: strong and weak. Each type has a designated lemma that explains its properties.

\begin{definition}[{\bf Strong $k$-Cut Signature}]\label{def:cutSignature}
Let $\Lambda$ be a set, $(G, \iota)$ be a $\Lambda$-boundaried graph. The {\em strong $k$-cut signature} of  $(G, \iota)$ is the function $\Gamma$ that takes as input two subsets $\hat{A}$ and $\hat{B}$ of $\Lambda$ such that  $\hat{A} \cup \hat{B} = \Lambda$ and outputs $\min(o - |\hat{A} \cap \hat{B}|, k+1)$, where $o$ is the smallest order of a separation $(A, B)$ of $G$ so that $A \cap \bnd G = \iota^{-1}(\hat{A})$ and $B \cap \bnd G = \iota^{-1}(\hat{B})$. If no such separation exists then $\Gamma(\hat{A}, \hat{B}) = \infty$.
\end{definition}

\begin{lemma}\label{lem:sameImproved}
Let $\Lambda$ be a set and $(G, \iota)$, $(G_1, \iota_1)$ and $(G_2, \iota_2)$ be $\Lambda$-boundaried graphs. If $(G_1, \iota_1)$ and $(G_2, \iota_2)$ have the same strong $k$-cut signature $\Gamma$ then
$$\imp{((G, \iota) \oplus (G_1, \iota_1))}{k}[V(G)] = \imp{((G, \iota) \oplus (G_2, \iota_2))}{k}[V(G)]$$
\end{lemma}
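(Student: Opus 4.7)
The plan is to reduce the lemma to an equality of truncated $u$--$v$ connectivities for $u,v\in V(G)$, and then to express those connectivities via a formula depending on $(G_j,\iota_j)$ only through the signature $\Gamma$. Both graphs in the conclusion have vertex set $V(G)$, so it suffices to show that for every distinct pair $u,v\in V(G)$,
\[
\min\bigl(\conn_{(G,\iota)\oplus(G_1,\iota_1)}(u,v),\,k+1\bigr) \;=\; \min\bigl(\conn_{(G,\iota)\oplus(G_2,\iota_2)}(u,v),\,k+1\bigr),
\]
under the convention $\conn=+\infty$ on adjacent pairs; this is precisely the information captured by the edges of $\imp{\cdot}{k}$.

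The heart of the argument is a decomposition of an arbitrary $u$--$v$ separation $(A,B)$ in $(G,\iota)\oplus(G_j,\iota_j)$ into its restrictions $(A_G,B_G):=(A\cap V(G),B\cap V(G))$, a $u$--$v$ separation of $G$, and $(\widetilde A,\widetilde B):=(A\cap V(G_j),B\cap V(G_j))$, a separation of $G_j$; these restrictions agree on the identified vertices, namely those with labels in $\Lambda_\mathrm{s}:=\iota(\bnd G)\cap\iota_j(\bnd G_j)$. Encoding the way $(A,B)$ splits labels by a pair $(\hat A,\hat B)\subseteq\Lambda\times\Lambda$ with $\hat A\cup\hat B=\Lambda$ (extending arbitrarily on labels having no preimage), and counting each separator vertex exactly once, one obtains
\[
|A\cap B|\;=\;|A_G\cap B_G|\;+\;|\widetilde A\cap\widetilde B|\;-\;|\hat A\cap\hat B\cap\Lambda_\mathrm{s}|,
\]
where the last term corrects the double-counting of shared boundary vertices on the two sides. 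Conversely, any pair of separations of $G$ and $G_j$ compatible with the same configuration $(\hat A,\hat B)$ glues back to a $u$--$v$ separation of the combined graph of exactly this order. Writing $m_G^{u,v}(\hat A,\hat B)$ for the minimum order of a $u$--$v$ separation of $G$ whose trace on $\bnd G$ is prescribed by $(\hat A,\hat B)$ (or $+\infty$ if no such separation exists) and substituting the definition of $\Gamma$ together with its minimum over $G_j$-side separations, I arrive at the identity
\[
\min\bigl(\conn_{(G,\iota)\oplus(G_j,\iota_j)}(u,v),\,k+1\bigr)\;=\;\min\!\left(\min_{\hat A\cup\hat B=\Lambda}\!\!\Bigl[m_G^{u,v}(\hat A,\hat B)+\Gamma(\hat A,\hat B)+|\hat A\cap\hat B\setminus\Lambda_\mathrm{s}|\Bigr],\,k+1\right).
\]

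The right-hand side depends on $j$ only through $\Gamma$ and through $\Lambda_\mathrm{s}$, but $\Lambda_\mathrm{s}$ is itself recoverable from $\Gamma$---for instance, $\iota_j(\bnd G_j)$ is revealed by the values $\Gamma(\{\lambda\},\Lambda)$ on singletons, which differ according to whether $\lambda$ lies in $\iota_j(\bnd G_j)$---so the hypothesis $\Gamma_1=\Gamma_2$ immediately yields the required equality of truncated connectivities, and hence equality of the edge sets of the two induced subgraphs on $V(G)$. The main technical point I expect to have to handle carefully is the interaction of the cap at $k+1$ with the two-sided decomposition: when $\Gamma(\hat A,\hat B)$ saturates at $k+1$ for some configuration, the minimum $G_j$-side separator has order at least $k+1+|\hat A\cap\hat B|$, and one must verify that the capped formula then still agrees with $\min(\conn,k+1)$ (which it does, since the combined order is also at least $k+1$). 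A closely related bookkeeping issue is that edges of $G_j$ joining two shared boundary vertices may create new adjacencies between $u$ and $v$ in the glued graph; such adjacencies are likewise signalled by $\Gamma$ saturating on configurations separating the corresponding labels (no such separation of $G_j$ exists), so the same formula covers the adjacent case uniformly.
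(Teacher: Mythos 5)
Your proposal is correct and rests on the same mechanism as the paper's proof: decompose a $u$--$v$ separation of the glued graph into its traces on $G$ and on $G_j$, and use the signature to swap the $G_j$-side for one of equal cost with the same boundary trace; you merely package both directions of this exchange into a single closed-form expression for the truncated connectivity instead of arguing the two implications on edges separately. The one ingredient you add beyond the paper --- recovering $\iota_j(\bnd G_j)$, and hence $\Lambda_{\mathrm{s}}$, from the singleton values $\Gamma(\{\lambda\},\Lambda)$ so that your formula is visibly $j$-independent --- is a correct piece of bookkeeping needed for your formulation but not a different idea.
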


Said more plainly, the conclusion of Lemma~\ref{lem:sameImproved} states that the subgraphs of the $k$-improved graph induced by $V(G)$ of $(G, \iota) \oplus (G_1, \iota_1)$ and $(G, \iota) \oplus (G_2, \iota_2)$ are the same. 

\begin{proof}
Let $u$ and $v$ be vertices in $V(G)$ so that $uv \notin E(G)$. We prove that $uv \notin E(\imp{((G, \iota) \oplus (G_1, \iota_1))}{k})$ implies that $uv \notin E(\imp{((G, \iota) \oplus (G_2, \iota_2))}{k})$. Suppose $uv \notin E(\imp{((G, \iota) \oplus (G_1, \iota_1))}{k})$. Then there exists a $u$-$v$ separation $(A, B)$ of $(G, \iota) \oplus (G_1, \iota_1)$ of order less than $k$. Define $\hat{A} = \iota(A \cap \bnd G)$ and $\hat{B} = \iota(B \cap \bnd G)$. It follows that $\Gamma(\hat{A}, \hat{B}) \leq |(A \cap B) \setminus V(G)| < k$. Thus there exists a separation $(A_2, B_2)$ of $G_2$ of order $\Gamma(\hat{A}, \hat{B})$ so that $A_2 \cap \bnd G_2 = \iota^{-1}(\hat{A})$ and $B_2 \cap \bnd G_2 = \iota^{-1}(\hat{B})$. We have that $((A \cap V(G)) \cup A_2, (B \cap V(G)) \cup B_2)$ is a $u$-$v$ separation of $(G, \iota) \oplus (G_2, \iota_2)$ of order 
$$|A \cap B \cap V(G)| + \Gamma(\hat{A}, \hat{B}) \leq |A \cap B \cap V(G)| + |(A \cap B) \setminus V(G)| = |A \cap B| < k$$
witnessing that $uv$ is not an edge in $\imp{((G, \iota) \oplus (G_1, \iota_1))}{k}$. The proof that $uv \notin E(\imp{((G, \iota) \oplus (G_2, \iota_2))}{k})$ implies that $uv \notin E(\imp{((G, \iota) \oplus (G_1, \iota_1))}{k})$ is symmetric.
\end{proof}

\begin{definition}[{\bf Weak Cut Signature}]\label{def:weakcutSignature}
Let $\Lambda$ be a set, $(G, \iota)$ be a $\Lambda$-boundaried graph. The {\em weak cut signature} of  $(G, \iota)$ is a function $\Gamma$ that takes as input two subsets $\hat{A}$ and $\hat{B}$ of $\Lambda$ such that  $\hat{A} \cup \hat{B} = \Lambda$ and outputs $o - |\hat{A} \cap \hat{B}|$, where $o$ is the smallest order of a separation $(A, B)$ of $G$ so that $A \cap \bnd G \supseteq \iota^{-1}(\hat{A})$ and $B \cap \bnd G \supseteq \iota^{-1}(\hat{B})$. 
\end{definition}

The key difference between weak cut signatures and strong cut signatures is that for weak cut signatures we require containment  $A \cap \bnd G \supseteq \iota^{-1}(\hat{A})$ and $B \cap \bnd G \supseteq \iota^{-1}(\hat{B})$, while strong signatures require equality $A \cap \bnd G = \iota^{-1}(\hat{A})$ and $B \cap \bnd G = \iota^{-1}(\hat{B})$.
When equality is required (as in strong $k$-cut signatures) it is possible for the separation $(A, B)$ not to exist, and even if it exists then the smallest order of such a separation can be as large as $|V(G)|-2$. This leads to the possibility for $\Gamma$ from Definition~\ref{def:cutSignature} to take value $k+1$.
On the other hand, when only containment is required (for weak cut signatures), the separation $(A, B)$ with $A = \bnd G$ and $B = V(G)$ satisfies the requirement. Therefore the weak cut signature $\Gamma(\hat{A}, \hat{B}) \leq |\bnd G|$  for every boundaried graph $(G, \iota)$.  We formulate this as an observation

\begin{observation}
For every $\Lambda$-boundaried graph $(G, \iota)$ and $\hat{A},\hat{B}\subseteq \bnd G$ with $\hat{A}\cup \hat{B}=\Lambda$, the weak cut signature $\Gamma$ of $G$ satisfies $\Gamma(\hat{A}, \hat{B}) \leq |\bnd G|$. 
\end{observation}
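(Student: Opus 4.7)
My plan is to exhibit a single concrete separation of $G$ that serves as a feasible witness in the infimum defining $o$ from Definition~\ref{def:weakcutSignature}, simultaneously for every admissible pair $(\hat{A},\hat{B})$. The natural candidate is the degenerate separation $(A,B) \coloneqq (\bnd G, V(G))$. I would first verify that this really is a separation: because $A \setminus B = \emptyset$, the condition $E(A \setminus B, B \setminus A) = \emptyset$ holds vacuously, and $A \cup B = V(G)$ trivially. Its order is $|A \cap B| = |\bnd G|$.

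Next I would verify the containment requirements. Since $\iota^{-1}(\hat{A})$ and $\iota^{-1}(\hat{B})$ are contained in $\bnd G$ by the definition of a boundaried graph, we have $A \cap \bnd G = \bnd G \supseteq \iota^{-1}(\hat{A})$ and $B \cap \bnd G = \bnd G \supseteq \iota^{-1}(\hat{B})$ for every choice of $\hat{A}, \hat{B} \subseteq \Lambda$ with $\hat{A} \cup \hat{B} = \Lambda$. Thus $(A,B)$ is a competitor in the infimum defining $o$, giving $o \leq |\bnd G|$. Substituting into the definition yields
\[
\Gamma(\hat{A},\hat{B}) \;=\; o - |\hat{A}\cap \hat{B}| \;\leq\; |\bnd G| - |\hat{A}\cap \hat{B}| \;\leq\; |\bnd G|,
\]
as desired.

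There is essentially no obstacle here; the statement is really just formalising the contrast with strong $k$-cut signatures that the surrounding paragraph points out. In the strong version, the equality requirement $A \cap \bnd G = \iota^{-1}(\hat{A})$ can rule out \emph{every} separation and force the value $\infty$, whereas in the weak version the trivial separation $(\bnd G, V(G))$ is always admissible, providing the a priori bound $|\bnd G|$. I would present the proof as a single short paragraph immediately after the statement.
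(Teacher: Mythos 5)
Your proof is correct and matches the paper's own justification exactly: the text preceding the observation uses precisely the separation $(A,B)=(\bnd G, V(G))$, which satisfies the containment requirements for every admissible $(\hat{A},\hat{B})$ and has order $|\bnd G|$. Your write-up just makes the routine verifications explicit.
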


We now prove a lemma similar in spirit to Lemma~\ref{lem:sameImproved} that shows that gluing onto $G$ two graphs with the same weak cut signature results in the same sets being $(q, r)$-unbreakable for every pair $r,q$ of positive integers. The proof is almost identical to the proof of Lemma~\ref{lem:sameImproved}.

\begin{lemma}\label{lem:sameUnbreakable}
Let $\Lambda$ be a set and $(G, \iota)$, $(G_1, \iota_1)$, $(G_2, \iota_2)$ be $\Lambda$-boundaried graphs, $q$, $r$ be non-negaitive integers and $S$ be a subset of $V(G)$. If $(G_1, \iota_1)$ and $(G_2, \iota_2)$ have the same weak cut signature $\Gamma$ then $S$ is $(q, r)$-unbreakable in  $(G, \iota) \oplus (G_1, \iota_1)$ if and only if $S$ is $(q, r)$-unbreakable in  $(G, \iota) \oplus (G_2, \iota_2)$
\end{lemma}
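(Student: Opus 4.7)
The plan is to mirror the proof of Lemma~\ref{lem:sameImproved} nearly verbatim, substituting the strong $k$-cut signature by the weak cut signature. By symmetry, it suffices to show that if $S$ is $(q,r)$-breakable in $(G, \iota) \oplus (G_1, \iota_1)$, then $S$ is also $(q,r)$-breakable in $(G, \iota) \oplus (G_2, \iota_2)$.

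Starting from a witnessing separation $(A, B)$ of $(G, \iota) \oplus (G_1, \iota_1)$ of order at most $r$ with $|A \cap S|, |B \cap S| > q$, I would define $\hat{A}, \hat{B} \subseteq \Lambda$ to be the sets of labels whose corresponding identified vertex in the glued graph lies in $A$ and in $B$ respectively; this gives $\hat{A} \cup \hat{B} = \Lambda$. The restriction $(A \cap V(G_1), B \cap V(G_1))$ is then a separation of $G_1$ satisfying the containment boundary constraints of Definition~\ref{def:weakcutSignature} for $(\hat{A}, \hat{B})$, so the minimum order achievable for these constraints satisfies $\Gamma_{G_1}(\hat{A}, \hat{B}) + |\hat{A} \cap \hat{B}| \leq |A \cap B \cap V(G_1)|$. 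Since $\Gamma_{G_1} = \Gamma_{G_2}$, I obtain a separation $(A_2, B_2)$ of $G_2$ satisfying the analogous containment constraints with $|A_2 \cap B_2| \leq |A \cap B \cap V(G_1)|$.

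Next, I would set $A'' := (A \cap V(G)) \cup A_2$ and $B'' := (B \cap V(G)) \cup B_2$ and verify that $(A'', B'')$ is a valid separation of $(G, \iota) \oplus (G_2, \iota_2)$. Edges of $G$ do not cross because $(A, B)$ did not, edges of $G_2$ do not cross because $(A_2, B_2)$ does not, and shared boundary vertices lie consistently on both sides because the containment constraints placed on $(A_2, B_2)$ agree with the role of the corresponding labels in $\hat{A}, \hat{B}$. Because $S \subseteq V(G)$, we have $A'' \cap S \supseteq A \cap S$ and $B'' \cap S \supseteq B \cap S$, so the strict lower bound $q$ on either side is preserved.

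The main bookkeeping obstacle, and the one place where the weak setting differs only superficially from the strong setting, is to bound $|A'' \cap B''| \leq r$. Following the calculation of Lemma~\ref{lem:sameImproved}, the intersection decomposes as its $V(G)$-contribution, equal to $|A \cap B \cap V(G)|$, plus its $V(G_2)\setminus V(G)$-contribution, bounded by $|A_2 \cap B_2|$, after subtracting the double-count of shared boundary vertices whose labels lie in $\hat{A} \cap \hat{B}$. Those shared-boundary labels are precisely the ones forced onto the separator by the containment constraints, and their count matches for $G_1$ and for $G_2$ thanks to the common signature. Putting this together gives $|A'' \cap B''| \leq |A \cap B \cap V(G)| + |(A \cap B) \setminus V(G)| = |A \cap B| \leq r$, so $(A'', B'')$ witnesses $(q,r)$-breakability of $S$ in $(G, \iota) \oplus (G_2, \iota_2)$, completing the argument.
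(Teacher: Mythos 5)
Your proposal is correct and follows essentially the same route as the paper's proof: extract the label pair $(\hat{A},\hat{B})$ from the witnessing separation, use the shared weak cut signature to obtain a separation $(A_2,B_2)$ of $G_2$ of no larger order, and recombine with $(A\cap V(G), B\cap V(G))$ to get a witnessing separation of the same order in $(G,\iota)\oplus(G_2,\iota_2)$, noting that both sides' intersections with $S\subseteq V(G)$ can only grow. Your accounting of the double-counted boundary vertices in $\hat{A}\cap\hat{B}$ is in fact slightly more explicit than the paper's, which simply bounds the order by $|A\cap B\cap V(G)|+\Gamma(\hat{A},\hat{B})$.
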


\begin{proof}
Let $S$ be a vertex set in $G$. We prove that $S$ being $(q, r)$-breakable in $(G, \iota) \oplus (G_1, \iota_1)$ implies that $S$ is $(q, r)$-breakable in $(G, \iota) \oplus (G_2, \iota_2)$. Suppose $S$ is $(q, r)$-breakable in $(G, \iota) \oplus (G_1, \iota_1)$ and let $(A, B)$ be a witnessing separation of $(G, \iota) \oplus (G_1, \iota_1)$. Define $\hat{A} = \iota(A \cap \bnd G)$ and $\hat{B} = \iota(B \cap \bnd G)$. It follows that $\Gamma(\hat{A}, \hat{B}) \leq |(A \cap B) \setminus V(G)|$. 
Thus there exists a separation $(A_2, B_2)$ of $G_2$ of order at most $\Gamma(\hat{A}, \hat{B}) + |\hat{A} \cap \hat{B}|$
so that $A_2 \cap \bnd G_2 \supseteq \iota^{-1}(\hat{A})$ and $B_2 \cap \bnd G_2 \supseteq \iota^{-1}(\hat{B})$. We have that $((A \cap V(G)) \cup A_2, (B \cap V(G)) \cup B_2)$ is a separation of $(G, \iota) \oplus (G_2, \iota_2)$ of order at most
$$|A \cap B \cap V(G)| + \Gamma(\hat{A}, \hat{B}) \leq |A \cap B \cap V(G)| + |(A \cap B) \setminus V(G)| = |A \cap B| \leq r.$$
Since $((A \cap V(G)) \cup A_2) \cap V(G) \supseteq A \cap V(G)$, $((B \cap V(G)) \cup B_2) \cap V(G) \supseteq B \cap V(G)$, and $S \subseteq V(G)$, it follows that $((A \cap V(G)) \cup A_2, (B \cap V(G)) \cup B_2)$ witnesses that $S$ is $(q, r)$-breakable in $(G, \iota) \oplus (G_2, \iota_2)$. The proof that $S$ being $(q, r)$-breakable in $(G, \iota) \oplus (G_2, \iota_2)$ implies that $S$ is $(q, r)$-breakable in $(G, \iota) \oplus (G_1, \iota_1)$ is symmetric.
\end{proof}

Next we show that when gluing two boundaried graphs, the weak cut signature of the resulting graph only depends on the weak cut signatures of the graphs we glue together. In fact we show a slightly stronger statement. 
The proof of the next Lemma is very similar to the proof of Lemma~\ref{lem:sameUnbreakable}.

\begin{lemma}\label{lem:sameWeakCutAfterGluing}
Let $\Lambda$ be a set and $(G, \iota)$, $(G_1, \iota_1)$, $(G_2, \iota_2)$ be $\Lambda$-boundaried graphs, $X,Y$ be subsets of $V(G)$. If $(G_1, \iota_1)$ and $(G_2, \iota_2)$ have the same weak cut signature $\Gamma$ then the minimum order of an $X$-$Y$ separation in $(G, \iota) \oplus (G_1, \iota_1)$ is equal to the  minimum order of an $X$-$Y$ separation in  $(G, \iota) \oplus (G_2, \iota_2)$.
\end{lemma}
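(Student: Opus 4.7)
The plan is to mirror the proof of Lemma~\ref{lem:sameUnbreakable} almost verbatim, replacing its goal of exhibiting a witness of $(q,r)$-breakability of a fixed set with the goal of exhibiting a small-order $X$-$Y$ separation. I would show one inequality on the minimum orders by transforming a minimum $X$-$Y$ separation of $(G,\iota)\oplus(G_1,\iota_1)$ into one of $(G,\iota)\oplus(G_2,\iota_2)$ of no greater order, and conclude by symmetry.

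First, fix a minimum-order $X$-$Y$ separation $(A,B)$ of $(G,\iota)\oplus(G_1,\iota_1)$ and split it as $A_G=A\cap V(G)$, $B_G=B\cap V(G)$, $A_1=A\cap V(G_1)$, $B_1=B\cap V(G_1)$; then $(A_G,B_G)$ is a separation of $G$ and $(A_1,B_1)$ is a separation of $G_1$. Define $\hat A\subseteq\Lambda$ to contain every label of a boundary vertex of the glued graph that lies in $A$, padded by any label outside $\iota(\bnd G)\cup\iota_1(\bnd G_1)$; define $\hat B$ analogously from $B$. Then $\hat A\cup\hat B=\Lambda$, and by construction $A_1\cap\bnd G_1\supseteq\iota_1^{-1}(\hat A)$ and $B_1\cap\bnd G_1\supseteq\iota_1^{-1}(\hat B)$, so $\Gamma(\hat A,\hat B)\le|A_1\cap B_1|-|\hat A\cap\hat B|$.

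Second, since $(G_2,\iota_2)$ has the same weak cut signature $\Gamma$, there exists a separation $(A_2,B_2)$ of $G_2$ with $A_2\cap\bnd G_2\supseteq\iota_2^{-1}(\hat A)$, $B_2\cap\bnd G_2\supseteq\iota_2^{-1}(\hat B)$ and $|A_2\cap B_2|=\Gamma(\hat A,\hat B)+|\hat A\cap\hat B|\le|A_1\cap B_1|$. Put $A'=A_G\cup A_2$ and $B'=B_G\cup B_2$. Since $X,Y\subseteq V(G)$ we have $X\subseteq A_G\subseteq A'$ and $Y\subseteq B_G\subseteq B'$, and since every edge of $(G,\iota)\oplus(G_2,\iota_2)$ is either internal to $G$ (hence within $A_G$ or within $B_G$) or internal to $G_2$ (hence within $A_2$ or within $B_2$), the pair $(A',B')$ is an $X$-$Y$ separation of $(G,\iota)\oplus(G_2,\iota_2)$.

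The main technical obstacle is bounding $|A'\cap B'|$ by $|A\cap B|$. Writing $S=V(G)\cap V(G_2)$ for the shared vertices of the glue, I would decompose $A'\cap B'$ into its contributions from $V(G)\setminus S$ (equal to $(A_G\cap B_G)\setminus S$), from $V(G_2)\setminus S$ (equal to $(A_2\cap B_2)\setminus S$), and from $S$. For a shared vertex $v\in S$ its label lies in $\hat A\cup\hat B$, so the inclusion constraints force $v\in A_2$ whenever $\iota(v)\in\hat A$ and $v\in B_2$ whenever $\iota(v)\in\hat B$; the only shared vertices that land in $A'\cap B'$ are thus either those whose label lies in $\hat A\cap\hat B$ (which are already accounted for inside $|A_2\cap B_2|$) or those that already lie in $A_G\cap B_G$. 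Combined with the identity $|A\cap B|=|(A_G\cap B_G)\setminus S|+|A_1\cap B_1|$ and the inequality $|A_2\cap B_2|\le|A_1\cap B_1|$, this bookkeeping---analogous to the one implicit at the end of the proof of Lemma~\ref{lem:sameUnbreakable}---yields $|A'\cap B'|\le|A\cap B|$. A symmetric argument swapping the roles of $(G_1,\iota_1)$ and $(G_2,\iota_2)$ then gives the reverse inequality, proving equality of the minimum $X$-$Y$ separation orders.
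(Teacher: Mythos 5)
Your proposal is correct and follows essentially the same route as the paper's proof: project the minimum $X$-$Y$ separation onto the boundary labels, invoke equality of the weak cut signatures to obtain a replacement separation of $G_2$ of no larger order, glue it back onto the restriction to $V(G)$, and conclude by symmetry. Your treatment is if anything slightly more careful than the paper's (the explicit padding of $\hat A,\hat B$ to cover $\Lambda$ and the vertex-by-vertex accounting over the shared set $S$), but the argument is the same.
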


\begin{proof}
Let $(A, B)$ be a minimum order $X$-$Y$ separation of $(G, \iota) \oplus (G_1, \iota_1)$.
We prove that $(G, \iota) \oplus (G_2, \iota_2)$ also has an $X$-$Y$ separation of order at most $|A \cap B|$.
Define $\hat{A} = \iota(A \cap \bnd G)$ and $\hat{B} = \iota(B \cap \bnd G)$. It follows that $\Gamma(\hat{A}, \hat{B}) \leq |(A \cap B) \setminus V(G)|$. 
Thus there exists a separation $(A_2, B_2)$ of $G_2$ of order at most $\Gamma(\hat{A}, \hat{B}) + |\hat{A} \cap \hat{B}|$
so that $A_2 \cap \bnd G_2 \supseteq \iota^{-1}(\hat{A})$ and $B_2 \cap \bnd G_2 \supseteq \iota^{-1}(\hat{B})$. We have that $((A \cap V(G)) \cup A_2, (B \cap V(G)) \cup B_2)$ is a separation of $(G, \iota) \oplus (G_2, \iota_2)$ of order at most
$$|A \cap B \cap V(G)| + \Gamma(\hat{A}, \hat{B}) \leq |A \cap B \cap V(G)| + |(A \cap B) \setminus V(G)| = |A \cap B|.$$
Since $((A \cap V(G)) \cup A_2) \cap V(G) \supseteq A \cap V(G)$, $((B \cap V(G)) \cup B_2) \cap V(G) \supseteq B \cap V(G)$, and $X,Y \subseteq V(G)$, it follows that $((A \cap V(G)) \cup A_2, (B \cap V(G)) \cup B_2)$ is an $X$-$Y$ separation in $(G, \iota) \oplus (G_2, \iota_2)$ of order at most $|A \cap B|$. 
The proof that the minimum order of an $X$-$Y$ separation in $(G, \iota) \oplus (G_1, \iota_1)$ is at most the minimum order of an $X$-$Y$ separation in $(G, \iota) \oplus (G_2, \iota_2)$ is symmetric. 
\end{proof}

The weak and strong cut signatures succintly capture the effect that a subgraph of $G$ has on the rest of the graph with respect to separator sizes. On the other hand, Lemma~\ref{lem:topFolioGlueAndForget} succintly captures the effect that a subgraph of $G$ has on the rest of the graph with respect to being ${\cal F}$-free. We now turn to studying the automorphisms of boundaried graphs, we are specifically interested in which permutations of the boundary can be extended to automorphisms of the whole boundaried graph. 

\begin{definition}[{\bf Extendable Permutation}]
Let $\Lambda$ be a set, $(G, \iota)$ be a $\Lambda$-boundaried, colored graph with $\iota(\bnd G) = \Lambda$. A permutation $\pi : \Lambda \rightarrow \Lambda$ is an {\em extendable permutation} for $(G, \iota)$ if there exists an automorphism $\phi$ of $G$ such that for every $v \in \bnd G$ it holds that $\phi(v) = \iota^{-1}(\pi(\iota(v)))$.
\end{definition}

Let $(G, \iota)$ be a colored $\Lambda$-boundaried graph. The {\em set of extendable permutations} of $G$ is simply the set of all functions  $\pi : \Lambda \rightarrow \Lambda$ such that $\pi$ is an extendable permutation of $(G, \iota)$.

We are now ready to prove a key ingredient of the proof of Theorem~\ref{thm:mainReduction}: that for every boundaried graph $G$ there exists a small gadget that preserves all relevant information for solving canonization on ${\cal F}$-free graphs. Here the precise meaning of small depends on whether we want to preserve strong or weak cut signatures.

\begin{lemma}\label{lem:equivalentStrongGadget}
For every three integers $t$, $f$, $k$ there exists an integer $\ell$ so that for every $t$-boundaried colored graph $(G, \iota)$ there exists a $t$-boundaried colored graph $(G', \iota')$ such that
\begin{itemize}\setlength\itemsep{-.7mm}
\item $|V(G')| \leq \ell$,
\item $|V(G')| \leq |V(G)|$,
\item No vertex in $\bnd(G', \iota')$ has the same color as a vertex in $V(G') \setminus \bnd(G', \iota')$.
\item The range of ${\sf col}_{G'}$ on $V(G') \setminus \bnd(G', \iota)$ is $[\ell]$.
\item Every component $C$ of $G - \bnd{G}$ satisfies $N(C) = \bnd{G}$ if and only if every component $C'$ of $G' - \bnd{G'}$ satisfies $N(C') = \bnd{G'}$,
\item $(G, \iota)$ and $(G', \iota')$ are boundary-consistent,
\item $(G, \iota)$ and $(G', \iota')$ have the same strong $k$-cut signature,
\item $(G, \iota)$ and $(G', \iota')$ have the same $f$-folio
\item $(G, \iota)$ and $(G', \iota')$ have the same set of extendable permutations.
\end{itemize}
\end{lemma}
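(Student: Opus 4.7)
The plan is to define an equivalence relation on $t$-boundaried colored graphs that captures exactly the information the lemma requires $(G',\iota')$ to preserve, show that this relation has only finitely many classes in terms of $t$, $f$, $k$, and then take $(G',\iota')$ to be a minimum-size representative of the class of $(G,\iota)$.

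Concretely, I will define the \emph{type} of $(G,\iota)$ to be the tuple consisting of (i) the labeled graph $G[\bnd G]$ together with the color \emph{partition} of $\bnd G$ (but not the actual color values); (ii) the boolean indicating whether every connected component of $G-\bnd G$ has neighborhood equal to $\bnd G$; (iii) the strong $k$-cut signature $\Gamma$ of $(G,\iota)$; (iv) the $f$-folio of $(G,\iota)$; and (v) the set of extendable permutations of $(G,\iota)$. Each coordinate takes only finitely many values: there are finitely many labeled graphs on at most $t$ vertices together with a partition of the vertex set; $\Gamma$ is a function from pairs of subsets of $\iota(\bnd G)\subseteq[t]$ into the finite set $\{0,1,\dots,k+1,\infty\}$; the $f$-folio is a subset of the finite collection of $[t]$-boundaried graphs of complexity at most $f$; and the extendable permutations form a subset of the symmetric group on $\iota(\bnd G)$. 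Consequently there are at most $M=M(t,f,k)$ possible types. For each realizable type $\tau$ I fix, once and for all, a minimum-size representative $H_\tau$ of $\tau$, with non-boundary colors compacted and then shifted (if necessary) to avoid the boundary colors, and set $\ell=\max_\tau |V(H_\tau)|$, which is finite.

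Given an input $(G,\iota)$, I construct $(G',\iota')$ as follows. If $|V(G)|\leq \ell$, take $(G',\iota')=(G,\iota)$ after compacting the non-boundary colors and shifting them above the maximum boundary color. Otherwise, let $\tau$ be the type of $(G,\iota)$, and produce $(G',\iota')$ from $H_\tau$ by (a) identifying the labeled boundary vertices of $H_\tau$ with those of $G$, which is possible because (i) fixes the labeled boundary graph exactly, and (b) relabeling the boundary colors of $H_\tau$ via any color-bijection that sends its boundary partition to that of $G$, which exists because the partitions coincide by (i). This guarantees both $|V(G')|\leq \ell$ and $|V(G')|\leq |V(G)|$ and achieves the required color-range conditions.

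Verification that $(G',\iota')$ meets all nine bullets is essentially immediate once the type is set up correctly: the first four (size, size, color disjointness, color range) follow from the construction of $H_\tau$; boundary consistency and the component/neighborhood property follow from (i) and (ii); the strong $k$-cut signature and $f$-folio follow from (iii) and (iv) (both are color-invariant by definition); and the extendable permutations follow from (v) together with the observation that our color manipulations (compaction of non-boundary colors, a uniform shift, and a partition-preserving relabeling of boundary colors) do not alter the color-equality pattern and therefore preserve the automorphism group --- for the first two this is Lemma~\ref{lem:compact}, and for the last it is direct from the fact that color-preserving bijections depend only on which pairs share a color. The main conceptual obstacle is precisely this last point: checking that the type really is fine-grained enough to force all nine conditions, yet coarse enough to take only finitely many values, so that the color-level normalizations used to produce $G'$ do not silently disturb the extendable-permutation data. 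Once this is nailed down, the remainder is a straightforward finite-case-plus-pigeonhole argument.
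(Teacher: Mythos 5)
Your proposal is correct and follows essentially the same route as the paper: both define an equivalence relation (your ``type'') on $t$-boundaried colored graphs determined by the labeled boundary graph with its color partition, the component-neighborhood bit, the strong $k$-cut signature, the $f$-folio, and the set of extendable permutations; both bound the number of classes as a function of $t$, $f$, $k$; and both take a minimum-size class representative and then recolor its boundary to match $G$ (invoking the fact that only the color partition matters for automorphisms). The only cosmetic difference is that you split off the case $|V(G)|\leq\ell$ explicitly, which the paper absorbs into the choice of any equivalent graph of size $\leq\min(\ell,|V(G)|)$.
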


\begin{proof}
Let $(G, \iota)$ be a $t$-boundaried, colored graph. We can assume without loss of generality that the coloring  ${\sf col}_{G}$ of $(G, \iota)$ satisfies the following property: no vertex in $\bnd(G, \iota)$ has the same color as a vertex outside of $\bnd(G, \iota)$.
If ${\sf col}_{G}$ does not have this property we can modify it as follows. Let $c$ be the highest color used by   ${\sf col}_{G}$. Add $c+1$ to the color of every vertex in $V(G) \setminus \bnd(G, \iota)$. This modification does not affect the coloring of $\bnd(G, \iota)$ and it does not change the set of extendable permutations of $(G, \iota)$. However after the modification no vertex in $\bnd(G, \iota)$ has the same color as a vertex outside of $\bnd(G, \iota)$.

Consider the following equivalence relation $\equiv$ defined on colored $t$-boundaried graphs such that no vertex on the boundary has the same color as a vertex outside of the boundary. We say that $(G_1, \iota_1) \equiv (G_2, \iota_2)$ if
\begin{itemize}\setlength\itemsep{-.7mm}
\item Every component $C$ of $G - \bnd{G}$ satisfies $N(C) = \bnd{G}$ if and only if every component $C'$ of $G' - \bnd{G'}$ satisfies $N(C') = \bnd{G'}$,
\item $(G_1, \iota_1)$ and $(G_2, \iota_2)$ are weakly boundary-consistent. 
\item $(G_1, \iota_1)$ and $(G_2, \iota_2)$ have the same strong $k$-cut signature,
\item $(G_1, \iota_1)$ and $(G_2, \iota_2)$ have the same $f$-folio,
\item $(G_1, \iota_1)$ and $(G_2, \iota_2)$ have the same set of extendable permutations.
\end{itemize}
Clearly $\equiv$ is an equivalence relation where every equivalence class is completely defined by a single bit (whether or not every component $C$ of $G - \bnd{G}$ satisfies $N(C) = \bnd{G}$), a boundary-color-partition, a labeled graph with vertex labels from $[t]$, a $k$-cut signature, an $f$-folio, and a set of extendable permutations of $[t]$. Thus the number of equivalence classes of $\equiv$, as well as the size of the smallest member of each equivalence class, is upper bounded by a function of $k$, $t$ and $f$. Let $\ell$ be the smallest integer so that for every equivalence class of $\equiv$ there is at least one (boundaried) graph on at most $\ell$ vertices. We claim that $\ell$ satisfies the conclusion of the lemma.

Let $(G', \iota')$ be a colored $t$-boundaried graph such that $|V(G')| \leq \ell$, $|V(G')| \leq |V(G)|$ and  $(G, \iota) \equiv (G', \iota')$.
By the definition of $\equiv$, we have that  $(G, \iota)$ and $(G', \iota')$ have the same  strong $k$-cut signature, $f$-folio and set of extendable permutations. Further, every component $C$ of $G - \bnd{G}$ satisfies $N(C) = \bnd{G}$ if and only if every component $C'$ of $G' - \bnd{G'}$ satisfies $N(C') = \bnd{G'}$

Let $\textsf{col}_G'$ be the coloring of $(G', \iota')$ and let ${\cal P}$ be the color partition of  $\textsf{col}_G'$. We construct the coloring $\textsf{col}_G''$ of $(G', \iota')$ by first iterating over all parts $P \in {\cal P}$ so that $P \subseteq \bnd(G', \iota')$ and assigning to all vertices of $P$ the color $\textsf{col}_G(\iota^{-1}\iota'(v))$ for some $v \in \bnd G' \cap P$. Notice that because $(G, \iota)$ and $(G', \iota')$ are weakly boundary-consistent all choices of $v \in \bnd G' \cap P$ lead to the same value of $\textsf{col}_G(\iota^{-1}\iota'(v))$. Further, since no vertex in $\bnd(G, \iota')$ has the same color as a vertex outside $\bnd(G, \iota')$, this defines $\textsf{col}_G''$ for all vertices in $\bnd(G, \iota')$ and no other vertices.

To complete the construction of $\textsf{col}_G''$ we iterate over the remaining parts $P$ in ${\cal P}$ in an arbitrary order, and assign to all vertices in $P$ the lowest color from $[\ell]$ not yet assigned to any other vertices. A color from  $[\ell]$ will always be available because $G'$ has at most $\ell$ colors. 
From Observation~\ref{obs:samePartition}, and because $\textsf{col}_G''$ was constructed to have the same color partition as $\textsf{col}_G'$ while also being boundary-consistent with $(G, \iota)$ and coloring $\textsf{col}_G$, it follows that $(G', \iota')$ with coloring  $\textsf{col}_G''$ satisfies the conclusion of the lemma. 
\end{proof}

We will also need a lemma similar to Lemma~\ref{lem:equivalentStrongGadget}, but preserving weak cut signatures rather than strong $k$-cut signatures. Notice that here the size of the representative is independent of the cutsize.

\begin{lemma}\label{lem:equivalentWeakGadget}
For every pair of integers $t$, $f$ there exists an integer $\ell$, so that for every $t$-boundaried colored graph $(G, \iota)$ there exists a $t$-boundaried colored graph $(G', \iota')$ such that
\begin{itemize}\setlength\itemsep{-.7mm}
\item $|V(G')| \leq \ell$,
\item $|V(G')| \leq |V(G)|$,
\item No vertex in $\bnd(G', \iota')$ has the same color as a vertex in $V(G') \setminus \bnd(G', \iota')$.
\item The range of ${\sf col}_{G'}$ on $V(G') \setminus \bnd(G', \iota)$ is $[\ell]$.
\item Every component $C$ of $G - \bnd{G}$ satisfies $N(C) = \bnd{G}$ if and only if every component $C'$ of $G' - \bnd{G'}$ satisfies $N(C') = \bnd{G'}$,
\item $(G, \iota)$ and $(G', \iota')$ are boundary-consistent,
\item $(G, \iota)$ and $(G', \iota')$ have the same weak  cut signature,
\item $(G, \iota)$ and $(G', \iota')$ have the same $f$-folio
\item $(G, \iota)$ and $(G', \iota')$ have the same set of extendable permutations.
\end{itemize}
\end{lemma}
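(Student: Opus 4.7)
The proof plan is to mirror, almost line for line, the argument given for Lemma~\ref{lem:equivalentStrongGadget}, with the single substitution of weak cut signatures for strong $k$-cut signatures. The reason this works (and, crucially, why the resulting bound $\ell$ depends only on $t$ and $f$ rather than on any cut parameter $k$) is the observation stated immediately after Definition~\ref{def:weakcutSignature}: the weak cut signature $\Gamma$ of a $t$-boundaried graph satisfies $\Gamma(\hat{A},\hat{B})\le |\bnd G|\le t$ for every legal input $(\hat{A},\hat{B})$. Therefore the set of possible weak cut signatures of $t$-boundaried graphs is a subset of the finite set of functions from $\{(\hat{A},\hat{B})\,:\,\hat{A}\cup\hat{B}=[t]\}$ to $\{0,1,\ldots,t\}$, whose size is bounded by a function of $t$ alone.

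Concretely, I would first normalize the input $(G,\iota)$ so that no boundary vertex shares a color with a non-boundary vertex, exactly as in the proof of Lemma~\ref{lem:equivalentStrongGadget} (by adding a large constant to all non-boundary colors; this does not affect the set of extendable permutations, the folio, or the weak cut signature). Then I would introduce the equivalence relation $\equiv$ on colored $t$-boundaried graphs satisfying this normalization, where $(G_1,\iota_1)\equiv (G_2,\iota_2)$ iff they agree on (i) the $N(C)=\bnd G$ property of components of $G-\bnd G$, (ii) weak boundary-consistency (equivalently, they induce the same color partition on the boundary), (iii) the weak cut signature, (iv) the $f$-folio, and (v) the set of extendable permutations of $[t]$.

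Each equivalence class of $\equiv$ is completely described by a single bit, a partition of $[t]$, a labeled graph on $[t]$ giving the induced boundary graph, a weak cut signature (a function into $\{0,\ldots,t\}$), an $f$-folio (a set of boundaried graphs of complexity at most $f$ on at most $f+t$ vertices), and a subset of the permutations of $[t]$. All of these sets are finite with cardinality depending only on $t$ and $f$, so $\equiv$ has finitely many classes, and in each class there is a representative of minimum size. Let $\ell$ be the maximum of these minimum sizes across all classes (plus $t$, to accommodate the color range requirement below). Pick a representative $(G',\iota')$ of the class of $(G,\iota)$ with $|V(G')|\le\ell$ and $|V(G')|\le|V(G)|$ (the second bound holds because $(G,\iota)$ is itself a member of its class).

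Finally, I would recolor $(G',\iota')$ exactly as in the proof of Lemma~\ref{lem:equivalentStrongGadget}: on the boundary, copy the colors from $(G,\iota)$ using weak boundary-consistency (well-defined because equivalent graphs induce the same boundary color partition); on non-boundary vertices, assign fresh colors from $[\ell]$ respecting the color partition of $G'$ and disjoint from the boundary colors. Observation~\ref{obs:samePartition} guarantees that this recoloring preserves all invariants in the class, so the resulting graph witnesses all bullets of the conclusion. The only point requiring care---and the only step that differs substantively from Lemma~\ref{lem:equivalentStrongGadget}---is verifying that the weak cut signature is truly bounded in $t$, which is the observation above; I expect no serious obstacle beyond tracking that the constants in $\ell$ depend only on $t$ and $f$.
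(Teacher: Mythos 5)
Your proposal is correct and follows exactly the route the paper takes: the paper explicitly states that Lemma~\ref{lem:equivalentWeakGadget} is proved identically to Lemma~\ref{lem:equivalentStrongGadget}, with the equivalence relation $\equiv$ using weak cut signatures in place of strong $k$-cut signatures, so that the number of classes is bounded in $t$ and $f$ alone. Your key justification---that the weak cut signature takes values in $\{0,\ldots,t\}$ by the observation following Definition~\ref{def:weakcutSignature}---is precisely the point the paper relies on.
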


The proof of Lemma~\ref{lem:equivalentWeakGadget} is identical to the proof of Lemma~\ref{lem:equivalentStrongGadget} with the following exceptions: the equivalence relation $\equiv$ requires $(G_1, \iota_1)$ and $(G_2, \iota_2)$ to have the same weak cut signature rather than the same strong $k$-cut signature, and therefore the (number of) equivalence classes of $\equiv$ are upper bounded by a function of $t$ and $f$.

\subsubsection{Strong And Weak Representatives and their Size Bounds ($\eta_s$ and $\eta_w$)}\label{sec:representatives}
Let $(G, \iota)$ be a boundaried (colored) graph. The $(k,f)$-{\em strong representative} of $(G, \iota)$ is the lexicographically smallest
boundaried colored graph $(G^R, \iota^R)$ that satisfies the conclusion of Lemma~\ref{lem:equivalentStrongGadget}. By Lemma~\ref{lem:equivalentStrongGadget} such that the $(k,f)$-strong representative $(G^R, \iota^R)$ exists and has at most $\ell$ vertices for some integer $\ell$ depending only on $k$, $f$ and $|\bnd G|$. We will denote by ${\eta}_s(k, f, t)$ the smallest $\ell$ such that every $t$-boundaried graph has a $(k,f)$-strong representative on at most $\ell$ vertices.

Similarly, the $f$-{\em weak representative} of $(G, \iota)$ is the lexicographically smallest boundaried colored graph $(G^R, \iota^R)$ that satisfies the conclusion of Lemma~\ref{lem:equivalentWeakGadget}. By Lemma~\ref{lem:equivalentWeakGadget}, the $f$-weak representative $(G^R, \iota^R)$ exists and has at most $\ell$ vertices for some integer $\ell$ depending only on $k$, $f$ and $|\bnd G|$. We will denote by ${\eta}_w(f, t)$ the smallest $\ell$ such that every $t$-boundaried graph has an $f$-weak representative on at most $\ell$ vertices.

The set of boundaried colored graphs $(G^R, \iota^R)$ that satisfy the conclusion of Lemma~\ref{lem:equivalentStrongGadget} (Lemma~\ref{lem:equivalentWeakGadget}) is isomorphism invariant in $(G, \iota)$, and selecting a lexicographically first element from such a set is a weakly isomorphism invariant operation (by Lemma~\ref{lem:selectWeaklyIsoInvariant} and Lemma~\ref{lem:compactColorEncoding}). We conclude the following.

\begin{observation}\label{obs:repIsWeakIso} The  $(k,f)$-strong representative of $(G, \iota)$ and the $f$-weak representative of $(G, \iota)$ are weakly isomorphism invariant in $(G, \iota)$. 
\end{observation}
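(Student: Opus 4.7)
The plan is to decompose the construction of the representative into two stages and then argue invariance of each in turn. For the strong case, let $\mathcal{S}_{k,f}(G, \iota)$ denote the set of all $t$-boundaried colored graphs $(G^R, \iota^R)$ satisfying the conclusion of Lemma~\ref{lem:equivalentStrongGadget} for the input $(G, \iota)$. First I would verify that the map $(G,\iota) \mapsto \mathcal{S}_{k,f}(G, \iota)$ is isomorphism invariant in $(G, \iota)$. Each of the conditions in the lemma statement --- having the same strong $k$-cut signature, the same $f$-folio, the same set of extendable permutations, boundary-consistency, and the ``every-component-sees-all-of-$\bnd G$'' property --- is expressed purely in terms of the label-preserving combinatorial structure of $(G, \iota)$ on its boundary. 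Hence if $\phi\colon (G_1,\iota_1)\to (G_2,\iota_2)$ is an isomorphism of boundaried colored graphs, then for any candidate $(G^R,\iota^R)$, membership in $\mathcal{S}_{k,f}(G_1,\iota_1)$ is equivalent to membership in $\mathcal{S}_{k,f}(G_2,\iota_2)$, and in particular the two sets are equal as families of boundaried colored graphs.

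Second, I would invoke Lemma~\ref{lem:selectWeaklyIsoInvariant}, which guarantees that selecting a lexicographically smallest element from a multiset of properly labeled colored graphs is a weakly isomorphism invariant operation. To bridge between boundaried colored graphs and properly labeled colored graphs, I would apply the compact color-encoding of Lemma~\ref{lem:compactColorEncoding} to every member of $\mathcal{S}_{k,f}(G, \iota)$; this operation is itself isomorphism invariant and preserves the automorphism groups of the encoded graphs, so the lexicographic ordering on the encoded representatives is consistent with the ordering used in the definition. Combining the two stages yields that the $(k,f)$-strong representative is weakly isomorphism invariant in $(G,\iota)$. The argument for the $f$-weak representative is identical, using Lemma~\ref{lem:equivalentWeakGadget} in place of Lemma~\ref{lem:equivalentStrongGadget}.

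The only real obstacle is bookkeeping around the transition between ``boundaried colored graph'' and ``properly labeled colored graph'', since Lemma~\ref{lem:selectWeaklyIsoInvariant} is stated only for the latter. Lemma~\ref{lem:compactColorEncoding} is precisely the right tool, since it gives an isomorphism-invariant passage from one to the other that also preserves the automorphism group; in particular, a weak isomorphism between encodings pulls back to a weak isomorphism between the original boundaried graphs, witnessing the required weak isomorphism invariance of the representative.
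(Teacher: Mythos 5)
Your proposal is correct and follows essentially the same route as the paper: the paper likewise observes that the set of candidate representatives satisfying the conclusion of Lemma~\ref{lem:equivalentStrongGadget} (resp.\ Lemma~\ref{lem:equivalentWeakGadget}) is isomorphism invariant in $(G,\iota)$, and then concludes by citing Lemma~\ref{lem:selectWeaklyIsoInvariant} together with Lemma~\ref{lem:compactColorEncoding} for the lexicographic selection step. Your write-up simply fills in the details of both stages more explicitly than the paper's one-sentence justification.
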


We now show that if we have sufficiently analyzed a boundaried graph $(G, \iota)$, then we can efficiently compute a representative for it. We remind the reader that a {\em proper} labeling of a graph $G$ is a bijection from $V(G)$ to $[|V(G)|]$.

\begin{lemma}\label{lem:computeWeakRep}
There exists an algorithm that takes as input an integer $h$, a boundaried graph $(G, \iota)$ together with a set
$\{ \lambda_\pi ~:~ \pi : \bnd(G) \rightarrow [|\bnd(G)|] \}$ that contains a weakly isomorphism invariant proper labeling $\lambda_\pi$ of $(G, \pi)$ for every $\pi : \bnd(G) \rightarrow [|\bnd(G)|]$ such that $\lambda_\pi(v) = \pi(v)$ for all $v \in \bnd(G)$.
The algorithm runs in time $f(h,|\bnd(G)|)n^{\cO(1)}$ for some function $f$, and outputs the $h$-weak representative $(G_R, \iota_R)$ of $(G, \iota)$.
\end{lemma}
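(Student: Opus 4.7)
The plan is brute-force enumeration of candidate representatives. By Lemma~\ref{lem:equivalentWeakGadget}, the $h$-weak representative $(G_R,\iota_R)$ has at most $\ell := \eta_w(h,|\bnd G|)$ vertices, uses only colors from $[\ell]$ on non-boundary vertices, and has boundary labels and boundary colors constrained by boundary-consistency with $(G,\iota)$. Up to isomorphism, the number of boundaried colored graphs satisfying these constraints is bounded by a function of $h$ and $|\bnd G|$. The algorithm enumerates all such candidates, sorts them in the lexicographic order of Section~\ref{sec:isoPreliminaries}, and for each one verifies the conditions of Lemma~\ref{lem:equivalentWeakGadget} against $(G,\iota)$, returning the first candidate that matches. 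Weak isomorphism invariance of the output then follows from Observation~\ref{obs:repIsWeakIso} and Lemma~\ref{lem:selectWeaklyIsoInvariant}: the set of passing candidates depends only on the isomorphism class of $(G,\iota)$, so ``lex smallest'' is a weakly-isomorphism-invariant tie-breaker.

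Most conditions in Lemma~\ref{lem:equivalentWeakGadget} require only direct inspection (size bounds, color ranges, separation of boundary from non-boundary colors, boundary-consistency, and the single bit indicating whether every connected component of $G-\bnd G$ has neighborhood equal to $\bnd G$). Three non-trivial pieces of data about $(G,\iota)$ must be extracted: its weak cut signature, its $h$-folio, and its set of extendable permutations. The weak cut signature is computed by running, for each of the $3^{|\bnd G|}$ pairs $(\hat A,\hat B)$ with $\hat A\cup\hat B=\iota(\bnd G)$, one min-cut computation in an auxiliary graph (add a source adjacent to $\iota^{-1}(\hat A\setminus\hat B)$ and a sink adjacent to $\iota^{-1}(\hat B\setminus\hat A)$, and invoke Lemma~\ref{lem:f-f}). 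The $h$-folio is computed by enumerating all $f_1(h,|\bnd G|)$ isomorphism classes of small boundaried topological minors $(H,\iota_H)$ and, for each, invoking a boundaried variant of the topological-minor subroutine of~\cite{marx-grohe}; the boundary constraints merely pin down the images of a bounded number of model vertices, which is handled by a standard reduction (e.g.\ colored pendants). Both steps run in time $f_2(h,|\bnd G|)\cdot n^{\cO(1)}$.

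The hypothesis of the lemma is used precisely in the computation of the extendable permutations. Fix any bijection $\rho\colon \iota(\bnd G)\to [|\bnd G|]$. For a permutation $\pi$ of $\iota(\bnd G)$, set $\tau_1=\rho\circ\iota$ and $\tau_2=\rho\circ\pi^{-1}\circ\iota$; both are bijections $\bnd G\to [|\bnd G|]$, so both $\lambda_{\tau_1}$ and $\lambda_{\tau_2}$ are provided in the input. I claim that $\pi$ is extendable for $(G,\iota)$ if and only if $\lambda_{\tau_1}$ and $\lambda_{\tau_2}$ induce the same labeled graph on $V(G)$, which is verified in polynomial time by comparing the edge multisets under the two labelings. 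Indeed, weak isomorphism invariance of the family $\{\lambda_\pi\}$ gives that the two labeled graphs are equal if and only if $(G,\tau_1)\cong (G,\tau_2)$ as $[|\bnd G|]$-boundaried graphs; unwinding the definitions, such an isomorphism is a $G$-automorphism whose restriction to $\bnd G$ equals $\tau_2^{-1}\circ\tau_1=\iota^{-1}\circ\pi\circ\iota$, i.e., a witness to extendability of $\pi$. Iterating over $|\bnd G|!$ permutations gives total cost $f_3(|\bnd G|)\cdot n^{\cO(1)}$. For each tiny candidate $(G',\iota')$ the analogous three quantities are computed by brute force, and the match is tested. The main obstacle I anticipate is the $h$-folio subroutine: one must adapt classical topological-minor testing so that boundary vertices act as fixed anchors, which is well-understood but deserves to be spelled out carefully; the extendable-permutation step is conceptually the subtlest but algorithmically trivial once the bijective calculation above is in place.
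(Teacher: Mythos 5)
Your proposal is correct and follows essentially the same route as the paper's proof: compute the weak cut signature via Lemma~\ref{lem:f-f}, the $h$-folio via the topological-minor-testing machinery, and the extendable permutations by comparing the supplied weakly isomorphism invariant labelings for the original and permuted boundary, then enumerate candidate representatives in lexicographic order and return the first one passing the equivalence test of Lemma~\ref{lem:equivalentWeakGadget}. Your bijective bookkeeping for the extendable-permutation test is in fact slightly more careful than the paper's, but the underlying idea is identical.
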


\begin{proof}
The algorithm computes the weak signature of $(G, \iota)$ by computing using Lemma~\ref{lem:f-f} the minimum order of an $A$-$B$ separation for every $A, B \subseteq \bnd(G)$ such that $A \cup B = \bnd(G)$. This takes time $3^kn^{\cO(1)}$.
The algorithm computes the $h$-folio of $G$ in time $g(h, |\bnd{}(G)|)n^{\cO(1)}$ using the algorithm of \cite{GroheKMW11}.
Then the algorithm computes the set of extendable permutations of $(G, \iota)$ using the following observation: A permutation $\pi : [|\bnd{}(G)|] \rightarrow [|\bnd{}(G)|]$ is an extendable permutation of $(G, \iota)$ if and only if $(G, \iota)$ and $(G, \iota')$ are isomorphic, where $\iota' = \iota^{-1}\pi\iota$. However $(G, \iota)$ and $(G, \iota')$ are isomorphic if and only if $(G, \iota)$ with proper labeling $\lambda_\iota$ is equal to $(G, \iota)$ with proper labeling $\lambda_{\iota'}$.
For each $\pi : [|\bnd{}(G)|] \rightarrow [|\bnd{}(G)|]$ the algorithm sets $\iota' = \iota^{-1}\pi\iota$ and checks whether  $(G, \iota)$ with proper labeling $\lambda_\iota$ is equal to $(G, \iota)$ with proper labeling $\lambda_{\iota'}$. If yes, then $\pi$ is an extendable permutation, otherwise it is not. In total this takes time $\cO(|\bnd(G)|!(n+m))$.

Having computed the weak signature, $h$-folio and the set of extendable permutations of $(G, \iota)$, the algorithm enumerates all colored properly labeled graphs $(G', \iota')$ such that
\begin{itemize}\setlength\itemsep{-.7pt}
\item No vertex in $\bnd(G', \iota')$ has the same color as a vertex in $V(G') \setminus \bnd(G', \iota')$.
\item The range of ${\sf col}_{G'}$ on $V(G') \setminus \bnd(G', \iota)$ is $[|V(G')|]$.
\item $(G, \iota)$ and $(G', \iota')$ are boundary-consistent,
\item Every component $C$ of $G - \bnd{G}$ satisfies $N(C) = \bnd{G}$ if and only if every component $C'$ of $G' - \bnd{G'}$ satisfies $N(C') = \bnd{G'}$
\end{itemize}
in lexicographic order, and checks whether
\begin{itemize}\setlength\itemsep{-.7pt}
\item $(G, \iota)$ and $(G', \iota')$ have the same weak cut signature,
\item $(G, \iota)$ and $(G', \iota')$ have the same $f$-folio,
\item $(G, \iota)$ and $(G', \iota')$ have the same set of extendable permutations.
\end{itemize}
The algorithm outputs the first $(G', \iota')$ that passes this test. 

To determine whether $(G, \iota)$ and $(G', \iota')$ have the same weak cut signature, and $f$-folio  set of extendable permutations the algorithm computes the weak cut signature and $f$-folio of $(G', \iota')$ in the same way as it did for $(G, \iota)$. It computes the set of extendable permutations of $(G', \iota')$ by brute force - by iterating over all bijections from $V(G')$ to $V(G')$ that map $\bnd(G')$ to $\bnd(G')$ and checking whether they are automorphisms of $G'$. For each candidate graph $(G', \iota')$ the time taken to perform these tests is upper bounded by a computable function of $|V(G')|$.

By Lemma~\ref{lem:equivalentWeakGadget} this algorithm will find a graph $(G', \iota')$ on $\eta_w(h, |\delta(G)|)$ vertices that passes the test. Since $(G', \iota')$ is the lexicographically first boundaried graph that satisfies the conclusion of Lemma~\ref{lem:equivalentWeakGadget} it is the $h$-weak representative of $(G, \iota)$

The algorithm iterates over a computable function of  $\eta_w(h, |\delta(G)|)$ many graphs before identifying the representative $(G', \iota')$. For each such graph the time it spends is upper bounded by a computable function of $\eta_w(h, |\delta(G)|)$. Thus the algorithm terminates within the claimed running time. 
\end{proof}

An identical argument yields the corresponding lemma for computing strong representatives. We omit the proof.

\begin{lemma}\label{lem:computeStrongRep}
There exists an algorithm that takes as input integers $h$, $k$, a boundaried graph $(G, \iota)$ together with a set
$\{ \lambda_\pi ~:~ \pi : \bnd(G) \rightarrow [|\bnd(G)|] \}$ that contains a weakly isomorphism invariant proper labeling $\lambda_\pi$ of $(G, \pi)$ for every $\pi : \bnd(G) \rightarrow [|\bnd(G)|]$ such that $\lambda_\pi(v) = \pi(v)$ for all $v \in \bnd(G)$.
The algorithm runs in time $f(h,k,|\bnd(G)|)n^{\cO(1)}$ for some function $f$, and outputs the $(k,h)$-strong representative $(G_R, \iota_R)$ of $(G, \iota)$.
\end{lemma}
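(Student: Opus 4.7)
The plan is to mirror the proof of Lemma~\ref{lem:computeWeakRep} essentially verbatim, with the only genuinely new content being the computation of the strong $k$-cut signature (which, unlike the weak one, imposes an \emph{equality} constraint on the boundary intersections of the candidate separation and caps its value at $k+1$). Given the input $(G,\iota)$ together with the family $\{\lambda_\pi\}$, I would first compute the strong $k$-cut signature $\Gamma$ of $(G,\iota)$, then compute its $h$-folio via the algorithm of Grohe--Kawarabayashi--Marx--Wollan, then compute its set of extendable permutations through the same labeling-equality trick as in Lemma~\ref{lem:computeWeakRep}, and finally enumerate candidate boundaried graphs $(G',\iota')$ in lexicographic order and return the first one that agrees with $(G,\iota)$ on all four equivalence-class invariants appearing in the statement of Lemma~\ref{lem:equivalentStrongGadget}.

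For the strong $k$-cut signature, for each pair $(\hat{A},\hat{B})$ with $\hat{A}\cup \hat{B}=\Lambda$ (there are at most $3^{|\bnd G|}$ such pairs) I would enforce the equality constraint $A\cap \bnd G=\iota^{-1}(\hat{A})$, $B\cap \bnd G=\iota^{-1}(\hat{B})$ by forcing the boundary vertices in $\iota^{-1}(\hat{A}\cap \hat{B})$ into the separator and the remaining boundary vertices onto the correct side: concretely, delete $\iota^{-1}(\hat{A}\cap \hat{B})$ from $G$ and invoke Lemma~\ref{lem:f-f} to obtain the minimum order $o'$ of a $\iota^{-1}(\hat{A}\setminus \hat{B})$--$\iota^{-1}(\hat{B}\setminus \hat{A})$ separation in the resulting graph, up to the threshold $k+1$; then set $\Gamma(\hat{A},\hat{B})=\min(o',k+1-|\hat{A}\cap \hat{B}|)$ if this value is at most $k+1-|\hat{A}\cap \hat{B}|$, and $\Gamma(\hat{A},\hat{B})=k+1$ otherwise (treating the case in which no such separation exists, i.e.~when a boundary vertex on one side is adjacent to a boundary vertex on the other side after the deletion, as $+\infty$ and hence capped to $k+1$). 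The extendable permutations are read off from the family $\{\lambda_\pi\}$ exactly as in the proof of Lemma~\ref{lem:computeWeakRep}: $\pi$ is extendable iff $(G,\iota)$ with labeling $\lambda_\iota$ equals $(G,\iota^{-1}\pi\iota)$ with labeling $\lambda_{\iota^{-1}\pi\iota}$, which is a trivial equality check between two properly labeled colored graphs.

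After this preprocessing, I would enumerate in lexicographic order all $t$-boundaried colored properly labeled graphs $(G',\iota')$ of size at most $\eta_s(k,h,|\bnd G|)$ that are weakly boundary-consistent with $(G,\iota)$, have the same ``every component touches the whole boundary'' bit, and use the non-boundary color range $[|V(G')|\setminus \bnd(G',\iota')|]$; for each such candidate I compute its strong $k$-cut signature, $h$-folio, and extendable permutations directly (by brute force, which costs only a function of $|V(G')|$), and return the first candidate whose three invariants match those of $(G,\iota)$. Lemma~\ref{lem:equivalentStrongGadget} guarantees that at least one candidate passes the test, and lex-minimality then forces the output to coincide with the $(k,h)$-strong representative. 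The main obstacle--such as it is--lies purely in the signature computation: one must be careful that the equality constraint on boundaries and the cap at $k+1$ are handled consistently between $(G,\iota)$ and each candidate $(G',\iota')$, but this is a bookkeeping matter rather than a conceptual one. The total running time is dominated by at most $f_0(h,k,|\bnd G|)$ candidate checks plus $3^{|\bnd G|}$ invocations of Lemma~\ref{lem:f-f} and one call to the $h$-folio algorithm, yielding the claimed $f(h,k,|\bnd G|)\cdot n^{\mathcal{O}(1)}$ bound.
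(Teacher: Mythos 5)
Your overall plan coincides with the paper's: the paper proves Lemma~\ref{lem:computeWeakRep} in detail and then states that an identical argument (swapping in the strong $k$-cut signature) yields Lemma~\ref{lem:computeStrongRep}, omitting the proof. So the enumeration, the folio computation, the extraction of extendable permutations from $\{\lambda_\pi\}$, and the lexicographic selection are all exactly as intended.

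The one genuinely new ingredient is the computation of the strong $k$-cut signature, and your description of it has concrete slips. First, Lemma~\ref{lem:f-f} applied to vertex \emph{sets} $X=\iota^{-1}(\hat{A}\setminus\hat{B})$ and $Y=\iota^{-1}(\hat{B}\setminus\hat{A})$ returns the minimum order of a separation $(A,B)$ with $X\subseteq A$ and $Y\subseteq B$ only; this notion permits vertices of $X$ or $Y$ to lie in the separator $A\cap B$, which violates the \emph{equality} constraint $A\cap\bnd G=\iota^{-1}(\hat{A})$, $B\cap\bnd G=\iota^{-1}(\hat{B})$ of Definition~\ref{def:cutSignature} (e.g.\ two non-adjacent boundary vertices joined by two internally disjoint paths have $X$-$Y$ separation order $1$ but strong-signature order $2$). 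You need the single-vertex variant of Lemma~\ref{lem:f-f} after identifying each of the two terminal sets into a single vertex, so that the separator is forced to avoid all terminals. Second, since $o=o'+|\hat{A}\cap\hat{B}|$, the definition gives $\Gamma(\hat{A},\hat{B})=\min(o-|\hat{A}\cap\hat{B}|,\,k+1)=\min(o',k+1)$; your cap $\min(o',\,k+1-|\hat{A}\cap\hat{B}|)$ subtracts $|\hat{A}\cap\hat{B}|$ a second time. Third, Definition~\ref{def:cutSignature} assigns the value $\infty$, not $k+1$, when no separation with the prescribed boundary trace exists; conflating the two coarsens the equivalence relation of Lemma~\ref{lem:equivalentStrongGadget} and could make the algorithm return a lexicographically smaller graph than the actual $(k,h)$-strong representative, breaking consistency with the other places the representative is used. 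All three issues are local and easily repaired, but as written the signature, and hence the output, would be wrong.
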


\paragraph{A Note about Computability.}
We remark that the functions $\eta_s(k, h)$ and $\eta_w(h)$ are not guaranteed to be computable. Indeed, they are defined from a purely existential argument in the proofs of Lemmata~\ref{lem:equivalentStrongGadget} and~\ref{lem:equivalentWeakGadget} that in an equivalence relation with a finite number of equivalence classes of graphs, the size of the largest (when taken over all equivalence classes) smallest graph in the equivalence class is finite.  The fact that we do not know whether $\eta_s(k, h)$ and $\eta_w(h)$ are computable is the only reason why the dependence $f$ on $H$ in the running time of our algorithm of Theorem~\ref{thm:main} might not be computable. 
For an example, the running time of the algorithm of Lemmata~\ref{lem:computeWeakRep} and~\ref{lem:computeStrongRep} depend on $\eta_w$ and $\eta_s$ respectively.

We speculate that it might be possible (saying we {\em believe} it is possible would probably be too strong) to turn the algorithm of Theorem~\ref{thm:main} into a proof that $\eta_s(k, h)$ and $\eta_w(h)$ are computable for families ${\cal F}$ that are defined by excluding a single graph $H$ as a minor. This would in turn yield a computable upper bound on the running time dependence $f$ on $H$ in Theorem~\ref{thm:main}.

\paragraph{Isomorphisms Between Boundaried Graphs and Between Their Representatives.}
For the next lemma, let us recall the convention that all graphs can be possibly colored. 

\begin{lemma}\label{lem:replacementKeepsPerm}
Let $(G, \iota)$ and $(\hat{G}, \hat{\iota})$ be isomorphic $t$-boundaried graphs and let $(G^R, \iota^R)$ and $(\hat{G}^R, \hat{\iota}^R)$ be their $(k,f)$-strong representatives (or $f$-weak representatives). 
For every permutation $\pi \colon [t] \rightarrow [t]$
there exists an isomorphism $\psi$ from $G$ to $\hat{G}$ mapping $\bnd G$ to $\bnd \hat{G}$ such that $\pi = \hat{\iota}\psi\iota^{-1}$
if and only if
there exists an isomorphism $\psi^R$ from $G^R$ to $\hat{G}^R$ mapping $\bnd G^R$ to $\bnd \hat{G}^R$ such that $\pi = \hat{\iota}^R\psi^R(\iota^R)^{-1}$.
\end{lemma}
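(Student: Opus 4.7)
The plan is to show that both sides of the biconditional equal the same coset of the group of extendable permutations, based at a common ``base'' permutation $\pi_0$ that is realized simultaneously on the pair $(G,\hat{G})$ and on the pair $(G^R,\hat{G}^R)$.

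First, I would use Observation~\ref{obs:repIsWeakIso} (the weak isomorphism invariance of the representative construction) applied to the isomorphic inputs $(G,\iota)$ and $(\hat{G},\hat{\iota})$ to produce a matched pair of isomorphisms: an isomorphism $\alpha : G \to \hat{G}$ together with an isomorphism $\beta : G^R \to \hat{G}^R$ that agree on their common universe. Since boundary vertices are shared between a boundaried graph and its representative (by the boundary-consistency condition in Lemmas~\ref{lem:equivalentStrongGadget} and~\ref{lem:equivalentWeakGadget}, which identifies boundary vertices via common labels), this common universe contains $\bnd G = \bnd G^R$ and $\bnd \hat{G} = \bnd \hat{G}^R$. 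Hence $\alpha$ and $\beta$ induce the same boundary bijection, and setting $\pi_0 := \hat{\iota}\alpha\iota^{-1}$ we also have $\pi_0 = \hat{\iota}^R\beta(\iota^R)^{-1}$.

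Next, I would observe that once $\alpha$ is fixed, any other isomorphism $\psi : G \to \hat{G}$ that maps $\bnd G$ to $\bnd \hat{G}$ factors as $\psi = \alpha\circ\mu$ with $\mu := \alpha^{-1}\psi$ an automorphism of $G$ fixing $\bnd G$ setwise. The boundary action of $\psi$ is therefore $\pi_0\cdot\sigma$ where $\sigma = \iota\mu\iota^{-1}$ is an extendable permutation of $(G,\iota)$; and conversely, for any extendable permutation $\sigma$ of $(G,\iota)$ with witnessing automorphism $\mu$, the map $\alpha\circ\mu$ is an isomorphism $G\to \hat{G}$ inducing $\pi_0\sigma$ on boundary labels. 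Thus the set of permutations $\pi$ realizable on the $(G,\hat{G})$ side is exactly the coset $\pi_0\cdot\Pi(G,\iota)$, where $\Pi(G,\iota)$ denotes the set of extendable permutations of $(G,\iota)$. The identical argument applied to $\beta$ in place of $\alpha$ shows that the permutations realizable on the $(G^R,\hat{G}^R)$ side form the coset $\pi_0\cdot\Pi(G^R,\iota^R)$, with the \emph{same} $\pi_0$ by our choice in the previous step.

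Finally, I would invoke the defining property of representatives, namely that by Lemma~\ref{lem:equivalentStrongGadget} (for strong representatives) or Lemma~\ref{lem:equivalentWeakGadget} (for weak representatives), the graphs $(G,\iota)$ and $(G^R,\iota^R)$ have the same set of extendable permutations, so $\Pi(G,\iota) = \Pi(G^R,\iota^R)$. Consequently the two cosets coincide, which is exactly the claimed equivalence. The main subtlety in the whole argument is the very first step: arranging $\alpha$ and $\beta$ to have the \emph{same} boundary action $\pi_0$. This is precisely the content of weak isomorphism invariance --- it does not say that every isomorphism $\alpha : G \to \hat{G}$ lifts to one between the representatives, but it guarantees that at least one such compatible pair exists, and one compatible base point is all we need to identify the two cosets.
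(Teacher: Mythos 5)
Your proof is correct and follows essentially the same route as the paper's: both reduce the question to whether $\pi$ lies in the set of extendable permutations, which representatives preserve by construction. The paper's version is your coset argument specialized to $\pi_0=\mathrm{id}$, since isomorphisms of boundaried (labeled) graphs automatically preserve labels, so the base isomorphisms $G\to\hat{G}$ and $G^R\to\hat{G}^R$ guaranteed by Observation~\ref{obs:repIsWeakIso} both induce the identity on $[t]$ — making your careful matching of $\alpha$ and $\beta$ via the common-universe clause valid but not actually needed.
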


\begin{proof}
We start by proving the forward direction, namely that if there exists an isomorphism $\psi$ from $G$ to $\hat{G}$ mapping $\bnd G$ to $\bnd \hat{G}$ such that $\pi = \hat{\iota}\psi\iota^{-1}$
then
there exists an isomorphism $\psi^R$ from $G^R$ to $\hat{G}^R$ mapping $\bnd G^R$ to $\bnd \hat{G}^R$ such that $\pi = \hat{\iota}^R\psi^R(\iota^R)^{-1}$. Since $(G, \iota)$ and $(\hat{G}, \hat{\iota})$ are isomorphic there exists an isomorphism $\phi$ from $G$ to $\hat{G}$ that maps  $\bnd G$ to $\bnd \hat{G}$ such that  $\hat{\iota}\phi\iota^{-1}$ is the identity permutation of $[t]$.
Further, by assumption there is an isomorphism $\psi$ from $G$ to $\hat{G}$ mapping $\bnd G$ to $\bnd \hat{G}$ such that $\pi = \hat{\iota}\psi\iota^{-1}$
Thus $\phi^{-1}\psi$ is an automorophism of $G$ mapping $\bnd G$ to $\bnd G$ such that
$$\iota \phi^{-1}\psi \iota^{-1} = \iota \phi^{-1} \hat{\iota}^{-1} \hat{\iota}  \psi \iota^{-1} = \mathsf{id}\, \pi = \pi$$
Hence $\pi$ is an extendable permutation of $G$. Thus, since $G^R$ and $G$ have the same set of extendable permutations, $\pi$ is also an extendable permutation of $G^R$. By the definition of extendable permutations there exists an automorphism $\mu$ of $G^R$ that maps $\bnd G^R$ to $\bnd G^R$ such that $\pi = \iota^R \mu(\iota^R)^{-1}$.
Furthermore, by Observation~\ref{obs:repIsWeakIso} there also is an isomorphism $\phi^R$ from $G^R$ to $\hat{G}^R$ such that $\hat{\iota}^R\phi^R(\iota^R)^{-1}$ is the identity permutation of $[t]$.
Let $\psi^R = \phi^R\mu$. Since it is a composition of an automorphism of $G^R$ mapping $\bnd G^R$ to $\bnd G^R$ with an isomorphism from $G^R$ to $\hat{G}^R$ that maps $\bnd G^R$ to $\bnd \hat{G}^R$ we have that $\psi^R$ is an isomorphism from $G^R$ to $\hat{G}^R$ mapping $\bnd G^R$ to $\bnd \hat{G}^R$ as required.
Finally we have that 
$$\hat{\iota}^R \psi^R (\iota^R)^{-1} = \hat{\iota}^R \phi^R\mu (\iota^R)^{-1} = \hat{\iota}^R \phi^R  (\iota^R)^{-1} \iota^R \mu (\iota^R)^{-1} = \mathsf{id}\,\pi = \pi$$
This concludes the proof of the forward direction.

We now prove the reverse direction, namely that if there exists an isomorphism $\psi^R$ from $G^R$ to $\hat{G}^R$ mapping $\bnd G^R$ to $\bnd \hat{G}^R$ such that $\pi = \hat{\iota}^R\psi^R(\iota^R)^{-1}$ then there exists an isomorphism $\psi$ from $G$ to $\hat{G}$ mapping $\bnd G$ to $\bnd \hat{G}$ such that $\pi = \hat{\iota}\psi\iota^{-1}$. The proof is nearly identical to that of the forward direction, we nevertheless include it for completeness. 
Since $(G, \iota)$ and $(\hat{G}, \hat{\iota})$ are isomorphic there exists an isomorphism $\phi$ from $G$ to $\hat{G}$ that maps  $\bnd G$ to $\bnd \hat{G}$ such that  $\hat{\iota}\phi\iota^{-1}$ is the identity permutation of $[t]$. Furthermore, by Observation~\ref{obs:repIsWeakIso} there also is an isomorphism $\phi^R$ from $G^R$ to $\hat{G}^R$ such that $\hat{\iota}^R\phi^R(\iota^R)^{-1}$ is the identity permutation of $[t]$.

Further, by assumption there is an isomorphism $\psi^R$ from $G^R$ to $\hat{G}^R$ mapping $\bnd G^R$ to $\bnd \hat{G}^R$ such that $\pi = \hat{\iota}^R\psi^R(\iota^R)^{-1}$.
Thus $(\phi^R)^{-1}\psi^R$ is an automorophism of $G^R$ mapping $\bnd G^R$ to $\bnd G^R$ such that
$$\iota^R (\phi^R)^{-1}\psi^R (\iota^R)^{-1} = \iota^R (\phi^R)^{-1} (\hat{\iota}^R)^{-1} \hat{\iota}^R  \psi^R (\iota^R)^{-1} = \mathsf{id}\, \pi = \pi$$
Hence $\pi$ is an extendable permutation of $G^R$. Thus, since $G^R$ and $G$ have the same set of extendable permutations $\pi$ is also an extendable permutation of $G$. By the definition of extendable permutations there exists an automorphism $\mu$ of $G$ that maps $\bnd G$ to $\bnd G$ such that $\pi = \iota \mu \iota^{-1}$.

Let $\psi = \phi\mu$. Since it is a composition of an automorphism of $G$ mapping $\bnd G$ to $\bnd G$ with an isomorphism from $G$ to $\hat{G}$ that maps $\bnd G$ to $\bnd \hat{G}$ we have that $\psi$ is an isomorphism from $G$ to $\hat{G}$ mapping $\bnd G$ to $\bnd \hat{G}$ as required.
Finally we have that 
$$\hat{\iota} \psi (\iota)^{-1} = \hat{\iota} \phi\mu (\iota)^{-1} = \hat{\iota} \phi  \iota^{-1} \iota \mu \iota^{-1} = \mathsf{id}\,\pi = \pi$$
This concludes the proof. 
\end{proof}

\subsubsection{Star Decompositions and Leaf-Labelings}
Given a graph $G$ and a star decomposition $(T, \chi)$  of $G$ with central node $b$, a {\em{leaf-labeling}} of $(T, \chi)$ is a set of labelings $\{ \lambda_{\pi, \ell} \}$ with a proper labeling  $\lambda_{\pi, \ell}$ of $G[\chi(\ell)]$ for every leaf $\ell$ of $T$ and every permutation $\pi \colon \sigma(\ell) \rightarrow [|\sigma(\ell)|]$. For every $\ell$ and $\pi$ the labeling $\lambda_{\pi, \ell}$ is required to satisfy $\lambda_{\pi, \ell}(v) = \pi(v)$ for every $v \in \sigma(\ell)$. 
If all the labelings $\lambda_{\pi, \ell}$ are compact we say that  $\{ \lambda_{\pi, \ell} \}$ is a compact leaf labeling.  We will say that a star decomposition  $(T, \chi)$  of $G$ is (compactly) {\em leaf-labeled} if it comes with a (compact) leaf labeling $\{ \lambda_{\pi, \ell} \}$. A leaf labeling is called {\em locally weakly isomorphism invariant} if each labeling $\lambda_{\pi, \ell}$ is weakly isomorphism invariant in $G[\chi(\ell)]$ and $\pi$.

\paragraph{Lexicographically First Labelings of a Leaf Labeling}
Given a leaf-labeling  $\{ \lambda_{\pi, \ell} \}$ of $(T, \chi)$, for each leaf $\ell$ of $T$ let $\pi\langle\ell\rangle$ be a permutation $\pi(\ell) : \sigma(\ell) \rightarrow [|\sigma(\ell)|]$ such that the labeled graph $G[\chi(\ell)]$ with labeling $\lambda_{\pi\langle\ell\rangle}$ is lexicographically smallest. We shorten notation by writing $\lambda\langle\ell\rangle = \lambda_{\pi\langle\ell\rangle, \ell}$. 

\smallskip
\noindent
{\bf Remark:} We note that there can be multiple permutations $\pi$ such that $G[\chi(\ell)]$ with labeling $\lambda_\pi$ are lexicographically smallest. In this case we select an arbitrary one of such permutations $\pi$ as $\pi\langle\ell\rangle$. %
When considering the boundaried and labeled graph $(G[\chi(\ell), \lambda\langle\ell\rangle])$ as a stand alone object, disconnected from $G$, all candidate choices of $\pi\langle\ell\rangle$ result in the same $(G[\chi(\ell), \lambda\langle\ell\rangle])$.
It is therefore very tempting to conjecture that this selection of $\pi\langle\ell\rangle$ is weakly isomorphism invariant in $G$, despite the arbitrary choice. Unfortunately, this is not true. %

\begin{lemma} \label{lem:equalSetsOfLabeled}
Let $G$ be a graph, $(T, \chi)$ be a weakly isomorphism invariant star decomposition of $G$, and $\{ \lambda_{\pi, \ell} \}$ be a locally weakly isomorphism invariant leaf-labeling of $(T, \chi)$.
Let $\hat{G}$ be a graph, $(\hat{T}, \hat{\chi})$ be a weakly isomorphism invariant star decomposition of $\hat{G}$, and $\{ \hat{\lambda}_{\pi, \hat{\ell}} \}$ be a locally weakly isomorphism invariant leaf-labeling of $(\hat{T}, \hat{\chi})$.
If $G$ and $\hat{G}$ are isomorphic then the multi-sets of properly labeled graphs
$$\{ (G[\chi(\ell)], \lambda\langle\ell\rangle ~:~ \ell \in V(T) \setminus \{b\} \} \mbox{ and } \{ (\hat{G}[\hat{\chi}(\hat{\ell})], \hat{\lambda}\langle\hat{\ell}\rangle ~:~ \hat{\ell} \in V(\hat{T}) \setminus \{\hat{b}\} \}$$
are equal. 
\end{lemma}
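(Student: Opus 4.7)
The plan is direct: use the weak isomorphism invariance of the two star decompositions to produce a bijection between the leaves of $T$ and $\hat T$, and then use the local weak isomorphism invariance of the leaf labelings to show that matched leaves produce equal labeled graphs. Formally, since $G$ and $\hat G$ are isomorphic and both $(T,\chi)$ and $(\hat T,\hat\chi)$ are weakly isomorphism invariant in their respective graphs, there exist an isomorphism $\phi\colon G \to \hat G$ and a bijection $\phi_T\colon V(T)\to V(\hat T)$ such that for every $v\in V(G)$ and $t\in V(T)$ we have $v\in \chi(t)$ if and only if $\phi(v)\in\hat\chi(\phi_T(t))$. The map $\phi_T$ must send the central node $b$ to $\hat b$, and therefore restricts to a bijection between the leaves of $T$ and of $\hat T$; for each leaf $\ell$ and its counterpart $\hat\ell\coloneqq\phi_T(\ell)$, the restriction $\phi\upharpoonright_{\chi(\ell)}$ is an isomorphism from $G[\chi(\ell)]$ to $\hat G[\hat\chi(\hat\ell)]$ that maps $\sigma(\ell)$ bijectively onto $\hat\sigma(\hat\ell)$.

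Next, I fix a leaf $\ell$ and argue that $(G[\chi(\ell)],\lambda\langle\ell\rangle) = (\hat G[\hat\chi(\hat\ell)],\hat\lambda\langle\hat\ell\rangle)$ as properly labeled graphs; summing over leaves then yields the multiset equality in the statement. Given any permutation $\pi\colon\sigma(\ell)\to[|\sigma(\ell)|]$, define the transported permutation $\hat\pi\colon\hat\sigma(\hat\ell)\to[|\hat\sigma(\hat\ell)|]$ by $\hat\pi\coloneqq\pi\circ(\phi\upharpoonright_{\sigma(\ell)})^{-1}$. A direct check shows that $\phi\upharpoonright_{\chi(\ell)}$ is then an isomorphism of $|\sigma(\ell)|$-boundaried graphs from $(G[\chi(\ell)],\pi)$ to $(\hat G[\hat\chi(\hat\ell)],\hat\pi)$, since a boundary vertex $v$ receives label $\pi(v)$ on the left and $\hat\pi(\phi(v))=\pi(v)$ on the right. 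By the local weak isomorphism invariance of the leaf labelings, the resulting properly labeled graphs $(G[\chi(\ell)],\lambda_{\pi,\ell})$ and $(\hat G[\hat\chi(\hat\ell)],\hat\lambda_{\hat\pi,\hat\ell})$ are equal. Since $\pi\mapsto\hat\pi$ is a bijection, the collections $\{(G[\chi(\ell)],\lambda_{\pi,\ell})\}_\pi$ and $\{(\hat G[\hat\chi(\hat\ell)],\hat\lambda_{\hat\pi,\hat\ell})\}_{\hat\pi}$ are put into bijection with corresponding entries equal.

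The only subtlety, flagged by the remark preceding the statement, is that the permutation $\pi\langle\ell\rangle$ achieving the lexicographic minimum is generally not unique and is therefore not guaranteed to be selected in an isomorphism-invariant fashion. However, only the resulting labeled graph $(G[\chi(\ell)],\lambda\langle\ell\rangle)$ appears in the statement, and the previous paragraph shows that the two multisets of candidates coincide under $\pi\leftrightarrow\hat\pi$. Because $\leq_{\sf lex}$ is a total order on properly labeled graphs, the lex-minimum of this common multiset is unique up to equality of properly labeled graphs, so regardless of the arbitrary tie-breaking used when choosing $\pi\langle\ell\rangle$ on the $G$-side and the analogous choice on the $\hat G$-side, we conclude $(G[\chi(\ell)],\lambda\langle\ell\rangle)=(\hat G[\hat\chi(\hat\ell)],\hat\lambda\langle\hat\ell\rangle)$. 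I expect the main obstacle in the write-up to be precisely this tie-breaking step; the rest of the argument is a mechanical transport of invariance along the isomorphism provided by the weak invariance of the star decomposition.
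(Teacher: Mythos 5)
Your proof is correct and follows exactly the route the paper intends (the paper omits the proof, noting only that it follows from weak isomorphism invariance of $(T,\chi)$ and of $\lambda\langle\ell\rangle$ in $(G[\chi(\ell)],\sigma(\ell))$); your transport of permutations $\pi\mapsto\hat\pi$ along $\phi$ and the observation that the lex-minimum of the common multiset of candidate labeled graphs is well defined despite arbitrary tie-breaking is precisely the intended justification.
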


The proof of Lemma~\ref{lem:equalSetsOfLabeled} (omitted) follows directly from weak isomorphism invariance of $(T, \chi)$, and that the labeling $\lambda\langle\ell\rangle$ is weakly isomorphism invariant in $G[\chi(\ell)], \sigma(\ell)$.

\paragraph{Rank.} 
Let $G$ be a colored graph $G$, $(T, \chi)$ be a star decomposition of $G$ with central bag $b \in V(T)$,  $\{ \lambda_{\pi, \ell} \}$ be a locally weakly isomorphism invariant leaf-labeling of $(T, \chi)$.
For each leaf $\ell$, the {\em rank} of $\ell$ (with respect to $(T, \chi)$ and the leaf labelings $\{\lambda_{\pi, \ell}\}$) is the number of leaves $\ell'$ in $T$ so that $G[\chi(\ell')]$ with labeling $\lambda\langle\ell'\rangle$ is lexicographically (strictly) smaller than $G[\chi(\ell')]$ with labeling $\lambda\langle\ell'\rangle$. From Lemma~\ref{lem:equalSetsOfLabeled} and the definition of rank we can immediately conclude the following. 

\begin{lemma} \label{lem:equalRankIso}%
Let $G$ be a graph, $D \subseteq V(G)$ be a vertex set, $(T, \chi)$ be an isomorphism invariant (with respect to $G$ and $D$) star decomposition of $G$ with central bag $b \in V(T)$, and $\{ \lambda_{\pi, \ell} \}$ be a locally weakly isomorphism invariant leaf-labeling of $(T, \chi)$.
For every pair $\ell$, $\ell'$, the following are equivalent:
\begin{itemize}
\item There is an isomorphism from $G[\chi(\ell)]$ to $G[\chi(\ell')]$ that maps $\sigma(\ell)$ to $\sigma(\ell')$,
\item $G[\chi(\ell)]$ with labeling $\lambda\langle\ell\rangle$ and $G[\chi(\ell')]$ with labeling $\lambda\langle\ell'\rangle$ are equal,
\item The leaves $\ell$ and $\ell'$ have the same rank. 
\end{itemize}
\end{lemma}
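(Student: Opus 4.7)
The plan is to prove the three-way equivalence via the cyclic chain $(1) \Rightarrow (2) \Rightarrow (3) \Rightarrow (2) \Rightarrow (1)$, writing $L_\ell := (G[\chi(\ell)], \lambda\langle\ell\rangle)$ throughout for brevity.

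For $(1) \Rightarrow (2)$, I would fix an isomorphism $\phi\colon G[\chi(\ell)] \to G[\chi(\ell')]$ with $\phi(\sigma(\ell)) = \sigma(\ell')$, and for each bijection $\pi\colon \sigma(\ell) \to [|\sigma(\ell)|]$ set $\pi' := \pi \circ \phi^{-1}$, which is a bijection $\sigma(\ell') \to [|\sigma(\ell')|]$ since $|\sigma(\ell)|=|\sigma(\ell')|$. Then $\phi$ realises an isomorphism between the pairs $(G[\chi(\ell)], \pi)$ and $(G[\chi(\ell')], \pi')$, so local weak isomorphism invariance of the leaf labeling $\{\lambda_{\pi,\ell}\}$ forces the properly labeled graphs $(G[\chi(\ell)], \lambda_{\pi,\ell})$ and $(G[\chi(\ell')], \lambda_{\pi',\ell'})$ to be equal. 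The map $\pi \mapsto \pi'$ is a bijection between the two families of boundary bijections, so the corresponding multisets of properly labeled graphs coincide; taking the $\leq_{\sf lex}$-minimum of each yields $L_\ell = L_{\ell'}$, giving (2).

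The equivalence $(2) \Leftrightarrow (3)$ is purely a property of total orders. Because $\leq_{\sf lex}$ is total on properly labeled graphs, $L_\ell = L_{\ell'}$ immediately gives the two counts defining the ranks the same value. Conversely, if (without loss of generality) $L_\ell <_{\sf lex} L_{\ell'}$, then every leaf contributing a strictly smaller labeled graph than $L_\ell$ also contributes one strictly smaller than $L_{\ell'}$, and additionally $\ell$ itself contributes to the rank of $\ell'$ but not of $\ell$; hence the rank of $\ell'$ is strictly larger, contradicting equal ranks. So equal ranks force $L_\ell = L_{\ell'}$.

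For $(2) \Rightarrow (1)$, I would define $\phi(v) := \lambda\langle\ell'\rangle^{-1}(\lambda\langle\ell\rangle(v))$, which is an isomorphism of the underlying graphs $G[\chi(\ell)] \to G[\chi(\ell')]$ by equality of the labeled graphs. Because each $\lambda\langle\cdot\rangle$ restricted to the boundary is a bijection onto $[|\sigma(\cdot)|]$, we have $\sigma(\ell) = \lambda\langle\ell\rangle^{-1}([|\sigma(\ell)|])$ and $\sigma(\ell') = \lambda\langle\ell'\rangle^{-1}([|\sigma(\ell')|])$. The main obstacle will be to verify that $|\sigma(\ell)| = |\sigma(\ell')|$, which does not follow directly from equality of unboundaried labeled graphs; I would extract it by combining the already-established $(1) \Rightarrow (2)$ with isomorphism invariance of $(T, \chi)$, which sends adhesions to adhesions of the same size and hence keeps the boundary size constant within each $\leq_{\sf lex}$-equivalence class of leaves. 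Once this is in place, $\phi(\sigma(\ell)) = \lambda\langle\ell'\rangle^{-1}([|\sigma(\ell)|]) = \sigma(\ell')$, closing the cycle.
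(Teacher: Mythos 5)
Your chain $(1)\Rightarrow(2)\Leftrightarrow(3)$ is correct and is essentially the argument the paper leaves implicit (the paper gives no proof, asserting the lemma follows from Lemma~\ref{lem:equalSetsOfLabeled} and the definition of rank; note also that the definition of rank contains a typo --- it compares $G[\chi(\ell')]$ with itself --- and your reading of it as ``the number of leaves $\ell'$ with $L_{\ell'}<_{\sf lex}L_\ell$'' is the intended one). The problem is the step you yourself flagged in $(2)\Rightarrow(1)$, and your proposed patch does not close it. ``Isomorphism invariance of $(T,\chi)$'' only constrains how \emph{automorphisms of $(G,D)$} act on the decomposition; it says nothing about two leaves that merely happen to have equal lex-minimal labeled graphs, since no automorphism need relate them. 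The already-established implication $(1)\Rightarrow(2)$ runs in the wrong direction to help. In fact, with the literal reading of ``equal'' as equality of unboundaried properly labeled (colored) graphs, the boundary sizes genuinely need not agree: take $G[\chi(\ell)]$ a path $a\!-\!b\!-\!c$ with $\sigma(\ell)=\{a\}$ and $G[\chi(\ell')]$ a path $a'\!-\!b'\!-\!c'$ with $\sigma(\ell')=\{a',c'\}$, all vertices uncolored, and take each $\lambda_{\pi,\ell}$ to be the lexicographically smallest proper labeling extending $\pi$ (a legitimate locally weakly isomorphism invariant leaf-labeling). A short computation shows both $L_\ell$ and $L_{\ell'}$ are the labeled path with edge set $\{(1,3),(2,3)\}$, yet no isomorphism can map a boundary of size $1$ to one of size $2$. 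So statement $(2)$, read literally, does not imply statement $(1)$.

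The correct resolution is not an argument you are missing but a reading of the statement: item $(2)$ must be understood as equality of the \emph{boundaried} properly labeled graphs $(G[\chi(\ell)],\pi\langle\ell\rangle)$ with labeling $\lambda\langle\ell\rangle$, which is how the paper uses the lemma everywhere it is invoked (e.g.\ in the proof of Lemma~\ref{lem:weakIsoStar}, where the conclusion is explicitly phrased for boundaried graphs). With that reading, $|\sigma(\ell)|=|\sigma(\ell')|$ is part of the hypothesis of $(2)$, your map $\phi=\lambda\langle\ell'\rangle^{-1}\circ\lambda\langle\ell\rangle$ sends $\lambda\langle\ell\rangle^{-1}([|\sigma(\ell)|])=\sigma(\ell)$ onto $\sigma(\ell')$, and your cycle closes. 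You should either adopt that reading explicitly or add $|\sigma(\ell)|=|\sigma(\ell')|$ to item $(2)$ (and correspondingly tie-break the rank by adhesion size); as written, your patch is not a proof.
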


\paragraph{Leaf-Labeled Star Decompositions With Small Centers.}
We give an algorithm for handling the case when we have a star decompostion with a small center.

\begin{lemma}\label{lem:smallCentersAlgorithmWrapper}
There exists an algorithm that takes as input a compactly colored graph $G$, a vertex set $D \subseteq V(G)$, a star decomposition $(T, \chi)$ of $G$ with central bag $b$ such that $D \subseteq \chi(b)$, and a compact locally weakly isomorphism invariant leaf-labeling $\{ \lambda_{\pi, \ell} \}$ of $(T, \chi)$. The algorithm runs in time $(|\chi(b)|)! \cdot n^{\cO(1)}$ and outputs for every permutation $\iota \colon D \rightarrow [|D|]$ a proper labeling of $(G, \iota)$ which is weakly isomporphism invariant in $(G, \iota)$ and  $(T, \chi)$. 
\end{lemma}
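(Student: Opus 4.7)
The plan is to brute-force, for each input permutation $\iota\colon D\to [|D|]$, over all $(|\chi(b)|-|D|)!$ bijections $\iota^+\colon \chi(b)\to [|\chi(b)|]$ that extend $\iota$, assemble from each $\iota^+$ a candidate proper labeling $\Lambda^+_{\iota^+}$ of $G$ using the provided leaf-labelings, and finally output the one yielding the lexicographically smallest labeled graph. Summing $|D|!$ values of $\iota$ against $(|\chi(b)|-|D|)!$ extensions each gives the required running time of $(|\chi(b)|)! \cdot n^{\Oh(1)}$, so the budget is met.

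Given a candidate $\iota^+$, for every leaf $\ell$ of $T$ I would first compute $\pi_\ell\colon \sigma(\ell)\to [|\sigma(\ell)|]$ by compacting $\iota^+|_{\sigma(\ell)}$ (replacing each value by its rank), and retrieve $\lambda_{\pi_\ell,\ell}$ from the input leaf-labeling. I then assemble a global labeling $\Lambda^+_{\iota^+}$ of $G$ as follows: set $\Lambda^+_{\iota^+}(v)=\iota^+(v)$ on $\chi(b)$; sort the leaves $\ell$ in ascending lexicographic order of the pairs $\bigl((G[\chi(\ell)],\lambda_{\pi_\ell,\ell}),\,\iota^+(\sigma(\ell))\bigr)$; and process them in that order, assigning to each vertex of $\chi(\ell)\setminus\sigma(\ell)$ the next available label, counting up from $|\chi(b)|+1$, in the internal order prescribed by $\lambda_{\pi_\ell,\ell}$. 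This produces a proper labeling of $G$ that coincides with $\iota$ on $D$. The output for $\iota$ is $\Lambda^+_{\iota^+}$ for the extension $\iota^+$ that minimises $(G,\Lambda^+_{\iota^+})$ in the lexicographic order on properly labeled colored graphs.

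Weak isomorphism invariance should follow from Lemma~\ref{lem:selectWeaklyIsoInvariant}: the multiset of candidate labeled graphs, ranging over extensions $\iota^+$, depends only on the isomorphism class of the full input $(G,D,(T,\chi),\{\lambda_{\pi,\ell}\})$. Indeed, (a) the leaf-labelings $\lambda_{\pi,\ell}$ are locally weakly isomorphism invariant by hypothesis, so an isomorphism of inputs maps leaves to leaves and labeled-leaf-graphs to equal labeled-leaf-graphs; (b) the sorting criterion depends only on isomorphism-invariant data, namely the labeled leaf graphs together with the induced adhesion-label tuples; and (c) selecting the lex-minimum of a multiset of labeled graphs is weakly isomorphism invariant. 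Consequently the full function $((G,\iota),(T,\chi),\{\lambda_{\pi,\ell}\})\mapsto \Lambda_\iota$ is weakly isomorphism invariant.

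The main obstacle is making sure that the several intermediate tie-breakings — the internal sort between truly interchangeable leaves, and the possibility that distinct extensions $\iota^+$ yield equal candidate labeled graphs — do not silently break invariance. This is handled by the outer lex-minimum selection, which quotients out exactly the non-canonical freedom: any two $\iota^+$'s producing equal $(G,\Lambda^+_{\iota^+})$ are indistinguishable in the output, and any surviving ties within a single candidate occur only between leaves with equal labeled representations and equal adhesion-label tuples, which are truly symmetric and hence interchangeable without affecting the lex-rank of the overall labeled graph.
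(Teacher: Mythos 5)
Your proposal is correct and follows essentially the same route as the paper's proof: brute-force over the $(|\chi(b)|)!$ extensions of $\iota$ to $\chi(b)$, pick each leaf's labeling via the compacted restriction of the extension to its adhesion, sort the leaves by the (isomorphism-invariant) pair of adhesion-label set and labeled leaf graph, concatenate, and take the lexicographic minimum over extensions. The only deviations are cosmetic (the order of the two sort keys, and assigning consecutive labels directly rather than offsetting by $(n+1)\cdot p$ and compactifying), and you correctly identify the two places where arbitrariness must be quotiented out — ties between genuinely interchangeable leaves and collisions between distinct extensions — which is exactly what the paper's more detailed verification establishes.
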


\begin{proof}
The algorithm iterates through all choices of a permutation $\gamma \colon \chi(b) \rightarrow [|\chi(b)|]$. For each choice of $\gamma$ the algorithm produces a proper labeling $\lambda$ of $G$ that coincides with $\gamma$ on $\chi(b)$. For each $\iota : D \rightarrow [|D|]$ the algorithm then outputs a  $\lambda$ that yields a lexicographically smallest labeled graph $(G, \lambda)$ over all the choices of $\gamma$ that coincide with $\iota$ on $D$. 

Since picking a lexicographically first labeling is a weakly isomorphism invariant function, to prove correctness it suffices to show the construction of $\lambda$ is weakly isomorphism invariant with respect to $G$, $(T, \chi)$ the leaf labeling $\{ \lambda_{\pi, \ell} \}$ as well as with respect to $\gamma$. We now describe how the algorithm constructs $\lambda$ given $\gamma$.

The algorithm proceeds as follows. For each leaf $\ell$ define the permutation $\pi^\ell\colon \sigma(\ell)\to [|\sigma(\ell)|]$ (not to be confused with $\pi\langle\ell\rangle$) to be the unique permutation of $\sigma(\ell)$ so that $\pi(u) < \pi(v)$ if and only if $\gamma(u) < \gamma(v)$. We define $\lambda^\ell$ to be $\lambda_{\pi^\ell, \ell}$.
The algorithm sorts the leaves $\ell$ of $T$ according to the lexicographic ordering of the set of labels $\gamma( \sigma(\ell) )$. For leaves $\ell$ and $\ell'$ that are tied according to this order (so $\sigma(\ell) = \sigma(\ell')$) it breaks ties by picking first the lexicographically smallest of $G[\chi(\ell)]$ with labeling $\lambda^\ell$, and $G[\chi(\ell')]$ with labeling $\lambda^{\ell'}$. Note that if $\ell$ and $\ell'$ are still tied then $(\lambda^{\ell'})^{-1}\lambda^\ell$ is an isomorhism from  $G[\chi(\ell)]$ to $G[\chi(\ell')]$ that maps each vertex $v$ of $\sigma(\ell)$ to itself. 

The labeling $\lambda$ is defined as follows. For every vertex $v \in \chi(b)$ where $b$ is the central bag of $T$ we set $\lambda(v) = \gamma(v)$. For every leaf $\ell$ and every vertex $v$ in $\chi(\ell) \setminus \chi(b)$ we set $\lambda(v) = \lambda^ \ell(v) + ((n+1) \cdot p)$ where $p$ is the position of $\ell$ in the sorted order of leaves defined above.  The algorithm outputs a compactification of $\lambda$ as the weakly isomorphism invariant (with respect to $G$, $(T, \chi)$, and $\gamma$) labeling of $G$. The algorithm clearly runs in time $(|\chi(b)|)! \cdot n^{\cO(1)}$ and outputs a  labeling of $G$. Thus to complete the proof it suffices to show that $\lambda$ is in fact weakly isomorphism invariant in $G$, $(T, \chi)$,  $\{ \lambda_{\pi, \ell} \}$ and $\gamma$.

Consider now a graph $\hat{G}$, with star decomposition $(\hat{T}, \hat{\chi})$ with central bag $\hat{b}$, leaf labeling $\{ \hat{\lambda}_{\hat{\pi}, \hat{\ell}} \}$ of  $(\hat{T}, \hat{\chi})$ and permutation $\hat{\gamma} : \hat{\chi}(\hat{b}) \rightarrow [|\hat{\chi}(\hat{b})|]$.
Suppose that there exists an isomorphism $\phi$ from $G$ to $\hat{G}$ that respects $(T, \chi)$,  $\{ \lambda_{\pi, \ell} \}$ and $\gamma$.
Specifically we assume that there exists an isomorphism $\phi_T$ from $T$ to $\hat{T}$ such that the following holds:
\begin{itemize}\setlength\itemsep{-.7pt}
\item For every $u \in V(G)$ and every $q \in V(T)$ we have $u \in \chi(q)$ if and only if $\phi(u) \in \phi_T(q)$. 
\item For every $u \in \chi(b)$ we have that $\gamma(u) = \hat{\gamma}\phi(u)$
\item For every leaf $\ell$ of $T$ and every permutation $\pi \colon \sigma(\ell) \rightarrow [|\sigma(\ell)|]$, let $\hat{\ell} = \phi_T(\ell)$ and let $\hat{\pi} \colon \hat{\sigma}(\hat{\ell}) \rightarrow [|\hat{\sigma}(\hat{\ell})|]$
be defined as $\hat{\pi}(u) = \pi\phi^{-1}(u)$ for every $u \in \hat{\chi}(\hat{\sigma}(\hat{\ell}))$.
Then $\hat{\lambda}_{\hat{\pi}, \hat{\ell}}^{-1}\lambda_{\pi, \ell}$ is an isomorphism from $G[\chi(\ell)]$  to $\hat{G}[\hat{\chi}(\hat{\ell})]$.
\end{itemize}

Let $\hat{\lambda}$ be the labeling produced by the algorithm on input $\hat{G}$, $(\hat{T}, \hat{\chi})$, $\{ \hat{\lambda}_{\pi, \hat{\ell}} \}$ when considering the permutation $\hat{\gamma}$. To prove that $\lambda$ is weakly isomorphism invariant it suffices to show that $\hat{\lambda}^{-1}\lambda$ is an isomorphism from $G$ to $\hat{G}$.
We have that  $\hat{\lambda}^{-1}\lambda$ coincides with $\hat{\gamma}^{-1}\gamma$, which in turn coincides with $\phi$, and therefore $\hat{\lambda}^{-1}\lambda$ is an isomorphism from $G[\chi(b)]$ to $\hat{G}[\hat{\chi}(\hat{b})]$.

Let $\ell$ be an arbitrarily chosen leaf of $T$ and $p$ be the position of $\ell$ in the sorted list of leaves produced by the algorithm. Let $\hat{\ell}$ be the leaf in position $p$ in the sorted list of the leaves of $\hat{T}$. To prove that  $\hat{\lambda}^{-1}\lambda$ is an isomorphism from $G$ to $\hat{G}$ it is sufficient to prove that $\hat{\lambda}^{-1}\lambda$ is an isomorphism from $G[\chi(\ell)]$ to $\hat{G}[\hat{\chi}(\hat{\ell})]$.

Since $\phi$ respects $\gamma$ we have that for every leaf $\ell'$ in $T$,  $\gamma \sigma (\ell') = \hat{\gamma}\hat{\sigma}\phi_T(\ell')$. 
Further, since $\phi$ is an isomorphism and the labelings $\{ \hat{\lambda}_{\hat{\pi}, \hat{\ell}} \}$ are locally weakly isomorphism invariant it follows that for every leaf $\ell'$ in $T$ the graph $G[\chi(\ell')]$ with labeling $\lambda^{\ell'}$ and the graph $\hat{G}[\hat{\chi}(\phi_T(\ell'))]$ with labeling $\hat{\lambda}^{\phi_T(\ell')}$ are equal.

Let now $\ell' = \phi_T^{-1}(\hat{\ell})$. Suppose $\gamma(\sigma(\ell'))$ is lexicographically larger than $\gamma(\sigma(\ell))$, or $\gamma(\sigma(\ell')) = \gamma(\sigma(\ell))$ and the graph $G[\chi(\ell')]$ with labeling $\lambda^{\ell'}$ is lexicographically larger than $G[\chi(\ell)]$ with labeling $\lambda^{\ell}$. Then, for every leaf $\ell''$ at position $p$ or smaller in the ordering of the leaves of $T$ we have that $\phi_T(\ell'')$ comes before $\phi_T(\ell') = \hat{\ell}$ in the ordering of the leaves of $\hat{T}$. But this contradicts that $\hat{\ell}$ is at position $p$ in this ordering. Similarly if $\gamma(\sigma(\ell'))$ is lexicographically smaller than $\gamma(\sigma(\ell))$, or $\gamma(\sigma(\ell')) = \gamma(\sigma(\ell))$ and the graph $G[\chi(\ell')]$ with labeling $\lambda^{\ell'}$ is lexicographically smaller than $G[\chi(\ell)]$ with labeling $\lambda^{\ell}$ then for every leaf $\ell''$ at position $p$ or larger in the ordering of the leaves of $T$ we have that $\phi_T(\ell'')$ comes before $\phi_T(\ell') = \hat{\ell}$ in the ordering of the leaves of $\hat{T}$. Again this  contradicts that $\hat{\ell}$ is at position $p$ in this ordering. We conclude that $\gamma(\sigma(\ell')) = \gamma(\sigma(\ell))$ and that $G[\chi(\ell')]$ with labeling $\lambda^{\ell'}$ is equal to $G[\chi(\ell)]$ with labeling $\lambda^{\ell}$. 

Hence (by choice of $\ell' = \phi_T^{-1}(\hat{\ell})$) we have that $\hat{G}[\hat{\chi}(\hat{\ell})]$ with labeling $\hat{\lambda}^{\hat{\ell}}$ is equal to $G[\chi(\ell')]$ with labeling $\lambda^{\ell'}$ and therefore also to $G[\chi(\ell)]$ with labeling $\lambda^{\ell}$. 
Thus $(\hat{\lambda}^{\hat{\ell}})^{-1}\lambda^{\ell}$ is an isomorphism from $G[\chi(\ell)]$ to $\hat{G}[\hat{\chi}(\hat{\ell})]$. 
However, for vertices in $\sigma(\ell)$ we have $(\hat{\lambda}^{\hat{\ell}})^{-1}\lambda^{\ell}(v) = \hat{\gamma}^{-1}\gamma(v) = \hat{\lambda}^{-1}\lambda(v)$. On the other hand both $\ell$ and $\hat{\ell}$ are at position $p$ in their respective orderings. Therefore, for vertices in $\chi(\ell) \setminus \sigma(\ell)$ we have that $\lambda(v) = \lambda^{\ell}(v) + ((n+1) \cdot p)$, while for vertices in  $\hat{\chi}(\hat{\ell}) \setminus \hat{\sigma}(\hat{\ell})$ we have that $\hat{\lambda}(v) = \hat{\lambda}^{\hat{\ell}}(v) + ((n+1) \cdot p)$. Thus for vertices in $\chi(\ell) \setminus \sigma(\ell)$  it holds that $(\hat{\lambda}^{\hat{\ell}})^{-1}\lambda^{\ell}(v) =  \hat{\lambda}^{-1}\lambda(v)$. And so we can conclude that $\hat{\lambda}^{-1}\lambda(v)$  is an isomorphism from $G[\chi(\ell)]$ to $\hat{G}[\hat{\chi}(\hat{\ell})]$, completing the proof. 
\end{proof}

\begin{lemma}\label{lem:smallCentersAlgorithm}
There exists an algorithm that takes as input a compactly colored graph $G$, a vertex set $D \subseteq V(G)$, an isomorphism invariant (with respect to $G$ and $D$) star decomposition $(T, \chi)$ of $G$ with central bag $b$ such that $D \subseteq \chi(b)$, and a compact locally weakly isomorphism invariant leaf-labeling $\{ \lambda_{\pi, \ell} \}$ of $(T, \chi)$. The algorithm runs in time $(|\chi(b)|)! \cdot n^{\cO(1)}$ and outputs for every permutation $\iota \colon D \rightarrow [|D|]$ a canonical labeling of $(G, \iota)$.
\end{lemma}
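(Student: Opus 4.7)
The plan is to invoke the algorithm from Lemma~\ref{lem:smallCentersAlgorithmWrapper} directly; this immediately yields the running time $(|\chi(b)|)!\cdot n^{\cO(1)}$ and, for every permutation $\iota\colon D\to[|D|]$, a proper labeling $\lambda_\iota$ of $(G,\iota)$ which, by that lemma, is weakly isomorphism invariant in $(G,\iota)$ together with $(T,\chi)$ and the given leaf-labeling. The only remaining task is to upgrade this to canonicity, i.e., to weak isomorphism invariance in $(G,\iota)$ alone; the strengthened hypothesis that is available here, and unavailable in the wrapper, is that $(T,\chi)$ is fully (rather than merely weakly) isomorphism invariant in $(G,D)$.

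To carry out that upgrade, I would fix two isomorphic inputs $(G_1,D_1,\iota_1)\cong(G_2,D_2,\iota_2)$ via an isomorphism $\phi\colon G_1\to G_2$ with $\iota_2\circ\phi=\iota_1$, and exhibit an isomorphism of the outputs. By the full isomorphism invariance of $(T,\chi)$ in $(G,D)$, the given $\phi$ itself extends to an isomorphism $\phi^\star$ from $(G_1,D_1,T_1,\chi_1)$ onto $(G_2,D_2,T_2,\chi_2)$; let $\phi_T$ denote the induced bijection on trees. Then for each leaf $\ell$ of $T_1$ and each permutation $\pi\colon\sigma_1(\ell)\to[|\sigma_1(\ell)|]$, writing $\hat\pi=\pi\circ(\phi^\star|_{\sigma_1(\ell)})^{-1}$, the restriction $\phi^\star|_{\chi_1(\ell)}$ is an isomorphism between the pairs $(G_1[\chi_1(\ell)],\pi)$ and $(G_2[\chi_2(\phi_T(\ell))],\hat\pi)$; local weak isomorphism invariance of the leaf-labeling then yields that the corresponding labeled graphs $(G_1[\chi_1(\ell)],\lambda^1_{\pi,\ell})$ and $(G_2[\chi_2(\phi_T(\ell))],\lambda^2_{\hat\pi,\phi_T(\ell)})$ are equal as labeled graphs.

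This is precisely the compatibility data that the proof of Lemma~\ref{lem:smallCentersAlgorithmWrapper} consumes when deriving weak isomorphism invariance of its output, so plugging it in produces an isomorphism $\psi\colon G_1\to G_2$ satisfying $\iota_2\circ\psi=\iota_1$ and $\lambda_{\iota_2}\circ\psi=\lambda_{\iota_1}$. That is exactly what it means for $\lambda_\iota$ to be weakly isomorphism invariant in $(G,\iota)$, i.e., to be a canonical labeling of $(G,\iota)$. I do not expect any deep obstacle beyond notational bookkeeping; the one conceptual point is that the upgrade from weak to full isomorphism invariance of $(T,\chi)$ in the hypothesis is exactly what allows the tree-level iso to be taken to extend the \emph{specific} $\phi$ given between $(G_1,\iota_1)$ and $(G_2,\iota_2)$ rather than some unrelated isomorphism, and this specificity is what lets local weak invariance at the leaves feed into the wrapper's argument unchanged.
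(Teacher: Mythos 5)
Your proposal is correct and follows essentially the same route as the paper: the paper's proof is exactly "apply Lemma~\ref{lem:smallCentersAlgorithmWrapper}, then observe that since $(T,\chi)$ is isomorphism invariant in $(G,D)$ the dependence on $(T,\chi)$ can be dropped, so the output is weakly isomorphism invariant in $(G,\iota)$ alone." Your expanded tracing of how the given $\phi$ extends to the decomposition and feeds into the wrapper's argument is just the unpacked version of that two-sentence observation.
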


\begin{proof}
The algorithm applies Lemma~\ref{lem:smallCentersAlgorithmWrapper}
and obtains for every permutation $\iota \colon D \rightarrow [|D|]$ a labeling $\lambda$ of $(G, \iota)$ which is weakly isomorphism invariant in $(G, \iota)$ and $(T, \chi)$. Since  $(T, \chi)$ is isomorphism invariant in $(G, \iota)$, $\lambda$ is weakly isomorphism invariant in $(G, \iota)$.
\end{proof}

\subsection{Unpumping and Lifting}\label{sec:unpumping}

We now define the {\em unpumping} operation which is at the heart of our recursive understanding procedure. The intuition behind the name ``unpumping'' is that if we think of gluing graphs as string concatenation, then the implementation of the operation mimics the classic pumping lemma from formal languages, except that we ''pump`` to make the graph smaller, rather than larger. Hence {\em un}pumping, rather than pumping.
The operation has two versions --- the ``strong'' and the ``weak'' --- depending on what type of representatives are used in the replacement. 
The unpumping operation takes as input a compactly colored graph $G$, a vertex set $D \subseteq V(G)$, a connectivity sensitive and isomorphism invariant (in $G$ and $D$) star decomposition $(T, \chi)$ of $G$ with central bag $b \in V(T)$, such that $D \subseteq \chi(b)$, and a compact locally weakly isomorphism invariant leaf-labeling $\{\lambda_{\pi, \ell}\}$ of $(T, \chi)$.
The procedure also depends on parameters $k$ and $h$ that are part of the input.
The parameter $k$ is only used for strong unpumping.
The operation produces a colored graph $G^ \star$ and a connectivity sensitive star decomposition $(T, \chi^\star)$ as follows.

The graph $G^\star$ is initialized as $G$. Then the unpumping procedure iterates over every leaf $\ell$ of $T$, selects the permutation $\pi\langle\ell\rangle$ and labeling $\lambda \langle \ell \rangle$, and computes using Lemma~\ref{lem:computeStrongRep} or~\ref{lem:computeWeakRep} the ($(k,h)$-strong or $h$-weak, depending on the type of unpumping) representative $(G_R^\ell, \pi_R^\ell)$ of $(G[\chi(\ell)], \pi\langle\ell\rangle)$.
Then $G^\star$ is updated to:
$$G^\star \leftarrow (G_R^\ell, \pi_R^\ell) \oplus (G^\star \setminus (\chi(\ell) \setminus \sigma(\ell)), \pi\langle\ell\rangle)$$
We say that the unpumping {\em replaces} every leaf of $T$ by its representative. 

The star decomposition $(T, \chi^\star)$ of $G^\star$ has the same decomposition star $T$ as $G$, for the central bag $b$ we set $\chi^\star(b) = \chi(b)$ and for every leaf $\ell$ of $T$ we set $\chi^\star(\ell)$ to be the vertex set of the representative $G_R^\ell$ used to replace $G[\chi(\ell)]$.

The unpumping procedure then modifies the coloring of $G^\star$ by going over every leaf $\ell$ in $T$. Let $r$ be the rank of $\ell$ in $T$ (with respect to $G$). The procedure adds $n \cdot (r+1)$ to the color of every vertex $v \in \chi^\star(\ell) \setminus \chi^\star(b)$. Thus, if the color of $v$ used to be $x$ it is changed to $x + n(r+1)$. This concludes the description of the unpumping procedure. We call $G^\star$ the {\em $(k,h)$-strongly unpumped version of $G$}, or the {\em $h$-weakly unpumped version of $G$} depending on whether strong or weak representatives are used in the replacement. 
We collect a few observations about the unpumped version $G^\star$.

\begin{lemma}\label{lem:unpumpProperties} The unpumped graph $G^\star$ and star decomposition $(T, \chi^\star)$ has the following properties. 
\begin{enumerate}\setlength\itemsep{-3pt}

\item\label{itm:gstarCenterUnchanged} $(G^\star[\chi^\star(b)], D) = (G[\chi(b)], D)$, in terms of equality of colored, labeled graphs. 

\item\label{itm:gstarfolio} For every labeling $\iota : D \rightarrow [|D|]$ the $h$-folio of $(G, \iota)$ and the $h$-folio of $(G^\star, \iota)$ are equal. 

\item\label{itm:colorRank} In $G^\star$ every vertex in $\chi^\star(b)$ has a color between $1$ and $n$. For every leaf $\ell$ of $T$, every vertex in $\chi^\star(\ell) \setminus \chi^\star(b)$ has a color between $n(r+1) + 1$ and $n(r+2)$, where $r$ is the rank of $\ell$.

\item\label{itm:sameExtPerm} For every leaf $\ell$ of $T$ and $\iota : \sigma(\ell) \rightarrow [|\sigma(\ell)|]$, the boundaried graphs $(G[\chi(\ell)], \iota)$ and $(G^\star[\chi(\ell)], \iota)$ have the same set of extendable permutations.

\item\label{itm:sameImprovedCenter} For strong unpumping: For every $\iota : D \rightarrow [|D|]$, and every boundaried graph $(H, \iota)$ boundary-consistent with $(G, \iota)$, the subgraph of the $k$-improved graph of $(G, \iota) \oplus (H, \iota)$ induced by $\chi(b)$ and the subgraph of the $k$-improved graph of $(G^\star, \iota) \oplus (H, \iota)$ induced by $\chi^\star(b)$ are the same. 

\item\label{itm:unpumpKeepsUnbreak} For weak unpumping: For every function $q$, if $\chi(b)$ is $q$-unbreakable in $G$ then $\chi^\star(b)$ is $q$-unbreakable in $G^\star$. 

\item\label{itm:weakCutSignature} For weak unpumping: for every $\iota : D \rightarrow [|D|]$, the weak cut signature of $(G, \iota)$ is equal to the weak cut signature of $(G^\star, \iota)$.

\item\label{itm:connSensitiveGstar} The decomposition $(T, \chi^\star)$ is connectivity-sensitive.

\end{enumerate}
\end{lemma}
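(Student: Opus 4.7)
The plan is to verify the eight items individually, viewing unpumping as a sequence of leaf replacements performed in any fixed order. Because replacing a leaf $\ell$ modifies only the vertices of $\chi(\ell)\setminus\sigma(\ell)$, and these sets are pairwise disjoint across leaves, each replacement can be analyzed independently of the others; the final color shift touches only the leaf interiors. With this viewpoint, items~\ref{itm:gstarCenterUnchanged} and~\ref{itm:colorRank} are immediate from the construction: unpumping does not touch $\chi(b)$ or its colors, yielding (1); and by the coloring bullet of Lemmas~\ref{lem:equivalentStrongGadget}/\ref{lem:equivalentWeakGadget} (together with $|V(G_R^\ell)|\leq|V(G[\chi(\ell)])|\leq n$) the non-boundary colors of each representative lie in $[n]$, which after shift by $n(r+1)$ fall inside $[n(r+1)+1,n(r+2)]$, yielding (3). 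For item~\ref{itm:connSensitiveGstar}, the adhesions $\sigma^\star(\ell)=\sigma(\ell)$ are unchanged so distinct leaves still have distinct adhesions, and Lemmas~\ref{lem:equivalentStrongGadget}/\ref{lem:equivalentWeakGadget} explicitly preserve the property that every non-boundary component has the full boundary as its neighborhood, giving the second condition of connectivity-sensitivity.

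Items~\ref{itm:gstarfolio}, \ref{itm:sameImprovedCenter}, \ref{itm:unpumpKeepsUnbreak}, and~\ref{itm:weakCutSignature} are obtained by applying the matching preservation lemmas from Section~\ref{sec:canonBoundaried} one leaf at a time. Item~\ref{itm:gstarfolio} uses Lemma~\ref{lem:topFolioGlueAndForget}, since the representative and the original leaf share the $h$-folio. Item~\ref{itm:sameImprovedCenter} uses Lemma~\ref{lem:sameImproved}, with the ambient graph at each step being the rest of the partially unpumped graph further glued with the external $(H,\iota)$; the improved graph restricted to $\chi(b)$ is preserved by every single replacement, and composing across all leaves gives the claim. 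Items~\ref{itm:unpumpKeepsUnbreak} and~\ref{itm:weakCutSignature} use Lemmas~\ref{lem:sameUnbreakable} and~\ref{lem:sameWeakCutAfterGluing} analogously, exploiting that the weak representative preserves the weak cut signature; since $D\subseteq\chi(b)$, the relevant monitored sets $\iota^{-1}(\hat{A})$, $\iota^{-1}(\hat{B})$ stay inside the ambient graph at every step, so the inductive argument over leaves proceeds without circularity.

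The delicate case is item~\ref{itm:sameExtPerm}, since the representative $(G_R^\ell,\pi_R^\ell)$ is only defined to share its extendable-permutation set with $(G[\chi(\ell)],\pi\langle\ell\rangle)$ for the one reference labeling $\pi\langle\ell\rangle$, whereas the claim quantifies over all $\iota\colon\sigma(\ell)\to[|\sigma(\ell)|]$. The key observation is that for any boundaried graph $(G',\iota)$, the set of extendable permutations is the conjugate by $\iota$ of the image in $\mathrm{Sym}(\bnd G')$ of the subgroup of $\mathrm{Aut}(G')$ stabilising $\bnd G'$ setwise. Since the boundary-consistency bullet of Lemmas~\ref{lem:equivalentStrongGadget}/\ref{lem:equivalentWeakGadget} guarantees that the boundaries of $G[\chi(\ell)]$ and $G_R^\ell$ are identified in the same way after gluing into $G^\star$, and their extendable-permutation sets agree for $\pi\langle\ell\rangle$, reindexing by the bijection $\iota\circ\pi\langle\ell\rangle^{-1}$ transfers the equality uniformly to every $\iota$.

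The main obstacle I foresee is precisely item~\ref{itm:sameExtPerm}: the other items are essentially bookkeeping once the correct preservation lemma is matched to the correct claim, whereas here one must carefully distinguish between permutations acting on labels and permutations acting on boundary vertices, and verify that boundary-consistency of the representative furnishes exactly the input needed to lift the equality from a single reference labeling to all of them.
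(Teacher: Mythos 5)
Your proof is correct and follows essentially the same route as the paper: each item is matched with the corresponding preservation lemma (Lemma~\ref{lem:topFolioGlueAndForget} for the folio, Lemma~\ref{lem:sameImproved} for the improved graph, Lemmas~\ref{lem:sameUnbreakable} and~\ref{lem:sameWeakCutAfterGluing} for unbreakability and the weak cut signature) and applied one leaf replacement at a time, with items 1, 3 and 8 read off the construction. The only difference is that for item~\ref{itm:sameExtPerm} you spell out the conjugation argument transferring equality of the extendable-permutation sets from the reference labeling $\pi\langle\ell\rangle$ to an arbitrary $\iota$, a detail the paper leaves implicit by citing Lemmas~\ref{lem:equivalentStrongGadget} and~\ref{lem:equivalentWeakGadget} directly.
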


\begin{proof}
Property~\ref{itm:gstarCenterUnchanged} follows directly from the construction, since only the leaves are replaced and every leaf $G[\chi(\ell)]$ is replaced by a representative, which is boundary-consistent with $G[\chi(\ell)]$. 
Property~\ref{itm:gstarfolio} follows directly by repeated applications of Lemma~\ref{lem:topFolioGlueAndForget}.
Property~\ref{itm:colorRank} follows directly from the definition of unpumping together with the assumption that $G$ is compactly colored. 
Property~\ref{itm:sameExtPerm} follows from Lemma~\ref{lem:equivalentStrongGadget} (for strong unpumping) and Lemma~\ref{lem:equivalentWeakGadget} for weak unpumping. 

Property~\ref{itm:sameImprovedCenter} follows from repeated applications of Lemma~\ref{lem:sameImproved}.
In particular, let $\iota : D \rightarrow [|D|]$ be a bijection and $(H, \iota)$ be boundary-consistent with $(G, \iota)$.
Consider the step in the unpumping algorithm where $(G^\star([\chi(\ell)], \pi\langle \ell \rangle)$ is replaced by $(G_R^\ell, \pi_R^\ell)$.
Consider the graph of $F = (G^\star[\chi(\ell)], \iota) \oplus (H, \iota)$. Strictly speaking $F$ is a boundaried graph with boundary $D$, we consider only the underlying un-boundaried graph. 
We now consider $(G^\star[\chi(\ell)], \pi\langle\ell\rangle)$ and $(F \setminus (\chi(\ell) \setminus \sigma(\ell)), \pi\langle\ell\rangle)$ as boundaried graphs with boundary $\sigma(\ell)$.
We have that $F = (G^\star[\chi(\ell)], \lambda\langle\ell\rangle) \oplus (F \setminus (\chi(\ell) \setminus \sigma(\ell)), \pi\langle\ell\rangle)$.
When the unpumping replaces $(G^\star([\chi(\ell)], \pi\langle \ell \rangle)$ by $(G_R^\ell, \pi_R^\ell)$, the graph $F$ changes from
$(G^\star[\chi(\ell)], \lambda\langle\ell\rangle) \oplus (F \setminus (\chi(\ell) \setminus \sigma(\ell)), \pi\langle\ell\rangle)$.
to
$(G_R^\ell, \pi_R^\ell) \oplus (F \setminus (\chi(\ell) \setminus \sigma(\ell)), \pi\langle\ell\rangle)$.
By Lemma~\ref{lem:sameImproved} the $\imp{F}{k}[\chi(\ell)]$ remains unchanged. Since this holds for the replacement of every leaf $\ell$, this concludes the proof of Property~\ref{itm:sameImprovedCenter}.

Property~\ref{itm:unpumpKeepsUnbreak} follows from repeated applications of Lemma~\ref{lem:sameUnbreakable} in much the same way Property~\ref{itm:sameImprovedCenter} follows from repeated applications of Lemma~\ref{lem:sameImproved}.
Consider the step in the unpumping algorithm where $(G^\star([\chi(\ell)], \pi\langle \ell \rangle)$ is replaced by $(G_R^\ell, \pi_R^\ell)$.
In this step $G^\star$ changes from
$(G^\star([\chi(\ell)], \pi\langle \ell \rangle) \oplus 
(G^\star \setminus (\chi(\ell) \setminus \sigma(\ell)), \pi\langle\ell\rangle)$
to
$G^\star \leftarrow (G_R^\ell, \pi_R^\ell) \oplus (G^\star \setminus (\chi(\ell) \setminus \sigma(\ell))$.
Since $(G^\star([\chi(\ell)], \pi\langle \ell \rangle)$ and $(G_R^\ell, \pi_R^\ell)$ have the same weak cut signature, Lemma~\ref{lem:sameUnbreakable} implies that the family of $q$-unbreakable sets in $G^\star \setminus (\chi(\ell) \setminus \sigma(\ell))$ remains unchanged.

For Property~\ref{itm:weakCutSignature}, observe that repeated applications of Lemma~\ref{lem:sameWeakCutAfterGluing} (in exactly the same way Property~\ref{itm:unpumpKeepsUnbreak} follows from repeated applications of Lemma~\ref{lem:sameUnbreakable}) yields that for every $X, Y \subseteq \chi(b)$, the minimum order of an $X$-$Y$ separation in $G$ and $G^\star$ are the same. Applying this observation to all pairs $X$, $Y$ such that $X \cup Y = D$ yields that the weak cut signatures of  $G$ and $G^\star$ are the same, proving Property~\ref{itm:weakCutSignature}.

Finally, for Property~\ref{itm:connSensitiveGstar}, observe that $(T, \chi)$ is is connectivity sensitive. Thus every leaf $\ell$ of $T$ and every component $C$ of $G[\chi(\ell) \setminus \sigma(\ell)]$ satisfies $N(C) = \sigma(\ell)$. Then, by Lemma~\ref{lem:equivalentStrongGadget} or~\ref{lem:equivalentWeakGadget} (for strong or weak unpumping respectively), every 
leaf $\ell$ of $T$ and every component $C$ of $G^\star[\chi^\star(\ell) \setminus \sigma^\star(\ell)]$ satisfies $N(C) = \sigma^\star(\ell)$. Therefore $(T, \chi^\star)$ is a connectivity sensitive star decomposition of $(G, \chi^\star)$, proving Property~\ref{itm:connSensitiveGstar}.
\end{proof}

The next lemma essentially states that $G^\star$ is weakly isomorphism invariant in $G$. However we need a stronger version - morally we would like to say that for every bijection $\iota \colon D \rightarrow [|D|]$ the boundaried graph $(G^\star, \iota)$ is weakly isomorphism invariant in $(G, \iota)$. However, since the computation of $G^\star$ does {\em not} take any labeling of $D$ as input, saying that $(G^\star, \iota)$ is weakly isomorphism invariant in $(G, \iota)$ is not a well-formed statement. This leads to a rather cumbersome lemma statement that essentially ``unpacks'' the definition of weak isomorphism invariance and adapts it to the setting where the map remains weakly isomorphism invariant for every $\iota : D \rightarrow [|D|]$.

\begin{lemma}\label{lem:weakIsoStar} 
Let $\hat{G}$ be a graph and $\hat{D} \subseteq V(\hat{G})$ be a vertex subset such that $(G, D)$ and $(\hat{G}, \hat{D})$ are isomorphic. 
Let $(\hat{T}, \hat{\chi})$ be a connectivity sensitive, isomorphism invariant (in $\hat{G}$ and $\hat{D}$) star decomposition of $\hat{G}$ with central bag $\hat{b} \in V(\hat{T})$ and let $\{\hat{\lambda}_{\hat{\pi}, \hat{\ell}}\}$ be a compact locally weakly isomorphism invariant leaf-labeling of $(\hat{T}, \hat{\chi})$.
Finally, let $\hat{G}^\star$ be the $(k,h)$-strongly unpumped version of $\hat{G}$  (or the $h$-weakly unpumped version of $\hat{G}$), and let $(\hat{T}^\star, \hat{\chi}^\star)$ be the corresponding star decomposition of $\hat{G}^\star$.
Then, for every pair $\iota \colon D \rightarrow [|D|]$ and $\hat{\iota} \colon \hat{D} \rightarrow [|\hat{D}|]$ of bijections, if the boundaried graphs $(G, \iota)$ and $(\hat{G}, \hat{\iota})$ are isomorphic then there is an isomorphism from $(G^\star, \iota)$ to $(\hat{G}^\star, \hat{\iota})$ that maps $(T^\star, \chi^\star)$ to $(\hat{T}^\star, \hat{\chi}^\star)$.
\end{lemma}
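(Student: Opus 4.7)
The plan is to construct the desired isomorphism $\Phi\colon G^\star \to \hat{G}^\star$ by gluing an isomorphism on the (untouched) central bag with isomorphisms between the leaf representatives that are supplied by Lemma~\ref{lem:replacementKeepsPerm}. First I would fix an isomorphism $\phi\colon G \to \hat{G}$ with $\hat{\iota}\circ\phi = \iota$ on $D$; such a $\phi$ exists because $(G,\iota)$ and $(\hat{G},\hat{\iota})$ are isomorphic boundaried graphs. Since $(T,\chi)$ is isomorphism invariant in $(G,D)$, $\phi$ induces a bijection $\phi_T\colon V(T)\to V(\hat{T})$ satisfying $\phi(\chi(t)) = \hat{\chi}(\phi_T(t))$ for every $t$ and sending the central node to the central node. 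Write $\hat{\ell} = \phi_T(\ell)$ for each leaf $\ell$; then $\phi_\ell \coloneqq \phi|_{\chi(\ell)}$ is an isomorphism of colored graphs from $G[\chi(\ell)]$ to $\hat{G}[\hat{\chi}(\hat{\ell})]$ mapping $\sigma(\ell)$ onto $\hat{\sigma}(\hat{\ell})$.

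Next I would invoke Lemma~\ref{lem:equalRankIso}, applied to the isomorphism $\phi_\ell$, to conclude that $\ell$ and $\hat{\ell}$ have the same rank $r$ in their respective decompositions, and hence that the color shift $n(r+1)$ applied to $\chi^\star(\ell)\setminus\chi^\star(b)$ during the unpumping of $G$ equals the shift applied to $\hat{\chi}^\star(\hat{\ell})\setminus\hat{\chi}^\star(\hat{b})$ during the unpumping of $\hat{G}$. This is the only place where matching of ranks is needed. To transfer $\phi_\ell$ to the representatives, define the permutation $\pi \coloneqq \hat{\pi}\langle\hat{\ell}\rangle \circ \phi_\ell \circ \pi\langle\ell\rangle^{-1}$ of $[|\sigma(\ell)|]$; then $\phi_\ell$ is an isomorphism from $(G[\chi(\ell)], \pi\langle\ell\rangle)$ to $(\hat{G}[\hat{\chi}(\hat{\ell})], \hat{\pi}\langle\hat{\ell}\rangle)$ that realizes $\pi$ on boundary labels. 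Lemma~\ref{lem:replacementKeepsPerm} (with $k$-strong or $h$-weak representatives, as appropriate) hands us an isomorphism $\psi_\ell^R\colon G_R^\ell\to \hat{G}_R^{\hat{\ell}}$ mapping $\bnd G_R^\ell$ onto $\bnd \hat{G}_R^{\hat{\ell}}$ and realizing the \emph{same} permutation $\pi$ on boundary labels.

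The isomorphism $\Phi$ is then defined by setting $\Phi \coloneqq \phi$ on $\chi^\star(b) = \chi(b)$ and $\Phi \coloneqq \psi_\ell^R$ on $\chi^\star(\ell)\setminus\chi^\star(b)$ for each leaf $\ell$. I would check consistency on the overlap $\sigma^\star(\ell) \subseteq \chi^\star(b)$: during unpumping the boundary vertex of $G_R^\ell$ with label $i$ is identified with $\pi\langle\ell\rangle^{-1}(i)\in\sigma(\ell)$ (and analogously for the hat side). Because $\psi_\ell^R$ realizes $\pi = \hat{\pi}\langle\hat{\ell}\rangle\circ\phi_\ell\circ\pi\langle\ell\rangle^{-1}$ on boundary labels, the vertex $\pi\langle\ell\rangle^{-1}(i)$ is sent by $\psi_\ell^R$ to $\hat{\pi}\langle\hat{\ell}\rangle^{-1}(\pi(i)) = \phi_\ell(\pi\langle\ell\rangle^{-1}(i))$, which matches $\phi$. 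Then $\Phi$ preserves adjacency because every edge of $G^\star$ lies in $G^\star[\chi^\star(b)]$ or in some $G^\star[\chi^\star(\ell)]$ thanks to connectivity-sensitivity of $(T,\chi^\star)$ (item~\ref{itm:connSensitiveGstar} of Lemma~\ref{lem:unpumpProperties}); color preservation on the center is inherited from $\phi$, and on each leaf portion it follows from $\psi_\ell^R$ being an isomorphism of colored representatives together with the equality of rank-based color shifts established above. Finally, $\Phi$ restricts to $\phi$ on $D$, so $\hat{\iota}\circ\Phi = \iota$ on $D$; and $\phi_T$ witnesses that $\Phi$ maps $(T^\star, \chi^\star)$ to $(\hat{T}^\star, \hat{\chi}^\star)$.

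The main obstacle is the boundary-labeling bookkeeping: the permutations $\pi\langle\ell\rangle$ and $\hat{\pi}\langle\hat{\ell}\rangle$ are chosen by a tiebreaking rule that is \emph{not} isomorphism invariant, so one cannot simply assert that $\phi_\ell$ lifts directly to an isomorphism of representatives compatibly with these tiebreakings. Lemma~\ref{lem:replacementKeepsPerm} is the key tool that sidesteps this: whatever permutation $\pi$ on $[|\sigma(\ell)|]$ is induced by $\phi_\ell$ through the chosen boundary labelings, an isomorphism between the representatives inducing the same $\pi$ must exist, and this is exactly what is needed to make the gluing at $\sigma(\ell)$ agree with $\phi$ on the nose.
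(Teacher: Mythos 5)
Your proposal is correct and follows essentially the same route as the paper's proof: fix an isomorphism realizing $\iota$ and $\hat{\iota}$, use isomorphism-invariance of the star decomposition to pair up leaves, invoke Lemma~\ref{lem:equalRankIso} to match ranks (and the labeled leaf graphs), apply Lemma~\ref{lem:replacementKeepsPerm} to transfer each leaf isomorphism to the representatives compatibly with the boundary permutation, and glue. Your extra bookkeeping on the boundary identification and the rank-based color shifts is a more explicit version of what the paper leaves implicit in its final "but then" step.
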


\begin{proof}
Let $\psi : V(G) \rightarrow V(\hat{G})$ be an isomorphism from $(G, \iota)$ to $(\hat{G}, \hat{\iota})$.
Since $(T, \chi)$ is isomorphism invariant in $(G, D)$ (and $(\hat{T}, \hat{\chi})$ is isomorphism invariant in $(\hat{G}, \hat{D})$) there is an isomorphism $\psi_T : V(T) \rightarrow V(\hat{T})$ from $T$ to $\hat{T}$ such that $\psi_T(b) = \hat{b}$ and for every $v \in V(G)$ and node $u \in V(T)$ it holds that $v \in \chi(u)$ if and only if $\psi(v) \in \hat{\chi}(\psi_T(u))$.
Let $\ell$ be a leaf of $T$ and $\hat{\ell} = \psi_T(\ell)$.
We have that $\psi$ restricted to $\chi(\ell)$ is an isomorphism from $G[\chi(\ell)]$ to $\hat{G}[\hat{\chi}(\hat{\ell})]$ that maps $\sigma(\ell)$ to $\hat{\sigma}(\hat{\ell})$. It follows from Lemma~\ref{lem:equalRankIso} (and the isomorphism of $(G,D)$ with $(\hat{G}, \hat{D})$) that the rank of $\ell$ and $\hat{\ell}$ are equal, and that
the boundaried graph $(G[\chi(\ell)], \pi\langle\ell\rangle)$ with labeling $\lambda\langle\ell\rangle$
and the boundaried graph $(\hat{G}[\hat{\chi}(\hat{\ell})], \hat{\pi}\langle\hat{\ell}\rangle)$ with labeling $\hat{\lambda}\langle\hat{\ell}\rangle$ are equal.

Since unpumping replaces $(G[\chi(\ell)], \pi\langle\ell\rangle)$ and $(\hat{G}[\hat{\chi}(\hat{\ell})], \hat{\pi}\langle\hat{\ell}\rangle)$ by their representatives, Lemma~\ref{lem:replacementKeepsPerm} applied to  $(G[\chi(\ell)], \pi\langle\ell\rangle)$, $(\hat{G}[\hat{\chi}(\hat{\ell})], \hat{\pi}\langle\hat{\ell}\rangle)$ and the restriction of $\psi$ to $\chi(\ell)$ implies that there exists an isomorphism $\psi_\ell^R$ from $G^\star[\chi^\star(\ell)]$ to $\hat{G}^\star[\hat{\chi}^\star(\hat{\ell})]$ such that $\psi(v) = \psi_\ell^R(v)$ for every $v \in \sigma(\ell)$. For every leaf $\ell$ of $T$, let  $\psi_\ell^R$ be such an isomorphism. 
But then, since every vertex $v$ in $G^\star$ is either in $\chi^\star(b)$ or in $\chi^\star(\ell) \setminus \sigma(\ell)$ for precisely one leaf $\ell$ of $T$, the mapping $\psi^\star : V(G^\star) \rightarrow V(\hat{G}^\star)$ defined as 
$$
\psi^\star(v) =
\begin{cases}
\psi(v), \mbox{ if } v \in \chi^\star(b) \\
\psi_\ell^\star(v) \mbox{ if } v \in \chi^\star(\ell) \setminus \sigma(\ell)
\end{cases}
$$
is an isomorphism from $(G^\star, \iota)$ to $(\hat{G}^\star, \hat{\iota})$ that maps $(T^\star, \chi^\star)$ to $(\hat{T}^\star, \hat{\chi}^\star)$.
\end{proof}

\paragraph{Lifting Canonical Labelings of $G^\star$ (Specification).}
We will shortly describe an algorithm, that we call the {\em lifting procedure}. We first specify what it takes as input, and what it produces as output. 

The lifting procedure takes as input 
a compactly colored graph $G$,
a vertex set $D \subseteq V(G)$,
a connectivity sensitive and isomorphism invariant (in $G$ and $D$) star decomposition $(T, \chi)$ of $G$ with central bag $b \in V(T)$ such that $D \subseteq \chi(b)$, 
a compact locally weakly isomorphism invariant leaf-labeling $\{\lambda_{\pi, \ell}\}$ of $(T, \chi)$, together with
the (weakly or strongly) unpumped version $G^\star$ of $G$ and the corresponding star decomposition $(T, \chi^\star)$ of $G^\star$, 
a bijection $\iota : D \rightarrow [|D|]$, and a
compact weakly isomorphism invariant proper labeling $\lambda^\star$ of the boundaried graph $(G^\star, \iota)$.
The algorithm outputs a proper labeling $\lambda$ of $(G, \iota)$. We will call an input $(G, D, (T, \chi), b, \{\lambda_{\pi, \ell}\}, G^\star, (T, \chi^\star), \iota, \lambda^\star)$ to the lifting procedure adhering to the above specifications a {\em lifting bundle}.

\paragraph{Lifting Canonical Labelings of $G^\star$ (Description).}
The lifting procedure works as follows. For each leaf $\ell$ define $\pi^\ell$ (not to be confused with $\pi\langle\ell\rangle$) to be the unique permutation of $\sigma(\ell)$ such that $\pi(u) < \pi(v)$ if and only if $\lambda^\star(u) < \lambda^\star(v)$. We define $\lambda^\ell$ to be $\lambda_{\pi^\ell, \ell}$.
The algorithm sorts the leaves $\ell$ of $T$ according to the lexicographic ordering of the set of labels $\lambda^\star( \sigma^\star(\ell) )$. Because $(T, \chi)$ is connectivity-sensitive, this yields a total order on the leaves of $T$.
The labeling $\lambda$ is defined as follows. For every vertex $v \in \chi(b)$ where $b$ is the central bag of $T$ we set $\lambda(v) = \lambda^\star(v)$. For every leaf $\ell$ and every vertex $v$ in $\chi(\ell) \setminus \chi(b)$ we set $\lambda(v) = \lambda^ \ell(v) + ((n+1) \cdot p)$ where $p$ is the position of $\ell$ in the sorted order of leaves defined above.  The algorithm outputs a compactification of the labeling $\lambda$ of $(G, \iota)$. It is clear from the definition of the lifting procedure that it runs in time polynomial in the size of the objects given to it as input. 

\begin{observation}\label{obs:liftRunTime}
The lifting procedure runs in time polynomial in its input. 
\end{observation}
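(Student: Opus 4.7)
The plan is simply to trace through the pseudocode of the lifting procedure and certify that every step executes in time polynomial in the total bit-length $N$ of the input tuple $(G, D, (T,\chi), b, \{\lambda_{\pi, \ell}\}, G^\star, (T,\chi^\star), \iota, \lambda^\star)$. Since $G$, $G^\star$, and $T$ all have size polynomial in $n$, and the input itself already contains the entire family $\{\lambda_{\pi,\ell}\}$ of precomputed leaf labelings, the trivial bound $N \geq n$ is enough to absorb any polynomial-in-$n$ overhead incurred internally.

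First I would account for the per-leaf work. For each leaf $\ell$ of $T$, extracting the permutation $\pi^\ell$ from the restriction of $\lambda^\star$ to $\sigma(\ell)$ amounts to sorting at most $n$ integers, which takes $O(n \log n)$ time, and the subsequent step $\lambda^\ell \coloneqq \lambda_{\pi^\ell, \ell}$ is a direct lookup into the table $\{\lambda_{\pi,\ell}\}$ supplied in the input, which costs time polynomial in $N$. Sorting the leaves $\ell$ of $T$ according to the lexicographic order of the sets $\lambda^\star(\sigma^\star(\ell))$ is a standard comparison sort with $O(|V(T)| \log |V(T)|)$ comparisons, each of cost $O(n)$; because $(T,\chi)$ is connectivity-sensitive these sets are pairwise distinct, so the resulting order is total as required.

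Next I would account for the assembly and compactification of $\lambda$. Every vertex $v \in V(G)$ belongs either to $\chi(b)$, in which case $\lambda(v) = \lambda^\star(v)$ is simply copied, or to $\chi(\ell) \setminus \chi(b)$ for a unique leaf $\ell$, in which case $\lambda(v) = \lambda^\ell(v) + (n+1) \cdot p$ requires one integer addition with operands of bit-length $O(\log n)$. The resulting labeling $\lambda$ takes values in an integer range of size at most $(n+1) \cdot |V(T)|$, so the final compactification step (cf.\ Lemma~\ref{lem:compact}) is a linear-time renaming after a sort, hence polynomial in $N$.

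Summing these contributions yields an overall bound polynomial in the input size, which is exactly the claim. The procedure contains no recursion, no enumeration, and no step whose cost depends on a hidden parameter, so there is no real obstacle here: the proof is pure bookkeeping about the explicit description of the lifting procedure given immediately above.
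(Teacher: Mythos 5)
Your proposal is correct and matches the paper, which simply asserts that the bound is clear from the definition of the lifting procedure; your step-by-step accounting (per-leaf permutation extraction, table lookups into the supplied leaf labelings, the leaf sort justified by connectivity-sensitivity, and the final assembly and compactification) is exactly the bookkeeping the paper leaves implicit.
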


Our next goal is to show that the labeling $\lambda$ computed by the lifting procedure is in fact weakly isomorphism invariant with respect to $(G, \iota)$. The next lemma is the main step towards that goal. 

\begin{lemma}
\label{lem:liftingIsoHelper}
Let 
$(G, D, (T, \chi), b, \{\lambda_{\pi, \ell}\}, G^\star, (T, \chi^\star), \iota, \lambda^\star)$ and 
$(\hat{G}, \hat{D}, (\hat{T}, \hat{\chi}), \hat{b}, \{\hat{\lambda}_{\ell,\pi}\}, \hat{G}^\star, (\hat{T}, \hat{\chi}^\star), \hat{\iota}, \hat{\lambda}^\star)$ be lifting bundles, and let $\lambda$ and $\hat{\lambda}$ be the output of the lifting procedure on them. 
If $(G, D)$ is isomorphic with $(\hat{G}, \hat{D})$, and $(G^\star, \iota)$ is isomorphic with $(\hat{G}^\star, \hat{\iota})$ then $\hat{\lambda}^{(-1)}\lambda$ is an isomorphism from $(G, \iota)$ to $(\hat{G}, \hat{\iota})$.
\end{lemma}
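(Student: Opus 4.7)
I would prove the lemma by showing that $\hat{\lambda}^{-1}\lambda$ is an isomorphism on each bag of the star decomposition $(T,\chi)$ of $G$, using $\phi := (\hat{\lambda}^\star)^{-1}\lambda^\star$ as a bridge between the two sides. Since the bags of $(T,\chi)$ cover $V(G)$ and every edge lies inside some bag, this piecewise verification will yield the desired isomorphism $(G,\iota) \to (\hat{G},\hat{\iota})$.

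The first step is to set up $\phi$. Since $\lambda^\star$ and $\hat{\lambda}^\star$ are weakly isomorphism invariant proper labelings of the isomorphic boundaried graphs $(G^\star,\iota)$ and $(\hat{G}^\star,\hat{\iota})$, $\phi$ is an isomorphism $(G^\star,\iota)\to(\hat{G}^\star,\hat{\iota})$. By the coloring property in Lemma~\ref{lem:unpumpProperties}(\ref{itm:colorRank}), the color of a vertex encodes whether it lies in the central bag and, if not, the rank of the leaf containing it. Hence $\phi$ must send $\chi^\star(b)$ to $\hat{\chi}^\star(\hat{b})$ and, by connectivity-sensitivity of $(T,\chi^\star)$ and $(\hat{T},\hat{\chi}^\star)$ established in Lemma~\ref{lem:unpumpProperties}(\ref{itm:connSensitiveGstar}), induce a rank-preserving bijection $\phi_T$ on leaves. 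Because $\hat{\lambda}^\star\circ\phi=\lambda^\star$ as functions, I get $\lambda^\star(\sigma^\star(\ell))=\hat{\lambda}^\star(\hat{\sigma}^\star(\phi_T(\ell)))$ as sorted sequences for every leaf $\ell$, so $\ell$ and $\phi_T(\ell)$ occupy the same position $p$ in the lifting procedure's sorted orderings of leaves of $T$ and $\hat{T}$.

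On the central bag, the lifting procedure sets $\lambda|_{\chi(b)}=\lambda^\star|_{\chi^\star(b)}$ and $\hat{\lambda}|_{\hat{\chi}(\hat{b})}=\hat{\lambda}^\star|_{\hat{\chi}^\star(\hat{b})}$, so $\hat{\lambda}^{-1}\lambda$ coincides with $\phi$ on $\chi(b)$, and Lemma~\ref{lem:unpumpProperties}(\ref{itm:gstarCenterUnchanged}) ensures the induced subgraphs on $\chi(b)$ and $\chi^\star(b)$ (and on $\hat{\chi}(\hat{b})$ and $\hat{\chi}^\star(\hat{b})$) agree, so this restriction is a graph isomorphism mapping $D$ to $\hat{D}$ consistently with $\iota$ and $\hat{\iota}$. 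The main case is each leaf $\ell$ with image $\hat{\ell}:=\phi_T(\ell)$. Because $(G,D)\cong(\hat{G},\hat{D})$ and $(T,\chi),(\hat{T},\hat{\chi})$ are isomorphism-invariant, Lemmas~\ref{lem:equalSetsOfLabeled} and~\ref{lem:equalRankIso} imply that leaves of equal rank share the same labeled boundaried graph; in particular, $(G[\chi(\ell)],\pi\langle\ell\rangle)\cong(\hat{G}[\hat{\chi}(\hat{\ell})],\hat{\pi}\langle\hat{\ell}\rangle)$, so the representatives used by the unpumping step can be compared via Lemma~\ref{lem:replacementKeepsPerm}. Unwinding the definitions of $\pi^\ell$ and $\hat{\pi}^{\hat{\ell}}$, which simply sort $\sigma(\ell),\hat{\sigma}(\hat{\ell})$ by $\lambda^\star,\hat{\lambda}^\star$, and using that $\phi$ preserves these labels, I will check that $\phi|_{\sigma(\ell)}=(\hat{\pi}^{\hat{\ell}})^{-1}\pi^\ell$. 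Thus $\phi|_{\chi^\star(\ell)}$ is an isomorphism of the representatives realizing a specific boundary permutation (computed from the above and the boundary labellings $\pi\langle\ell\rangle,\hat{\pi}\langle\hat{\ell}\rangle$). Applying Lemma~\ref{lem:replacementKeepsPerm} pulls this back to an isomorphism $\psi:G[\chi(\ell)]\to\hat{G}[\hat{\chi}(\hat{\ell})]$ with $\psi|_{\sigma(\ell)}=(\hat{\pi}^{\hat{\ell}})^{-1}\pi^\ell$, witnessing $(G[\chi(\ell)],\pi^\ell)\cong(\hat{G}[\hat{\chi}(\hat{\ell})],\hat{\pi}^{\hat{\ell}})$. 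Local weak isomorphism invariance of $\{\lambda_{\pi,\ell}\}$ then yields that $(\hat{\lambda}^{\hat{\ell}})^{-1}\lambda^\ell$ is itself such an isomorphism.

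To finish, I will verify that $\hat{\lambda}^{-1}\lambda$ and $(\hat{\lambda}^{\hat{\ell}})^{-1}\lambda^\ell$ agree on $\chi(\ell)$: on $\chi(\ell)\setminus\sigma(\ell)$ this follows because the lifting procedure adds the same shift $(n+1)p$ on both sides (using that $\ell$ and $\hat{\ell}$ are at the common position $p$), and on $\sigma(\ell)$ both maps equal $\phi|_{\sigma(\ell)}=(\hat{\pi}^{\hat{\ell}})^{-1}\pi^\ell$. Assembling the central bag and every leaf, $\hat{\lambda}^{-1}\lambda$ is an isomorphism $(G,\iota)\to(\hat{G},\hat{\iota})$. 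The delicate point, and where most of the bookkeeping lies, is the leaf analysis: one must identify exactly which boundary permutation $\phi|_{\chi^\star(\ell)}$ realizes between the representatives, so that Lemma~\ref{lem:replacementKeepsPerm} can be applied with the right permutation and local weak isomorphism invariance then delivers the desired isomorphism on the originals.
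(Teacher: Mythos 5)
Your proposal is correct and follows essentially the same route as the paper's proof: bridge via $\phi^\star=(\hat{\lambda}^\star)^{-1}\lambda^\star$, use the rank-encoding colors and connectivity-sensitivity to match the central bags and pair leaves at equal positions, transfer the representative-level isomorphism back to the original leaves with the boundary permutation $(\hat{\pi}^{\hat{\ell}})^{-1}\pi^\ell$, and conclude via local weak isomorphism invariance and the common shift $(n+1)p$. The only (harmless) difference is that you invoke Lemma~\ref{lem:replacementKeepsPerm} as a black box for the pull-back step, whereas the paper inlines the equivalent argument by exhibiting an extendable permutation and composing with an automorphism $\gamma$ of $G[\chi(\ell)]$.
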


\begin{proof}
Since $\lambda^\star$ and $\hat{\lambda}^\star$ are weakly isomorphism invariant proper labelings of $(G^\star, \iota)$ and $(\hat{G}^\star, \hat{\iota})$ respectively it follows that  $\phi^\star = (\hat{\lambda}^\star)^{(-1)}\lambda^\star$ is an isomorphism from $(G^\star, \iota)$ to $(\hat{G}^\star, \hat{\iota})$.
By Property~\ref{itm:colorRank} of Lemma~\ref{lem:unpumpProperties} every vertex of $\chi^\star(b)$ is mapped by $\phi^\star$ to a vertex of  $\hat{\chi}^\star(\hat{b})$, where $\hat{b}$ is the center of $\hat{T}$. Therefore, the restriction of $\hat{\lambda}^{(-1)}\lambda$ to $\chi(b)$, which is equal to the restriction of $\phi^\star$ to $\chi^\star(b)$ (which is equal to $\chi(b)$) is an isomorphism from $(G[\chi(b)], \iota)$ to $(\hat{G}[\hat{\chi}(\hat{b})], \hat{\iota})$. We now turn our attention to the leaves of $T$.
Since $(T, \chi^\star)$ and $(\hat{T}, \hat{\chi}^\star)$ are both connectivity sensitive it follows by Observation~\ref{obs:uniqueStarDec} that there exists an isomorphism $\phi_T^\star$ from $T$ to $\hat{T}$ so that $v \in \chi^\star(q)$ if and only if $\phi^\star(v) \in \hat{\chi}^\star(\phi_T^\star(q))$.
Let $\ell$ be a leaf of $T$ and  $\hat{\ell} = \phi_T^\star(\ell)$. Note that the sets of labels $\lambda^\star( \sigma^\star(\ell))$ and $\hat{\lambda}^\star( \hat{\sigma}^\star(\hat{\ell}))$ are the same. Thus the position $p$ of $\ell$ in the sorted order of leaves in $T$ and the position $\hat{p}$ of $\hat{\ell}$ in the sorted order of leaves in $\hat{T}$ are the same. To complete the proof of the lemma it suffices to show that $\hat{\lambda}^{(-1)}\lambda$ restricted to $\chi(\ell)$ is an isomorphism from $G[\ell]$ to $\hat{G}[\hat{\ell}]$. 

By Property~\ref{itm:colorRank} of Lemma~\ref{lem:unpumpProperties}, the ranks of $\ell$ and $\hat{\ell}$ are equal. The remainder of the proof considers the relationships between the following four boundaried graphs:
\begin{itemize} \setlength\itemsep{-3pt}
\item $(G[\chi(\ell)], \pi\langle\ell\rangle )$,
\item $(G^\star[\chi^\star(\ell)], \pi\langle\ell\rangle)$,
\item $(\hat{G}[\hat{\chi}(\hat{\ell})], \hat{\pi}\langle\hat{\ell}\rangle )$,
\item $(\hat{G}^\star[\hat{\chi}^\star(\hat{\ell})], \hat{\pi}\langle\hat{\ell}\rangle )$.
\end{itemize}

Since $(G, D)$ and $(\hat{G}, \hat{D})$ are isomorphic and $(T^\star, \chi^\star)$ and $(\hat{T}^\star, \hat{\chi}^\star)$ are isomorphism invariant in $G$, $D$ and $\hat{G}, \hat{D}$ respectively, Lemma~\ref{lem:equalRankIso} yields that $(\hat{\lambda}\langle\hat{\ell}\rangle)^{-1}\lambda\langle\ell\rangle$ is an isomorphism from $G[\chi(\ell)]$ to  $\hat{G}[\hat{\chi}(\hat{\ell})]$. 
The construction of $G^\star$ replaced $G[\chi(\ell)]$ by $G^\star[\chi^\star(\ell)]$, and the replacement is isomorphism invariant with respect to $G[\chi(\ell)]$ and $\lambda\langle\ell\rangle$ (since every deterministic algorithm is isomorphism invariant when the input is a properly labeled graph). It follows that there exists an isomorphism $\psi^\star$ from $G^\star[\chi^\star(\ell)]$ to $\hat{G}^\star[\hat{\chi}^\star(\hat{\ell})]$ so that $\psi^\star(v) = (\hat{\lambda}\langle\hat{\ell}\rangle)^{-1}\lambda\langle\ell\rangle(v)$ for every $v \in \sigma(\ell)$.
Now recall that by definition $\pi\langle\ell\rangle$ and $\lambda\langle\ell\rangle$ coincide on the domain of $\pi\langle\ell\rangle$ (namely $\sigma(\ell)$), and similarly $\hat{\pi}\langle\hat{\ell}\rangle$ and $\hat{\lambda}\langle\hat{\ell}\rangle$ coincide on the domain of $\hat{\pi}\langle\hat{\ell}\rangle$ (namely $\hat{\sigma}(\hat{\ell})$).
Thus the restriction of $\psi^\star$ to $\sigma^\star(\ell)$ is equal to  $(\hat{\pi}\langle\hat{\ell}\rangle)^{-1}\pi\langle\ell\rangle$.
At the same time $\phi^\star$ restricted to $\chi^\star(\ell)$ is also an isomorhism from $G^\star[\chi^\star(\ell)]$ to $\hat{G}^\star[\hat{\chi}^\star(\hat{\ell})]$ that takes $\sigma^\star(\ell)$ to $\hat{\sigma}^\star(\hat{\ell})$.
It follows that $(\psi^\star)^{-1} \phi^\star$ is an automorhism of $G^\star[\chi^\star(\ell)]$.
The restriction of $(\psi^\star)^{-1}$ to $\hat{\sigma}(\hat{\ell})$ is equal to $(\pi\langle\ell\rangle)^{-1}\hat{\pi}\langle\hat{\ell}\rangle$, and therefore the automorphism $(\psi^\star)^{-1} \phi^\star$ takes each vertex $v\in \sigma(\ell)$ labeled $\pi\langle\ell\rangle(v)$ to the vertex $(\psi^\star)^{-1}\phi^\star(v)\in \sigma(\ell)$, which has label 
$\pi\langle\ell\rangle(\pi\langle\ell\rangle)^{-1}\hat{\pi}\langle\hat{\ell}\rangle\phi^\star(v) = \hat{\pi}\langle\hat{\ell}\rangle\phi^\star(v)$.
Therefore  $\hat{\pi}\langle\hat{\ell}\rangle\phi^\star(\pi\langle\ell\rangle)^{-1}$ is an extendable permutation of the boundaried graph $(G^\star[\chi^\star(\ell)], \pi\langle\ell\rangle)$. 

But $(G^\star[\chi^\star(\ell)], \pi\langle\ell\rangle)$ and $(G[\chi(\ell)], \pi\langle\ell\rangle )$ have the same set of extendable permutations, and hence $\hat{\pi}\langle\hat{\ell}\rangle\phi^\star(\pi\langle\ell\rangle)^{-1}$ is an extendable permutation also of $(G[\chi(\ell)], \pi\langle\ell\rangle )$.
Thus there exists an automorphism $\gamma$ of $G[\chi(\ell)]$ whose restriction to $\sigma(\ell)$ is 
$$(\pi\langle\ell\rangle)^{-1}\hat{\pi}\langle\hat{\ell}\rangle\phi^\star(\pi\langle\ell\rangle)^{-1}\pi\langle\ell\rangle = (\pi\langle\ell\rangle)^{-1}\hat{\pi}\langle\hat{\ell}\rangle\phi^\star$$

Hence $(\hat{\lambda}\langle\hat{\ell}\rangle)^{-1}\lambda\langle\ell\rangle \gamma$ is an isomorhism from $G[\chi(\ell)]$ to  $\hat{G}[\hat{\chi}(\hat{\ell})]$ whose restriction to $\sigma(\ell)$ equals
$$(\hat{\pi}\langle\hat{\ell}\rangle)^{-1} \pi\langle\ell\rangle (\pi\langle\ell\rangle)^{-1} \hat{\pi}\langle\hat{\ell}\rangle \phi^\star = \phi^\star = (\hat{\lambda}^\star)^{(-1)}\lambda^\star$$

From the definition of $\pi^\ell$ and $\hat{\pi}^{\hat{\ell}}$ it follows that for every vertex $v \in \sigma(\ell)$ we have that $\pi^\ell(v) = \hat{\pi}^{\hat{\ell}} (\hat{\lambda}^\star)^{-1} \lambda^\star(v)$. Therefore, $(\hat{\lambda}\langle\hat{\ell}\rangle)^{-1}\lambda\langle\ell\rangle \gamma$ is not only an isomorhism from $G[\chi(\ell)]$ to $\hat{G}[\hat{\chi}(\hat{\ell})]$, it is also in isomorhism from the boundaried graph $(G[\chi(\ell)], \pi^\ell)$ to the boundaried graph $(\hat{G}[\hat{\chi}(\hat{\ell})], \hat{\pi}^{\hat{\ell}})$.
This means that the weakly isomorphism invariant proper labelings $\lambda^\ell$ of $(G[\chi(\ell)], \pi^\ell)$ and  $\hat{\lambda}^{\hat{\ell}}$ of $(\hat{G}[\hat{\chi}(\hat{\ell})], \hat{\pi}^{\hat{\ell}})$ also yield an isomorhism $(\hat{\lambda}^{\hat{\ell}})^{(-1)}\lambda^\ell$ from $(G[\chi(\ell)], \pi^\ell)$ to $(\hat{G}[\hat{\chi}(\hat{\ell})], \hat{\pi}^{\hat{\ell}})$, and the restriction of $(\hat{\lambda}^{\hat{\ell}})^{(-1)}\lambda^\ell$ to $\sigma(\ell)$ equals $(\hat{\lambda}^\star)^{-1}\lambda^\star$.

However, for every $v \in \chi(\ell)$ we have that $\lambda(v) = \lambda^\star(v)$ if $v \in \sigma(\ell)$ and  $\lambda(v) = \lambda^\ell(v) + ((n + 1) \cdot p)$ otherewise, where $p$ is the position of $\ell$ in the sorted order of $T$. Similarly  for every $v \in \hat{\chi}(\hat{\ell})$ we have that $\hat{\lambda}(v) = \hat{\lambda}^\star(v)$ if $v \in \hat{\sigma}(\hat{\ell})$ and  $\hat{\lambda}(v) = \hat{\lambda}^{\hat{\ell}}(v) + ((n + 1) \cdot \hat{p})$ otherwise, where $\hat{p}$ is the position of $\hat{\ell}$ in the sorted order of $\hat{T}$.

Since $p = \hat{p}$ it follows that restricted to $\chi(\ell)$ it holds that $\hat{\lambda}^{-1}\lambda = (\hat{\lambda}^{\hat{\ell}})^{(-1)}\lambda^\ell$, which is an isomorhism from $G[\chi(\ell)]$ to $\hat{G}[\hat{\chi}(\hat{\ell})]$. But then $\hat{\lambda}^{-1}\lambda$ is an isomorphism from $(G, \iota)$ to $(\hat{G}, \hat{\iota})$ completing the proof. 
\end{proof}

We are now ready to summarize the results regarding the lifting procedure in the following lemma. 

\begin{lemma}[Lifting Procedure]
\label{lem:lem:caNlifting}
The lifting procedure takes as input a lifting bundle 
$$(G, D, (T, \chi), b, \{\lambda_{\pi, \ell}\}, G^\star, (T, \chi^\star), \iota, \lambda^\star),$$ 
runs in polynomial time in the size of the input and produces a weakly isomorphism invariant (with respect to $(G, \iota)$) proper labeling $\lambda$ of $(G, \iota)$.
\end{lemma}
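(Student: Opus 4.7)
The polynomial-time bound on the lifting procedure is exactly the content of Observation~\ref{obs:liftRunTime}. Verifying that $\lambda$ is a proper labeling of $(G, \iota)$ is a straightforward bookkeeping exercise from the construction. The labels assigned to $\chi^\star(b) = \chi(b)$ (the compact values of $\lambda^\star$, which lie in $[|\chi(b)|]$) and those assigned to $\chi(\ell) \setminus \sigma(\ell)$ for each leaf $\ell$ (the values $\lambda^\ell(v) + (n+1)p$, where $p$ is the position of $\ell$ in the sorted order of leaves) occupy pairwise disjoint intervals of $\N$, so the pre-compactification assignment is already injective. The final compactification step therefore makes the range exactly $[|V(G)|]$ while preserving the relative order of labels. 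Since $D \subseteq \chi(b) = \chi^\star(b)$ and $\lambda^\star$ is a compact proper labeling of $(G^\star, \iota)$ that agrees with $\iota$ on $D$, the compactification preserves this agreement and $\lambda$ coincides with $\iota$ on $D$.

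The weak isomorphism invariance is where the substance lies, but almost all of it has already been packaged into Lemma~\ref{lem:liftingIsoHelper}. My plan is to reduce to this Lemma. Let $\hat{B}=(\hat{G}, \hat{D}, (\hat{T}, \hat{\chi}), \hat{b}, \{\hat{\lambda}_{\pi, \hat\ell}\}, \hat{G}^\star, (\hat{T}, \hat{\chi}^\star), \hat{\iota}, \hat{\lambda}^\star)$ be any lifting bundle isomorphic to the input bundle $B$, and let $\hat{\lambda}$ be the output of the lifting procedure applied to $\hat{B}$. Any bundle isomorphism restricts in particular to isomorphisms $(G, D)\cong(\hat{G}, \hat{D})$ and $(G^\star, \iota)\cong(\hat{G}^\star, \hat{\iota})$, since $G^\star$ and $\iota$ are themselves components of the bundle. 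These are exactly the hypotheses of Lemma~\ref{lem:liftingIsoHelper}, whose conclusion gives that $\hat{\lambda}^{-1}\lambda$ is an isomorphism from $(G, \iota)$ to $(\hat{G}, \hat{\iota})$. Since preserving labels forces the $V(G)$-component of any isomorphism of labeled graphs $(G,\lambda) \to (\hat{G},\hat{\lambda})$ to equal $\hat{\lambda}^{-1}\lambda$, this map is the unique candidate anyway, and together with the restriction of the bundle isomorphism to $V(G)$ it supplies the pair of compatible isomorphisms required by the definition of weak isomorphism invariance.

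The only potential obstacle is reconciling this pair on the shared universe, which reduces to observing that the bundle components $(T, \chi)$, the unpumped graph $G^\star$ (via Lemma~\ref{lem:weakIsoStar}), and the decomposition $(T, \chi^\star)$ are all iso-invariantly constructed from $(G, D)$ and $\{\lambda_{\pi, \ell}\}$, while the only weakly invariant pieces ($\{\lambda_{\pi, \ell}\}$ and $\lambda^\star$) are precisely those whose lexicographically-driven resolution is already captured inside Lemma~\ref{lem:liftingIsoHelper}: that proof aligns ranks of corresponding leaves, equates their canonicalized labeled restrictions $G[\chi(\ell)]$ under $\lambda\langle\ell\rangle$, and transfers extendable permutations between a leaf and its representative via Lemma~\ref{lem:replacementKeepsPerm}. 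Given all this, the proof of Lemma~\ref{lem:lem:caNlifting} is a direct appeal to Observation~\ref{obs:liftRunTime} and Lemma~\ref{lem:liftingIsoHelper}, with a short verification of the proper-labeling property as sketched above.
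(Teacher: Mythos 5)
Your argument is correct and matches the paper's proof: the running time is Observation~\ref{obs:liftRunTime}, and weak invariance reduces to Lemma~\ref{lem:liftingIsoHelper}, whose only non-trivial hypothesis --- that $(G^\star,\iota)$ and $(\hat{G}^\star,\hat{\iota})$ are isomorphic --- is supplied by Lemma~\ref{lem:weakIsoStar}. One presentational note: since the invariance claimed is with respect to $(G,\iota)$ rather than the bundle, the cleaner order is to start from an isomorphism $(G,\iota)\cong(\hat{G},\hat{\iota})$, pass to $(G,D)\cong(\hat{G},\hat{D})$, and \emph{derive} the isomorphism of the corresponding unpumped graphs via Lemma~\ref{lem:weakIsoStar} (as the paper does), rather than assuming a bundle isomorphism up front and relegating that derivation to the closing paragraph.
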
 

\begin{proof}
The running time bound follows from Observation~\ref{obs:liftRunTime}. Consider now a boundaried graph $(\hat{G}, \hat{\iota})$ isomorhpic to $(G, \iota)$ and let $(\hat{G}, \hat{D}, (\hat{T}, \hat{\chi}), \hat{b}, \{\hat{\lambda}_{\ell,\pi}\}, \hat{G}^\star, (\hat{T}, \hat{\chi}^\star), \hat{\iota}, \hat{\lambda}^\star)$ be the corresponding lifting bundle. Let $\hat{\lambda}$ be the output of the lifting procedure on this bundle. To prove the lemma it suffices to show that $\hat{\lambda}^{(-1)}\lambda$ is an isomorphism from $(G, \iota)$ to $(\hat{G}, \hat{\iota})$.

Since $(\hat{G}, \hat{\iota})$ is isomorphic to $(G, \iota)$, we have that $(\hat{G}, \hat{D})$ is isomorphic to $(G, D)$. Further, Lemma~\ref{lem:weakIsoStar} yields that $(\hat{G}^\star, \hat{\iota})$ is isomorphic to $(G^\star, \iota)$. Then Lemma~\ref{lem:liftingIsoHelper} implies that $\hat{\lambda}^{(-1)}\lambda$ is an isomorphism from $(G, \iota)$ to $(\hat{G}, \hat{\iota})$, completing the proof.
\end{proof}

On the way to proving the main properties of the lifting procedure (namely Lemma~\ref{lem:lem:caNlifting}) we also developed tools (namely Lemmas~\ref{lem:weakIsoStar} and~\ref{lem:liftingIsoHelper}) to relate the sets of extendable permutations of $(G, \iota)$ and $(G^\star, \iota)$ for any $\iota : D \rightarrow [|D|]$.

\begin{lemma}\label{lem:unpumpSameExtPerm}
Let $G$ be a graph, 
$D \subseteq V(G)$ be a vertex subset, 
$(T, \chi)$ be a connectivity sensitive, isomorphism invariant (in $(G, D)$) star decomposition of $G$ with central bag $b \in V(T)$ such that $D \subseteq \chi(b)$,
$\{\lambda_{\pi}, \ell\}$ be a compact locally weakly isomorphism invariant leaf-labeling of $(T, \chi)$,
and let $G^\star$ be the $(k,h)$-strongly unpumped version of $G$  (or the {\em $h$-weakly unpumped version of $G$}).
Then for every $\iota \colon D \rightarrow [|D|]$, the sets of extendable permutations of $(G, \iota)$ and $(G^\star, \iota)$ are the same. 
\end{lemma}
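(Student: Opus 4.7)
The plan is to show both directions by the same stitching argument: an extendable permutation is witnessed by an automorphism, the decomposition forces the automorphism to permute the leaves in a rank-preserving way, and Lemma~\ref{lem:replacementKeepsPerm} transfers the local isomorphisms between original leaves and their representatives while preserving the induced permutation on each adhesion.

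For the forward direction, suppose $\pi$ is extendable for $(G,\iota)$, witnessed by an automorphism $\phi$ of $G$ with $\iota\phi = \pi\iota$ on $D$. Because $(T,\chi)$ is isomorphism invariant in $(G,D)$, $\phi$ descends to an automorphism $\phi_T$ of $T$ fixing the center $b$ and permuting the leaves; for each leaf $\ell$ write $\hat\ell = \phi_T(\ell)$. By Lemma~\ref{lem:equalRankIso} the leaves $\ell$ and $\hat\ell$ have the same rank, so $(G[\chi(\ell)],\pi\langle\ell\rangle)$ and $(G[\chi(\hat\ell)],\pi\langle\hat\ell\rangle)$ are isomorphic as boundaried graphs. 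The restriction $\phi|_{\chi(\ell)}$ is such an isomorphism, inducing on $[|\sigma(\ell)|]$ the permutation $\pi\langle\hat\ell\rangle\circ\phi|_{\sigma(\ell)}\circ\pi\langle\ell\rangle^{-1}$. By Lemma~\ref{lem:replacementKeepsPerm} applied to each leaf pair, there exists an isomorphism from the representative $G^\star[\chi^\star(\ell)]$ to $G^\star[\chi^\star(\hat\ell)]$ inducing the same permutation, hence agreeing with $\phi$ on $\sigma(\ell)$. Stitching these leaf isomorphisms together with $\phi|_{\chi(b)}$ (which is well-defined on $\chi^\star(b) = \chi(b)$ by Property~\ref{itm:gstarCenterUnchanged}) along the adhesions yields an automorphism $\phi^\star$ of $G^\star$ witnessing that $\pi$ is extendable for $(G^\star,\iota)$.

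For the converse direction, start with an automorphism $\phi^\star$ of $G^\star$ inducing $\pi$ on $D$. The first subtlety is that $(T,\chi^\star)$ is not assumed to be isomorphism invariant in $(G^\star,D)$; instead we use Property~\ref{itm:colorRank}, which assigns vertices in $\chi^\star(b)$ colors in $[1,n]$ and vertices in $\chi^\star(\ell)\setminus\chi^\star(b)$ colors in a rank-dependent window disjoint from $[1,n]$ and from the windows of leaves of different rank. Since $\phi^\star$ preserves colors, it fixes $\chi^\star(b)$ setwise and sends each leaf of rank $r$ to a leaf of the same rank; because $(T,\chi^\star)$ is connectivity-sensitive (Property~\ref{itm:connSensitiveGstar}), Observation~\ref{obs:uniqueStarDec} then guarantees that $\phi^\star$ lifts to an automorphism $\phi^\star_T$ of $T$ in the same manner as in the forward direction. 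For each leaf $\ell$ and $\hat\ell = \phi^\star_T(\ell)$, the restriction $\phi^\star|_{\chi^\star(\ell)}$ is an isomorphism between representatives inducing some permutation on $[|\sigma(\ell)|]$; applying the ``only if'' direction of Lemma~\ref{lem:replacementKeepsPerm} produces an isomorphism $\psi_\ell\colon G[\chi(\ell)]\to G[\chi(\hat\ell)]$ inducing the same permutation and thus agreeing with $\phi^\star$ on $\sigma(\ell)$. Stitching $\{\psi_\ell\}$ to $\phi^\star|_{\chi(b)}$ gives the desired automorphism of $G$.

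The main obstacle is exactly this asymmetry between the two directions: in $G$ the decomposition $(T,\chi)$ is isomorphism invariant by hypothesis, while in $G^\star$ we must recover the analogous property from the ranked coloring installed by unpumping and from the uniqueness of connectivity-sensitive star decompositions. Once this recovery step is in place, the rest is routine bookkeeping of how permutations on adhesions translate between originals and representatives via Lemma~\ref{lem:replacementKeepsPerm}.
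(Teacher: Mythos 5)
Your proof is correct, but it takes a more hands-on route than the paper. The paper disposes of this lemma in a few lines: it observes that $\pi$ is extendable for $(G,\iota)$ precisely when $(G,\iota)$ is isomorphic to $(G,\hat\iota)$ for the corresponding relabeling $\hat\iota$ of $D$, and then simply invokes Lemma~\ref{lem:weakIsoStar} (with $(\hat G,\hat D)=(G,D)$) for the forward transfer of such isomorphisms to $G^\star$, and Lemma~\ref{lem:liftingIsoHelper} (applied to two lifting bundles differing only in $\iota$ versus $\hat\iota$) for the backward transfer. You instead rebuild both transfers from first principles: descending the automorphism to $\phi_T$ via isomorphism invariance of $(T,\chi)$, matching ranks via Lemma~\ref{lem:equalRankIso}, moving each leaf isomorphism across to the representatives (and back) via Lemma~\ref{lem:replacementKeepsPerm}, recovering the leaf structure of $G^\star$ from the rank-shifted coloring of Property~\ref{itm:colorRank} together with connectivity-sensitivity and Observation~\ref{obs:uniqueStarDec}, and stitching along the adhesions. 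This is exactly the machinery inside the proofs of Lemmas~\ref{lem:weakIsoStar} and~\ref{lem:liftingIsoHelper}, so in effect you have inlined the special cases of those proofs rather than citing them; your version is more self-contained but longer, while the paper's buys brevity by reusing lemmas it needs anyway for the lifting procedure.

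One step you should make explicit in the converse direction: to invoke Lemma~\ref{lem:replacementKeepsPerm} for the pair $\ell$, $\hat\ell=\phi^\star_T(\ell)$ you need its standing hypothesis that $(G[\chi(\ell)],\pi\langle\ell\rangle)$ and $(G[\chi(\hat\ell)],\pi\langle\hat\ell\rangle)$ are isomorphic boundaried graphs --- a priori you only know their \emph{representatives} are isomorphic, and distinct isomorphism classes of leaves can share a representative. This is rescued precisely by the rank-encoding colors: $\phi^\star$ preserves them, so $\ell$ and $\hat\ell$ have equal rank, and Lemma~\ref{lem:equalRankIso} then supplies the required isomorphism of the original leaves. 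You have all the pieces on the page, but the logical dependence deserves a sentence.
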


\begin{proof}
It suffices to show that for every $\hat{\iota} \colon D \rightarrow [|D|]$ we have that $(G, \iota)$ is isomorphic with $(G, \hat{\iota})$ if and only if $(G^\star, \iota)$ is isomorphic with $(G^\star, \hat{\iota})$. Suppose first that  $(G, \iota)$ is isomorphic with $(G, \hat{\iota})$. Lemma~\ref{lem:weakIsoStar} applied with $(\hat{G}, \hat{D}) = (G, D)$ yields an isomorphism from $(G^\star, \iota)$ to $(G^\star, \hat{\iota})$. For the reverse direction suppose that $(G^\star, \iota)$ is isomorphic with $(G^\star, \hat{\iota})$. Then, Lemma~\ref{lem:liftingIsoHelper} applied to the lifting bundles $(G, D, (T, \chi), b, \{\lambda_{\pi, \ell}\}, G^\star, (T, \chi^\star), \iota, \lambda^\star)$ and $(G, D, (T, \chi), b, \{\lambda_{\pi, \ell}\}, G^\star, (T, \chi^\star), \hat{\iota,} \lambda^\star)$ (the only difference between the two lifting bundles is that the first one has $\iota$, the second has $\hat{\iota}$) yields that there is an isomorphism from $(G, \iota)$ to $(G, \hat{\iota})$. This concludes the proof. 
\end{proof}

\subsection{Reduction To Improved-Clique-Unbreakable Graphs}\label{sec:redToCliqueUnbreakable}

In this section, we design an algorithm for {\sc Canonization} in graphs, excluding a fixed family $\cal F$ of graphs as topological minors, assuming  we have such an algorithm for 
graphs that, in addition to excluding $\cal F$, are also 
$(q,k,k)$-improved-clique-unbreakable. Towards this, we recall a result of Elberfeld and Schweitzer~\cite{ElberfeldS17}, who proved that for a given graph one can compute an isomorphism invariant tree decomposition, where each bag is either a clique of size at most $k$ or a $k$-atom (a maximal subgraph that does not contain clique separators of size at most $k$), and adhesions have sizes bounded by $k$; call this decomposition the {\em{ES-decomposition}}. Elberfeld and Schweitzer proposed an algorithm that computes the ES-decompistion in logspace for every fixed $k$. We revisit this algorithm and show that it can be implemented so that it runs in polynomial time, even if $k$ is a part of the input (i.e., the degree of the polynomial governing the running time is independent of $k$).

Using the ES-decomposition we prove a structural result that says that if a  connected graph $G$ is $(q,k)$-clique-unbreakable  then there exists an isomorphism invariant star decomposition $(T,\chi)$ satisfying ``some useful properties''. Conversely, if $G$ has  a  star decomposition $(T,\chi)$  with ``some useful properties'', such that for every leaf $\ell \in V(T)$ we have $|\chi(\ell)|\leq q$, then $G$ is $(2^kq,k)$-clique 
unbreakable. This structural result is used crucially for several arguments. 

Finally, in the last subsection we design the stated algorithm for {\sc Canonization}.  
The algorithm takes as input  a set ${\cal F} \in \mathbb{F}$, an integer $k$, and a colored graph $G$ 
with a distinguished set $D$ such that $|D| \leq k$ with the property that there exist a labelling $\iota \colon D \rightarrow [|D|]$ and a  $[|D|]$-boundaried graph $(H,\iota)$ boundary-consistent with $G$ so that $(G,\iota) \oplus (H,\iota)$ is ${\cal F}$-free and $D$ is a clique in the $k$-improved graph of $(G,\iota) \oplus (H,\iota)$. The algorithm either fails (i.e. outputs~$\bot$) or succeeds. If $k \geq \kappa_{\cal B}({\cal F})$, where $\kappa_{\cal B}({\cal F})$ is a threshold depending only on ${\cal F}$, then the algorithm succeeds and  outputs for every labeling $\pi \colon D \rightarrow [|D|]$ a proper labeling $\lambda_\pi$ of $(G, \pi)$ such that $\lambda_\pi(v) = \pi(v)$. The labeling $\lambda_\pi$ is weakly isomorphism invariant in $(G, \pi)$.  The algorithm itself is recursive  and consider several cases based on the next ``canonical bag''.

\subsubsection{Canonical Decomposition into Atoms} 
Let $c$ be a nonnegative integer. A {\em $c$-atom} is a graph that does not contain clique separators of size 
at most $c$. A graph $G$ is called an {\em atom} if $G$ does not have any clique separators. A maximal $c$-atom of a graph $G$ is a maximal induced subgraph $G[X]$ for some $X\subseteq  V (G)$ that is a $c$-atom. In particular, if $X\subsetneq V(G)$, then for every proper superset of $X$, $G[X]$ contains a clique separator of size at most $c$. We follow the procedure described by Elberfeld and Schweitzer~\cite{ElberfeldS17} to compute an isomorphism-invariant tree decomposition of a graph into its $c$-atoms. Their procedure runs in logspace for every fixed $c$, which implies that it runs in polynomial time with the degree of the polynomial governing the running time possibly depending on $c$. Since we cannot afford this, in what follows we repeat their arguments and show that the decomposition can be computed in polynomial time even if $c$ is given on input.

For every $c \in \Nat$ and graph~$G$, we define the graph $T_c = T_c(G)$ whose node set consists of all $c$-atoms of $G$ and all (inclusion wise) minimal
clique separators of size at
most~$c$. An edge is inserted between every $c$-atom~$G[X]$ and every minimal clique
separator $C$ with $C \subseteq X$. Next we define the bag function $\beta_c\colon V(T_c) \to 2^{V(G)}$ 
as follows. If  $t \in V(T_c)$ is identified with a $c$-atom $G[X]$, then $\beta_c(t) \coloneqq X$ and if  $t \in V(T_c)$ is identified with a  minimal clique separator $C$, then $\beta_c(t) \coloneqq C$.

\begin{proposition}{\rm \cite[Proposition 3.7]{ElberfeldS17}}
  For every $c \in \Nat$, the mapping $G \mapsto (T_{c}(G),\beta_c)$ is
  isomorphism-invariant for every $c \in \Nat$. 
\end{proposition}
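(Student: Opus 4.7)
The plan is to verify the statement directly from the definition of $(T_c(G), \beta_c)$, showing that each ingredient of the construction is intrinsic to $G$ in the sense that it depends only on the isomorphism type of $G$. The node set of $T_c(G)$ is the disjoint union of (i) the set of maximal $c$-atoms of $G$ and (ii) the set of inclusion-wise minimal clique separators of $G$ of size at most $c$; edges are placed purely according to the set-inclusion relation between these two families; and the bag function $\beta_c$ simply returns the underlying vertex subset. Since no tie-breaking or arbitrary choice is made anywhere in this description, the proof reduces to observing that each of these ingredients is preserved by graph isomorphism.

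First, I would argue that for any isomorphism $\phi \colon V(G) \to V(G')$ and any subset $X \subseteq V(G)$, the induced subgraph $G[X]$ is a $c$-atom if and only if $G'[\phi(X)]$ is a $c$-atom. This is because being a $c$-atom is a property of the isomorphism type of the induced subgraph (it is the nonexistence of a clique separator of size at most $c$, where both ``clique'' and ``separator'' are preserved by any isomorphism). Since $G[X] \cong G'[\phi(X)]$ via $\phi|_X$, the property transfers. Maximality is also preserved because $X \subseteq Y$ iff $\phi(X) \subseteq \phi(Y)$. An identical argument shows that $C \subseteq V(G)$ is an inclusion-wise minimal clique separator of $G$ of size at most $c$ if and only if $\phi(C)$ is such a separator in $G'$.

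Next, since the incidence relation between nodes of $T_c(G)$ is given by set inclusion $C \subseteq X$, and $\phi$ preserves inclusions, the map $\phi$ induces a natural bijection $\phi_T$ from the node set of $T_c(G)$ to that of $T_c(G')$ which is edge-preserving. By construction, $\beta_c(\phi_T(t)) = \phi(\beta_c(t))$ for every $t$, so the pair $(\phi, \phi_T)$ witnesses that $(T_c(G), \beta_c)$ and $(T_{c}(G'), \beta_c')$ are isomorphic as labeled graphs (equivalently, as relational structures) in the sense required by the paper's convention on isomorphism invariance.

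No step here appears to pose a real obstacle: the statement is essentially a tautology once one notes that every piece of the definition of $(T_c(G), \beta_c)$ is phrased purely in terms of the abstract graph structure of $G$. The only mild care required is to check that $T_c(G)$ is actually a tree (so that calling it a tree decomposition makes sense), but this is a structural property of the collection of minimal clique separators that has already been established elsewhere in the literature~\cite{Leimer93,BerryPS10} and is not directly needed for the invariance claim stated here.
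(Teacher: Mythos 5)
Your argument is correct and is exactly the intended one: the paper states this proposition as an import from Elberfeld and Schweitzer without reproving it, and the underlying justification is precisely your observation that maximal $c$-atoms, minimal clique separators of size at most $c$, the inclusion relation defining the edges, and the bag function are all defined purely in terms of the abstract graph, so any isomorphism $\phi$ transports the whole structure and the induced map agrees with $\phi$ on $V(G)$ as the paper's notion of isomorphism-invariance requires. Your closing remark is also right that the tree property is irrelevant to the invariance claim itself.
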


The graph~$T_c(G)$ is typically not a tree. However, as stated
next, provided $G$ itself is a $(c-1)$-atom, $T_c(G)$ is a tree and, moreover,
$(T_c(G),\beta_c)$ is a tree decomposition of $G$.
We will also need the notion of a {\em center} in a tree decomposition and a tree. Let $T$ be an unrooted tree. Then a {\em center} of $T$ is a vertex $r$ in $T$ that minimizes the height of the tree $T$ if rooted at that vertex. It is well known (see e.g.~\cite{ElberfeldS17}) that every tree has at least one and at most two center vertices, and if there are two center vertices then they are adjacent. A center of a tree decomposition is a center of its decomposition tree. 

\begin{lemma}{\rm \cite[Lemma 3.8]{ElberfeldS17}}
  \label{lem:c-atom-tree}
  For every positive $c \in \Nat$ and a $(c-1)$-atom $G$,
  $(T_c(G),\beta_c)$ is a tree decomposition for
  $G$. Moreover, $T_c(G)$ has a unique center.
  Furthermore, given $G$ and $c$, we can construct 
  $(T_{c}(G),\beta_c)$ in polynomial time. 
\end{lemma}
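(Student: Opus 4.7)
The plan is that the structural content of the lemma---that $(T_c(G), \beta_c)$ is a tree decomposition of $G$ and that $T_c(G)$ has a unique center---has already been proved by Elberfeld and Schweitzer in~\cite{ElberfeldS17}; the only novelty I need is the polynomial-time construction whose polynomial degree is independent of $c$. (The algorithm of~\cite{ElberfeldS17} runs in logspace for every fixed~$c$, but that gives a time bound of the form $n^{\cO(c)}$, which we cannot afford.) I would therefore focus entirely on the algorithmic part.

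First, I would use a classical fact from the literature on clique separator decompositions: every graph has at most $n-1$ minimal clique separators, and the complete list of them can be enumerated in polynomial time (this goes back to Leimer and Tarjan, and was refined by Berry--Pogorelcnik--Simonet). Crucially, the polynomial-time bound here is uniform in~$c$. From this list we extract, in polynomial time, the sublist $\mathcal{S}_c$ of those minimal clique separators of size at most~$c$; these will be the separator nodes of $T_c(G)$.

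Next, I would compute the $c$-atoms by a straightforward iterative splitting. Maintain a family $\mathcal{X}$ of induced subgraphs of $G$, initially $\{G\}$. At each step, pick some $G[X] \in \mathcal{X}$, run the clique-separator enumeration on $G[X]$, and check whether any minimal clique separator of $G[X]$ of size at most~$c$ exists. If such a $C$ exists, remove $G[X]$ from $\mathcal{X}$ and replace it with the pieces $G[C \cup V(D)]$, one for each connected component $D$ of $G[X] - C$. If no such $C$ exists, then $G[X]$ is declared a $c$-atom. Termination and the polynomial total work both follow from the observation that each split strictly reduces the vertex count of every new piece on the side of the separator, so the resulting collection of $c$-atoms has size at most~$n$, and each of the at most $\cO(n)$ splits costs polynomial time. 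One subtlety is that the $c$-atoms of $G$ produced this way are indeed the maximal induced subgraphs of $G$ that are $c$-atoms; this is essentially the content of~\cite[Proposition 3.7]{ElberfeldS17}.

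Finally, given the sets of $c$-atoms and of elements of $\mathcal{S}_c$, I would assemble $T_c(G)$ and $\beta_c$ exactly as defined: the nodes are the computed $c$-atoms $G[X]$ and the elements of $\mathcal{S}_c$, an edge is placed between a separator $C \in \mathcal{S}_c$ and an atom $G[X]$ whenever $C \subseteq X$, and $\beta_c$ is set accordingly. The most delicate point in the argument is ensuring that the clique-separator enumeration algorithm we invoke really runs in time polynomial in~$n$ with degree independent of~$c$, which I would handle by explicitly citing the appropriate enumeration result and observing that the $c$-size filter is applied only after enumeration; no step of the construction loops over cliques or separators of size~$c$ directly.
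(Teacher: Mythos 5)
Your treatment of the structural claims (deferring to Elberfeld--Schweitzer) and of the enumeration of minimal clique separators (via the classical fact that they are among the at most $n$ minimal separators of a minimal triangulation, so the enumeration is polynomial uniformly in $c$) matches the paper. Where you diverge is in computing the maximal $c$-atoms: the paper builds the auxiliary graph $\widehat{G}$ obtained by adding an edge between every non-adjacent pair $u,v$ not separated by any clique separator of size at most $c$, observes that $\widehat{G}$ is chordal and that the maximal $c$-atoms of $G$ are exactly the maximal cliques of $\widehat{G}$ (this is Elberfeld--Schweitzer's Lemma~3.6), and then enumerates the at most $n$ maximal cliques of a chordal graph. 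You instead propose iterative splitting along minimal clique separators of size at most $c$.

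The gap is that the correctness and complexity of that splitting procedure are exactly the non-trivial content, and you do not supply it. Your citation is also off: Proposition~3.7 of~\cite{ElberfeldS17} asserts isomorphism-invariance of $G \mapsto (T_c(G),\beta_c)$, not that repeated splitting yields the maximal $c$-atoms. Concretely, two things need proof. First, \emph{correctness}: it is easy to show that every maximal $c$-atom survives intact inside one piece of each split (a $c$-atom meeting two components of $G[X]-C$ would inherit a clique separator of size at most $c$), so every maximal $c$-atom appears as a final piece; but the converse --- that no final piece is a \emph{non-maximal} $c$-atom, and that no maximal $c$-atom is produced twice in different branches --- does not follow from anything you wrote. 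The classical theorems of Leimer and of Berry--Pogorelcnik--Simonet establish this only for unrestricted clique separators ($c=\infty$); for separators of bounded size the recursion is applied to pieces $G[C\cup V(D)]$ whose size-$\leq c$ clique separators need not all come from $G$, and this is precisely the situation that Elberfeld--Schweitzer's $\widehat{G}$-characterization was designed to control. Second, \emph{complexity}: your argument that ``each split strictly reduces the vertex count of every new piece'' only bounds the \emph{depth} of the recursion by $n$, not the number of leaves; since each split along a separator $C$ increases the total size of the current pieces by $|C|(m-1)$, a polynomial bound on the number of pieces does not follow from a size-accounting argument alone and in fact presupposes the correctness claim (that the final pieces are the at most $n$ maximal $c$-atoms, each once). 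You should either prove the bounded-size analogue of Leimer's theorem, or switch to the paper's route through $\widehat{G}$, where both issues disappear because maximal cliques of a chordal graph are at most $n$ in number and enumerable in polynomial time.
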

 
To construct $(T_{c}(G),\beta_c)$ in polynomial time, all we need is an algorithm to compute all minimal clique separators of size at most~$c$ in $G$ and a way to compute all maximal $c$-atoms of $G$. It is known that in any graph $G$ and any minimal triangulation 
$H$ of $G$ (that is, a minimal chordal completion of $G$), the minimal clique  separators of $G$ are exactly the minimal separators of $H$ that are cliques in $G$~\cite{BerryPS10}. Since a chordal graph on $n$ vertices has at most $n$ minimal separators~\cite{brandstadt1999graph}, and they can be enumerated in polynomial time~\cite{BerryBC00,KloksK94}, we can find all minimal clique separators of size at most $c$ in $G$ by taking any minimal triangulation $H$ and listing all minimal separators of $H$ that are cliques in $G$ and have size at most $c$. Next, we describe how we can compute all maximal $c$-atoms of $G$. Given $G$ and the family $\cal C$ comprising all minimal clique separators of size at most $c$, we construct an auxiliary graph $\widehat{G}$ from $G$ as follows. Add an edge between any pair of non-adjacent vertices, say $u$ and $v$, provided there is no clique separator of size at most $c$ that separates them. It is known that $G[A]$ is a maximal $c$-atom in $G$ if and only if $\widehat{G}[A]$ is a maximal $c$-atom in $\widehat{G}$~\cite[Lemma 3.6]{ElberfeldS17}. Furthermore, $\widehat{G}$ is chordal~\cite{ElberfeldS17}, so the maximal $c$-atoms of $\widehat{G}$ are precisely the maximal cliques of this $\widehat{G}$. A chordal graph on $n$ vertices has at most $n$ maximal cliques and they can be enumerated in polynomial time~\cite{BerryPS10}, so in this way we can also enumerate all maximal $c$-atoms in $G$ in polynomial time. All in all, we can construct $(T_c(G),\beta_c)$ in polynomial time. 

From now on, we assume that the tree decomposition $(T_c(G),\beta_c)$ is rooted at its (unique) center.

 \begin{lemma}{\rm \cite[Lemma 3.9]{ElberfeldS17}}
  \label{lem:c-to-c++-atom-decomposition} 
There exists a polynomial time algorithm that, given $d,c \in \Nat$, with~$d \leq c$,  and  a
  $d$-atom~$G$, outputs an isomorphism-invariant rooted tree decomposition $(T,\beta)$ and a partition of 
 $V(T)$ into two sets $A$ and $C$ such that  the following holds: 
  \begin{enumerate}  \setlength{\itemsep}{-1pt}
  \item There is no edge between any pair of nodes in $A$. 
 \item For every  $t\in C$, $\beta(t)$ is a clique of size at most $c$.
 \item For every  $t\in A$,  $\beta(t)$ is a maximal $c$-atom of $G$. 
 \item If $s \in C$, $t \in A$ and $st \in E(T)$ then $\beta(s) \subseteq \beta(t)$. 
 \item Each adhesion of $(T,\beta)$ is a clique of size at most $c$.  
 \end{enumerate}
\end{lemma}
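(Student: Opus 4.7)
My plan is to prove this by induction on $c-d$, using Lemma~\ref{lem:c-atom-tree} as the key building block and carefully combining sub-decompositions in an isomorphism-invariant way. The base case $c=d$ is immediate: $G$ is itself a $c$-atom, so the singleton decomposition with $A=\{t\}$, $C=\emptyset$, and $\beta(t)=V(G)$ satisfies all five conclusions trivially.

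For the inductive step $c>d$, I would first apply Lemma~\ref{lem:c-atom-tree} with parameter $d+1$ (legitimate since $G$ is a $d$-atom, hence in particular a $((d+1){-}1)$-atom) to obtain in polynomial time an isomorphism-invariant rooted tree decomposition $(T',\beta') \coloneqq (T_{d+1}(G),\beta_{d+1})$ of $G$. By construction, its nodes split into $A'$ (the $(d{+}1)$-atom bags) and $C'$ (the minimal clique separators of size at most $d+1$), with every edge of $T'$ connecting one node of $A'$ to one of $C'$. For each $t\in A'$ the induced subgraph $G[\beta'(t)]$ is itself a $(d{+}1)$-atom, so the inductive hypothesis applied with parameters $(d{+}1,c)$ yields an isomorphism-invariant rooted tree decomposition $(T^t,\beta^t)$ of $G[\beta'(t)]$ with atom-bags $A^t$ (maximal $c$-atoms) and clique-bags $C^t$ (cliques of size at most $c$) satisfying all five required conditions.

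To build $(T,\beta)$, I would replace each node $t\in A'$ with the rooted sub-tree $T^t$ and reattach every clique node $s\in C'$ formerly adjacent to $t$ to a canonically chosen atom node $u^\star(s,t)\in A^t$. Because $\beta'(s)$ is a clique of size at most $d+1\le c$ contained in $\beta'(t)$, the tree-decomposition property ensures that the set of nodes $u\in V(T^t)$ with $\beta'(s)\subseteq\beta^t(u)$ is a nonempty connected subtree $S_{s,t}$ of $T^t$; since clique-bags of $T^t$ are themselves prescribed cliques and since every vertex of $\beta'(s)$ lies in some bag, one can verify that $S_{s,t}$ contains at least one atom node. I would take $u^\star(s,t)$ to be the atom node of $S_{s,t}$ nearest to the canonical root of $T^t$, breaking further ties using the canonical ordering on $(T^t,\beta^t)$ inherited from the induction. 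Setting $A=\bigcup_{t\in A'}A^t$ and $C=C'\cup\bigcup_{t\in A'}C^t$ and routing edges accordingly yields the combined decomposition; the five conclusions then follow by routine inspection, together with the fact (cf.~\cite[Lemma~3.6]{ElberfeldS17}) that a maximal $c$-atom of each $G[\beta'(t)]$ is a maximal $c$-atom of $G$ itself.

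The main obstacle I anticipate is verifying that the attachment rule $(s,t)\mapsto u^\star(s,t)$ truly preserves isomorphism-invariance, so that every isomorphism of $G$ lifts to an isomorphism of $(T,\beta)$; this requires the rule to commute with automorphisms of both $(T',\beta')$ and each sub-decomposition $(T^t,\beta^t)$. The canonical root of each $T^t$ produced by Lemma~\ref{lem:c-atom-tree} and the inductive hypothesis should make this go through, but the verification touches several moving parts, and the symmetric case where two candidate atom nodes are equidistant from the root needs particular care. Polynomial running time follows because each of the at most $c-d$ recursion levels applies Lemma~\ref{lem:c-atom-tree} and performs the combining in polynomial time, and the total number of sub-problems is bounded by the number of atom bags in the final decomposition, which is $O(|V(G)|)$.
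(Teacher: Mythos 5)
Your overall strategy coincides with the paper's: induct on $c-d$, apply Lemma~\ref{lem:c-atom-tree} with parameter $d+1$ to split $G$ into its $(d+1)$-atoms and minimal clique separators, recurse on each $(d+1)$-atom, and glue the sub-decompositions back along the separator nodes. The base case, the recursion, the polynomial-time accounting, and the appeal to \cite[Lemma~3.6]{ElberfeldS17} to upgrade maximal $c$-atoms of $G[\beta'(t)]$ to maximal $c$-atoms of $G$ all match the paper.

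The gap is in your gluing rule. You insist on reattaching each separator node $s\in C'$ to an \emph{atom} node of $T^t$ containing $\beta'(s)$, namely the one nearest the root, ``breaking further ties using the canonical ordering on $(T^t,\beta^t)$ inherited from the induction.'' No such canonical ordering exists, and the tie is not breakable in an isomorphism-invariant way: if $\beta'(s)=K$ is itself a minimal clique separator of $G[\beta'(t)]$ contained in two distinct maximal $c$-atoms $X_1,X_2$, then the nodes for $X_1$ and $X_2$ both lie in $S_{s,t}$ at the same depth (both are children of the clique node for $K$), and an automorphism of $G$ fixing $K$ may swap them. Any rule selecting one of them then fails to commute with that automorphism, so the resulting $(T,\beta)$ is not isomorphism-invariant --- and distinguishing $X_1$ from $X_2$ canonically is in general as hard as the canonization problem this lemma is meant to help solve. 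The paper sidesteps this entirely by attaching $s$ to the \emph{topmost} node of $S_{s,t}$ regardless of its type: since the nodes whose bags contain a clique form a connected subtree of a rooted tree, this node is automatically unique, and attaching a $C'$-node to a clique node of $T^t$ violates none of properties 1--5 (property 1 only forbids atom--atom edges, and property 4 only constrains atom--clique edges, which the topmost-node rule still satisfies when the topmost node happens to be an atom). Adopting that attachment rule closes the gap; as written, your construction does not establish isomorphism-invariance.
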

\begin{proof}
  We show the lemma by induction on~$c-d$. If~$c-d = 0$, then the
  $c$-atom $G$ is the unique bag of the tree decomposition $T$, which
  satisfies all requirements of the lemma. If~$c-d > 0$, we construct
  and prove the correctness of the constructed tree decomposition as
  follows.

\medskip

\noindent 
{\bf (Construction.)} For this proof given a tree decomposition $(T',\beta')$, we will identify 
  a node $t\in V(T')$ by the vertex subset $\beta(t)$. That is, rather than calling it node $t$, we will call it $\beta(t)$. This will help us ease notations as well as make it consistent with the notations used by 
  Elberfeld and Schweitzer~\cite{ElberfeldS17}, which we will refer to for the correctness of our construction.  
  
  We use Lemma~\ref{lem:c-atom-tree} to construct a tree
  decomposition~$D' = ( T_{d+1}(G),\beta_{d+1})$ whose bags are
  the graph's $(d+1)$-atoms and minimal clique separators of size at most $d+1$. The vertex set of $T_{d+1}(G)$ is partitioned into $A'$ and $C'$, where $A'$ consists of all $d+1$-atoms of $G$ and $C'$ consists of all minimum clique separators of size at most~$d+1$. Next we apply 
  induction hypothesis, and compute for each $(d+1)$-atom~$Y$ an
  isomorphism-invariant tree decomposition~$D_Y = (T_Y,\beta_Y)$ into
  its~$c$-atoms.  Let $A_Y$ and $C_Y$ be the corresponding vertex partition of $T_Y$ satisfying all the properties listed in the statement of the lemma. 
  We  combine~$D'$ with the decompositions $D_A$ to
  construct $D = (T,\beta)$.  Let $r_Y$ denote the root of $T_Y$. Informally, we obtain 
  $(T,\beta)$, by taking disjoint union of $D_Y = (T_Y,\beta_Y)$ and node set $C'$ (where each node in $C'$ 
   can be regarded as a single-node tree decomposition), and then for every edge
  $Y_1Y_2$ in $E(T_{d+1}(G))$, add an edge between a node, say ${\sf top}_{Y_1}(Y_1\cap Y_2)$, in $T_{Y_1}$, that contains $Y_1\cap Y_2$, 
   closest to $r_{Y_1}$ and a node, say ${\sf top}_{Y_2}(Y_1\cap Y_2)$, in $T_{Y_2}$, that contains $Y_1\cap Y_2$, closest to $r_{Y_2}$. Observe that since $Y_1\cap Y_2$ is a clique, we have that  ${\sf top}_{Y_1}(Y_1\cap Y_2)$, and  ${\sf top}_{Y_2}(Y_1\cap Y_2)$ are unique. 
  
  Formally, we use nodes $V(T) \coloneqq \{(X,Y) \mid
  X \in V(T_Y) \text{ and } Y \in V(T_{d+1}(G)) \}$ 
  for $T$. Two nodes~$(X_1,Y_1)$ and~$(X_2,Y_2)$ of~$T$ are adjacent
  if (1) $Y_1 = Y_2$ and~$X_1$ and~$X_2$ are adjacent in~$D_{Y_1}$, or
  (2) $Y_1$ and~$Y_2$ are adjacent in~$T_{d+1}(G)$ and for each~$i \in \{1,2\}$, $X_i$
  contains $Y_1 \cap Y_2$ and is closest to the root with this property in
  $D_{Y_i}$. To each node of $(X,Y)$ of $T$, $\beta$ assigns the
  bag $X$. That is, $\beta((X,Y))=X$. We construct the partition of $V(T)$ as follows: 
  $$A=\bigcup\{A_Y\colon Y \textrm{ is a }(d+1)\textrm{-atom}\}\qquad\textrm{and}\qquad C=C'\cup \bigcup\{C_Y\colon Y \textrm{ is a }(d+1)\textrm{-atom}\}.$$ Since, there is no edge between two vertices in $A'$, we have that there is no edge between any pair of vertices in $A$. The rest of the proof that it is indeed a tree decomposition with the desired properties can be found in~\cite[Lemma 3.9]{ElberfeldS17}.

The running time of the algorithm can easily seen to be polynomial. Indeed, we have at most $c-d$ levels of recursion, and in each level we apply Lemma~\ref{lem:c-atom-tree} on at most $n$ instances of size at most $n$. Each invocation of Lemma~\ref{lem:c-atom-tree} takes polynomial time yielding a polynomial upper bound on the running time. Finally, to convert  an isomorphism-invariant un-rooted tree decomposition into an isomorphism-invariant rooted tree decomposition, we use the folklore trick mentioned in~\cite[Fact~2.1]{ElberfeldS17}. In particular we can select the center $c$ of the decomposition tree $T$ as root. If $T$ has two centers then they are adjacent in $T$, and subdividing the edge between them yields a tree with a unique center. The bag of this new node is chosen to be the intersection of the bags of its two neighbors. This concludes the proof. 
\end{proof}

Using Lemma~\ref{lem:c-to-c++-atom-decomposition} we prove the next structural claim which is crucially used in the proof later. 

\begin{lemma}
\label{lem:isoinvBag}
Let $G$ be a graph, $h$ be a positive integer, and $C$ be a clique of size at most $h$ in $G$. Suppose further that $G-C$ is connected and that every vertex in $C$ has a neighbor outside of $C$. Then, in polynomial time, we can find an isomorphism invariant (in $G$ and $C$) set $B$ such that
\begin{itemize}
  \setlength{\itemsep}{-1pt}
    \item $C\subsetneq B$ ($B$ is a strict superset of $C$);
    \item for every component $X$ in $G-B$, we have that  $|N(X)\cap B|\leq h$; and
    \item $B$ is either a clique separator of size at most $h$ or an $h$-atom.
\end{itemize}
\end{lemma}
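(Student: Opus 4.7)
My approach is to apply Lemma~\ref{lem:c-to-c++-atom-decomposition} to $G$ with $d=0$ and $c=h$, and take $B$ to be the bag of the resulting decomposition that sits closest to the root among all bags containing $C$. Since $G-C$ is connected and every vertex of $C$ has a neighbor in $V(G)\setminus C$, the graph $G$ is itself connected, hence a $0$-atom, so the lemma applies and yields an isomorphism-invariant rooted tree decomposition $(T,\beta)$ of adhesion width at most $h$ with $V(T)$ partitioned into $V_A$ (bags that are maximal $h$-atoms) and $V_C$ (bags that are cliques of size at most $h$, namely the minimal clique separators arising in the construction). Let $T_0\subseteq T$ be the set of nodes $t$ with $C\subseteq\beta(t)$; by the Helly property for cliques and the standard subtree property of tree decompositions, $T_0$ is a nonempty connected subtree. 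Define $t^\star$ to be its unique node closest to the root of $T$, and set $B=\beta(t^\star)$. Since the whole pipeline only uses isomorphism-invariant choices, $B$ is isomorphism invariant in $(G,C)$, and the algorithm runs in polynomial time.

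Verifying the easier properties is direct. The adhesion bound of $(T,\beta)$ gives that every component $X$ of $G-B$ lies in a single subtree of $T-t^\star$, and its neighbors in $B$ are contained in the adhesion on the edge incident to $t^\star$ in that subtree, so $|N(X)\cap B|\leq h$. For the third property, if $t^\star\in V_A$ then $B$ is an $h$-atom by definition; if $t^\star\in V_C$ then $B$ is a clique of size at most $h$, and $B$ is moreover a clique separator of $G$, because every $V_C$ node has degree at least $2$ in $T$ (the minimal clique separators are always adjacent in the constructed tree to at least two bags, one from each side of the separation), whence by the standard tree-decomposition argument $B$ separates $G$.

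The main difficulty is the strict inclusion $C\subsetneq B$. My plan is to assume $\beta(t^\star)=C$ and derive a contradiction. The case $t^\star\in V_C$ is immediate: $B=C$ would be a clique separator of $G$ by the argument above, contradicting connectivity of $G-C$. In the harder case $t^\star\in V_A$, I pick any $v\in V(G)\setminus C$ (nonempty by the hypothesis on outside neighbors of $C$) and trace the path in $T$ from $t^\star$ to any bag containing $v$. The first neighbor $s$ of $t^\star$ on this path lies in $V_C$ by condition~(1) of Lemma~\ref{lem:c-to-c++-atom-decomposition}, and condition~(4) gives $\beta(s)=S\subseteq\beta(t^\star)=C$. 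Since the path continues beyond $s$, the node $s$ has degree at least $2$, so by the standard tree-decomposition argument $S$ is a clique separator of $G$ of size at most $h$. On the other hand, $S\subseteq C$ implies $G-S\supseteq G-C$, which is connected by hypothesis, and every vertex of $C\setminus S$ has, by hypothesis, a neighbor in $V(G)\setminus C\subseteq V(G)\setminus S$, so every vertex of $C\setminus S$ lies in the connected component of $G-S$ containing $V(G)\setminus C$. This forces $G-S$ to be connected, contradicting that $S$ separates $G$ and completing the proof.
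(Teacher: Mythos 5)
Your proof is correct and follows essentially the same route as the paper's: apply Lemma~\ref{lem:c-to-c++-atom-decomposition} with $d=0$, $c=h$, take the topmost bag containing $C$, and derive the strict containment $C\subsetneq B$ by showing that a clique separator contained in $C$ could not actually disconnect $G$ under the stated hypotheses. The only differences are cosmetic — you organize the cases by the type of the node $t^\star$ rather than by whether $B$ is $V(G)$, a clique separator, or an atom, and in the atom case you argue directly that $G-S$ is connected where the paper instead picks two separated vertices and contradicts the neighbor hypothesis.
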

\begin{proof}
Recall that graph $G$ is connected, so it is a $0$-atom. We apply Lemma~\ref{lem:c-to-c++-atom-decomposition} to $G$ with $d = 0$ and $c = h $, and in polynomial time obtain an isomorphism invariant rooted tree decomposition $(T,\beta)$, and a partition of 
 $V(T)$ into two sets ${\sf Atoms}$ and ${\sf CliqueSep}$ such that  the following holds: 
  \begin{enumerate}
  \setlength{\itemsep}{-1pt}
  \item \label{es:pr:1}There is no edge between any pair of vertices in ${\sf Atoms}$. 
 \item \label{es:pr:2} For every  $t\in {\sf CliqueSep}$, $\beta(t)$ is a clique of size at most $h$.
 \item \label{es:pr:3} For every  $t\in {\sf Atoms}$,  $\beta(t)$ is a maximal $h$-atom of $G$. 
 \item \label{es:pr:4} If $s \in  {\sf CliqueSep}$, $t \in {\sf Atoms}$ and $st \in E(T)$ then $\beta(s) \subseteq \beta(t)$. 
 \end{enumerate}
Let $r^\star$ be the root of $T$. 
Since  $C$  is a clique in $G$, there exists a node $t\in V(T)$ such that $C \subseteq \beta(t)$. Let $r$ be the unique (topmost  in $T$) node closest to $r^\star$  such that $C \subseteq \beta(r)$, and let $B=\beta(r)$.  Clearly, the definition of $B$ is isomorphism invariant in $G$ and $C$. Further, by construction $B$ is either a clique separator of size at most $h$ or an $h$-atom.  Finally, note that since every adhesion of  $(T,\beta)$ has size at most $h$ (the last property of Lemma~\ref{lem:c-to-c++-atom-decomposition}),
 we have that $|N(X)\cap B|\leq h$ for every component $X$ in $G-B$.

All that remains to show is that $C\subsetneq B$. 
We divide the proof into several cases and argue each case individually. 
\begin{description}
\item[Case I: ($B=V(G)$)] Since, every vertex in $C$ has a neighbor outside of $C$, we have that there exists a vertex $w \in V(G)\setminus C$ and hence $C\subsetneq B$. 
\item[Case II: ($B$ is a clique separator)] We know that $G-C$ is connected and furthermore, $G-B$ is not connected. This together with the fact that $C\subseteq B$ implies that in fact, $C\subsetneq B$. 
\item[Case II: ($B$ is an $h$-atom)] In this case we know that $B\neq V(G)$. Thus, by Properties \ref{es:pr:1} and \ref{es:pr:4} of $(T,\beta)$,   there exists a clique separator $C'\subseteq B$ of $G$, of size at most $h$.  
If $C'\nsubseteq C$, that is, $|C'\setminus C|\geq 1$, then it is clear that $C\subsetneq B$. So we assume that 
$C'\subseteq C$. Further, we assume that $B=C$, else we are done. However, $C'$ is a clique separator in $G$ and hence there exist two vertices $x$ and $y$ such that $x$ and $y$ belong to two different connected components in $G-C'$. Clearly, $x$ and $y$ do not both belong to $V(G)\setminus B$ as 
$G-B=G-C$ and $G-C$ is connected. Similarly,  $x$ and $y$ do not both belong to $C$, as $C$ is a clique. Therefore, w.l.o.g. $x\in C$ and $y\in V(G)\setminus B$. Since $C'\subseteq C$, we have that the connected component of $G-C'$ containing $y$ contains all the vertices in  $V(G)\setminus B$. This implies that $x$ does not have any neighbor in $V(G)\setminus B$, which contradicts our assumption that  every vertex in $C$ has a neighbor outside of $C$. So the case $B=C$ does not occur.
\end{description}
This concludes the proof. 
\end{proof}

\subsubsection{Structural Properties of Clique Separators and Improved Graphs}
In this subsection we give couple of standalone structural results that will be useful later. We start with a simple observation. 
\begin{lemma}
\label{lem:indImprovement}
Let $G$ be a graph and $(A,B)$ be a clique separator of $G$. Then,
$\imp{G}{k}[A]=\imp{G[A]}{k}$. 
\end{lemma}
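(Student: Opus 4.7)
The plan is to reduce to an equality of local min-cuts. Both $\imp{G}{k}[A]$ and $\imp{G[A]}{k}$ are graphs on vertex set $A$, and for $u,v \in A$ with $u \neq v$, the pair $uv$ is an edge of $\imp{G}{k}[A]$ iff $uv \in E(G)$ or $\conn_G(u,v) > k$, and an edge of $\imp{G[A]}{k}$ iff $uv \in E(G[A])$ or $\conn_{G[A]}(u,v) > k$. Since $E(G[A]) = E(G) \cap \binom{A}{2}$, it suffices to prove $\conn_G(u,v) = \conn_{G[A]}(u,v)$ for every pair of nonadjacent $u,v \in A$. The inequality $\conn_{G[A]}(u,v) \leq \conn_G(u,v)$ is immediate: any $u$-$v$ separation $(X,Y)$ of $G$ restricts to the $u$-$v$ separation $(X \cap A, Y \cap A)$ of $G[A]$, of no greater order.

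The reverse inequality is where the clique-separator hypothesis enters. Set $S := A \cap B$, which is a clique in $G$. Take a minimum-order $u$-$v$ separation $(X,Y)$ of $G[A]$; in particular $X \cup Y = A \supseteq S$. Since $S$ is a clique, the sets $S \cap (X \setminus Y)$ and $S \cap (Y \setminus X)$ cannot both be nonempty (any edge between them would violate the separation in $G[A]$). By symmetry assume $S \cap (Y \setminus X) = \emptyset$ --- this choice is always available, and if $v \in S$ one instead forces $S \cap (X \setminus Y) = \emptyset$ and extends on the opposite side. Define $X' := X \cup B$ and $Y' := Y$. The claim is that $(X',Y')$ is a $u$-$v$ separation of $G$ of order exactly $|X \cap Y|$, which gives $\conn_G(u,v) \leq \conn_{G[A]}(u,v)$ and completes the proof.

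To verify the claim: $X' \cup Y' = A \cup B = V(G)$; $u \in X' \setminus Y'$ directly, and $v \notin B$ (else $v \in S \cap (Y \setminus X) = \emptyset$), so $v \in Y' \setminus X'$. For the absence of crossing edges, note first that $(Y \setminus X) \cap B \subseteq S \cap (Y \setminus X) = \emptyset$, so $Y' \setminus X' = Y \setminus X \subseteq A \setminus B$. Edges in $G$ between $X \setminus Y$ and $Y \setminus X$ are forbidden by $(X,Y)$ being a separation of $G[A]$; any edge from $B \setminus Y$ to $Y \setminus X \subseteq A \setminus B$ must, by the separation $(A,B)$ of $G$, have its $B$-endpoint in $S$, but $S \cap (B \setminus Y) = S \setminus Y \subseteq S \cap (X \setminus Y)$ is again forbidden by the $G[A]$-separation. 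Finally, $X' \cap Y' = (X \cap Y) \cup (B \cap Y)$ with $B \cap Y \subseteq S \cap Y \subseteq X \cap Y$ (using $S \cap (Y \setminus X) = \emptyset$), so $X' \cap Y' = X \cap Y$ and the order matches. I do not anticipate any real obstacle beyond the bookkeeping of which of $S \cap (X \setminus Y)$ and $S \cap (Y \setminus X)$ to assume empty, which is dictated automatically by the placement of $u$ and $v$.
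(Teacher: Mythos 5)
Your proof is correct, but it takes a genuinely different route from the paper's. You argue entirely on the side of separations: starting from a minimum-order $u$-$v$ separation $(X,Y)$ of $G[A]$, you note that the clique $S=A\cap B$ must lie wholly within one side (say $S\subseteq X$), absorb all of $B$ into that side, and verify that the separator does not grow and that no crossing edge appears --- the only candidates being edges from $B$ into $Y\setminus X\subseteq A\setminus B$, which must have their $B$-endpoint in $S\subseteq X$ and are therefore already excluded by $(X,Y)$. The paper instead goes through Menger's theorem: it takes a maximum family of internally vertex-disjoint $u$-$v$ paths witnessing $\conn_G(u,v)$, shortcuts each to an induced path, and observes that an induced path between two vertices of $A$ cannot enter $B\setminus A$, since it would then meet the clique $A\cap B$ in two nonconsecutive vertices and acquire a chord; hence the whole path family already lives in $G[A]$. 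The paper's argument is shorter; yours stays entirely in the language of separations (which is how $\conn$ is defined in the paper), avoiding both Menger and the shortcutting step, at the cost of some bookkeeping over which side of $(X,Y)$ swallows $B$. Both arguments are complete.
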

\begin{proof}
We show that for every pair of vertices $u,v\in A$, we have $\mu_G(u,v)=\mu_{G[A]}(u,v)$. It is clear that $\mu_G(u,v)\geq \mu_{G[A]}(u,v)$, so let us focus on the converse inequality. Also, assume that $u$ and $v$ are non-adjacent, as otherwise both sides of the postulated equality are equal to $+\infty$.

Let $\cal P$ be a family of internally disjoint $u$-$v$ paths that witnesses the value of $\mu_G(u,v)$. By shortcutting if necessary, we may assume that each path in $\cal P$ is induced. Now observe that since $(A,B)$ is a clique separator and $u,v\in A$, every induced $u$-$v$ path in $G$ is entirely contained in $G[A]$. Thus $\mu_{G[A]}(u,v)\geq |{\cal P}|=\mu_G(u,v)$ and we are done.
\end{proof}

In the next statement we turn the decomposition provided by Lemma~\ref{lem:c-to-c++-atom-decomposition} into an isomorphism invariant star decomposition.

\begin{lemma}
\label{lem:unbreakbleAndstarDeco}
If a graph $G$ is $(q,k)$-clique-unbreakable then there exists an isomorphism invariant star decomposition $(T,\chi)$, computable in polynomial time, with the following properties. 
\begin{enumerate}
\setlength{\itemsep}{-2pt}
\item If $r$ is the central (root) node of $(T,\chi)$, then $\chi(r)$ is either a clique of size at most $k$ or a $k$-atom. 
\item For every $t\in V(T)$, $|\sigma(t)|\leq k$ (the adhesion sizes are bounded by $k$). 
\item For every leaf $\ell \in V(T)$, $|\chi(\ell)|\leq q$. 
\item $(T,\chi)$ is connectivity-sensitive. %
\end{enumerate}
Furthermore, suppose $G$ has a star decomposition $(T,\chi)$ with properties as above. 
Then, $G$ is $(2^kq,k)$-clique-unbreakable.
\end{lemma}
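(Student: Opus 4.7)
The plan is to prove the two implications separately, using the atom decomposition from Lemma~\ref{lem:c-to-c++-atom-decomposition} as the canonical backbone, and leveraging $(q,k)$-clique-unbreakability to pin down a suitable ``central'' bag.

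For the forward direction, I would first apply Lemma~\ref{lem:c-to-c++-atom-decomposition} with $d=0$ and $c=k$ to $G$ (if $G$ is disconnected, work per component and join under an empty root) to obtain an isomorphism-invariant rooted tree decomposition $(T_0,\beta)$ whose bags are either maximal $k$-atoms or cliques of size at most $k$, with clique adhesions of size at most $k$ and no two atom-bags adjacent. Then I would orient each edge $e=st$ of $T_0$ as follows: removing the adhesion $\beta(s)\cap\beta(t)$ yields a clique separation $(L_e,R_e)$ of $G$ of order at most $k$, so by clique-unbreakability at least one side has size at most $q$; orient $e$ from the $\leq q$ side towards the other side, breaking ties in a canonical (isomorphism-invariant) way. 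Because $T_0$ is a tree, there is a unique sink node $r$ such that every subtree of $T_0-r$ spans at most $q$ vertices, and I set $\chi(r):=\beta(r)$, which by construction is either a $k$-atom or a clique of size at most $k$.

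To turn this into a connectivity-sensitive star decomposition, I would group the connected components of $G-\chi(r)$ by their neighborhoods inside $\chi(r)$: for each $A\subseteq\chi(r)$ realised as $N(C)$ for some such component $C$, create a leaf $\ell_A$ with $\chi(\ell_A):=A\cup\bigcup\{C\colon N(C)=A\}$. Connectivity-sensitivity is then immediate, and $A\subseteq\beta(r)\cap\beta(s)$ for some neighbour $s$ of $r$ in $T_0$, so $A$ is a clique of size at most $k$, giving the adhesion bound. The key remaining obstacle is to show $|\chi(\ell_A)|\leq q$ for every $A$; this is the most delicate step. For each $A$, the pair $(\chi(\ell_A),(V(G)\setminus\chi(\ell_A))\cup A)$ is itself a clique separation of $G$ of order $|A|\leq k$, so clique-unbreakability yields $\min(|\chi(\ell_A)|,|V(G)|-|\chi(\ell_A)|+|A|)\leq q$. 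I would rule out the ``bad'' case $|\chi(\ell_A)|>q$ by arguing that in that case $|V(G)\setminus\chi(\ell_A)|$ is tiny, which would contradict the choice of $r$ as the sink: the edge of $T_0$ separating the subtrees contributing components to $\ell_A$ from the rest would have been oriented the other way, so $r$ would have been chosen inside $\chi(\ell_A)$. Making this contradiction rigorous (possibly by iterating the construction or by invoking Lemma~\ref{lem:isoinvBag} to canonically replace $r$ with a strictly larger bag when the bad case arises) is where I expect to devote the most work.

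For the converse direction, let $B=\chi(r)$ and consider an arbitrary clique separation $(L,R)$ of $G$ with $S=L\cap R$ a clique of size at most $k$. The pair $(L\cap B,R\cap B)$ is a separation of $G[B]$ with separator $S\cap B$, a clique of size at most $k$, so if $B$ is a $k$-atom it must be trivial; without loss of generality $L\cap B\subseteq S$. (If $B$ is a clique of size at most $k$ the same conclusion is arranged directly, with $|B|\leq k$ absorbed into the final bound.) To bound $|L|$, I would classify each leaf $\ell$ by whether $\sigma(\ell)\subseteq S$ or not. By connectivity-sensitivity, distinct leaves have distinct adhesions, so there are at most $2^{|S|}\leq 2^k$ leaves with $\sigma(\ell)\subseteq S$, each contributing at most $|\chi(\ell)|\leq q$ vertices to $L$. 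For a leaf with $\sigma(\ell)\not\subseteq S$, the vertices of $\sigma(\ell)\setminus S\subseteq B\setminus S$ lie entirely in $R\setminus L$ (since $L\cap B\subseteq S$), and using connectivity of each component $C$ of $G[\chi(\ell)\setminus B]$ together with $N(C)=\sigma(\ell)$, the portion of $C$ in $L\setminus R$ must be separated from the $R$-portion by a vertex of $C\cap S$; summing over all components over all such leaves uses at most $|S|\leq k$ vertices of $S$ globally, contributing only an additive $O(k\cdot q)$. Collecting terms gives $|L|\leq |S|+2^k q+O(kq)\leq 2^k q$ (after absorbing lower-order terms), completing the proof of $(2^k q,k)$-clique-unbreakability. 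The main technical obstacle here is the careful bookkeeping for components that straddle the separator via vertices of $S\setminus B$.
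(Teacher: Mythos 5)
Your overall route is the same as the paper's: build the atom/clique-separator decomposition of Lemma~\ref{lem:c-to-c++-atom-decomposition} with $d=0$, $c=k$, orient its edges using $(q,k)$-clique-unbreakability of the two sides of each adhesion, take the unique sink as the central bag, and group the components of $G$ minus that bag by neighborhood to get a connectivity-sensitive star; the converse direction via ``the center lies entirely on one side, count affected leaves by $2^{|A\cap B|}$'' is also essentially the paper's argument (your bookkeeping is slightly loose --- done carefully, the two kinds of affected leaves number at most $2^{|S\cap\chi(r)|}+|S\setminus\chi(r)|\le 2^{|S|}$, which is exactly what yields $2^kq$ rather than ``$2^kq$ plus lower-order terms'').

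The genuine gap is in the leaf-size bound, which you yourself flag as the delicate step. Your proposed contradiction --- if $|\chi(\ell_A)|>q$ then the complement is small, so ``the edge of $T_0$ separating the subtrees contributing components to $\ell_A$ from the rest would have been oriented the other way'' --- does not go through: the components of $G-\chi(r)$ that share the neighborhood $A$ and get merged into the single leaf $\ell_A$ can come from \emph{several different} subtrees hanging off the sink $r$ in $T_0$. Each such subtree may individually span at most $q$ vertices (so every incident edge is correctly oriented toward $r$, or is ambiguous), while their union exceeds $q$; there is no single edge of $T_0$ whose orientation is violated. The paper resolves this in two steps: first, the orientation gives that each \emph{individual} component $C$ of $G-\chi(r)$ has $|C|\le q$ (it lies inside $U_{t'}$ for one child $t'$ of the sink, and $|U_{t'}|\le q$ by the choice of sink); second, if the union $\hat C$ of components with a common neighborhood exceeded $3q$, one could greedily select a sub-collection $C^\star$ of these components with $q<\sum_{C\in C^\star}|C|<2q$ (possible precisely because each component has size at most $q$) and then $(N[C^\star],\,V(G)\setminus C^\star)$ is a clique separation of order at most $k$ with both sides larger than $q$, a contradiction. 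Note this only yields $|\chi(\ell)|\le 3q$, which is what the paper actually proves and uses downstream (the ``$\le q$'' in the statement appears to be a typo absorbed later as $3s$); your argument as sketched would not even give $3q$. Two smaller points: the uniqueness of the sink needs the short out-degree argument (at most one fixed edge and at most one ambiguous edge leaves any node, and a fixed out-edge excludes ambiguous out-edges), and for disconnected $G$ joining per-component stars under an empty root does not produce a star decomposition --- the paper instead observes that at most one component exceeds size $q$, runs the construction on it, and dumps all small components into one additional leaf.
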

\begin{proof}
First we assume that $G$ is connected, hence it is a $0$-atom. We apply Lemma~\ref{lem:c-to-c++-atom-decomposition} to $G$ with $d = 0$ and $c = k $, and obtain an isomorphism invariant rooted tree decomposition $(T',\beta)$, and a partition of 
 $V(T')$ into two sets ${\sf Atoms}$ and ${\sf CliqueSep}$ such that  the following holds: 
  \begin{enumerate}
  \setlength{\itemsep}{-1pt}
  \item \label{nes:pr:1}There is no edge between any pair of vertices in ${\sf Atoms}$. 
 \item \label{nes:pr:2} For every  $t\in {\sf CliqueSep}$, $\beta(t)$ is a clique of size at most $k$.
 \item \label{nes:pr:3} For every  $t\in {\sf Atoms}$,  $\beta(t)$ is a maximal $k$-atom of $G$. 
 \item \label{nes:pr:4} If $s \in  {\sf CliqueSep}$, $t \in {\sf Atoms}$ and $st \in E(T')$ then $\beta(s) \subseteq 
 \beta(t)$. 
  \item Each adhesion is a clique.  That is, for every node $t$, $\sigma(t)$ is a clique of size at most $k$.
 \end{enumerate}
 
Let $r^\star$ be the root of $T'$.  We now describe a procedure to orient the edges of $T'$.   Let $t_1t_2$ be any edge of $T'$ and let $T_1,T_2$ be the components of $T' -t_1t_2$, 
with $t_1 \in  V(T_1)$ and $t_2 \in  V(T_2)$. Then we know that $\beta(t_1) \cap \beta(t_2)$  separates
$U_1\coloneqq\bigcup_{t\in V(T_1)} \beta(t)$ from $U_2\coloneqq\bigcup_{t\in V(T_2)} \beta(t)$  in $G$.  Since  
$\beta(t_1) \cap \beta(t_2)$ is a clique of size at most $k$ and  $G$ is $(q,k)$-clique-unbreakable, we have that either $|U_1|\leq q$ or $|U_2|\leq q$. We find an $i\in \{1,2\}$ such that $|U_i| >q$ (if exists), and orient $t_1t_2$ towards $t_i$. We call such edges {\em fixed}. However, we could have an edge where both  $|U_1|\leq q$ and $|U_2|\leq q$ (this may happen when $G$ has fewer than $2q$ vertices). We call such 
edges {\em ambiguous}. We orient the ambiguous edges in the direction of $r^\star$. That is, we find an $i\in \{1,2\}$ such that $t_i$ is strictly closer to~$r^\star$, and orient $t_1t_2$ towards $t_i$. This completes the description of a  procedure to orient the edges of $T'$. 

Next we show that the orientation is consistent in the sense that there is precisely one vertex of out-degree $0$ and that all edges are oriented towards this vertex. Towards this goal we observe the following. 
\begin{enumerate}
\setlength{\itemsep}{-2pt}
\item  For every $t_1\in V(T')$, the number of fixed edges oriented away from $t_1$ is upper bounded by $1$. Indeed, consider a fixed edge oriented away from $t_1$, say $t_1t_2$, and consider the clique separation $(U_1,U_2)$ of order $k$ defined above. Now observe that if we had more than one fixed edge oriented away from $t_1$  then $|U_1|>q$, and  $|U_2|>q$, contradicting that $G$ is $(q,k)$-clique-unbreakable. 
\item If $t_1\in V(T')$ is incident to a fixed edge oriented away from $t_1$, then there are no ambiguous edges oriented away from $t_1$. Indeed, let the fixed edge be $t_1t$, and consider another edge incident to $t_1$, say $t_1t_2$ ($t_2\neq t$). Let $(U_1,U_2)$ be the clique separation  of order $k$ corresponding to the edge $t_1t_2$. Now observe that  since we have a fixed edge $t_1t$ oriented away from $t_1$,   we have that  $|U_1|>q$,  so $t_1t_2$ must be a fixed edge that is oriented towards 
$t_1$.  
\item  For every $t_1\in V(T')$, the number of ambiguous edges oriented away from $t_1$ is upper bounded by $1$. This follows from the fact that $T'$ is a tree and we have a unique root $r^\star$. 
\end{enumerate}
The above observations imply that after orientation, every vertex in $T'$ has out-degree at most $1$.  Further, every edge contributes to the in-degree of a single vertex and to the out-degree of a single vertex.  However, the tree has has at most $n-1$ edge and $n$ vertices and hence, $n-1$ vertices have out-degree $1$ and there exists exactly one vertex that has out-degree $0$. Let $t^\star$ be the {\em unique node} such that the out-degree of $t^\star$ is $0$. Clearly, the definition of $t^\star$ is isomorphism invariant.  

Given $(T',\beta)$, we construct the star decomposition $(T,\chi)$ satisfying the properties mentioned in the statement 
as follows.     Let $B=\beta(t^\star)$.  The central node is $r$ and we set $\chi(r)\coloneqq B$. Furthermore, for each subset $X$ of $B$ such that $X = N(C)$ for some component $C$ of $G-B$, let $C_X$ be the union of all such components. For each  $C_X$ we construct a leaf $\ell_X\in V(T)$ with $\chi(\ell_X)=N[C_X]$.  Thus, by construction we have that  $\chi(\ell) \cap \chi(r) \neq \chi(\ell') \cap \chi(r)$ for every pair of distinct leaves $\ell, \ell'$ in $T$, and for every leaf $\ell$ of $T$ and every connected component $C$ of $G[\chi(\ell) \setminus \chi(r)]$ it holds that $N(C) = \chi(\ell) \setminus \chi(r)$. Thus, the constructed star decomposition $(T, \chi)$ with central node $r$ is connectivity-sensitive.  
Finally, we prove properties $3$ and $4$, encapsulated in the following observation. 
\begin{observation}
\label{obs:CSSsizebound}
For every leaf $\ell \in V(T)$, we have  $|\sigma(\ell)|\leq k$ and  $|\chi(\ell)|\leq 3q$. 
\end{observation}
\begin{proof}
Let $C$  be a connected component $G-B$. We  first show that $|C|\leq q$. Indeed, consider a child $t'$ of $t^\star$ in $T'$, such that $C\cap (U_{t'}\setminus B)\neq \emptyset$. Here, $U_{t'}\coloneqq \bigcup_{t\in V(T_{t'})} \beta(t)$ and $T_{t'}$ is the rooted subtree of $T'$. However, since $C$ is connected, it fully lies in  
$(U_{t'}\setminus B)$ and $N(C) \subseteq \sigma(t')\leq k$. Further, $|U_{t'}|\leq q$, for otherwise the edge $t' t^\star$ would be oriented towards $t'$. Thus, $|C|\leq q$. Further,  $|N(C)|\leq k$ and  $(T, \chi)$ is connectivity-sensitive. Thus, we have that for every  $\ell \in V(T)$, $|\sigma(\ell)|\leq k$. 

Next we show that $|\chi(\ell)|\leq 3q$. Let $\hat{C}$ be the union of all components such that $\chi(\ell)=N[\hat{C}]$.  Suppose $|\hat{C}|>3q$. Let $C^\star$ be a subset of components of  $\hat{C}$ such that $q< \sum_{C \in C^\star} |C| <2q$. Now consider the clique separation $(A, B)$ of order at most $k$, where $A=N[C^\star]$, and $B=V(G)\setminus C^\star$. Since, $|\hat{C}|>3q$, we have that $|A|>q$ and $|B|>q$, contradicting that $G$ is $(q,k)$-clique-unbreakable. 
\end{proof}

This concludes the proof for the forward direction for the case that $G$ is connected. Now we give the desired construction for the general case, when $G$ may be disconnected. Since, $G$ is $(q,k)$-clique-unbreakable, there exists at most one component $C$ such that $|C|>q$. Further, if the total number of vertices appearing in connected components of size at most $q$ exceeds $3q$, then similarly to the proof of Observation~\ref{obs:CSSsizebound} we can contradict that $G$ is $(q,k)$-clique-unbreakable. Thus, if there is {\em no} component of size more than $q$ then the star decomposition $(T, \chi)$ with central node $r$ with $\chi(r) = \emptyset$ and a single leaf $\ell$ with $\chi(\ell) = V(G)$ satisfies the conclusion of the lemma. From now on we assume that there exists a component $C$ of size more than $q$.

Let $C_{\sf small}$ be the union of all components of size at most $q$. We first apply the construction of $G$ connected on $C$ and obtain an isomorphism invariant star decomposition $(T,\chi)$ of $C$. Finally, we add a leaf $\ell$ to $T$ and assign 
$\chi(\ell)=C_{\sf small}$. Thus, we obtain an isomorphism invariant star decomposition $(T,\chi)$ of $G$ satisfying all the required properties. 
To compute $(T,\chi)$ from $(T',\beta)$ in polynomial time it suffices to execute the steps of the construction. 
This concludes the proof for the forward direction.

We are left with the last claim of the lemma statement. Suppose $G$ admits a star decomposition $(T,\chi)$ as asserted, and let $(A,B)$ be a clique separation of $G$ of order at most $k$. Since $\chi(r)$ is either a clique of size at most $k$ or a $k$-atom, we have that 
$\chi(r) \subseteq A$ or $\chi(r) \subseteq B$. Without loss of generality assume that 
$\chi(r) \subseteq A$. Call a leaf $\ell$ {\em{affected}} if $\chi(\ell)$ contains a vertex of $B$. Observe that since $(T,\chi)$ is connectivity-sensitive, for every affected leaf $\ell$ we either have that $\chi(\ell)\setminus \chi(r)$ contains a vertex of $A\cap B$, or $\chi(\ell)\setminus \chi(r)\subseteq B\setminus A$ and $\sigma(\ell)\subseteq (A\cap B\cap \chi(r))$. The number of affected leaves of the first kind is bounded by $|(A\cap B)\setminus \chi(r)|$, while the number of leaves of the second kind is bounded by $2^{|A\cap B\cap \chi(r)|}$ due to connectivity sensitivity of $(T,\chi)$. So all in all the number of affected leaves is at most
$$|(A\cap B)\setminus \chi(r)|+2^{|A\cap B\cap \chi(r)|}\leq 2^{|A\cap B|}\leq 2^k.$$
As the bag of every affected leaf is of size at most $q$, we conclude that $|B|\leq 2^k q$. This concludes the proof.
\end{proof}

\subsubsection{Reduction Algorithm}
In this subsection we combine all the tools prepared so far and prove Lemma~\ref{lem:cliqueToGeneral}. Recall that, intuitively, Lemma~\ref{lem:cliqueToGeneral} provides a reduction to the setting of graphs whose improved graphs are clique-unbreakable. We re-state it here for convenience.

\medskip
\noindent
{\bf Lemma~\ref{lem:cliqueToGeneral} (restated). }{\em
Let $\mathbb{F} \subseteq \mathbb{F}^\star$ be a (possibly infinite) collection of finite sets of graphs.
Suppose there exists a canonization algorithm ${\cal B}$ for $\kappa_{\cal B}$-improved-clique-unbreakable
$\mathbb{F}$-free graphs, for some function $\kappa_{\cal B} \colon \mathbb{F} \rightarrow \mathbb{N}$. Then
there exists a canonization algorithm for $\mathbb{F}$-free classes. 
The running time of this latter algorithm is upper bounded by $g({\cal F})n^{\cO(1)}$ (for some function $g \colon \mathbb{F} \rightarrow \mathbb{N}$) plus the time taken by at most $g({\cal F})n^{\cO(1)}$ invocations of ${\cal B}$ on instances $(G', {\cal F}, k, q)$ where $G'$ is a $(q,k,k)$-improved-clique-unbreakable, ${\cal F}$-topological minor free graph on at most $n$ vertices and $k$ and $q$ are integers upper bounded by a function of ${\cal F}$ and $\kappa_{\cal B}$. 
}
\smallskip

The remainder of this section is devoted to the proof of Lemma~\ref{lem:cliqueToGeneral}. Thus, throughout this section we assume that the premise of Lemma~\ref{lem:cliqueToGeneral} holds.
That is, let  $\mathbb{F} \subseteq \mathbb{F}^\star$ be a collection of finite sets of graphs, let  $\kappa_{\cal B}  \colon \mathbb{F} \rightarrow \mathbb{N}$ be a function, and  ${\cal B}$ be a canonization algorithm for $\kappa_{\cal B}$-improved-clique-unbreakable $\mathbb{F}$-free graphs. In order to prove Lemma~\ref{lem:cliqueToGeneral} we will prove the following stronger statement.

\begin{lemma}\label{lem:cliqueToGeneralBoundaried} 
There exists a weakly isomorphism invariant algorithm with the following specifications. The algorithm takes as input 
\begin{itemize}
\setlength{\itemsep}{-2pt}
\item a set ${\cal F} \in \mathbb{F}$,
\item an integer $k$, and 
\item a colored graph $G$ with a distinguished vertex subset $D$ satisfying $|D| \leq k$ such that
there exists a labelling $\iota \colon D \rightarrow [|D|]$ and a $[|D|]$-boundaried graph $(H,\iota)$, boundary-consistent with $G$, so that $(G,\iota) \oplus (H,\iota)$ is ${\cal F}$-free and $D$ is a clique in the $k$-improvement of $(G,\iota) \oplus (H,\iota)$.
\end{itemize}
The algorithm either fails (i.e. outputs $\bot$) or succeeds. If $k \geq \kappa_{\cal B}({\cal F})$ then the algorithm succeeds and 
\begin{itemize}
\item
it outputs for every labeling $\pi \colon D \rightarrow [|D|]$ a proper labeling $\lambda_\pi$ of $(G, \pi)$ such that $\lambda_\pi(v) = \pi(v)$ for all $v\in D$. The labeling $\lambda_\pi$ is weakly isomorphism invariant in $(G, \pi)$ and ${\cal F}$ and $k$. 
\end{itemize}
The running time of the algorithm is upper bounded by $g({\cal F}, k)n^{\cO(1)}$ plus the time taken by $g({\cal F},k)n^{\cO(1)}$ invocations of ${\cal B}$ on instances $(G', {\cal F}, k, q)$ where $G'$ is a $(q,k,k)$-improved-clique-unbreakable, ${\cal F}$-topological minor free graph on at most $n$ vertices and $q$ is upper bounded by a function of ${\cal F}$ and $k$.
\end{lemma}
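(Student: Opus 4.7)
\textbf{Proof plan for Lemma~\ref{lem:cliqueToGeneralBoundaried}.} The proof will proceed by induction on $|V(G)|$, combining the tools from Sections~\ref{sec:canonBoundaried} and~\ref{sec:unpumping} with the canonical decomposition into atoms (Lemmata~\ref{lem:isoinvBag} and~\ref{lem:unbreakbleAndstarDeco}). The base case ($V(G)=D$) is trivial: return the labeling $\pi$ itself. In the inductive step, the skeleton is: (a) find an isomorphism-invariant star decomposition $(T,\chi)$ with central bag $B\supseteq D$; (b) recurse on each leaf to obtain a locally weakly isomorphism invariant leaf-labeling $\{\lambda_{\pi,\ell}\}$; (c) either brute-force via Lemma~\ref{lem:smallCentersAlgorithm} if $B$ is small, or strongly unpump and call $\mathcal{B}$ on the reduced graph, lifting the result back via Lemma~\ref{lem:lem:caNlifting}.

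The first task is to extract a canonical central bag $B$. The crucial use of the hypothesis is the following: although the completion $(H,\iota)$ is not available to the algorithm, the mere promise that $D$ extends to a clique in the $k$-improvement of $(G,\iota)\oplus(H,\iota)$ allows us to replace $G$ by the auxiliary graph $G'$ obtained from $G$ by turning $D$ into a clique, which is an isomorphism-invariant operation on $(G,D)$. After arranging the harmless connectivity preconditions of Lemma~\ref{lem:isoinvBag} (splitting off components of $G'-D$ canonically and handling vertices of $D$ with no external neighbor), we apply Lemma~\ref{lem:isoinvBag} to $(G',D)$ with $h\coloneqq k$. This yields in polynomial time a set $B\supsetneq D$, isomorphism invariant in $(G,D)$, such that every component of $G-B$ has at most $k$ neighbors in $B$ and $B$ is either a clique separator of $G'$ of size at most $k$ or a $k$-atom of $G'$. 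From $B$, Observation~\ref{obs:uniqueStarDec} produces the connectivity-sensitive star decomposition $(T,\chi)$ with $\chi(r)=B$.

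For the recursion on leaves, consider a leaf $\ell$ with adhesion $\sigma(\ell)\subseteq B$ of size at most $k$. By construction, $\sigma(\ell)$ is a clique in $G'$, hence in the $k$-improvement of $(G[\chi(\ell)],\sigma(\ell))$ glued with the ``rest of the world'' -- formally, with the boundaried graph obtained from $(G-(\chi(\ell)\setminus \sigma(\ell)),\sigma(\ell))\oplus(H,\iota)$ (using Lemma~\ref{lem:indImprovement} to pass to $G[\chi(\ell)]$). So the inductive hypothesis applies to $(G[\chi(\ell)],\sigma(\ell))$ with parameter $k$, yielding for every $\pi\colon\sigma(\ell)\to[|\sigma(\ell)|]$ a weakly isomorphism invariant labeling $\lambda_{\pi,\ell}$. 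Now two cases arise. If $|B|$ is bounded by some function of $(\mathcal{F},k)$, Lemma~\ref{lem:smallCentersAlgorithm} directly produces, for each $\pi\colon D\to[|D|]$, a canonical labeling of $(G,\pi)$. Otherwise, $B$ is a $k$-atom of $G'$ of large size. Perform strong unpumping with $h\coloneqq n_{\mathcal{F}}$ to obtain $G^\star$ with decomposition $(T,\chi^\star)$. By Lemma~\ref{lem:unpumpProperties} (items~\ref{itm:gstarfolio} and~\ref{itm:sameImprovedCenter}), $G^\star$ is $\mathcal{F}$-free and the subgraph of its $k$-improvement induced on $\chi^\star(b)=B$ coincides with that of the original completion; since the leaves have size $\le \eta_s(k,n_{\mathcal{F}},k)$, one argues (in the spirit of the second part of Lemma~\ref{lem:unbreakbleAndstarDeco}) that $G^\star$ is $(q,k,k)$-improved-clique-unbreakable for some $q=q(\mathcal{F},k)$. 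For each $\pi\colon D\to[|D|]$, encode $\pi$ into the coloring of $G^\star$ (shifting all other colors to preserve compactness and invariance) and invoke $\mathcal{B}$; when $k\ge\kappa_{\mathcal{B}}(\mathcal{F})$, it returns a canonical labeling $\lambda^\star_\pi$ of $G^\star$ that coincides with $\pi$ on $D$. Finally, apply the lifting procedure (Lemma~\ref{lem:lem:caNlifting}) to convert $\lambda^\star_\pi$ into the desired $\lambda_\pi$ of $(G,\pi)$.

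Weak isomorphism invariance of the output is then assembled step by step: isomorphism invariance of $B$ and of $(T,\chi)$ in $(G,D)$; weak isomorphism invariance of the leaf-labelings by induction; weak isomorphism invariance of $G^\star$ in $G$ through Lemma~\ref{lem:weakIsoStar}; weak isomorphism invariance of $\lambda^\star_\pi$ because $\mathcal{B}$ is such; and finally Lemma~\ref{lem:lem:caNlifting} for $\lambda_\pi$. The running time obeys the standard recursive-understanding recurrence: each node of the recursion tree contributes $g(\mathcal{F},k)n^{O(1)}$ plus one invocation of $\mathcal{B}$ on an instance of size at most $n$, and the branching structure is a linear-size tree, giving the claimed bound. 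The main obstacle I expect to be delicate is Step~(a), namely carefully verifying that (i) forcing $D$ into a clique is truly isomorphism-invariant in $(G,D)$ independently of the unknown witness $H$, and (ii) the ``clique-in-improvement'' premise propagates down to each recursive call on a leaf; both hinge on Lemma~\ref{lem:indImprovement} together with the choice of placing $\sigma(\ell)$ inside the clique/atom bag $B$.
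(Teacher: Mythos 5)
Your skeleton matches the paper's proof at a high level (atom decomposition via Lemma~\ref{lem:isoinvBag} applied to the improved graph with $D$ completed into a clique, recursion on leaves, brute force for small centers, strong unpumping plus lifting for large centers, and the potential-style termination argument). However, there is a genuine gap at the point where you invoke $\mathcal{B}$: you propose to call $\mathcal{B}$ directly on $G^\star$ after claiming that ``$G^\star$ is $(q,k,k)$-improved-clique-unbreakable for some $q=q(\mathcal{F},k)$.'' This claim is unjustified and in general false. The fact that $B$ is a $k$-atom is established only in $\imp{\widehat{G}}{k}$, where $\widehat{G}$ has $D$ forced into a clique --- an operation legitimized solely by the \emph{promised but unknown} completion $(H,\iota)$. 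In $\imp{G^\star}{k}$ itself, without $H$ glued on, $D$ need not be a clique, $B$ need not be a $k$-atom, and $G^\star$ may admit small improved clique separations cutting straight through $B$; so the input you hand to $\mathcal{B}$ is not well-formed. The paper's resolution, which your proposal is missing, is to \emph{enumerate} the (isomorphism-invariant) set $\mathcal{H}_\pi$ of all minimum-size boundaried graphs $(H,\pi)$ whose gluing onto $G^\star$ is $\mathcal{F}$-free and makes $D$ a clique in the $k$-improvement, to prove (Claims~\ref{clm:caseIIIisCliqueUnbreakable} and~\ref{clm:caseVatomisUnbreakable}) that each glued graph $(G^\star,\pi)\oplus(H,\pi)$ \emph{is} improved-clique-unbreakable with $q$ bounded in terms of $\mathcal{F}$ and $k$, to call $\mathcal{B}$ on each glued graph, restrict the returned labelings to $V(G^\star)$, and take a lexicographic minimum. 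This gluing device is also what handles the boundary labeling $\pi$ (your alternative of encoding $\pi$ into the coloring does not by itself force the output to extend $\pi$), and it is indispensable in the degenerate case $B=V(G)$, where there are no leaves, unpumping makes no progress, and the recursion cannot bottom out without it; the paper treats this as a separate Case III.

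Two smaller points. First, your induction on $|V(G)|$ with base case $V(G)=D$ does not obviously decrease in every branch (e.g.\ when pruning vertices of $D$ without outside neighbors, or when a single leaf's closed neighborhood nearly exhausts $G$); the paper's potential $2|V(G)|-|D|$, which strictly decreases because $B\supsetneq D$, is the cleaner bookkeeping. Second, your worry (i) about whether completing $D$ into a clique is isomorphism-invariant independently of the witness $H$ is unfounded --- the operation depends only on $(G,D)$ --- but your worry (ii) about propagating the clique-in-improvement premise to the leaves is real and is handled in the paper exactly as you suggest, via Lemma~\ref{lem:indImprovement} and the fact that each adhesion $\sigma(\ell)$ sits inside the clique/atom bag.
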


We remark that the labeling $\iota$ and boundaried graph $(H, \iota)$ are not provided to the algorithm as input. 
Before proceeding to the proof of Lemma~\ref{lem:cliqueToGeneralBoundaried}, we show how Lemma~\ref{lem:cliqueToGeneralBoundaried} implies the goal of this section - namely Lemma~\ref{lem:cliqueToGeneral}

\begin{proof}[Proof of Lemma~\ref{lem:cliqueToGeneral} assuming Lemma~\ref{lem:cliqueToGeneralBoundaried}]
We assume that the premise of the Lemma holds, and design a canonization algorithm for $\mathbb{F}$-free classes.
Given as input ${\cal F} \in \mathbb{F}$ and an ${\cal F}$-free graph $G$, the algorithm sets $k = 1$ and invokes the algorithm of Lemma~\ref{lem:cliqueToGeneralBoundaried} on ${\cal F}$, $k$, $G$, $D=\emptyset$.
Since $G$ is ${\cal F}$-free and $D=\emptyset$ is already a clique in the $k$-improvement of $G$, this is a well-formed input to the algorithm of Lemma~\ref{lem:cliqueToGeneralBoundaried}.
The algorithm of Lemma~\ref{lem:cliqueToGeneralBoundaried} either returns a canonical labeling $\lambda$ of $G$, or it fails and returns $\bot$. In that case our algorithm increments $k$ and calls the algorithm of Lemma~\ref{lem:cliqueToGeneralBoundaried} on ${\cal F}$, $k$, $G$, $D=\emptyset$ again.
After at most $\kappa_{\beta}({\cal F})$ iterations we have $k \geq \kappa_{\beta}({\cal F})$ and so the algorithm succeeds and outputs a canonical labeling of $G$ (since $D=\emptyset$ there is a unique choice for labeling $\pi : D \rightarrow [|D|]$ as the empty function which does not assign anything to anything).
The claimed canonization algorithm for $\mathbb{F}$-free is weakly isomorphism invariant because the algorithm of Lemma~\ref{lem:cliqueToGeneralBoundaried} is.
The claimed running time follows from the running time upper bound of Lemma~\ref{lem:cliqueToGeneralBoundaried} together with the fact that we invoke this algorithm at most $\kappa_{\beta}({\cal F})$ times, each time with $k \leq \kappa_{\beta}({\cal F})$.
\end{proof}

\begin{proof}[Proof of Lemma~\ref{lem:cliqueToGeneralBoundaried}]
We set $|D|=h\leq k$, throughout the algorithm.  
We ensure that the algorithm does not loop using a potential $\pot(G,D)=2(|V(G)|-|D|)+|D|=2|V(G)|-|D|$. Formally, every call of the algorithm will use only calls for inputs with a strictly smaller potential, and
calls with nonpositive potential will be resolved without invoking any further calls (i.e. they are base cases).

\paragraph{Base Case ($\pot(G,D) \leq 0$):} Consider such a base case when $2|V(G)|-|D|\leq 0$.
This implies that $|V(G)|\leq |D|/2=h/2$, so the overall vertex count is bounded by $h/2\leq k/2$. So in this case we can iterate, for every labeling $\pi$ of $D$, through all proper labelings $\lambda_\pi$ of $(G,\pi)$ that extend $\pi$, and choose the one that minimizes the properly labeled graph $(G,\lambda_\pi)$ lexicographically. Clearly, this choice is weakly isomorphism invariant.

\smallskip
From now on we may assume that we work with the regular case when $2|V(G)|-|D|>0$. Then we design and prove the correctness of the algorithm as follows. We first deal with two simple cases. We apply these cases in the order of appearance, that is, in every subsequent case we assume that the previous ones do not apply.

\paragraph{Case I: $G-D$ is disconnected.} In this case we recurse on $\widetilde{G}=G[N[C]]$ for each connected component $C$ of $G-D$. However, to do this we first need to show that $\widetilde{G}$ satisfies the premise of Lemma~\ref{lem:cliqueToGeneralBoundaried}. We take the restriction of ${\sf col}_G$ to $\widetilde{G}$ as the coloring ${\sf col}_{\widetilde{G}}$. 
The set $\widetilde{D}=N(C)$ is defined to be the distinguished set.  Since $|D|\leq h$ and $N(C)\subseteq D$, we have that 
$|\widetilde{D}|=|N(C)|\leq h$.
Let $\tilde{\iota} \colon \widetilde{D} \to [|\widetilde{D}|]$ be an arbitrary bijection from $\widetilde{D}$ to $[|\widetilde{D}|]$.

We define $\widetilde{H}$ as $(G-C,\iota)\oplus (H,\iota)$.
We view $(\widetilde{H},\tilde{\iota})$ as a $[|\widetilde{D}|]$-boundaried graph with boundary $\widetilde{D}$ and the labelling $\tilde{\iota} \colon \widetilde{D} \to [|\widetilde{D}|]$, defined above. By construction $(\widetilde{H},\tilde{\iota})$ is boundary-consistent with $(\widetilde{G},\tilde{\iota})$, and $(\widetilde{G},\tilde{\iota})\oplus (\widetilde{H},\tilde{\iota})$ stripped of the boundary is the same graph as $(G,\iota)\oplus (H,\iota)$ stripped of the boundary.  This implies that $(\widetilde{G},\tilde{\iota})\oplus (\widetilde{H},\tilde{\iota})$ is ${\cal F}$-free, as $(G,\iota)\oplus (H,\iota)$ is ${\cal F}$-free. Finally, we need to show that 
$\widetilde{D}$ is a clique in the $k$-improved graph of $(\widetilde{G},\tilde{\iota}) \oplus (\widetilde{H},\tilde{\iota})$. However, this follows from the fact that $\widetilde{D}\subseteq D$, $D$ is a clique in the $k$-improved graph of $(G,\iota)\oplus (H,\iota)$, and  $(\widetilde{G},\tilde{\iota}) \oplus (\widetilde{H},\tilde{\iota})$ and  $(G,\iota) \oplus (H,\iota)$ are the same graphs with possibly different boundaries.
To be able to apply induction, it remains to show that 
 $\pot(\widetilde{G},\widetilde{D})<\pot(G,D)$. However, this follows from the fact that $G-D$ is not connected and hence $\widetilde{G}$ does not contain all the vertices of $G-D$. 

Having justified the possibility of invoking the algorithm on $\widetilde{G}$ and $\widetilde{D}$, this invocation either outputs $\bot$ or it outputs for every labeling $\pi \colon \widetilde{D} \rightarrow [|\widetilde{D}|]$ a proper labeling $\tilde{\lambda}_\pi$ of $(\widetilde{G}, \pi)$ such that $\tilde{\lambda}_\pi(v) = \pi(v)$. The labeling $\tilde{\lambda}_\pi$ is weakly isomorphism invariant in $(\widetilde{G}, \pi)$. If the algorithm outputs $\bot$ for any component $C$, we output $\bot$ for the original instance $({\cal F},k,G,D)$ and terminate (note that this is weakly isomorphism invariant).  Thus, we assume that the algorithm succeeds on each recursive call made on the children.

Our goal now is to aggregate labelings $\tilde{\lambda}_\pi$ obtained from the calls into labelings $\lambda_\pi$ for $G$ and $D$.
We first apply Lemma~\ref{lem:compact} on $G$ and ${\sf col}_{G}$  and obtain an isomorphism invariant  compact coloring of $G$. We also need to compactify the labelings obtained from applying recursive steps. To do this we again apply Lemma~\ref{lem:compact}   and obtain compact labelings.

Next, we define a star decomposition $(T, \chi)$ of $G$ as follows. The bag of the central node $r$ is $\chi(r)=D$. Next, for every connected component 
$C$ of $G-D$, construct a leaf $\ell$ with associated bag $\chi(\ell)=N[C]$. Clearly, $(T,\chi)$ is isomorphism invariant in $G$ and $D$.

Using the output of the recursive 
calls, we define a leaf-labeling $\{\lambda_{\pi,\ell}\}$ of $(T, \chi)$ as follows. Precisely, for every leaf $\ell$ of $T$, the invocation on the corresponding instance $(\widetilde{G},\widetilde{D})$ returns a set of labelings $\{\tilde{\lambda}_{\pi}\}$ for all $\pi\colon \sigma(\ell)\to [|\sigma(\ell)|]$, and we set $\lambda_{\pi,\ell}=\tilde{\lambda}_\pi$. By construction, the obtained leaf labeling is locally weakly isomorphism invariant. Now we apply Lemma~\ref{lem:smallCentersAlgorithm} to $(G, D,  (T,\chi,b,\{ \lambda_{\pi, \ell} \}))$  
and every labeling $\pi \colon D \rightarrow [|D|]$, thus obtaining 
 proper labelings $\lambda_\pi$ of $(G, \pi)$ such that $\lambda_\pi(v) = \pi(v)$. 
 The labeling $\lambda_\pi$ is weakly isomorphism invariant in $(G, \pi)$, ${\cal F}$ and $k$ hence it can be returned by the algorithm.

 The running time of this step is upper bounded by $|D|!n^{\cO(1)}=2^{\cO(h \log h)} n^{\cO(1)} $.

\paragraph{Case II: There exists a vertex in $D$ that does not have a neighbor in $G-D$.} In this case 
we will recurse on $({\cal F},k,\widetilde{G},\widetilde{D})$ defined as follows. $\widetilde{D}$ is the subset of $D$ that comprises of all the vertices of $D$ that have a neighbor in
 $V(G)\setminus D$, and $\widetilde{G}=G-(D\setminus \widetilde{D})$. With arguments identical to Case I, we can 
show that $({\cal F},k,\widetilde{G},\widetilde{D})$  satisfies the premise of Lemma~\ref{lem:cliqueToGeneralBoundaried}.  
 To apply the inductive argument we need to show that 
 $\pot(\widetilde{G},\widetilde{D})<\pot(G,D)$. Indeed, 
 $$\pot(\widetilde{G},\widetilde{D})=2(n-|D\setminus \widetilde{D}|)- |\widetilde{D}|=(2n-|D|)-|D\setminus \widetilde{D}|= \pot(G,D)-|D\setminus \widetilde{D}|< \pot(G,D).$$
 
 Having justified the possibility of invoking the algorithm on $({\cal F},k,\widetilde{G},\widetilde{D})$,
 the invocation either outputs $\bot$ or it outputs for every labeling $\pi \colon \widetilde{D} \rightarrow [|\widetilde{D}|]$ a proper labeling $\widetilde{\lambda}_\pi$ of $(\widetilde{G}, \pi)$ such that $\widetilde{\lambda}_\pi(v) = \pi(v)$ for all $v\in \widetilde{D}$. The labeling $\widetilde{\lambda}_\pi$ is weakly isomorphism invariant in $(\widetilde{G}, \pi)$. If the algorithm outputs $\bot$, we output $\bot$ on the original instance $({\cal F},k,G,D)$ (and this is weakly isomorphism invariant).  Thus, we assume that the algorithm succeeds on the recursive call made on  the instance $({\cal F},k,\widetilde{G},\widetilde{D})$.   

We first apply Lemma~\ref{lem:compact}  on $G$ and ${\sf col}_{G}$  and obtain an isomorphism invariant  compact coloring of $G$. We also compact the labelings obtained from applying recursive steps. To do this we again apply Lemma~\ref{lem:compact}   and obtain compact labelings.

Next, we define $(T, \chi)$: an isomorphism invariant (with respect to $G$ and $D$) star decomposition of 
$G$. $(T,\chi)$ has a central node $r \in V(T)$ and associated bag $\chi(r)=D$. Further, we have a single leaf $\ell$ 
such that  $\chi(\ell)=V(\widetilde{G})$.  Using the output of the recursive call, we define a leaf-labeling $\{ \lambda_{\pi,\ell} \}$ of $(T, \chi)$ as follows: for every permutation $\pi \colon \sigma(\ell) \rightarrow [|\sigma(\ell)|]$ we set $\lambda_{\pi,\ell}=\widetilde{\lambda}_\pi$. By construction, this leaf labeling is locally weakly isomorphism invariant.  Now we apply Lemma~\ref{lem:smallCentersAlgorithm}, with $(G, D,  (T,\chi,b,\{ \lambda_{\pi, \ell} \}))$   
and for every labeling $\pi \colon D \rightarrow [|D|]$,  obtain 
 a labeling $\lambda_\pi$ of $(G, \pi)$ such that $\lambda_\pi(v) = \pi(v)$. 
 The labeling $\lambda_\pi$ is weakly isomorphism invariant in $(G, \pi)$.  The running time of this step is upper bounded by $|D|!n^{\cO(1)}=2^{\cO(h \log h)} n^{\cO(1)} $.

\medskip

\noindent 
{\bf Non-Trivial Cases.}  Because of the cases considered previously, we assume that  $G-D$ is connected, and every vertex in $D$ has a neighbor in   $V(G)\setminus D$.  Now, we first compute a $k$-improved graph $\imp{\widehat{G}}{k}$ of $\widehat{G}$, where $\widehat{G}$ has been obtained from $G$ by making $D$ into a clique. Now we  apply Lemma~\ref{lem:isoinvBag} on $\imp{\widehat{G}}{k}$, $k$ and $D$, and in polynomial time, find an isomorphism invariant set $B$ such that (a) $D \subsetneq B$ ($B$ is a strict superset of $D$); and (b) $B$ is either a clique separator of size at most $k$ or a $k$-atom.

\paragraph{Case III: $B = V(G)$ in $\imp{\widehat{G}}{k}$.} In this case we know that  $\imp{\widehat{G}}{k}$ is a $k$-atom and thus it does not have any clique separator of size at most $k$. On the other hand $G$ may not be a $k$-atom.  
In this case for every labeling $\pi \colon D \rightarrow [|D|]$, we will produce  a proper  labeling $\lambda_\pi$ of $(G, \pi)$ such that $\lambda_\pi(v) = \pi(v)$ for all $v\in D$. The labeling $\lambda_\pi$ will be weakly isomorphism invariant in $(G, \pi)$.  

Recall that  there exist a labelling $\pi \colon D \rightarrow [|D|]$ and a  $[|D|]$-boundaried graph $(H,\pi)$ so that $(G,\pi) \oplus (H,\pi)$ is ${\cal F}$-free and $D$ is a clique in the $k$-improved graph of $(G,\pi) \oplus (H,\pi)$. This implies that {\em for every labelling} $\pi \colon D \rightarrow [|D|]$, there exists a  $[|D|]$-boundaried graph $(H,\pi)$ so that $(G,\pi) \oplus (H,\pi)$ is ${\cal F}$-free and $D$ is a clique in the $k$-improvement of $(G,\pi) \oplus (H,\pi)$. Indeed,  we could use the same $H$ but label the boundary vertices so that it is consistent with $\pi$. Let $n_\pi$ 
denote the cardinality of the vertex set of a smallest graph $H$ such that  $(H,\pi)$ is a $[|D|]$-boundaried graph and $(G,\pi) \oplus (H,\pi)$ is ${\cal F}$-free and $D$ is a clique in the $k$-improved graph of $(G,\pi) \oplus (H,\pi)$. Define ${\cal H}_\pi$ to be the set of all boundaried graphs $(H, \pi)$ on $n_\pi$ vertices such that $(G,\pi) \oplus (H,\pi)$ is ${\cal F}$-free and $D$ is a clique in the $k$-improved graph of $(G,\pi) \oplus (H,\pi)$. 

The algorithm constructs  ${\cal H}_\pi$ from $(G, \pi)$ by enumerating all boundaried graphs $(H, \pi)$ in increasing order of $|V(H)|$, testing for each one whether $(G,\pi) \oplus (H,\pi)$ is ${\cal F}$-free, using 
the algorithm of \cite{GroheKMW11} for testing a topological minor in a graph,   
and $D$ is a clique in the $k$-improved graph of $(G,\pi) \oplus (H,\pi)$, and stopping after enumerating all the graphs $H$ on $x$ vertices, where $x=n_\pi$ is the smallest number for which the algorithm finds an $H$ for which this test succeeds.

\begin{claim}\label{clm:caseIIIisCliqueUnbreakable}
For every  $(H,\pi) \in {\cal H}_\pi$ we have that $(G,\pi) \oplus (H,\pi)$ is $(n_\pi+k,k,k)$-improved-clique-unbreakable. 
\end{claim}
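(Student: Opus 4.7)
The plan is to argue by contradiction. Write $G^\star \coloneqq (G,\pi) \oplus (H,\pi)$ and suppose that $\imp{G^\star}{k}$ admits a clique separation $(A,B)$ of order at most $k$ with $\min(|A|,|B|) > n_\pi + k$. Since $D$ is a clique in $\imp{G^\star}{k}$ and $(A,B)$ is a separation, $D$ must lie entirely in $A$ or entirely in $B$ (otherwise some edge of $D$ would cross the partition); I will assume without loss of generality that $D \subseteq A$. The overall strategy is to project $(A,B)$ to the graph $\widehat{G}$ by setting $(A',B') \coloneqq (A \cap V(G), B \cap V(G))$ and to show that $(A',B')$ is a nontrivial clique separation of $\imp{\widehat{G}}{k}$ of order at most $k$, which will contradict the Case~III hypothesis that $\imp{\widehat{G}}{k}$ is a $k$-atom.

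To check that $(A',B')$ is a separation of $\widehat{G}$ I will observe that every edge of $\widehat{G}$ is either an edge of $G \subseteq G^\star$ (and hence cannot cross the $(A,B)$-partition) or a clique edge on $D$, and the latter are all contained in $A$ since $D \subseteq A$. For nontriviality, note that $|A \setminus B| \ge |A| - k > n_\pi = |V(H)|$, so $A \setminus B$ must contain at least one vertex outside $V(H)$; any such vertex belongs to $V(G) \setminus D \subseteq V(G)$, giving $A' \setminus B' = (A \setminus B) \cap V(G) \neq \emptyset$. The same calculation applied to $B$ yields $B' \setminus A' \neq \emptyset$.

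The crucial step, which I expect to be the principal obstacle, is verifying that the separator $A' \cap B' = (A \cap B) \cap V(G)$ remains a clique in $\imp{\widehat{G}}{k}$. For this I intend to establish the stronger inequality $\mu_{\widehat{G}}(u,v) \ge \mu_{G^\star}(u,v)$ for every pair $u,v \in V(G)$. Given a family of internally vertex-disjoint $u$-$v$ paths in $G^\star$, each path decomposes into maximal subpaths contained in $V(G)$ and maximal subpaths contained in $V(H) \setminus D$, and the endpoints of every such $(V(H) \setminus D)$-segment lie in $D \subseteq V(G)$. Replacing each such segment by the single direct edge in $\widehat{G}$ joining its two $D$-endpoints --- which is available precisely because $D$ is a clique in $\widehat{G}$ --- yields a $u$-$v$ walk in $\widehat{G}$, which can be shortcut to a path. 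The resulting paths remain pairwise internally vertex-disjoint, because the original paths were already internally disjoint (in particular at their $D$-entry/exit vertices, all of which belong to $V(G)$) and the replacement introduces no new internal vertices. Applied to $k+1$ paths witnessing $\mu_{G^\star}(u,v) > k$ (which holds for every $u,v \in A \cap B$ by the clique property in $\imp{G^\star}{k}$), this gives $\mu_{\widehat{G}}(u,v) > k$, so $A' \cap B'$ is a clique in $\imp{\widehat{G}}{k}$. Combined with the separation and nontriviality steps above, this produces the desired contradiction and completes the proof of the claim.
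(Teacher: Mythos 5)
Your proof is correct and follows the same overall strategy as the paper's: restrict the hypothetical clique separation $(A,B)$ of $\imp{((G,\pi)\oplus(H,\pi))}{k}$ to $V(G)$ and contradict the Case~III hypothesis that $\imp{\widehat{G}}{k}$ is a $k$-atom. The difference lies in how the restricted separator is shown to be a clique in $\imp{\widehat{G}}{k}$. The paper applies Lemma~\ref{lem:indImprovement} to $(G,\pi)\oplus(H,\pi)$ with $D$ turned into a clique (which does not change the $k$-improvement, since $D$ is already a clique there) to get the identity $\imp{\widehat{G}}{k}=\imp{((G,\pi)\oplus(H,\pi))}{k}[V(G)]$, after which the restriction of $(A,B)$ is immediately a clique separation of that induced subgraph. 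You instead reprove the one direction of this identity that you actually need, $\conn_{\widehat{G}}(u,v)\geq\conn_{(G,\pi)\oplus(H,\pi)}(u,v)$ for $u,v\in V(G)$, by rerouting each excursion of a path through $V(H)\setminus D$ along the corresponding clique edge of $D$ in $\widehat{G}$; this is sound, since the rerouted paths use only vertices of the originals and hence stay internally disjoint. The paper's route is shorter and reuses a lemma needed again in Claim~\ref{clm:caseVatomisUnbreakable}; yours is self-contained and sidesteps the degenerate hypotheses of Lemma~\ref{lem:indImprovement} (e.g., $V(H)=D$, where $(V(G),V(H))$ is not formally a clique separation). One step you should make explicit: you verify that $(A',B')$ is a separation of $\widehat{G}$, whereas the contradiction needs it to be a separation of $\imp{\widehat{G}}{k}$; this follows from $|A'\cap B'|\leq k$ together with the fact, recorded just before Lemma~\ref{lem:unbreakableInImproved}, that separations of order at most $k$ are preserved when passing to the $k$-improved graph.
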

\begin{proof}
Let $\tilde{G}=(G,\pi) \oplus (H,\pi)$ and $G^\star=\imp{\tilde{G}}{k}$. Note that since $D$ is a 
clique in $G^\star$,  by Lemma~\ref{lem:indImprovement} applied to $\tilde{G}$ with $D$ turned into a clique, we have that $\imp{\widehat{G}}{k}=G^\star[V(G)]$. Consider a clique separation $(L,R)$ in $G^\star$ of order at most $k$. We will show that $V(G)$ is fully contained in either $L$ or $R$. Towards a contradiction, consider the separation $(L\cap V(G),R\cap V(G))$. Clearly this is a clique separation of order at most $k$ of $G^\star[V(G)]$. Since, $\imp{\widehat{G}}{k}=G^\star[V(G)]$, we have that it is also a clique separation of order at most $k$ of 
$\imp{\widehat{G}}{k}$. This contradicts the fact that $\imp{\widehat{G}}{k}$ is a $k$-atom. That $V(G)\subseteq L$ or $V(G)\subseteq R$ implies that either $L$ or $R$ is of size at most $n_\pi+k$. This concludes that $(G,\pi) \oplus (H,\pi)$ is $(n_\pi+k,k,k)$-improved-clique-unbreakable. \end{proof}

We first apply Lemma~\ref{lem:compact}  on $(G,\pi) \oplus (H,\pi)$ and ${\sf col}_{(G,\pi) \oplus (H,\pi)}$  and obtain an isomorphism invariant  compact coloring of $(G,\pi) \oplus (H,\pi)$. 
Then, we call ${\cal B}$ on $(G,\pi) \oplus (H,\pi)$, ${\cal F}$, $k$ and $n_\pi+k$ for every $(H,\pi)\in {\cal H}_\pi$ and obtain for each  $(H,\pi)$ a weakly isomorphism invariant proper labeling $\lambda_{(H,\pi)}$ of  $(G,\pi) \oplus (H,\pi)$. 
By Claim~\ref{clm:caseIIIisCliqueUnbreakable} the graph $(G,\pi) \oplus (H,\pi)$ is $(n_\pi+k, k, k)$-improved clique unbreakable and therefore this is a well-formed input for the algorithm ${\cal B}$.
If ${\cal B}$ fails (returns $\bot$) then our algorithm fails as well (and this is weakly isomorphism invariant). However, assuming that $k \geq \kappa_{\cal B}({\cal F})$ none of the calls to ${\cal B}$ fail. 
For each $H$ let $\lambda_{(H,\pi)}'$ be the restriction of $\lambda_{(H,\pi)}$ to the domain $V(G)$.
The algorithm sets $\lambda_\pi$ to be the labeling from the set $\{\lambda_{(H,\pi)}' ~\colon~ H \in {\cal H}_\pi\}$  that makes the properly labeled graph $G$ with this labeling the lexicographically smallest. 
The output labeling $\lambda_\pi$  is weakly isomorphism invariant in 
$(G,\pi)$ (and ${\cal F}$ and $k$) because the set ${\cal H}_\pi$ is isomorphism invariant in $G$ and $\pi$ (and ${\cal F}$ and $k$), each labeling $\lambda_{(H,\pi)}'$ is weakly isomorphism invariant in $G$, $H$ and $\pi$ (and ${\cal F}$ and $k$), and selecting a lexicographically first labeled graph from a set of labeled graphs is a weakly isomorphism invariant operation. 
The size of ${\cal H}_\pi$ and the time to compute it is upper bounded by a function of $k$ and ${\cal F}$. Thus, the total time of the algorithm us upper bounded by the time taken by  $|{\cal H}_\pi|$ invocations of ${\cal B}$.

\medskip

Now we recurse in two different ways based on whether the set $B$ is a clique separator of size at most $k$ or a $k$-atom. 
\paragraph{Case IV: $B$ is a clique separator in $\imp{\widehat{G}}{k}$.} The proof of this case is almost identical to Case I, except  that we use the fact that $D \subsetneq B$ to conclude that the instances we recurse on have a strictly smaller potential than the original instance. We repeat the argument for completeness. 

We recurse on $\widetilde{G}=G[N[C]]$ for each of the connected component $C$ of $G-B$. However, to do this we need to show that $\widetilde{G}$ satisfies the premise of Lemma~\ref{lem:cliqueToGeneralBoundaried}. We take the restriction of ${\sf col}_G$ to $\widetilde{G}$ as the coloring ${\sf col}_{\widetilde{G}}$. The set $\widetilde{D}=N(C)$ is defined to be the distinguished  set. The boundary of $\widetilde{G}$ is $N(C)$. By Lemma~\ref{lem:isoinvBag} we have that $|N(C)|\leq k$.  %
Let $\tilde{\iota} \colon \widetilde{D} \to [|\widetilde{D}|]$ be an arbitrary bijection.

We define $\widetilde{H}$ as $(G-C,\iota)\oplus (H,\iota)$.   We view $(\widetilde{H},\tilde{\iota})$ as a $[|\widetilde{D}|]$-boundaried graph with boundary $\widetilde{D}$ and the labelling $\tilde{\iota} \colon \widetilde{D} \to [|\widetilde{D}|]$, defined above. Observe that $(\widetilde{G},\tilde{\iota})\oplus (\widetilde{H},\tilde{\iota})$ and $(G,\iota)\oplus  (H,\iota)$ are the same graphs, just with different boundaries. This implies that $(\widetilde{G},\tilde{\iota})\oplus (\widetilde{H},\tilde{\iota})$ is ${\cal F}$-free, as $(G,\iota)\oplus (H,\iota)$  is ${\cal F}$-free. Finally, since $\widetilde{D}\subseteq D$ and $D$ is a clique in the $k$-improved graph of $(G,\iota)\oplus  (H,\iota)$, it follows that $\widetilde{D}$ is a clique in the $k$-improved graph of $(\widetilde{G},\tilde{\iota}) \oplus (\widetilde{H},\tilde{\iota})$. To apply the inductive argument it remains to show
 $\pot(\widetilde{G},\widetilde{D})<\pot(G,D)$. 
 However this follows directly from the fact that $B\supsetneq D$ and $V(\widetilde{G})\subseteq V(G)$.

Having justified the possibility of invoking the algorithm on $({\cal F},k,\widetilde{G},\widetilde{D})$,
 the invocation either outputs $\bot$ or it outputs for every labeling $\pi \colon \widetilde{D} \rightarrow [|\widetilde{D}|]$ a proper labeling $\widetilde{\lambda}_\pi$ of $(\widetilde{G}, \pi)$ such that $\widetilde{\lambda}_\pi(v) = \pi(v)$ for each $v\in \widetilde{D}$. The labeling $\widetilde{\lambda}_\pi$ is weakly isomorphism invariant in $(\widetilde{G}, \pi)$. If the algorithm outputs $\bot$, we output $\bot$ for the original instance $({\cal F},k,G,D)$ (and this is weakly isomorphism invariant).  Thus, we assume that the algorithm succeeds on each recursive call made on the connected component $C$ of $G'-B$. 

We first apply Lemma~\ref{lem:compact}  on $G$ and ${\sf col}_{G}$  and obtain an isomorphism invariant  compact coloring of $G$. We also compact the labelings obtained from applying recursive steps. To do this we again apply Lemma~\ref{lem:compact}   and obtain compact labelings. 

Next, we define an isomorphism invariant (with respect to $G$ and $D$) star decomposition $(T, \chi)$ of 
$G$. The central node is $r \in V(T)$ and the associated bag is $\chi(r)=B$. Further, for every connected component 
$C$ of $G-B$, there is a leaf $\ell\in V(T)$, such that $\chi(\ell)=N[C]$.  Using the output of the recursive 
calls, we define a leaf-labeling $\{\lambda_{\pi,\ell}\}$ of $(T, \chi)$. 
Precisely, for every component $C$ of $G-B$, say corresponding to a leaf $\ell$, and for each bijection $\pi\colon \sigma(\ell)\to [|\sigma(\ell)|]$, we set $\lambda_{\pi,\ell}=\widehat{\lambda}_\pi$, where $\widehat{\lambda}_\pi$ is a labeling produced by a recursive call on $C$.
By construction, this leaf labeling is locally weakly isomorphism invariant.  Now we apply Lemma~\ref{lem:smallCentersAlgorithm} to $(G, B,  (T,\chi,b,\{ \lambda_{\pi, \ell} \}))$  
and for every bijection $\pi \colon B \rightarrow [|B|]$,  obtain 
 a proper labeling $\lambda_\pi$ of $(G, \pi)$ such that $\lambda_\pi(v) = \pi(v)$ for all $v\in B$.
For each labeling $\pi' : D \rightarrow [|D|]$ the algorithm 
iterates over all bijections $\pi : B \rightarrow [|B|]$ that coincide with $\pi$ on $D$ and sets $\lambda_{\pi'}$ to be the labeling $\lambda_\pi$ of $(G, \pi')$ that makes the labeled graph $(G, \lambda_{\pi'})$ lexicographically smallest. 
The labeling $\lambda_{\pi'}$ is weakly isomorphism invariant in $(G, \pi')$ (and ${\cal F}$ and $k$), hence it can be output by the algorithm.
 The running time of this step is upper bounded by $|B|!n^{\cO(1)}=2^{\cO(k \log k)} n^{\cO(1)} $.

\paragraph{Case V: $B$ is a $k$-atom in $\imp{\widehat{G}}{k}$.}  In this case we recurse as follows. Let ${\cal N}$ be the family of all $X\subseteq B$ for which there exists a component $C$ of $G-B$ with $X=N(C)$. For each subset $X\in {\cal N}$, let $C_X$ be the union of all such components $C$. In this case we recurse on $\widetilde{G}=G[N[C_X]]$ for each $X\in {\cal N}$. However, to do this we need to show that $\widetilde{G}$ satisfies the premise of Lemma~\ref{lem:cliqueToGeneralBoundaried}. We take restriction of ${\sf col}_G$ to $\widetilde{G}$ as the coloring ${\sf col}_{\widetilde{G}}$.  The set $\widetilde{D}=X=N(C)=N(C_X)$ is defined to be the distinguished set for $\widetilde{G}$.  By Lemma~\ref{lem:isoinvBag} we have that $|X|\leq k$ for each $X\in {\cal N}$, and thus   $|\widetilde{D}|\leq k$.  
Let $\tilde{\pi} \colon \widetilde{D} \to [|\widetilde{D}|]$ be an arbitrary bijection from $\widetilde{D} \to [|\widetilde{D}|]$.

We define $\widetilde{H}$ as $(G-C_X,\pi)\oplus (H,\pi)$.   We view $(\widetilde{H},\tilde{\pi})$ as a $[|\widetilde{D}|]$-boundaried graph with boundary $\widetilde{D}$ and the labelling $\tilde{\pi} \colon \widetilde{D} \to [|\widetilde{D}|]$, defined above. Observe that $(\widetilde{G},\tilde{\pi})\oplus (\widetilde{H},\tilde{\pi})$ and $(G,\pi)\oplus (H,\pi)$ are the same graphs with possibly different boundaries. This implies that $(\widetilde{G},\tilde{\pi})\oplus (\widetilde{H},\tilde{\pi})$ is ${\cal F}$-free, as $(G,\pi)\oplus (H,\pi)$  is ${\cal F}$-free. Also, as $\widetilde{D}\subseteq D$ and $D$ is a clique in the $k$-improved graph of $(G,\pi)\oplus (H,\pi)$, it follows that $\widetilde{D}$ is a clique in the $k$-improved graph of $(\widetilde{G},\tilde{\pi})\oplus (\widetilde{H},\tilde{\pi})$. To apply the induction we need to show that 
 $\pot(\widetilde{G},\widetilde{D})<\pot(G,D)$.  
 Observe that, 
\begin{eqnarray*}
 \pot(\widetilde{G},\widetilde{D})   &\leq & 2( n-|B|)+ |\widetilde{D}|\\
&=&  2n-2|B|+|D|-|D|+|\widetilde{D}| \\
& = & 2n-|D|-(2|B|-|D|- |\widetilde{D}|) \\
& < & \pot(G,D)~~~~~~~~~~~~~~~~~~~~~~~~~~~~~~~~~~~~~~~~~~~~(\widetilde{D} \subseteq B \mbox{ and } D \subsetneq B )
 \end{eqnarray*}
 
Having justified the possibility of invoking the algorithm on $({\cal F},k,\widetilde{G},\widetilde{D})$, this invocation either outputs $\bot$ or it outputs for every labeling $\pi \colon \widetilde{D} \rightarrow [|\widetilde{D}|]$ a proper labeling $\widetilde{\lambda}_\pi$ of $(\widetilde{G}, \pi)$ such that $\widetilde{\lambda}_\pi(v) = \pi(v)$. The labeling $\widetilde{\lambda}_\pi$ is weakly isomorphism invariant in $(\widetilde{G}, \pi)$. If the algorithm outputs $\bot$, we output $\bot$ on the original instance $({\cal F},k,G,D)$ (and this is weakly isomorphism invariant).  Thus, we assume that the algorithm succeeds on each recursive call made on $C_X$ for $X\in {\cal N}$.

We first apply Lemma~\ref{lem:compact}  on $G$ and ${\sf col}_{G}$  and obtain an isomorphism invariant  compact coloring of $G$. We also compact the leaf labelings obtained from applying recursive steps. To do this we again apply Lemma~\ref{lem:compact}   and obtain compact labelings.

Next, we define an isomorphism invariant (with respect to $G$ and $D$) star decomposition $(T, \chi)$ of 
$G$. The central node is $r \in V(T)$ and the associated bag is $\chi(r)=B$.  Furthermore, for each $X\in {\cal N}$, we construct a leaf $\ell_X\in V(T)$ with $\chi(\ell_X)=N[C_X]$.  Using the output of the recursive 
calls, we define a leaf-labeling $\{ \lambda_{\pi, \ell} \}$ of $(T, \chi)$ as follows. For each $X\in {\cal N}$ and bijection $\pi\colon \sigma(\ell_X)\to [|\sigma(\ell_X)|]$, set $\lambda_{\pi,\ell}=\widetilde{\lambda}_{\pi}$, where $\widetilde{\lambda}_{\pi}$ was obtained from the recursive call on $C_X$. By construction, this leaf labeling is locally weakly isomorphism invariant.  Further, by construction we have that  $\chi(\ell) \cap \chi(r) \neq \chi(\ell') \cap \chi(r)$ for every pair of distinct leaves $\ell, \ell'$ in $T$, and for every leaf $\ell$ of $T$ and every connected component $C$ of $G[\chi(\ell) \setminus \chi(r)]$ it holds that $N(C) = \chi(\ell) \setminus \chi(r)$. Thus, the constructed star decomposition $(T, \chi)$ with central node $r$ is connectivity-sensitive.

Let $n_{\cal F}$ denote the maximum size of a graph (in terms of the number of vertices) in the given family of graphs  ${\cal F}$ (excluded as a topological minor).   
Next we apply the strong unpumping operation. It takes as input a colored graph $G$, a connectivity sensitive, weakly isomorphism invariant (in $G$) star decomposition $(T, \chi)$ of $G$ with central bag $b \in V(T)$, and a compact locally weakly isomorphism invariant leaf-labeling $\{\lambda_{\pi, \ell}\}$ of $(T, \chi)$. We apply the strong unpumping  operation  on  $(G, B,  (T,\chi),r,\{ \lambda_{\pi, \ell} \})$ and parameters $k$ and $f= n_{\cal F}$.

The operation produces a colored graph $G^ \star$ and a connectivity sensitive star decomposition $(T, \chi^\star)$ as follows.
$G^\star$ is obtained from $G$ by replacing 
for every leaf $\ell$ of $T$ the boundaried properly labeled graph $(G[\chi(\ell)], \pi(\ell))$ with labeling $\lambda\langle\ell\rangle$ by its $(k,n_{\cal F})$-strong representative. 
The star decomposition $(T, \chi^\star)$ of $G^\star$ has the same decomposition star $T$ as $G$, for the central node $r$ we set $\chi^\star(r) = \chi(r)$ and for every leaf $\ell$ of $T$ we set $\chi^\star(\ell)$ to be the vertex set of the representative used to replace $G[\chi(\ell)]$. Let $q={\eta}_s(k, n_{\cal F}, k)$. Let  ${\eta}_s(k, f, t)$ denote the smallest $\ell$ such that every $t$-boundaried graph has a $(k,f)$-strong representative on at most $\rho$ vertices. In other words $q$ denotes the size of a largest representative used to replace $G[\chi(\ell)]$.

By Property $5$ of \ref{lem:unpumpProperties}, we have that for every $\pi \colon D \rightarrow [|D|]$, and every boundaried graph $(H,\pi)$, the subgraph of the $k$-improved graph of $(G, \pi) \oplus (H,\pi)$ induced by $\chi(r)$ and the subgraph of the $k$-improved graph of $(G^\star, \pi) \oplus (H,\pi)$ induced by $\chi^\star(r)$ are the same.  This together with the fact that  there exist a labelling $\pi \colon D \rightarrow [|D|]$ and a  $[|D|]$-boundaried graph $(H,\pi)$ so that $(G,\pi) \oplus (H,\pi)$ is ${\cal F}$-free and $D$ is a clique in the $k$-improved graph of $(G,\pi) \oplus (H,\pi)$, implies that there exist a labelling $\pi \colon D \rightarrow [|D|]$ and a  $[|D|]$-boundaried graph $(H,\pi)$ so that $(G^\star,\pi) \oplus (H,\pi)$ is ${\cal F}$-free and $D$ is a clique in the $k$-improved graph of $(G^\star,\pi) \oplus (H,\pi)$.

Since  there exist a bijection $\pi \colon D \rightarrow [|D|]$ and a  $[|D|]$-boundaried graph $(H,\pi)$ so that $(G^\star,\pi) \oplus (H,\pi)$ is ${\cal F}$-free and $D$ is a clique in the $k$-improvement of $(G^\star,\pi) \oplus (H,\pi)$. This implies that {\em for every bijection} $\pi \colon D \rightarrow [|D|]$, there exists a  $[|D|]$-boundaried graph $(H,\pi)$ so that $(G^\star,\pi) \oplus (H,\pi)$ is ${\cal F}$-free and $D$ is a clique in the $k$-improvement of $(G^\star,\pi) \oplus (H,\pi)$. Indeed,  we could use the same $H$ but label the boundary vertices so that it is consistent with $\pi$. Let $n_\pi$ 
denote the cardinality of the vertex set of a smallest graph $H$ such that  $(H,\pi)$ is a $[|D|]$-boundaried graph and $(G^\star,\pi) \oplus (H,\pi)$ is ${\cal F}$-free and $D$ is a clique in the $k$-improved graph of 
$(G^\star,\pi) \oplus (H,\pi)$. Define ${\cal H}_\pi$ to be the set of all boundaried graphs $(H, \pi)$ on $n_\pi$ vertices such that $(G^\star,\pi) \oplus (H,\pi)$ is ${\cal F}$-free and $D$ is a clique in the $k$-improvement of $(G^\star,\pi) \oplus (H,\pi)$. 

The algorithm constructs  ${\cal H}_\pi$ from $(G^\star, \pi)$ by enumerating all boundaried graphs $(H, \pi)$ in increasing order of $|V(H)|$, testing for each one whether $(G^\star,\pi) \oplus (H,\pi)$ is ${\cal F}$-free, using 
the algorithm of \cite{GroheKMW11} for testing a topological minor in a graph,   
and $D$ is a clique in the $k$-improvement of $(G^\star,\pi) \oplus (H,\pi)$, and stopping after enumerating all the graphs $H$ on $x$ vertices, where $x=n_\pi$ is the smallest number for which the algorithm finds an $H$ for which this test succeeds.

\begin{claim}\label{clm:caseVatomisUnbreakable}
For every  $(H,\pi) \in {\cal H}_\pi$ we have that $(G^\star,\pi) \oplus (H,\pi)$ is $(2^{k}q^\star,k,k)$-improved-clique-unbreakable. Here, $q^\star= \max\{q,n_\pi\}$. 
\end{claim}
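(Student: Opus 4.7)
The plan is to apply the backward direction of Lemma~\ref{lem:unbreakbleAndstarDeco} to $\imp{\tilde{G}}{k}$, where $\tilde{G}=(G^\star,\pi)\oplus(H,\pi)$, using the star decomposition $(T',\chi^\dagger)$ obtained from $(T,\chi^\star)$ by appending, for every subset $Y\subseteq D$ that arises as the $H$-neighborhood of some connected component of $H[V(H)\setminus D]$, a new leaf $\ell_Y$ with bag $Y\cup C_Y^H$ and adhesion $Y$, where $C_Y^H$ is the union of all such components with neighborhood exactly $Y$; this mirrors the construction of $(T,\chi)$ in Case~V, only now applied to $H$. Every original leaf bag has size at most $q=\eta_s(k,n_{\cal F},k)$, every new leaf bag has size at most $n_\pi$, every adhesion has size at most $k$ (by Lemma~\ref{lem:isoinvBag} for the original adhesions and by $|D|\leq k$ for the $H$-adhesions), and $(T',\chi^\dagger)$ is connectivity-sensitive by construction. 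Thus every leaf bag has size at most $q^\star=\max\{q,n_\pi\}$; provided the center bag $B$ is a $k$-atom in $\imp{\tilde{G}}{k}$, the backward direction of Lemma~\ref{lem:unbreakbleAndstarDeco} then yields $(2^k q^\star,k)$-clique-unbreakability of $\imp{\tilde{G}}{k}$, which is exactly the claimed $(2^k q^\star,k,k)$-improved-clique-unbreakability of $\tilde{G}$.

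Before applying Lemma~\ref{lem:unbreakbleAndstarDeco}, I would check that $(T',\chi^\dagger)$ remains a connectivity-sensitive tree decomposition after passing to $\imp{\tilde{G}}{k}$. The key observation is that for every leaf $\ell$, the adhesion $\sigma^\dagger(\ell)$ separates $\chi^\dagger(\ell)\setminus B$ from the rest of $\tilde{G}$ by a cut of size at most $k$, so the improvement cannot add any edge from a vertex of $\chi^\dagger(\ell)\setminus B$ leaving $\chi^\dagger(\ell)$. Within a leaf bag the improvement may merge some components of $\tilde{G}[\chi^\dagger(\ell)\setminus B]$, but each merged component contains an original component whose neighborhood is already equal to $\sigma^\dagger(\ell)$, and so connectivity-sensitivity is preserved.

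The bulk of the work is showing that $\imp{\tilde{G}}{k}[B]$ is a $k$-atom. By Property~\ref{itm:sameImprovedCenter} of Lemma~\ref{lem:unpumpProperties}, $\imp{\tilde{G}}{k}[B]=\imp{(G,\pi)\oplus(H,\pi)}{k}[B]$. Since $B$ is a $k$-atom in $\imp{\widehat{G}}{k}$ by the choice made via Lemma~\ref{lem:isoinvBag}, it suffices to establish $\imp{(G,\pi)\oplus(H,\pi)}{k}[B]=\imp{\widehat{G}}{k}[B]$, which amounts to proving $\mu_{(G,\pi)\oplus(H,\pi)}(u,v)>k$ if and only if $\mu_{\widehat{G}}(u,v)>k$ for every pair $u,v\in B$. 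For the forward direction, given a $u$-$v$ separation $(A_1,A_2)$ of $(G,\pi)\oplus(H,\pi)$ of order at most $k$, the assumption that $D$ is a clique in $\imp{(G,\pi)\oplus(H,\pi)}{k}$ (from the definition of ${\cal H}_\pi$) forces $D\setminus(A_1\cap A_2)$ to lie entirely on one side, and then restricting $(A_1,A_2)$ to $V(G)$ yields a $u$-$v$ separation of $\widehat{G}$ of the same order. For the reverse direction, any $u$-$v$ separation $(B_1,B_2)$ of $\widehat{G}$ of order at most $k$ has $D$ entirely on one side, say $D\subseteq B_1$, since $D$ is a literal clique in $\widehat{G}$; then $(B_1\cup V(H),B_2)$ is a $u$-$v$ separation of $(G,\pi)\oplus(H,\pi)$ of the same order, because $V(H)\setminus D$ has no neighbors outside $V(H)$ and $V(H)\cap B_2\subseteq D\cap B_2\subseteq B_1\cap B_2$ adds nothing to the separator.

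The main obstacle is exactly this separation-transport bookkeeping: the $D$-clique hypothesis lives in two different forms, literally in $\widehat{G}$ and only in the $k$-improvement in $(G,\pi)\oplus(H,\pi)$, and the argument must carefully use whichever form is genuinely available in each graph while verifying that no crossing edges sneak in through $H$'s internal structure.
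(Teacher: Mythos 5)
Your proof is correct and follows essentially the same route as the paper's: reduce to the un-unpumped graph via Property~\ref{itm:sameImprovedCenter} of Lemma~\ref{lem:unpumpProperties}, identify $\imp{\widehat{G}}{k}$ with the restriction to $V(G)$ of the $k$-improvement of $(G,\pi)\oplus(H,\pi)$ to conclude that $B$ remains a $k$-atom, then extend $(T,\chi^\star)$ to a connectivity-sensitive star decomposition of $\imp{\tilde{G}}{k}$ with small leaves and small adhesions and invoke the backward direction of Lemma~\ref{lem:unbreakbleAndstarDeco}. The only differences are cosmetic: you re-derive by hand, via separation transport, the identity that the paper obtains by citing Lemma~\ref{lem:indImprovement} (applied after turning $D$ into a clique), and you attach $H$ through one leaf per neighborhood class of components of $H-D$ where the paper attaches a single leaf with bag $V(H)$ (merging it with an existing leaf when adhesions coincide) --- your variant is, if anything, more careful about connectivity-sensitivity, though like the paper you would still need to merge (or separately count) a new leaf whose adhesion collides with that of an original leaf.
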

\begin{proof}
Let $\tilde{G}=(G^\star,\pi) \oplus (H,\pi)$ and $G^\dagger=\imp{\tilde{G}}{k}$.  Towards a proof we first show that $G^\dagger[B]$ is a $k$-atom.   Note that since $D$ is a 
clique in $\imp{{((G,\pi) \oplus (H,\pi))}}{k}$,  by Lemma~\ref{lem:indImprovement}  we have that
 $\imp{\widehat{G}}{k}=\imp{{((G,\pi) \oplus (H,\pi))}}{k}[V(G)]$. This together with the fact that $B$ is a $k$-atom in $\imp{\widehat{G}}{k}$, we have that $B$ is a $k$-atom  in $\imp{{((G,\pi) \oplus (H,\pi))}}{k}[V(G)]$.  Now by Property $5$ of \ref{lem:unpumpProperties}, we have that the subgraph of the $k$-improved graph of $(G, \pi) \oplus (H,\pi)$ induced by $\chi(r)$ ($\imp{{((G,\pi) \oplus (H,\pi))}}{k}[\chi(r)])]$) and the subgraph of the $k$-improved graph of $(G^\star, \pi) \oplus (H,\pi)$ induced by $\chi^\star(r)$ ($G^\dagger[\chi^\star(r)]$) are the same.  Since $\chi(r)= \chi^\star(r)=B$, and $B$ is a $k$-atom in $\imp{{((G,\pi) \oplus (H,\pi))}}{k}$, we have that $G^\dagger[B]$ is a $k$-atom.

Next we would like to design a  star decomposition $(T',\chi')$ of $G^\dagger=\imp{\tilde{G}}{k}$. We start with the star decomposition $(T,\chi^\star)$ of  $G^\star$ that satisfies the following properties.
\begin{enumerate}
\setlength{\itemsep}{-1pt}
\item $(T,\chi^\star)$ is connectivity-sensitive. 
\item The central bag $\chi^\star(r)$ is  a $k$-atom in $G^\dagger$. 
\item For all $t\in V(T)$, $\sigma^\star(t)\leq k$. 
\item For all leaves $\ell \in V(T)$, $|\chi^\star(\ell)|\leq q$. 
\end{enumerate}
We check if for some $t\in V(T)$, $\sigma^\star(t)=V(H)\cap \chi^\star(r)$. If yes, then we define 
$\chi^\star(t):=\chi^\star(r) \cup V(H)$. Else, we add another leaf $\ell_H $ to $T$ and get a new tree $T'$ and define $\chi^\star(\ell_H)=V(H)$. Let 
 $(T',\chi^\star)$ be a star decomposition of $(G^\star,\pi) \oplus (H,\pi)$.  By construction this is  connectivity-sensitive.  Now consider $G^\dagger=\imp{\tilde{G}}{k}$. Observe that, since for each leaf $t\in V(T')$ we have $\sigma^\star(t)\leq k$, it follows that for each of the  {\em newly added edges} $uv$ in $E(\imp{\tilde{G}}{k})$, there exists  $t\in V(T')$ such that $\{u,v\}\subseteq \chi^\star(t)$.  This implies that $(T,\chi'=\chi^\star)$  is also a star decomposition of $G^\dagger$.  Thus, by Lemma~\ref{lem:unbreakbleAndstarDeco} we have that $G^\dagger$  is $(2^k q^\star,k)$-clique unbreakable. This implies that $(G^\star,\pi) \oplus (H,\pi)$ is $(2^{k}q^\star,k,k)$-improved-clique-unbreakable. This concludes the claim. 
\cqed\end{proof}

We first apply Lemma~\ref{lem:compact}  on $(G^\star,\pi) \oplus (H,\pi)$ and ${\sf col}_{(G^\star,\pi) \oplus (H,\pi)}$  and obtain an isomorphism invariant  compact coloring of $(G^\star,\pi) \oplus (H,\pi)$.  
Then, 
the algorithm calls ${\cal B}$ on $((G^\star,\pi) \oplus (H,\pi),\pi)$, ${\cal F}$, $k$ and $n_\pi$ for every $(H,\pi)\in {\cal H}_\pi$ and obtains for each  $(H,\pi)$ a weakly isomorphism invariant proper labeling $\lambda_{(H,\pi)}$ of  $(G^\star,\pi) \oplus (H,\pi)$.  
(Technically the algorithm ${\cal B}$ takes as input a normal graph, rather than a boundaried graph. Therefore, instead of passing $((G^\star,\pi) \oplus (H,\pi),\pi)$ directly to ${\cal B}$ the algorithm passes the compact color encoding of $((G^\star,\pi) \oplus (H,\pi),\pi)$ instead.) 
By Claim~\ref{clm:caseVatomisUnbreakable} the graph $(G^\star,\pi) \oplus (H,\pi)$ is $(2^{k}q^\star,k,k)$-improved clique unbreakable and therefore this is a well-formed input for the algorithm ${\cal B}$.
If ${\cal B}$ fails (returns $\bot$) then our algorithm fails as well (and this is weakly isomorphism invariant). However, assuming that $k \geq \kappa_{\cal B}$ none of the calls to ${\cal B}$ fail. 
For each $H$ let $\lambda_{(H,\pi)}'$ be the restriction of $\lambda_{(H,\pi)}'$ to the domain $V(G^\star)$.
The algorithm sets $\lambda_\pi^\star$ to be the labeling from the set $\{\lambda_{(H,\pi)}' ~\colon~ H \in {\cal H}_\pi\}$  that makes the properly labeled graph $G^\star$ with this labeling the lexicographically smallest. 
The labeling $\lambda_\pi^\star$  is weakly isomorphism invariant in $(G^\star,\pi)$ because the set ${\cal H}_\pi$ is isomorphism invariant in $G^\star$ and $\pi$, each labeling $\lambda_{(H,\pi)}'$ is weakly isomorphism invariant in $G^\star$, $H$ and $\pi$, and selecting a lexicographically first properly labeled graph from a set of properly labeled graphs is a weakly isomorphism invariant operation. 
The size of ${\cal H}_\pi$ and the time to compute it is upper bounded by a function of $k$ and ${\cal F}$. Thus, the total time of the algorithm up to this point is upper bounded by the time taken by  $|{\cal H}_\pi|$ invocations of ${\cal B}$.

For each permutation $\pi \colon D \rightarrow [|D|]$ the algorithm applies the lifting procedure (see Section~\ref{sec:unpumping}) on the lifting bundle $(G, D, (T, \chi), b, \{\lambda_{\pi, \ell}\}, G^\star, (T, \chi^\star), \iota, \lambda_\pi^\star)$ and and obtains a proper labeling $\lambda_\pi$ of $(G, \pi)$. By Lemma~\ref{lem:lem:caNlifting},  $\lambda_\pi$ is weakly isomorphism invariant in $(G, \pi)$ (and ${\cal F}$ and $k$), and by Observation~\ref{obs:liftRunTime} the procedure takes time polynomial in the input. We call the lifting procedure $|D|!$ times, one for each $\pi : D \rightarrow [|D|]$. This concludes the running time analysis of this step.

Finally, we analyze the running time of the whole algorithm. The algorithm is a recursive algorithm and hence has a recursion tree $\cal T$. We have already proved that in each node of the recursion tree the algorithm spends time at most  $g(k,{\cal F})n^{\cO(1)}$ and makes at most $g(k,{\cal F})n^{\cO(1)}$ calls to ${\cal B}$. Thus, to prove the claimed upper bound on the running time it suffices to upper bound the number of nodes in ${\cal T}$ by $\cO(n^2)$.

First, observe that in every node of the recursion tree the algorithm only makes calls to instances with strictly lower potential value, where the potential is  $\pot(G,D)= 2n-|D| \leq 2n$. Since the potential value is non-zero for every node of the recursion tree, the length of the longest root-leaf path in ${\cal T}$ is at most $2n$.

A {\em level} of the recursion tree is a set of nodes of ${\cal T}$ at the same distance from the root. For every level we upper bound the number of nodes in that level by $\cO(n)$. Towards this goal tag each node of ${\cal T}$ with the graph $G$ and a set $B$ (containing distinguished set $D$) that is processed in that node. A vertex $v \in V(G) \setminus B$ is called a {\em non-boundary} vertex. 

An inspection of the algorithm reveals the following observations: (i) every node of ${\cal T}$ has at least one non-boundary vertex, (ii) a non-boundary vertex of a node is also a non-boundary vertex of its parent node, (iii) for every node $(G, B)$ of ${\cal T}$ and every non-boundary vertex $v \in V(G) \setminus B$, $v$ is a non-boundary node in at most one child of the call $(G, B)$. The observations (i), (ii) and (iii) together yield that the size of every level of the recursion tree is upper bounded by $n$. This concludes the proof.
\end{proof}

\subsection{Reduction To Unbreakable Graphs}\label{sec:reductionToUnbreakable}\label{sec:lastDecomp}
In this section, we design an algorithm for {\sc Canonization} in $(q,k,k)$-improved-clique-unbreakable graphs, excluding a fixed family $\cal F$ of graphs as topological minors, assuming we have such an algorithm for graphs that, in addition to excluding $\cal F$, are also suitably unbreakable. We first develop a notion of $k$-important separator extension, which is based on important separators. Then, we define a concept of unbreakability-tied sets. In particular, if a set $A$ is $q$-unbreakability-tied to $D$, then every separation of small order that splits $A$ into two large parts also has to split $D$ into two large parts. This is followed by the notion of stable separation and the main algorithm. 

For some intuition about the notions of $k$-important separator extension and unbreakability-tied sets we give a very high-level description of how the notions are used in the canonization algorithm. 
The canonization algorithm will at some point be faced with a situation where it needs to canonize a graph $G$ with a distiguished set $D$. To do a single divide and conquer step the algorithm needs to compute in an isomorphism-invariant way some set $B$ that contains $D$ and sufficiently disconnects the graph.
If the set $D$ is large and unbreakable our algorithm will use $k$-important separator extension to compute $B$. In this case we will need to prove that the set $B$ is also sufficiently unbreakable in $G$, and for this we will use the concept of ``unbreakability-tied''. %

\subsubsection{$k$-Important Separators and Important Separator Extension}
We will need the following variation of important $X$-$Y$ separators. An important $X-Y$-separator $S$ of size at most $k$ is called a $k$-important $X-Y$-separator if there does not exist any other important separator $S'$ of size at most $k$ such that $R_{G\setminus S}(X \setminus S) \subseteq R_{G\setminus S'}(X \setminus S')$.

\begin{lemma}\label{lem:impSepBasisChange}
Let $S$ be a $k$-important $\{v\}-D$ separator distinct from $\{v\}$ and suppose $u \in R_{G \setminus S}(v)$. Then $S$ is a $k$-important $\{u\}-D$ separator.
\end{lemma}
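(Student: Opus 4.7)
The plan is to exploit one central observation: since $u \in R_{G \setminus S}(v)$, the vertices $u$ and $v$ lie in the same connected component of $G\setminus S$, and in particular $u,v \notin S$. Moreover, this ``same component'' property transfers to any candidate competing separator $S'$ satisfying $R_{G\setminus S}(v) \subseteq R_{G\setminus S'}(v)$ (or the analogous inclusion for $u$), because then $u$ is forced into $R_{G\setminus S'}(v)$, so $u,v$ share a component of $G\setminus S'$ too, and in particular $R_{G\setminus S'}(u) = R_{G\setminus S'}(v)$. I would state and use this transfer repeatedly.

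First I would verify that $S$ is a $\{u\}$-$D$ separator: the common component $R_{G\setminus S}(u) = R_{G\setminus S}(v)$ avoids $D$ because $S$ separates $v$ from $D$. Next, for inclusion-minimality, I would show that if $S\setminus\{s\}$ were still a $\{u\}$-$D$ separator, then since the $v\to u$ path in $G\setminus S$ survives in $G\setminus(S\setminus\{s\})$, $v$ would inherit $u$'s failure to reach $D$, giving $S\setminus\{s\}$ as a $\{v\}$-$D$ separator and contradicting minimality of $S$.

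For importance, I would suppose there exists a $\{u\}$-$D$ separator $S'$ with $|S'|\le|S|$ and $R_{G\setminus S}(u)\subsetneq R_{G\setminus S'}(u)$. The transfer observation yields that $S'$ is also a $\{v\}$-$D$ separator and $R_{G\setminus S}(v) \subsetneq R_{G\setminus S'}(v)$, directly contradicting importance of $S$ as a $\{v\}$-$D$ separator. Finally, for $k$-importance, I would take a hypothetical other important $\{u\}$-$D$ separator $S^*$ of size $\le k$ with $R_{G\setminus S}(u)\subseteq R_{G\setminus S^*}(u)$ and apply the transfer twice: once to upgrade $S^*$ to a $\{v\}$-$D$ separator with $R_{G\setminus S}(v)\subseteq R_{G\setminus S^*}(v)$, and once to show importance of $S^*$ as a $\{v\}$-$D$ separator (any strictly improving $S''$ from the $v$-side would, by transfer, strictly improve from the $u$-side, contradicting importance of $S^*$ as a $\{u\}$-$D$ separator). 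This exhibits $S^* \ne S$ as witness against the $k$-importance of $S$ on the $v$-side, contradiction.

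The main obstacle I expect is purely bookkeeping: keeping separate the four distinct conditions (being a separator, inclusion-minimality, importance, $k$-importance) and making sure that the ``same component'' transfer is legitimate at each step — in particular verifying that $u, v$ are always nonseparator vertices in the relevant graphs so that the reachability sets $R_{G\setminus S'}(u)$ and $R_{G\setminus S'}(v)$ literally coincide, rather than merely satisfying some inclusion. Once the transfer observation is stated cleanly, each of the four properties follows by the same symmetric argument swapping $u\leftrightarrow v$.
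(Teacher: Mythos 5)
Your proposal is correct and follows essentially the same route as the paper: the paper's (much terser) proof rests on exactly your transfer observation, namely that $v\in R_{G\setminus S}(u)\subseteq R_{G\setminus S'}(u)$ forces $R_{G\setminus S'}(u)=R_{G\setminus S'}(v)$, so any improvement on the $u$-side is an improvement on the $v$-side. Your version is in fact more careful than the paper's, which only explicitly checks the separator property and the strict-improvement contradiction while leaving inclusion-minimality and the $\subseteq$ case of $k$-importance implicit.
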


\begin{proof}
Clearly $S$ separates $u$ from $D$ (since it separates $v$ from $D$). Further, if some $S'$ of size at most $k$ separates $u$ from $D$ and $R_{G \setminus S}(u) \subset R_{G \setminus S'}(u)$ then  $R_{G \setminus S}(v) = R_{G \setminus S}(u) \subset R_{G \setminus S'}(u) = R_{G \setminus S'}(v)$ contradicts that $S$ is a $k$-important $\{v\}-D$-separator.
\end{proof}

Let $G$ be a graph, $D \subset V(G)$ be a vertex set, and $k$ be an integer. We define the $k$-{\em important separator extension} of $D$ to be the set $A$ of all vertices $u$ that satisfy at least one of the following two conditions. 
{\em (i)} There exists a $v \neq u$ and a $\{v\}$-$D$ $k$-important separator $S$ that contains $u$.
{\em (ii)} $\{u\}$ is the unique $\{u\}$-$D$ separator of size at most $k$. The definition of the $k$-important separator extension of $D$ immediately implies that it is isomorphism invariant in $(G, D)$, we state this as a lemma without proof. 
\begin{lemma}\label{lem:extensionIsIsoInvariant} The $k$-important separator extension $A$ of $D$ is isomorphism invariant in $(G, D)$.
\end{lemma}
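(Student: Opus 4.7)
The plan is to unfold the definitions and observe that each building block in the definition of the $k$-important separator extension is itself a graph-theoretic property that isomorphisms preserve; there is essentially no content beyond this observation, which matches the paper's framing that the lemma is immediate. Concretely, let $\phi\colon (G,D) \to (G',D')$ be an isomorphism of colored graphs, and let $A$, $A'$ denote the $k$-important separator extensions of $D$ in $G$ and of $D'$ in $G'$, respectively. I want to show $\phi(A)=A'$.

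First I would verify the standard fact that $\phi$ preserves separations and their orders: for every vertex subset $S\subseteq V(G)$ and every vertex $x\in V(G)\setminus S$, one has $\phi(R_{G\setminus S}(x))=R_{G'\setminus \phi(S)}(\phi(x))$, because reachability in the graph is defined purely from the edge relation. In particular, $S$ is a $\{v\}$-$D$ separator in $G$ of size at most $k$ if and only if $\phi(S)$ is a $\{\phi(v)\}$-$D'$ separator in $G'$ of size at most $k$, and the minimality and ``importance'' conditions from the definitions preceding Lemma~\ref{lem:impSepBasisChange} are stated purely in terms of sizes and inclusions of such reachability sets. Consequently $\phi$ induces a bijection between $k$-important $\{v\}$-$D$ separators in $G$ and $k$-important $\{\phi(v)\}$-$D'$ separators in $G'$.

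With this bijection in hand, I would treat the two defining conditions separately. For condition~(i), a vertex $u\in V(G)$ lies in some $k$-important $\{v\}$-$D$ separator $S$ with $v\neq u$ if and only if $\phi(u)\in \phi(S)$ and $\phi(S)$ is a $k$-important $\{\phi(v)\}$-$D'$ separator with $\phi(v)\neq \phi(u)$; this is a direct consequence of the bijection above together with injectivity of $\phi$. For condition~(ii), that $\{u\}$ is the unique $\{u\}$-$D$ separator of size at most $k$ translates verbatim under $\phi$ to $\{\phi(u)\}$ being the unique $\{\phi(u)\}$-$D'$ separator of size at most $k$, again since separators of bounded size in $G$ correspond bijectively with separators of bounded size in $G'$ under $\phi$.

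Combining the two, $u\in A$ if and only if $\phi(u)\in A'$, so $\phi(A)=A'$, and the lemma is proved. There is no real obstacle: the only conceptual care is in making sure we use the ``important'' and ``$k$-important'' definitions symmetrically in both graphs when transferring properties across $\phi$, and that the ``$v\neq u$'' clause and the unique-separator clause are handled in the right order; beyond that the argument is purely formal.
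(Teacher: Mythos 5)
Your argument is correct and is exactly the routine unfolding of definitions that the paper has in mind: the paper states this lemma without proof, remarking only that it follows immediately from the definition, and your verification that isomorphisms preserve reachability sets, separator sizes, importance, and hence both defining conditions is precisely that omitted check.
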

Next we observe that the definition of the the $k$-important separator extension of $D$, together with Lemmata~\ref{lem:imp-seps} and~\ref{lem:f-f} gives rise to an efficient algorithm to compute it. 
We note that Lemmata~\ref{lem:impSepExtensionAlgorithm} and~\ref{lem:extSmallComponentBoundary} are essentially re-statements of Lemmata 13 and 19 of~\cite{CyganLPPS19} in our terminology. We include their proofs for completeness. 

\begin{lemma}\label{lem:impSepExtensionAlgorithm}
There exists an algorithm that takes as input a graph $G$, vertex set $D$ and integer $k$ and computes the $k$-important separator extension of $D$ in time $\cO(16^kn^{\cO(1)})$.
\end{lemma}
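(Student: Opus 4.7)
The plan is to implement the definition of the $k$-important separator extension directly, handling conditions (i) and (ii) separately, invoking Lemma~\ref{lem:imp-seps} and Lemma~\ref{lem:f-f} as the main primitives.

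For condition (i), iterate over every vertex $v\in V(G)$. For fixed $v$, apply Lemma~\ref{lem:imp-seps} to enumerate, in time $\cO(4^k k(n+m))$, the list $\mathcal{L}_v$ of all at most $4^k$ important $\{v\}$-$D$ separators of size at most $k$. For each $W\in \mathcal{L}_v$, compute the reachable region $R_W \coloneqq R_{G\setminus W}(\{v\}\setminus W)$ by a single BFS. By the definition of $k$-importance, the $k$-important separators among $\mathcal{L}_v$ are exactly those $W\in \mathcal{L}_v$ for which no $W'\in \mathcal{L}_v$ satisfies $R_W\subsetneq R_{W'}$. Since $|\mathcal{L}_v|\leq 4^k$, this pairwise filtering requires $\cO(16^k \cdot n)$ time per vertex. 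For every surviving $k$-important $W$, add each $u\in W\setminus\{v\}$ to the output set $A$.

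For condition (ii), for each $u\in V(G)$ we must decide whether $\{u\}$ is the unique $\{u\}$-$D$ separator of size at most $k$. Since $\{u\}$ itself is always such a separator, and since every inclusion-wise minimal $\{u\}$-$D$ separator either equals $\{u\}$ or avoids $u$, uniqueness is equivalent to the nonexistence of a $\{u\}$-$D$ separation of order at most $k$ whose separator does not contain $u$. This is a standard vertex-connectivity query that can be answered in time $\cO(k(n+m))$ using Lemma~\ref{lem:f-f} applied to $G-u$ with appropriately derived endpoint sets coming from $N_G(u)$ and $D$. If the test certifies uniqueness, then $u$ is added to $A$.

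Summing across both phases, the total time is $\cO(n \cdot 16^k \cdot n^{\cO(1)}) = \cO(16^k n^{\cO(1)})$ as required, and the output coincides with the $k$-important separator extension of $D$; its isomorphism invariance is immediate from Lemma~\ref{lem:extensionIsIsoInvariant}. The main technical point requiring care is the correctness of the $k$-importance filtering: the definition compares a candidate $W$ to all other \emph{important} separators of size at most $k$, and by Lemma~\ref{lem:imp-seps} all such comparators are already enumerated in $\mathcal{L}_v$, so local pairwise domination inside $\mathcal{L}_v$ suffices. A secondary concern is that for condition (i) a vertex $u$ might only be witnessed from some specific $v$ we could miss, but since we iterate $v$ over all of $V(G)$ (with Lemma~\ref{lem:impSepBasisChange} ensuring that every $k$-important separator containing $u$ is re-found at any $v$ in the corresponding reachable region), no such separator can escape the enumeration.
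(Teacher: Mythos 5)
Your proposal is correct and follows essentially the same route as the paper: for each vertex $v$, enumerate all important $\{v\}$-$D$ separators of size at most $k$ via Lemma~\ref{lem:imp-seps}, filter out the $k$-important ones by pairwise comparison of reachability sets within that list (of size at most $4^k$, hence the $16^k$ factor), and collect their vertices. The only immaterial deviation is in condition (ii), where the paper simply checks whether $\{v\}$ is the unique important separator in the already-enumerated list rather than issuing a separate connectivity query on $G-u$; both tests are equivalent.
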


We remark that both the dependence on $k$ and on $n$ in the running time of Lemma~\ref{lem:impSepExtensionAlgorithm} can be substantially improved, but we opt for a simpler and slower algorithm.

\begin{proof} 
Initialize $A = \emptyset$. For every vertex $v$ we can list the set of all important $\{v\}-D$ separators of size at most $k$ in time $\cO(16^kk(n+m))$ using Lemma~\ref{lem:imp-seps}.

If $\{v\}$ is the unique such separator then add $v$ to $A$. 
For every  $\{v\}-D$ important separator $S$ of size at most $k$, iterate over all other $\{v\}-D$ important separator $S'$ of size at most $k$, and check whether $R_{G\setminus S}(X \setminus S) \subseteq R_{G\setminus S'}(X \setminus S')$. If no such $S'$ exists then $S$ is a $k$-important $\{v\}-D$ separator, we add all vertices in $S$ to $A$. Finally output $A$ as the $k$-important separator extension of $D$.

Since the algorithm explicitly checks for every vertex $v$ whether it satisfies condition (i) or (ii) of $k$-important separator extensions and inserts $v$ into $A$ if it satisfies at least one of these, correctness follows. For each vertex $v \in V(G)$ the algorithm spends time at most $\cO(16^kn^{\cO(1)})$ (looping over every pair of important separators in a list of size at most $4^k$, spending polynomial time per pair), yielding the claimed running time bound. 
\end{proof}

\begin{lemma}\label{lem:extSmallComponentBoundary}
Let $A$ be the $k$-important separator extension of $D$. Then, for every connected component $C$ of $G \setminus A$ we have $|N(C)| \leq k^24^k$.
\end{lemma}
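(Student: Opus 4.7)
The plan is to exploit the fact that all vertices of $C$ share the \emph{same} family of important separators towards $D$, and then argue that $N(C)$ is essentially covered by the vertices appearing in this common family.

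First, I will establish a uniformity property: for every $v_1, v_2 \in C$, the collection $\mathcal{S}(v)$ of $k$-important $\{v\}$-$D$ separators is the same for $v = v_1$ and $v = v_2$. To see this, take any $S \in \mathcal{S}(v_1)$. Every $x \in S$ is distinct from $v_1$ (since $v_1 \in C \subseteq V(G) \setminus A$ while $x$ sits in an important separator so $x \in A$), hence $x$ satisfies condition~(i) with witness $v_1$ and we obtain $S \subseteq A$. Because $C$ is connected in $G - A$ and $A \supseteq S$, the whole component $C$ is connected in $G - S$ and so lies in $R_{G \setminus S}(v_1)$. Now $v_2 \in R_{G \setminus S}(v_1)$, $S \neq \{v_1\}$, and Lemma~\ref{lem:impSepBasisChange} yield $S \in \mathcal{S}(v_2)$; symmetry gives the equality. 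Let $\mathcal{S}$ denote this common family. By Lemma~\ref{lem:imp-seps} we have $|\mathcal{S}| \leq 4^k$, and since each $S \in \mathcal{S}$ has $|S| \leq k$, the union $T := \bigcup \mathcal{S}$ satisfies $|T| \leq k \cdot 4^k$.

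Next I will show that $N(C)$ is contained in a ``blow-up'' of $T$ of size at most $k|T|$. Fix $u \in N(C)$ together with a neighbor $w \in C$. Because $u \in A$, $u$ satisfies condition~(i) or~(ii). In case~(i), let $S'$ be a $k$-important $\{v'\}$-$D$ separator containing $u$, and write $(X,Y)$ for the associated separation with $v' \in X$, $D \subseteq Y$, $X \cap Y = S'$. Since $w \in C$ and $S' \subseteq A$, we have $w \notin S'$, so $w$ lies strictly on one side. In the easy sub-case $w \in X$, the connectivity of $C$ in $G - S'$ forces $C \subseteq R_{G\setminus S'}(v')$, and Lemma~\ref{lem:impSepBasisChange} gives $S' \in \mathcal{S}$, hence $u \in T$ directly. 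In the hard sub-case $w \in Y$, the component $C$ sits on the $D$-side of $S'$; one argues via Menger-type path exchange that $u$ must nevertheless lie in some separator $S'' \in \mathcal{S}$ (obtained by ``pushing'' towards the $w$-side) or be a neighbour of a vertex of such an $S''$. Case~(ii) is handled analogously: the unique $\{u\}$-$D$ separator $\{u\}$ forces a cut-vertex structure through which $u$ can be placed in or adjacent to a separator of $\mathcal{S}$ containing $w$'s minimum separator towards $D$.

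Summing up, each $u \in N(C)$ is charged to some separator $S \in \mathcal{S}$, and each such $S$ receives at most $k$ charges (bounded by $|S| \leq k$). Therefore $|N(C)| \leq k \cdot |\mathcal{S}| \cdot k \leq k^2 \cdot 4^k$, as required. The main obstacle is the delicate sub-case where the witness separator $S'$ places $C$ on the $D$-side: here $u$ need not belong to any separator of $\mathcal{S}$, and a careful pushing argument combined with Lemma~\ref{lem:impSepBasisChange} is needed to still associate $u$ with $T$ up to one step; this is exactly the source of the extra factor $k$ in the final bound compared with the naive estimate $|T| \leq k \cdot 4^k$.
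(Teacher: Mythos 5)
Your opening observation is sound and matches the paper's starting point: for $c\in C$, any $k$-important $\{c\}$-$D$ separator $S$ is contained in $A$ (each of its vertices satisfies condition~(i) with witness $c$), hence $C$ is untouched by $S$ and, via Lemma~\ref{lem:impSepBasisChange}, $S$ is $k$-important for every vertex of $C$. The ``easy sub-case'' of your second step is also fine. But the entire content of the lemma sits in what you call the hard sub-case, and there you have not given a proof: ``one argues via Menger-type path exchange that $u$ must nevertheless lie in some separator $S''\in\mathcal{S}$ \ldots or be a neighbour of a vertex of such an $S''$'' is an assertion, not an argument. Worse, the fallback conclusion would not suffice even if true: a vertex of $T$ can have arbitrarily many neighbours, so ``$u$ is adjacent to a vertex of $T$'' bounds nothing, and your final charging is arithmetically inconsistent (if each $u\in N(C)$ is charged to some $S\in\mathcal{S}$ and each $S$ receives at most $k$ charges, you get $k\cdot 4^k$, not $k^2 4^k$; the extra factor of $k$ appears from nowhere). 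Your treatment of condition~(ii) is also off: for $v\in N(C)\setminus S$ one has $v\in R_{G\setminus S}(c)$, so $S$ itself is a $\{v\}$-$D$ separator of size at most $k$ not equal to $\{v\}$; hence condition~(ii) simply cannot be the reason $v\in A$, and no ``cut-vertex structure'' argument is needed.

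The paper closes the gap you left open as follows. Fix one $S\in\mathcal{S}(c)$ and prove: every $v\in N(C)\setminus S$ lies in a $k$-important $\{u\}$-$D$ separator for some $u\in S$. Since condition~(ii) is excluded, $v$ lies in a $k$-important $\{w\}$-$D$ separator $S'$ with $w\neq v$; minimality of $S'$ gives a neighbour $v'$ of $v$ in $R_{G\setminus S'}(w)$, and by Lemma~\ref{lem:impSepBasisChange} $S'$ is $k$-important for $v'$. If $v'\in S$ we are done. Otherwise $v'\in R_{G\setminus S}(c)$, so $S$ is a $\{v'\}$-$D$ separator of size at most $k$; maximality of the $k$-important separator $S'$ then forces $R_{G\setminus S'}(v')\not\subseteq R_{G\setminus S}(v')$, so some path from $v'$ in $G-S'$ leaves $R_{G\setminus S}(v')$, and its first vertex outside that set is a vertex $u\in S$ lying in the same component of $G-S'$ as $v'$; Lemma~\ref{lem:impSepBasisChange} makes $S'$ a $k$-important $\{u\}$-$D$ separator, as required. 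Note that this argument is uniform --- there is no case split on which side of $S'$ the component $C$ lies. The count is then over pairs $(u,S')$ with $u\in S$ ($\leq k$ choices) and $S'$ a $k$-important $\{u\}$-$D$ separator ($\leq 4^k$ by Lemma~\ref{lem:imp-seps}, each of size $\leq k$), which is the source of the $k^2 4^k$ bound. You would need to supply this (or an equivalent) maximality-of-reach argument for your proof to go through.
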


\begin{proof}
Let $c$ be a vertex in $C$. Since $c \notin A$ there exists at least one $k$-important $c$-$D$ separator $S$ distinct from $\{c\}$. We have that $S \subseteq A$. We now prove the following claim: for every $v \in N(C) \setminus S$ there exists a vertex $u \in S$ such that $v$ is contained in a $k$-important $\{u\}$-$D$-separator. 

Since $v \in A$ and $S$ is a separator of size at most $k$ that separates $v$ from $D$ it follows that there exists a $k$-important $w$-$D$-separator $S'$ such that $w \neq v$ and $v \in S'$. 
Let $C'$ be the component of $G \setminus S'$ that contains $w$. Since $S'$ is an inclusion wise minimal $w-D$ separator, $C'$ contains a neighbor $v'$ of $v$. By Lemma~\ref{lem:impSepBasisChange}, $S'$ is a $k$-important $\{v'\}$-$D$-separator. Thus, if $v' \in S$ the statement of the claim holds with $u = v'$. Thus, suppose $v' \notin S$.
Since  $c$ can reach $v$ in $G \setminus S$ and $v$ and $v'$ are neighbors, it follows that
$v' \in R_{G \setminus S}(c)$. Since $S'$ is a $k$-important $\{v'\}$-$D$-separator and $|S| \leq k$ it follows that $R_{G\setminus S'}(v') \nsubseteq R_{G\setminus S}(v')$. Thus, since $R_{G\setminus S'}(v')$ is connected there exists a path $P$ disjoint from $S'$ that starts in $v'$ and ends outside of $R_{G\setminus S}(v')$. Let $u$ be the first vertex on $P$ outside of $R_{G\setminus S}(v')$, we have that $u \in S$. Finally, since $u$ and $v'$ are in the same component of $G \setminus S'$ it follows that
$S'$ is a $k$-important $\{u\}$-$D$-separator, as claimed. 

Having proved the claim the statement of the Lemma is immediate, as $|S| \leq k$, and for every $u \in S$ there are at most $k \cdot 4^k$ vertices that are contained in at least one $k$-important $\{u\}$-$D$-separator. 
\end{proof}

\subsubsection{Unbreakablity-Tied Sets}

Let $D$ and $A$ be vertex sets, $t$ be a positive integer and $q : \mathbb{N} \rightarrow \mathbb{N}$ or $q : [t]  \rightarrow \mathbb{N}$ be a
function. We will say that $A$ is $q$-unbreakability-tied to $D$ if every separation $(L, R)$ of order $i$, where $i$ is in the domain of $q$, such that $D \subseteq R$, satisfies $|L \cap A| \leq q(i)$.
If $V(G)$ is  $q$-unbreakability-tied to $D$ we will say that $G$ is $q$-unbreakability-tied to $D$. 
The intuition is that if $A$ is unbreakability-tied to $D$ then every separation of small order that splits $A$ in two large parts also has to split $D$ in two large parts. Thus, if $D$ is unbreakable then $A$ inherits some of the unbreakability of $D$. We now formalize this intuition.

\begin{lemma}\label{lem:tiedPlusUnbreakGivesUnbreak}
Let $t_2$ be a positive integer and $q_1 : \mathbb{N} \rightarrow \mathbb{N}$ and $q_2 : [t_2] \rightarrow \mathbb{N}$ be non-decreasing functions such that $1 + q_1(1) \leq t_2$.
If $G$ is a graph, $D$ is a $q_1$-unbreakable vertex set in $G$, and $A$ is a vertex set that is $q_2$-unbreakability tied to $D$, then $A$ is $q_3$-unbreakable, where  $q_3 : [t_3] \rightarrow \mathbb{N}$ is defined as $q_3(i) = q_2(i + q_1(i))$ and $t_3$ is the largest integer $j$ such that $j + q_1(j) \in [t_2]$.
\end{lemma}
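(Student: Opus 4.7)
The plan is to prove the contrapositive: assume $A$ is $(q_3(i), i)$-breakable for some $i \in [t_3]$, and derive that $D$ must already be $(q_1(i), i)$-breakable. Concretely, I would start by taking a separation $(L, R)$ in $G$ of order at most $i \le t_3$ with $|L \cap A|, |R \cap A| > q_3(i) = q_2(i + q_1(i))$. The only tool I have linking $A$ to $D$ is the unbreakability-tiedness hypothesis, which applies to separations whose ``large'' side is $D$-free. So the first task is to manufacture such a separation from $(L,R)$ while keeping $L \cap A$ (or $R \cap A$) essentially intact.

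The key construction is to enlarge the separator by pushing one side's copy of $D$ into it. Since $D$ is $q_1$-unbreakable and $i \in [t_3] \subseteq \mathbb{N}$ where $q_1$ is defined, one of the two sides, say $L$, satisfies $|L \cap D| \le q_1(i)$ (otherwise $(L,R)$ already witnesses that $D$ is $q_1$-breakable). I then define a new pair $L' = L$ and $R' = R \cup (L \cap D)$. A short calculation shows that $L' \setminus R' \subseteq L \setminus R$ and $R' \setminus L' = R \setminus L$, so $(L', R')$ is still a separation of $G$; moreover $L' \cap R' = (L \cap R) \cup (L \cap D)$ has size at most $i + q_1(i)$, and $D \subseteq R'$ by construction. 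Since $i + q_1(i) \le t_2$ by the choice of $t_3$, the index $i + q_1(i)$ is in the domain of $q_2$, so the tiedness hypothesis applies and yields $|L' \cap A| \le q_2(i + q_1(i)) = q_3(i)$. But $L' = L$, contradicting $|L \cap A| > q_3(i)$.

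By symmetry the argument works equally if it is $R \cap D$ rather than $L \cap D$ that is small, giving the reverse separation with $D \subseteq L'$ and bounding $|R' \cap A|$ instead. The monotonicity of $q_1$ and $q_2$ is used only to ensure $q_1(i) \le q_1(i + q_1(i))$-style inequalities line up correctly and that the domain condition $i + q_1(i) \in [t_2]$ propagates; the bound $1 + q_1(1) \le t_2$ guarantees $t_3 \ge 1$ so that the statement is not vacuous. I do not expect any real obstacle here, the proof is essentially a one-page separation-manipulation argument; the only thing to double-check carefully is that the modified pair $(L', R')$ is indeed a separation (i.e.\ no edges jump across the new partition), which follows from the fact that every vertex we move from $L$ to $R'$ is already placed outside $L' \setminus R'$.
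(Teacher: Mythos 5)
Your proposal is correct and matches the paper's proof essentially verbatim: the paper also takes a separation $(L,R)$ of order $i$, uses $q_1$-unbreakability of $D$ to assume WLOG $|L\cap D|\le q_1(i)$, forms $R'=R\cup D$ (which equals your $R\cup(L\cap D)$ since $D\setminus L\subseteq R$), checks that $(L,R')$ is a separation of order at most $i+q_1(i)\le t_2$ with $D\subseteq R'$, and applies tiedness to get $|L\cap A|\le q_2(i+q_1(i))=q_3(i)$. The only cosmetic difference is that you phrase it as a contrapositive/contradiction while the paper argues directly.
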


\begin{proof}
Let $(L, R)$ be a separation of order $i \leq t_3$. Since $D$ is $q_1$-unbreakable, either $|L \cap D| \leq q_1(i)$ or $|R \cap D| \leq q_1(i)$. Without loss of generality we have $|L \cap D| \leq q_1(i)$. 
Let $R' = R \cup D$, we have that $(L, R')$ is a separation of order $i + q_1(i) \leq t_2$ and $D \subseteq R'$. Thus, since $A$ is $q_2$-unbreakability tied to $D$ it follows that $|A \cap L| \leq q_2(i + q_1(i)) = q_3(i)$ as claimed. 
\end{proof}

\begin{lemma}\label{lem:extensionIsTied}
Let $A$ be the $k$-important separator extension of $D$ and $q : [k] \rightarrow \mathbb{N}$ be $q(i) = i \cdot (12^i + 1)$. Then $A$ is $q$-unbreakability tied to $D$.
\end{lemma}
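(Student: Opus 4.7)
The plan is to prove the stated bound by fixing an arbitrary separation $(L, R)$ of order $i \le k$ with $D \subseteq R$, setting $S_0 = L \cap R$, and showing $|L \cap A| \le i(12^i + 1)$. The decomposition $L \cap A = (S_0 \cap A) \cup ((L\setminus R) \cap A)$ gives at most $i$ vertices from $S_0 \cap A$ for free, so the actual work is to bound $|(L \setminus R) \cap A|$ by $i \cdot 12^i$.

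First I would argue that every $u \in (L \setminus R) \cap A$ must satisfy condition (i) of the $k$-important separator extension. Indeed, $S_0$ is a $\{u\}$-$D$ separator of size $i \le k$ that does not contain $u$, so $\{u\}$ cannot be the unique $\{u\}$-$D$ separator of size at most $k$; thus condition (ii) is ruled out. Hence there exist $v \neq u$ and a $k$-important $\{v\}$-$D$ separator $S$ containing $u$. By the minimality of $S$, $u$ has a neighbor $v'$ in $R_{G\setminus S}(v)$, and Lemma~\ref{lem:impSepBasisChange} lets me replace $v$ by $v'$. Consequently I may assume that $v$ is a neighbor of $u$, and in particular $v \in L$ because $u \in L \setminus R$ has no neighbors in $R \setminus L$.

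The key reduction step is to pass from the possibly large $k$-important separator $S$ to an important separator of size at most $i$ in a contracted graph $G'$ obtained from $G$ by identifying all of $R$ into a single vertex $d^\star$. Since $(L,R)$ is a separation with $D \subseteq R$, the image of $S_0$ in $G'$ is a $\{v\}$-$\{d^\star\}$ separator of size at most $i$ for every $v \in L \setminus R$. I would then show that every $u \in (L\setminus R) \cap A$ lies on the boundary of some important $\{v\}$-$\{d^\star\}$ separator in $G'$ of size at most $i$. Concretely, starting from $S$ and $C := R_{G\setminus S}(v)$ in $G$, I take $C \cap (L\setminus R)$, push it towards $v$ in $G'$ to make its $G'$-neighborhood an important separator, and verify by the minimality of $S$ that $u$ remains on its boundary (this uses that $u$ has a neighbor in $C \cap (L \setminus R)$, guaranteed by the choice $v\in L$).

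Once every $u \in (L\setminus R) \cap A$ is associated with an important separator in $G'$ of size at most $i$, counting proceeds as follows. By Lemma~\ref{lem:imp-seps}, the number of important separators of size at most $i$ with a fixed root is at most $4^i$, and each contains at most $i$ vertices. To avoid summing over the possibly huge set of roots $v \in L \setminus R$, I would group important separators by their intersection pattern with $S_0$: each of the $i$ vertices of $S_0$ is either in the separator, in the root-side reachable set, or in the $d^\star$-side, giving a factor of $3^i$ distinct configurations per important separator and $3^i \cdot 4^i = 12^i$ distinct boundary sets intersecting $L \setminus R$. Multiplying by the at most $i$ vertices contributed per boundary gives the desired bound $i \cdot 12^i$. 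The main obstacle will be executing the reduction step cleanly — verifying that the separator produced in $G'$ really is important (not merely a separator), has size at most $i$, and contains $u$ — which requires careful exploitation of the minimality of $S$ and the fact that $S_0$ already provides a smaller $\{v\}$-$D$ separator in $G$.
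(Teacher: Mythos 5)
Your overall architecture matches the paper's: the same split $|L\cap A|\le |S_0\cap A|+|(L\setminus R)\cap A|$, the same first reduction (condition (ii) is ruled out by $S_0$, then Lemma~\ref{lem:impSepBasisChange} moves the root $v$ to a neighbor of $u$, forcing $v\in L$), and the same final count $3^i\cdot 4^i\cdot i$ via partitions of $S_0$. The problem is the middle step, which is where all the content of the lemma lives, and your sketch of it has two genuine gaps.

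First, the size bound. You need each $u\in(L\setminus R)\cap A$ to lie in an \emph{important separator of size at most $i$}; otherwise Lemma~\ref{lem:imp-seps} only gives you $4^k$ candidates per partition and the resulting bound depends on $k$, contradicting the statement that $q(i)=i(12^i+1)$ depends on $i$ alone. The $k$-important $\{v\}$-$D$ separator $X$ containing $u$ can have size up to $k$, and the object you build from it (the $G'$-neighborhood of $C\cap(L\setminus R)$) inherits no size bound from the mere existence of the small separator $S_0$. The paper closes this by proving $|X\cap L|\le i$: it compares $X$ with $X'=(X\setminus X_L)\cup S_N$ (where $S_N$ is the part of $S_0$ not reachable from $v$ in $G\setminus X$), observes $R_{G\setminus X}(v)\subseteq R_{G\setminus X'}(v)$, and invokes the importance of $X$ to get $|X_L|\le|S_N|$. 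Your proposal never performs this exchange argument, and without it the $4^i$ factor is unjustified.

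Second, the ``push to make it important'' step is not a safe way to produce an important separator \emph{containing $u$}. Making a separator important means enlarging the source-side reachable set; doing so moves the boundary and can absorb $u$ into the reachable set or drop it from the separator entirely, and nothing in your sketch explains why minimality of $S$ prevents this. The paper avoids pushing altogether: it proves by contradiction that $X_L=X\setminus R$ is \emph{already} an important $S_Y$--$S_N$ separator in $G[L]\setminus S_X$, because any improvement $X_L'$ could be spliced back into $X'=X_R\cup S_X\cup X_L'$ to contradict the importance of $X$ as a $\{v\}$--$D$ separator. Relatedly, your counting needs more than ``group by intersection pattern with $S_0$'': two different roots can yield distinct separators with the same pattern, so the $3^i\cdot 4^i$ bound is only valid once you have shown that the separator containing $u$ is an important separator \emph{between two fixed sets determined by the partition of $S_0$} (the paper's $S_Y$ and $S_N$), which is precisely the claim your sketch defers. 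These are fixable with the paper's exchange arguments, but as written the proof does not go through.
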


\begin{proof}
Let $(L, R)$ be a separation of order $i \leq k$ such that $D \subseteq R$. 
Let $S = L \cap R$.
For every vertex $u \in A \cap (L \setminus R)$ we have that $S$ is a $\{u\}$-$D$-separator of size at most $k$, and $u \notin S$. It follows that there exists a $k$-important $\{u\}-D$ separator distinct from $\{u\}$, and therefore, since $u \in A$ there must exist a $v \neq u$ and a $k$-important $\{v\}$-$D$-separator $X$ such that $u \in X$. Since $X$ is a minimal $v$-$D$ separator, $R_{G \setminus X}(v)$ contains a neighbor $v'$ of $u$, and $X$ is also a $k$-important $\{v'\}$-$D$-separator by Lemma~\ref{lem:impSepBasisChange}. Thus, without loss of generality we may assume that $v \in N(u)$. Since $u \in L \setminus R$ it follows that $v \in L$. Define $S_X = S \cap X$, $S_Y = S \cap R_{G \setminus X}(v)$, $S_N = S \setminus (S_X \cup S_Y)$ and $X_L = X \setminus R$ and $X_R = X \setminus L$.

We claim that $|X \cap L| \leq i$. Since $S_N$, $S_X$ and $S_Y$ form a partition of $S$ we have that $|S_N| + |S_X| \leq i$. Observe now that $X' = (X \setminus X_L) \cup S_N$ is a $\{v\}$-$D$ separator, and that $R_{G \setminus X}(v) \subseteq R_{G \setminus X'}(v)$. Since $X$ is a $k$-important $\{v\}$-$D$ separator we have that $|X| \leq k \leq |X'|$ and therefore $|X_L| \leq |S_N|$. Putting the two inequalities involving $|S_N|$ together we obtain that
$$|X \cap L| = |X_L| + |S_X| \leq |S_N| + |S_X| \leq i$$
This proves $|X \cap L| \leq i$ as claimed. 

We now claim that $X_L$ is an important $S_Y$-$S_N$ separator in $G[L \setminus S_X]$. First, it is a $S_Y$-$S_N$ separator, because otherwise there is a path from $S_Y$ to $S_N$ in $G-X$ contradicting that $S_Y \subseteq R_{G \setminus X}(v)$ and $S_N \cap R_{G \setminus X}(v) = \emptyset$.
Second, $X_L$ is important: suppose for contradiction that there exists another $S_Y$-$S_N$ separator in $G[L \setminus S_X]$, say $X_L'$, such that $|X_L'| \leq |X_L|$ and $R_{G[L] \setminus (S_X \cup X_L)}(S_Y) \subset R_{G[L] \setminus (S_X \cup X_L')}(S_Y \setminus X_L')$. Consider the set $X' = X_R \cup S_X \cup X_L'$. We have that $|X'| = |X_R| + |S_X| + |X_L'| \leq |X_R| + |S_X| + |X_L| = |X|$. We claim that $X' = X_R \cup S_X \cup X_L'$ is a $\{v\}-D$ separator, and that $R_{G \setminus X}(v) \subset R_{G \setminus X'}(v)$.
To see that $R_{G \setminus X}(v) \subset R_{G \setminus X'}(v)$ observe that
\begin{align*}
R_{G-X}(v) & = R_{G \setminus X}(S_Y) \\
& = R_{G[L] \setminus (X_L \cup S_X)}(S_Y) \cup R_{G[R] \setminus (X_R \cup S_X)}(S_Y) \\
& \subset R_{G[L] \setminus (X_L' \cup S_X)}(S_Y) \cup R_{G[R] \setminus (X_R \cup S_X)}(S_Y) \\
& = R_{G \setminus X'}(S_Y)
\end{align*}
Since $\{v\} \cup S_Y \subseteq R_{G-X}(v) \subseteq R_{G-X'}(S_Y)$ all vertices in $\{v\} \cup S_Y$ are in the same component of $G-X'$. Thus $R_{G-X'}(S_Y) = R_{G-X'}(v)$. We conclude that $R_{G \setminus X}(v) \subset R_{G \setminus X'}(v)$.
Furthermore, $$R_{G \setminus X'}(v) = R_{G[L] \setminus (X_L' \cup S_X)}(S_Y) \cup R_{G[R] \setminus (X_R \cup S_X)}(S_Y).$$ We have that 
$R_{G[L] \setminus (X_L' \cup S_X)}(S_Y) \cap D = \emptyset$, 
because $D \cap L \subseteq S_N$ and $X_L'$ is a $S_Y$-$S_N$ separator in $G \setminus S_X$. Finally, 
$R_{G[R] \setminus (X_R \cup S_X)}(S_Y) \cap D = \emptyset$
because $X$ is a $\{v\}$-$D$-separator. Thus $X'$ is also a $\{v\}-D$ separator, as claimed. 
But now $X'$ is a $\{v\}-D$ separator of size at most $|X|$ such that $R_{G \setminus X}(v) \subset R_{G \setminus X'}(v)$, contradicting that $X$ is an important $\{v\}$-$D$ separator. Hence we can conclude that $X_L$ is an important $S_Y$-$S_N$ separator in $G[L \setminus S_X]$. 

We now have that every vertex $u \in A \setminus R$ is contained in an important $S_Y$-$S_N$ separators of size at most $i$ in $G[L] - \setminus S_X$ for some partition $S_Y$, $S_N$, $S_X$ of $S$. 
However, there are at most $3^i$ partitions of $S$ into $S_Y$, $S_N$, and $S_X$, and at most $4^i$ important $S_Y$-$S_N$ separators of size at most $i$ (by Lemma~\ref{lem:imp-seps}) in $G[L] \setminus S_X$ for each such partition. Therefore  $|A \setminus R| \leq 3^i \cdot 4^i \cdot i + i \leq i \cdot 12^i$. We conclude that  $|A \cap L| \leq i \cdot (12^i + 1)$, as claimed. 
\end{proof}

\begin{lemma}\label{lem:concludeGstarUnbreakable}
Let $G$ be a graph, $k$ be an integer, $q : [k] \rightarrow \mathbb{N}$ and $q' : \mathbb{N} \rightarrow \mathbb{N}$ be non-decreasing functions. If $G$ has a connectivity sensitive star decomposition $(T, \chi)$ with central bag $b$ such that $\chi(b)$ is $q$-unbreakable and every leaf $\ell$ of $T$ is $q'$-unbreakability tied to $\sigma(\ell)$, then $G$ is $q^\star$-unbreakable, where $q^\star : [k] \rightarrow \mathbb{N}$ is defined as $q^\star(i) = q(i) + (2^{q(i)} + i) \cdot q'(q(i) + i)$.
\end{lemma}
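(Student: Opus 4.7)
My plan is to take an arbitrary separation $(L,R)$ of $G$ of order $i \leq k$ and show $\min(|L|,|R|) \leq q^\star(i)$. Applying $q$-unbreakability of $\chi(b)$ to $(L,R)$, I may assume without loss of generality that $|L \cap \chi(b)| \leq q(i)$; the goal then reduces to upper-bounding $|L|$ by $q^\star(i)$. Letting $\mathcal{L}$ denote the collection of leaves $\ell$ for which $L \cap (\chi(\ell) \setminus \sigma(\ell)) \neq \emptyset$, and using that the interiors $\chi(\ell) \setminus \sigma(\ell)$ partition $V(G) \setminus \chi(b)$, I write
\[ |L| \;\leq\; |L \cap \chi(b)| + \sum_{\ell \in \mathcal{L}} |L \cap (\chi(\ell) \setminus \chi(b))|, \]
and separately bound the per-leaf contribution and the size of $\mathcal{L}$.

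For the per-leaf bound, the plan is to mimic the reduction used in the proof of Lemma~\ref{lem:tiedPlusUnbreakGivesUnbreak}. For a fixed $\ell$, set $R_\ell := R \cup \sigma(\ell)$. Since $\sigma(\ell) \setminus L \subseteq R \setminus L$ (because $L \cup R = V(G)$), any edge between $L \setminus R_\ell = (L \setminus R) \setminus \sigma(\ell)$ and $R_\ell \setminus L = (R \setminus L) \cup (\sigma(\ell) \setminus L) \subseteq R \setminus L$ would already contradict the original separation $(L, R)$; hence $(L, R_\ell)$ is a valid separation of $G$ containing $\sigma(\ell)$ on its right side. Its order satisfies
\[ |L \cap R_\ell| \;=\; |(L \cap R) \cup (L \cap \sigma(\ell))| \;\leq\; i + |L \cap \sigma(\ell)| \;\leq\; i + q(i), \]
using $\sigma(\ell) \subseteq \chi(b)$. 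Applying the $q'$-unbreakability-tied-ness of $\chi(\ell)$ to $\sigma(\ell)$ then gives $|L \cap \chi(\ell)| \leq q'(i + q(i))$.

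For the bound $|\mathcal{L}| \leq 2^{q(i)} + i$, I would split the leaves of $\mathcal{L}$ according to whether $\sigma(\ell) \subseteq L$. In the first case $\sigma(\ell) \subseteq L \cap \chi(b)$, and since $(T,\chi)$ is connectivity-sensitive (so distinct leaves carry distinct adhesions), there are at most $2^{|L \cap \chi(b)|} \leq 2^{q(i)}$ such leaves. Otherwise some $v \in \sigma(\ell) \setminus L$ lies in $R \setminus L$; fixing any $u \in L \cap (\chi(\ell) \setminus \sigma(\ell))$ and letting $C$ be its connected component in $G[\chi(\ell) \setminus \sigma(\ell)]$, connectivity-sensitivity forces $N(C) = \sigma(\ell) \ni v$, so some $w \in C$ is adjacent to $v$. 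The separation property rules out $w \in L \setminus R$, and if $w \in R \setminus L$, then any $u$-$w$ path within the connected graph $C$ must transition from $L$ into $R \setminus L$ through some vertex of $L \cap R$. Either way, $C \cap L \cap R \neq \emptyset$, and since the interiors $\chi(\ell') \setminus \sigma(\ell')$ are pairwise disjoint across leaves, distinct leaves in this case contribute distinct vertices of $L \cap R$, capping their count at $i$.

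Combining these estimates gives
\[ |L| \;\leq\; q(i) + (2^{q(i)} + i) \cdot q'(i + q(i)) \;=\; q^\star(i), \]
as desired. The main subtlety lies in the second case of the leaf-count bound, where connectivity-sensitivity is essential in forcing the separator $L \cap R$ to intrude into the interior of every leaf whose adhesion is not entirely absorbed by $L$; without this property, the number of contributing leaves could not be controlled in terms of $i$ and $q(i)$ alone.
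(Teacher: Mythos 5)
Your proof is correct and follows essentially the same route as the paper's: reduce to bounding $|L|$ when $|L\cap\chi(b)|\leq q(i)$, bound each leaf's contribution by extending the separation so that $\sigma(\ell)$ lies entirely on the right side and invoking unbreakability-tied-ness, and count the contributing leaves by splitting on whether $\sigma(\ell)\subseteq L$ (at most $2^{q(i)}$ by distinctness of adhesions) or the separator must intrude into the leaf's interior (at most $i$ by disjointness of interiors). The only cosmetic difference is that you augment $R$ by $\sigma(\ell)$ where the paper augments it by $L\cap\chi(b)$; both give order at most $i+q(i)$.
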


\begin{proof}
Let $(L, R)$ be a separation of order $i$. Since $\chi(b)$ is $q$-unbreakable we have that at least one of $|\chi(b) \cap L|$ and $|\chi(b) \cap R|$ is at most $q(i)$. Without loss of generality we assume $|\chi(b) \cap L| \leq q(i)$. We now prove that $|L| \leq q^\star(i)$.

We first prove that for every leaf $\ell$, at most $q'(q(i) + i)$ vertices of $\chi(\ell)$ are in $L$. Indeed, $(L, R \cup (L \cap \chi(b))$ is a separation of order at most $i + q(i)$ such that $\sigma(\ell) \subseteq R$. But $\chi(\ell)$ is $q'$-unbreakability tied to $\sigma(\ell)$, and therefore $|\chi(\ell) \cap L| \leq q'(q(i) + i)$. 

Next we prove that at most $2^{q(i)} + i$ leaves $\ell$ have a vertex in $L \cap (\chi(\ell) \setminus \sigma(\ell))$. Since $(T, \chi)$ is connectivity sensitive there are at most $2^{q(i)}$ leaves $\ell$ of $T$ such that $\sigma(\ell) \subseteq L$. All remaining leaves $\ell$ satisfy $\sigma(\ell) \cap (R \setminus L) \neq \emptyset$. 

We claim that at most $i$ leaves have both $(\chi(\ell) \setminus \sigma(\ell)) \cap L \neq \emptyset$ and $\sigma(\ell) \cap (R \setminus L) \neq \emptyset$. 
Towards a proof of this claim let $u \in \sigma(\ell) \cap (R \setminus L)$ and  $v \in (\chi(\ell) \setminus \sigma(\ell)) \cap L$ be vertices. 
Since $T, \chi$ is connectivity sensitive there exists a $v$-$u$ path $P$ whose all vertices except for $u$ are in $(\chi(\ell) \setminus \sigma(\ell))$.
But $v \in L$ and $u \in R \setminus L$, $(L, R)$ is a separation and therefore some vertex on $P$ other than $u$ is in $L \cap R \cap (\chi(\ell) \setminus \sigma(\ell))$. 
However $|L \cap R| \leq i$ while the sets $(\chi(\ell) \setminus \sigma(\ell))$ are disjoint for distinct leaves $\ell$, and so the claim follows. Thus there are at most $2^{q(i)} + i$ leaves $\ell$ that have a vertex in $L \cap (\chi(\ell) \setminus \sigma(\ell))$. We conclude that $|L| \leq q(i) + (2^{q(i)} + i) \cdot q'(q(i) + i) = q^\star(i)$.
\end{proof}

\subsubsection{Unbreakability of Weak Representatives}

\begin{lemma}\label{lem:unbreakabilityTieSmallLeaves}
Let $k$ and $s$ be positive integers, $q : [k] \rightarrow \mathbb{N}$ be a non-decreasing function, $G$ be a graph, $D \subseteq V(G)$, and $(T, \chi)$ be a connectivity-sensitive star decomposition of $G$ with central bag $b$ such that $\chi(b)$ is $q$-unbreakability tied to $D$ and every leaf $\ell$ of $T$ satisfies $\chi(\ell) \leq s$. Then $G$ is $q'$-unbreakability tied to $D$, where $q'(i) = (2^{q(i)} + i)s$ for all $i \in [k]$.
\end{lemma}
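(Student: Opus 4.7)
\medskip

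The plan is to unpack the definition of ``$q'$-unbreakability tied to $D$'' directly: fix an arbitrary separation $(L,R)$ of $G$ of order $i \in [k]$ with $D \subseteq R$, and bound $|L \cap V(G)|$ by $q'(i) = (2^{q(i)}+i)s$ (up to an additive $q(i)$ term, which I will discuss below). The argument should closely mirror the proof of Lemma~\ref{lem:concludeGstarUnbreakable}, but with the hypothesis that $\chi(b)$ is only \emph{unbreakability-tied} (rather than unbreakable) used to control the central bag.

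First, I would decompose $L = (L \cap \chi(b)) \,\cup\, (L \setminus \chi(b))$. Because the decomposition is a star with central bag $b$, every vertex outside $\chi(b)$ belongs to $\chi(\ell)\setminus \sigma(\ell)$ for a unique leaf $\ell$; hence $L \setminus \chi(b)$ decomposes as a disjoint union $\bigsqcup_\ell L\cap(\chi(\ell)\setminus \sigma(\ell))$. The hypothesis that $\chi(b)$ is $q$-unbreakability-tied to $D$, applied to the separation $(L,R)$ (which satisfies $D\subseteq R$), immediately gives $|L\cap\chi(b)|\le q(i)$.

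Next, I would bound the number of ``active'' leaves $\ell$ — those for which $L \cap (\chi(\ell)\setminus\sigma(\ell))\neq\emptyset$ — by the same two-case analysis as in Lemma~\ref{lem:concludeGstarUnbreakable}. In the first case, $\sigma(\ell) \subseteq L$; since $(T,\chi)$ is connectivity-sensitive, distinct leaves have distinct adhesions, so the number of such leaves is at most $2^{|L \cap \chi(b)|} \leq 2^{q(i)}$. In the second case, $\sigma(\ell) \cap (R\setminus L) \neq \emptyset$; picking $u \in \sigma(\ell)\cap (R\setminus L)$ and any $v \in L \cap (\chi(\ell)\setminus \sigma(\ell))$, connectivity-sensitivity provides a $v$--$u$ path whose internal vertices lie in $\chi(\ell)\setminus \sigma(\ell)$, and the separation property forces some vertex of this path to lie in $L\cap R\cap (\chi(\ell)\setminus\sigma(\ell))$. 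Because the sets $\chi(\ell)\setminus\sigma(\ell)$ are disjoint across leaves, such leaves witness distinct vertices of $L\cap R$, so there are at most $i$ of them. Summing, the number of active leaves is at most $2^{q(i)} + i$, and each contributes at most $|\chi(\ell)\setminus\sigma(\ell)|\le |\chi(\ell)|\le s$ vertices to $L\setminus\chi(b)$, giving $|L\setminus \chi(b)|\le (2^{q(i)}+i)s$.

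Combining, $|L|\le q(i)+(2^{q(i)}+i)s$. The ``main obstacle'' here is essentially bookkeeping: the stated bound $q'(i)=(2^{q(i)}+i)s$ does not explicitly include the additive $q(i)$ from $|L\cap\chi(b)|$, so one either absorbs $q(i)$ into the prefactor (using, e.g., $q(i)\le 2^{q(i)}\le 2^{q(i)}\cdot s$ whenever $s\ge 1$, yielding a bound of the form $(2^{q(i)+1}+i)s$, which is of the same shape as $q'$) or slightly adjusts $q'$ accordingly. Beyond this accounting detail, the argument is a direct instantiation of the ``small-leaves plus controlled center'' pattern already used in Lemma~\ref{lem:concludeGstarUnbreakable}, with the only conceptual change being the substitution of unbreakability-tied-ness for unbreakability on the central bag.
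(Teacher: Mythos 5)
Your proof is correct and follows essentially the same route as the paper's: bound $|L\cap\chi(b)|\le q(i)$ via the tied hypothesis, count at most $2^{q(i)}$ leaves with $\sigma(\ell)\subseteq L$ and at most $i$ leaves whose interior meets $L$ while $\sigma(\ell)\setminus L\neq\emptyset$, and multiply by $s$. Your observation about the additive $q(i)$ term is apt --- the paper's own proof also concludes $|L|\le(2^{q(i)}+i)s$ without explicitly accounting for $L\cap\chi(b)$, and the slack is harmless for the downstream applications exactly as you note.
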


\begin{proof}
Consider a separation $(L, R)$ of $G$ of order at most $i \leq k$ such that $D \subseteq R$. We have that $|\chi(b) \cap L| \subseteq q(i)$. Since $(T, \chi)$ is connectivity sensitive, different leaves $\ell$ have different $\sigma(\ell)$, and each $\sigma(\ell)$ is a subset of $\chi(b)$. There are at most $2^{q(i)}$ leaves $\ell$ of $T$ such that $\sigma(\ell) \subseteq L$, and the sum of their sizes is at most $2^{q(i)}s$.

Every other leaf $\ell$, satisfies $\sigma(\ell) \setminus L \neq \emptyset$. We claim that at most $i$ such leaves $\ell$ can have $(\chi(\ell) \setminus \chi(b)) \cap L \neq \emptyset$. Indeed, let $u \in (\chi(\ell) \setminus \chi(b)) \cap L$ and $v \in \sigma(\ell) \setminus L$. Then there is a $u$-$v$ path $P$ contained entirely in $(\chi(\ell) \setminus \chi(b)) \cup \{v\}$. However, since $u \in L$ and $v \in R \setminus L$ it follows that some vertex $w$ on $P$ other than $v$ is in $L \cap R$. However $w \in (\chi(\ell) \setminus \chi(b))$ and the sets $(\chi(\ell) \setminus \chi(b))$ are pairwise disjoint for different leaves $\ell$. Since $|L \cap R| \leq i$ the claim follows. We conclude that $|L| \leq (2^{q(i)} + i)s$, completing the proof of the lemma.
\end{proof}

\begin{lemma}\label{lem:repsAreUnbreakabilityTiedToBoundary}
Let $(G, \iota)$ be a boundaried graph, $h$ be an integer and $(G_R, \iota_R)$ be its $h$-weak representative. 
Then $G_R$ is $q$-unbreakability tied to $\delta(G_R)$,
for  $q : \mathbb{N} \rightarrow \mathbb{N}$ defined as $q(i) = (2^{i(12^i + 1)} + 1)\eta_w(i^2 4^i, h)$
\end{lemma}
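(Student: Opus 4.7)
Fix a separation $(L,R)$ of $G_R$ of order $i$ with $D = \delta(G_R) \subseteq R$; the goal is to bound $|L|$ by $q(i)$. The plan is to build a canonical star decomposition of $G_R$ whose central bag absorbs a controlled portion of any small separation and whose leaves are forced to be small by the minimality of $G_R$ as an $h$-weak representative, and then to finish by a leaf-counting argument in the spirit of Lemma~\ref{lem:unbreakabilityTieSmallLeaves}.

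Concretely, I will let $A$ denote the $i$-important separator extension of $D$ in $G_R$, which is isomorphism invariant in $(G_R,D)$ by Lemma~\ref{lem:extensionIsIsoInvariant}. Lemma~\ref{lem:extensionIsTied} then yields $|L \cap A| \leq i(12^i+1)$, and since $D \subseteq R$ we also get $|L \cap D| \leq i$, so $|L \cap (A \cup D)| \leq i(12^i+1)+i$; Lemma~\ref{lem:extSmallComponentBoundary} furthermore gives $|N(C)| \leq i^2 4^i$ for every component $C$ of $G_R \setminus A$. I will then form the canonical connectivity-sensitive star decomposition $(T,\chi)$ of $G_R$ with central bag $\chi(b) = A \cup D$ by grouping components of $G_R - \chi(b)$ according to their neighborhood: for each $X \subseteq \chi(b)$ that arises as $N_{G_R}(C)$ for some such component, I create one leaf $\ell_X$ whose bag is $X$ together with the union of all such components.

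The heart of the proof is the claim that every leaf $\chi(\ell_X)$ has at most $\eta_w(i^2 4^i, h)$ vertices, which I will establish by contradiction against the minimality of $G_R$ as the lexicographically smallest $h$-weak representative of $(G,\iota)$. If some leaf were larger, I would replace the boundaried graph $(G_R[\chi(\ell_X)], X)$ by its $h$-weak representative from Lemma~\ref{lem:equivalentWeakGadget}; the resulting boundaried graph would have the same $h$-folio by Lemma~\ref{lem:topFolioGlueAndForget}, the same weak cut signature by Lemma~\ref{lem:sameWeakCutAfterGluing}, and---because $(T,\chi)$ is isomorphism invariant in $(G_R,D)$, so that any automorphism of $G_R$ must permute the leaves equivariantly---the same set of extendable permutations of $D$. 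With the leaf bound in hand, I finish as in Lemma~\ref{lem:unbreakabilityTieSmallLeaves}: connectivity-sensitivity of $(T,\chi)$ bounds the number of leaves fully inside $L$ by $2^{|L \cap \chi(b)|}$ and the number of leaves only partly in $L$ by $i$ via a path argument using $|L \cap R| = i$, which together with the bound on $|L \cap \chi(b)|$ yields the form of $q(i)$ claimed in the statement.

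The main obstacle will be the handling of boundary vertices in $D$ that fail to lie in $A$: in that situation $D$-vertices sit inside leaves, and the nominal adhesion size of a leaf can pick up $D$-vertices, appearing to push it past $i^2 4^i$. The resolution is that $D$-vertices carry distinguished colors inherited from $G_R$ being the boundary-consistent weak representative of $(G,\iota)$; these colors pin them down under any folio- and signature-preserving replacement, so only the $A$-part of an adhesion---of size at most $i^2 4^i$---actually enters the minimality argument. Making this precise will require a careful combined application of Lemmas~\ref{lem:topFolioGlueAndForget} and~\ref{lem:sameWeakCutAfterGluing} together with the isomorphism invariance of the decomposition.
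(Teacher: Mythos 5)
Your proposal is correct and follows essentially the same route as the paper: central bag equal to the $i$-important separator extension $A$ of $D$ (with Lemmas~\ref{lem:extensionIsTied} and~\ref{lem:extSmallComponentBoundary}), a leaf-size bound of $\eta_w(i^24^i,h)$ obtained by contradiction against the minimality of $G_R$ via weak-representative replacement, and a finish through Lemma~\ref{lem:unbreakabilityTieSmallLeaves}. The ``main obstacle'' you flag at the end does not arise, since every $v\in D$ satisfies condition (ii) of the important separator extension and hence $D\subseteq A$ automatically, so your $A\cup D$ is just $A$ and no special treatment of boundary vertices inside leaves is needed.
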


\begin{proof}
Let $D = \delta(G_R)$ and let $i$ be an arbitrary positive integer. 
Let $A$ be the $i$-important separator extension of $D$ in $G_R$. By Lemma~\ref{lem:extensionIsIsoInvariant}, $A$ is isomorphism invariant in $G_R$ and $D$, and by Lemma~\ref{lem:extensionIsTied}, $A$ is $j \cdot (12^j + 1)$-unbreakability tied to $D$ for  $j \in [i]$. By Lemma~\ref{lem:extSmallComponentBoundary}, every connected component $C$ of $G_R \setminus A$ satisfies $|N(C)| \leq i^2 4^i$. Let $(T, \chi)$ be the unique connectivity sensitive star decomposition of $G_R$ with center bag $\chi(b) = A$ (see Observation~\ref{obs:uniqueStarDec}). 

We claim that every leaf $\ell$ of $T$ satisfies $|\chi(\ell)| \leq \eta_w(i^2 4^i, h)$. Suppose not, and define for every leaf $\ell$ and permutation $\pi : \sigma(\ell) \rightarrow [|\sigma(\ell)|]$ the proper labeling $\lambda_{\pi, \ell}$ of $G[\chi(\ell)]$ to be the proper labeling that makes the properly labeled boundaried graph $(G[\chi(\ell)], \pi)$ lexicographically smallest among all compact labelings that coincide with $\pi$ on $\sigma(\ell)$.
We have that for every leaf $\ell$ of $T$ and permutation $\pi : \sigma(\ell) \rightarrow [|\sigma(\ell)|]$ that $\lambda_{\pi, \ell}$ is weakly isomorphism invariant in $(G[\chi(\ell)], \pi)$, and therefore the set $\{\lambda_{\ell,\pi}\}$ is a compact locally weakly isomorphism invariant leaf labeling of $(T, \chi)$.
Thus $G_R$, $D$, $(T, \chi)$, $\{\lambda_{\pi, \ell}\}$ of $(T, \chi)$ is a well-formed input to the unpumping operation (if the coloring of $G_R$ is not compact we color-compact the coloring of $G_R$ before unpumping). Weak unpumping with parameter $h$ produces a graph $G_R^\star$ and star decomposition $(T, \chi^\star)$.
By construction (since no edges are added or removed between vertices in $\chi(b)$, and no colors in $\chi(b)$ are changed) $(G_R, \iota_R)$ and $(G_R^\star, \iota_R)$ are boundary-consistent. By Lemma~\ref{lem:unpumpProperties} they have the same weak cut signature and $h$-folio.
By Lemma~\ref{lem:unpumpSameExtPerm} they have the same set of extendable permutations. 

It follows that $(G_R, \iota_R) \equiv (G_R^\star, \iota_R)$ where $\equiv$ is the equivalence relation defined in the proof
of Lemma~\ref{lem:equivalentWeakGadget}. Since $(G_R, \iota_R)$ is a (minimum size) representative of the equivalence relation $\equiv$ it follows that $|V(G_R)| \leq |V(G_R^\star)|$.
However, 
$$|V(G_R)| = |\chi(b)| + \sum_{\ell \in V(T) \setminus \{b\}} |\chi(\ell) \setminus \chi(b)|,$$
$$|V(G_R^\star)| = |\chi^\star(b)| + \sum_{\ell \in V(T) \setminus \{b\}} |\chi^\star(\ell) \setminus \chi^\star(b)|,$$ 
$|\chi(b)| \geq |\chi^\star(b)|$ (because unpumping leaves $\chi(b)$ unchanged), and for every leaf $\ell$ we have $|\chi(\ell) \setminus \chi(b)| \geq |\chi^\star(\ell) \setminus \chi^\star(b)|$ (by Lemma~\ref{lem:equivalentWeakGadget}).
We assumed for contradiction that there is at least one leaf $\ell$ such that $|\chi(\ell)| > \nu(i^2 4^i, h) \geq \nu(|\sigma(\ell)|, h)$. For this leaf the inequality $|\chi(\ell) \setminus \chi(b)| \geq |\chi^\star(\ell) \setminus \chi^\star(b)|$ is strict, implying that $|V(G_R^\star)| < |V(G_R)|$, contradicting that  $|V(G_R)| \leq |V(G_R^\star)|$.

We have that $(T, \chi)$ is a star decomposition with central bag $b$ such that $\chi(b)$ is 
$q$-unbreakability tied to $D$ for $q : [i] \rightarrow \mathbb{N}$ defined as $q(j) = 
j \cdot (12^j + 1)$. Furthermore every leaf $\ell$ of $T$ satisfies $|\chi(\ell)| \leq \eta_w(i^2 4^i, h)$.

Lemma~\ref{lem:unbreakabilityTieSmallLeaves} now implies that $G_R$ is $q'$-unbreakability tied to $D$ where $q'(j) = (2^{q(j)} + 1)\eta_w(i^2 4^i, h)$ for $j \in [i]$.
Because the bound holds for every arbitrarily chosen $i$, 
$G_R$ is $(2^{i(12^i + 1)} + 1)\eta_w(i^2 4^i, h)$-tied to $\delta(G_R)$, as claimed. 

\end{proof}

\subsubsection{Stable Separations}

Let $G$ be graph and let $S\subseteq V(G)$ be any subset of vertices. We will make use of the following definition from~\cite{LokshtanovPPS17}, together with one of its properties. 

\begin{definition}{\rm~\cite[Definition 12]{LokshtanovPPS17}}
Suppose that $(S_L,S_R)$ is a partition of $S$. We say that a separation $(A,B)$ of $G$ is {\em{stable for $(S_L,S_R)$}} if $(A,B)$ is a minimum-order $S_L$-$S_R$ separation. A separation $(A,B)$ is {\em{$S$-stable}} if it is stable for some partition of $S$.
\end{definition}

\begin{lemma}\label{lem:magical}{\rm~\cite[Lemma 13]{LokshtanovPPS17}}
Let $G$ be a graph, let $S\subseteq V(G)$ be any subset of vertices, and let ${\cal S}$ be any finite family of $S$-stable separations. Define
$$X:=S\cup \bigcup_{(A,B)\in {\cal S}} A\cap B.$$
Then, for every connected component of $G\setminus X$, if $Z$ is its vertex set, then $|N(Z)|\leq |S|$.
\end{lemma}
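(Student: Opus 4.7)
I would prove the lemma by induction on $|\mathcal{S}|$. The base case $|\mathcal{S}|=0$ is immediate: then $X = S$, and any connected component $Z$ of $G\setminus S$ satisfies $N(Z)\subseteq S$, so $|N(Z)|\le |S|$.

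For the inductive step, fix any $(A,B)\in\mathcal{S}$, stable for some partition $(S_L,S_R)$ of $S$, and let $C := A\cap B$, so $|C| = \conn_G(S_L,S_R)$. Set $\mathcal{S}' := \mathcal{S}\setminus\{(A,B)\}$ and $X' := S\cup\bigcup_{(A',B')\in\mathcal{S}'} A'\cap B'$, so $X = X'\cup C$. Fix a connected component $Z$ of $G\setminus X$ and write $T := N_G(Z)\subseteq X$; the goal is $|T|\le |S|$. Because $Z$ is connected and disjoint from $C$, it sits on one side of $(A,B)$; after swapping the labels if needed, assume $Z\subseteq A\setminus B$, so $T\subseteq A$. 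Let $Z^\star$ denote the connected component of $G\setminus X'$ containing $Z$. By the induction hypothesis applied to $(G,S,\mathcal{S}')$, we have $|N_G(Z^\star)|\le |S|$.

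Decompose $T = T_{\mathrm{out}}\sqcup T_{\mathrm{in}}$, where $T_{\mathrm{out}} := T\cap X'$ and $T_{\mathrm{in}} := T\setminus X' \subseteq C$. Since $Z\subseteq Z^\star$, we have $T_{\mathrm{out}}\subseteq N_G(Z^\star)$, and the crux is to compensate for the possible ``new'' vertices $T_{\mathrm{in}}\subseteq C$ that appear in $T$ but not in $N_G(Z^\star)$. The idea is to trade them against vertices of $N_G(Z^\star)$ on the $B$-side of $(A,B)$ using the minimality of $|C|$. Concretely, I would apply a submodularity/uncrossing argument to the two separations $(A,B)$ and $(V\setminus Z,\, N[Z])$: the latter has separator exactly $T$, and uncrossing it with $(A,B)$ yields a separation $(\widetilde A,\widetilde B)$ of $G$ with $S_L\subseteq \widetilde A$, $S_R\subseteq \widetilde B$, whose separator $\widetilde C$ satisfies, via the submodular inequality
\[
|A\cap B| \;+\; |(V\setminus Z)\cap N[Z]| \;\ge\; |(A\cup (V\setminus Z))\cap (B\cap N[Z])| \;+\; |(A\cap(V\setminus Z))\cap (B\cup N[Z])|,
\]
the bound $|\widetilde C|\le |C| - |T_{\mathrm{in}}| + (\text{vertices of }T\text{ on the }B\text{-side})$. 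Since $|C|$ is already minimum, this forces the ``$B$-side contribution'' to $T$ to be at least $|T_{\mathrm{in}}|$. Combined with $|T_{\mathrm{out}}|\le |N_G(Z^\star)|\le |S|$ and a careful bookkeeping that separates contributions on the $A$- and $B$-sides, one obtains $|T|\le |S|$.

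\textbf{Main obstacle.} The delicate step is the uncrossing/accounting: one must show that every vertex of $T_{\mathrm{in}}\subseteq C$ is ``paid for'' by an element of $S$ already accounted for on the side of $(A,B)$ opposite to $Z$, using only the minimality of $|C|$ as an $S_L$-$S_R$ separator. A clean way to do this is to invoke the canonical minimum $S_L$-$S_R$ separation pushed towards $S_L$ (which is unique and isomorphism-invariant, as recalled after Menger's theorem in the preliminaries), and to observe that replacing $(A,B)$ by this canonical version does not change $C$'s role in $X$ but makes the uncrossing with $(V\setminus Z,N[Z])$ transparent. The remaining stable separations in $\mathcal{S}'$ are unaffected by this substitution, so the induction hypothesis kicks in cleanly.
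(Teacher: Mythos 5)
First, note that the paper does not prove this lemma at all: it is imported verbatim from \cite[Lemma 13]{LokshtanovPPS17}, so there is no in-paper proof to compare against and your argument has to stand on its own.

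Your inductive skeleton is sound: the base case, the reduction to $X'=X\setminus(A\cap B)$ and the component $Z^\star\supseteq Z$ of $G\setminus X'$, the observation that $Z\subseteq A\setminus B$ hence $T=N(Z)\subseteq A$, the split $T=T_{\mathrm{out}}\sqcup T_{\mathrm{in}}$ with $T_{\mathrm{out}}\subseteq N(Z^\star)$, and the correct target claim that one must exhibit $|T_{\mathrm{in}}|$ vertices of $N(Z^\star)$ outside $A$ (hence outside $T$). The gap is in the crux step. Because $Z\subseteq A\setminus B$ and $T\subseteq A$, the separation $(V\setminus Z,N[Z])$ is nested with $(A,B)$: you have $A\cup(V\setminus Z)=V$ and $B\cap N[Z]=T\cap A\cap B$, so the two corner separators are exactly $(A\cap B)\cup T$ and $(A\cap B)\cap T$, and the submodular inequality you display is the inclusion--exclusion \emph{identity} $|A\cap B|+|T|=|(A\cap B)\cup T|+|(A\cap B)\cap T|$. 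It carries no information, and the only corner that is an $S_L$--$S_R$ separation has separator $(A\cap B)\cup T\supseteq A\cap B$, so minimality of $|A\cap B|$ forces nothing. The conclusion you draw (``the $B$-side contribution to $T$ is at least $|T_{\mathrm{in}}|$'') is also the wrong statement: since $T\subseteq A$, the $B$-side of $T$ is $T\cap A\cap B\supseteq T_{\mathrm{in}}$, which is trivially true and does not bound $|T|$. Finally, replacing $(A,B)$ by the minimum separation pushed towards $S_L$ is not permissible: it changes the separator $A\cap B$, hence $X$, hence the components $Z$, i.e.\ it changes the statement being proved.

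The step can be repaired, and in two ways, both of which genuinely use that $Z^\star$ (unlike $Z$) straddles $(A,B)$. Either uncross $(A,B)$ with the separation $(N[Z^\star],V\setminus Z^\star)$: the corner $(A\cup N[Z^\star],\,B\setminus Z^\star)$ is an $S_L$--$S_R$ separation (as $S\subseteq X'\subseteq V\setminus Z^\star$) with separator $((A\cap B)\setminus Z^\star)\cup(N(Z^\star)\cap B)$, and minimality of $|A\cap B|$ yields $|(A\cap B)\cap Z^\star|\le|N(Z^\star)\cap(B\setminus A)|$; since $T_{\mathrm{in}}\subseteq (A\cap B)\cap Z^\star$ and $N(Z^\star)\cap(B\setminus A)$ is disjoint from $T_{\mathrm{out}}\subseteq A$, this gives $|T|\le|N(Z^\star)|\le|S|$. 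Alternatively, take the $|A\cap B|$ vertex-disjoint $S_L$--$S_R$ Menger paths, each meeting $A\cap B$ once; for $v\in T_{\mathrm{in}}$ follow its path from $v$ into $B\setminus A$ until it first leaves $Z^\star$, landing on a vertex of $N(Z^\star)\cap(B\setminus A)$, and these landing points are distinct by disjointness of the paths. As written, however, your proposal's key inequality is vacuous and the argument does not close.
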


\subsubsection{Reduction Algorithm}
The goal of this subsection is to prove Lemma~\ref{lem:unbreakToClique}, namely to give a reduction from canonization of improved-clique-unbreakable graphs to canonization of unbreakable graphs. The proof of Lemma~\ref{lem:unbreakToClique} splits in two cases, which are handled by Lemma~\ref{lem:cliqueUnbreakableCentralClique} and Lemma~\ref{lem:unbreakableToCliqueUnbreakable}, respectively. We first state and prove these two lemmas, before proceeding with the proof of Lemma~\ref{lem:unbreakToClique}. If the reader wishes to understand the motivation for the statements of  Lemma~\ref{lem:cliqueUnbreakableCentralClique} and Lemma~\ref{lem:unbreakableToCliqueUnbreakable}, they may skip to the end of this sub-section and read the proof of Lemma~\ref{lem:unbreakToClique} first, before coming back to this point.

Throughout the section, we assume that ${\cal F}$ denote a finite family of graphs, and let  $n_{\cal F}$ denote the maximum size of a graph (in terms of the number of vertices) in ${\cal F}$.  Further, recall from Section~\ref{sec:representatives} that  ${\eta}_w(f, t)$ denotes the smallest integer $\rho$ such that every $t$-boundaried graph has an $f$-weak representative on at most $\rho$ vertices.

\begin{lemma}\label{lem:cliqueUnbreakableCentralClique}
Let $\hat{q}: \mathbb{F}^\star \times \mathbb{N} \rightarrow \mathbb{N}$ be a function defined as 
$\hat{q}({\cal F},i) = 2^{2^{10i}}\eta_w(2^{8i},h)$,
and
let $\mathbb{F} \subseteq \mathbb{F}^\star$ be a collection of finite sets of graphs. 
If there exists a canonization algorithm ${\cal A}$ for 
$\hat{q}$-unbreakable, 
$\mathbb{F}$-free classes (for some  $\kappa_{\cal A} : \mathbb{F} \rightarrow \mathbb{N}$) 
then there exists 
a weakly isomorphism invariant algorithm with the following specifications:

\begin{itemize}
\item The algorithm takes as input ${\cal F} \in \mathbb{F}$, positive integers $k$, $s$ and a $(s,k,k)$-improved clique unbreakable graph $G$, and an isomorphism invariant in $G$ connectivity sensitive star decomposition $(T, \chi)$ with central bag $b$ such that $\imp{G}{k}[\chi(b)]$ is a clique, and for every leaf $\ell$ of $T$, $|\chi(\ell)| \leq s$ and $|\sigma(\ell)| \leq k$.

\item The algorithm terminates on all inputs and either fails (outputs $\bot$) or succeeds and outputs a labeling of $G$, which is weakly isomorphism invariant in $G$, ${\cal F}$, $k$ and $s$. If $k/2 \geq \kappa_{\cal A}({\cal F})$ then the algorithm succeeds.

\item The running time of the algorithm on an instance $(G,{\cal F},k,s,(T,\chi))$ is bounded by $\cO(s!n^{\cO(1)})$ plus the total time taken by one invocation of ${\cal A}$ on $({\cal F}, G^\star, k, q)$, where ${\cal F} \in \mathbb{F}$, $q : [k/2] \rightarrow \mathbb{N}$ is a function such that $q(i) \leq \hat{q}({\cal F}, i)$ for $i \leq k/2$, and $G^\star$ is an ${\cal F}$-free, $q$-unbreakable graph on at most $n$ vertices.
\end{itemize}
\end{lemma}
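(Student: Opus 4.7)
The plan is an \emph{unpump-and-lift} strategy: first shrink each leaf of $(T,\chi)$ via a weak representative so that the reduced graph becomes sufficiently unbreakable, then invoke $\mathcal{A}$ on the reduced graph, and finally lift the resulting labeling back to $G$.

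First, for every leaf $\ell$ of $T$, since $|\chi(\ell)| \leq s$, a brute-force search over all permutations $\pi \colon \sigma(\ell) \to [|\sigma(\ell)|]$ and all extensions of $\pi$ to a proper labeling of $G[\chi(\ell)]$ produces, in time $s!\cdot n^{\cO(1)}$, a compact locally weakly isomorphism-invariant leaf labeling $\{\lambda_{\pi,\ell}\}$ of $(T,\chi)$ (pick the lexicographically smallest extension via Lemma~\ref{lem:selectWeaklyIsoInvariant} and color-compact). I then apply the weak unpumping procedure of Section~\ref{sec:unpumping} with parameter $h = n_{\mathcal{F}}$ and distinguished set $D = \emptyset$, obtaining $G^\star$ together with a star decomposition $(T,\chi^\star)$, which is connectivity-sensitive by Property~8 of Lemma~\ref{lem:unpumpProperties}.

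The heart of the argument is showing that $G^\star$ satisfies the preconditions of $\mathcal{A}$. $\mathcal{F}$-freeness of $G^\star$ is immediate from the preservation of $n_{\mathcal{F}}$-folios (Property~2 of Lemma~\ref{lem:unpumpProperties}). For unbreakability, since $\chi(b)$ is a clique in $\imp{G}{k}$ it is trivially $(0,k)$-unbreakable there, and hence also in $G$ by Lemma~\ref{lem:unbreakableInImproved}; Property~6 of Lemma~\ref{lem:unpumpProperties} then transfers this to $\chi^\star(b)$ inside $G^\star$. At the same time, each leaf bag $\chi^\star(\ell)$ is an $n_{\mathcal{F}}$-weak representative, so Lemma~\ref{lem:repsAreUnbreakabilityTiedToBoundary} makes it $q'$-unbreakability-tied to $\sigma(\ell)$ as a standalone graph, for $q'(i) = (2^{i(12^i+1)}+1)\,\eta_w(i^2 4^i, n_{\mathcal{F}})$; a straightforward restriction argument (a separation of $G^\star$ with $\sigma(\ell)$ entirely on one side restricts to a separation of $G^\star[\chi^\star(\ell)]$ of no larger order) promotes this to unbreakability-tied inside $G^\star$. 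Plugging these facts into Lemma~\ref{lem:concludeGstarUnbreakable} with the zero function for the central bag yields that $G^\star$ is $q^\star$-unbreakable on $[k]$ with $q^\star(i) = (1+i)q'(i)$; elementary estimates ($i\cdot 12^i \leq 2^{4i}$ and $i^2 4^i \leq 2^{8i}$ for $i \geq 1$, plus monotonicity of $\eta_w$ in its first argument) verify that $q^\star(i) \leq \widehat{q}(\mathcal{F},i)$, and in particular this holds on $[k/2]$.

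Finally, I invoke $\mathcal{A}$ on $(\mathcal{F}, k/2, q^\star, G^\star)$, returning $\bot$ if $\mathcal{A}$ does; the latter does not occur when $k/2 \geq \kappa_{\mathcal{A}}(\mathcal{F})$. Otherwise $\mathcal{A}$ returns a weakly isomorphism-invariant labeling $\lambda^\star$ of $G^\star$, which I feed into the lifting procedure of Lemma~\ref{lem:lem:caNlifting} applied to the bundle $(G,\emptyset,(T,\chi),b,\{\lambda_{\pi,\ell}\},G^\star,(T,\chi^\star),\emptyset,\lambda^\star)$ to obtain the desired labeling $\lambda$ of $G$. Weak isomorphism invariance of the whole procedure assembles from that of each piece: the input decomposition $(T,\chi)$ is isomorphism-invariant by hypothesis, the leaf labeling is locally weakly isomorphism-invariant by construction, weak unpumping is so by the analysis of Section~\ref{sec:unpumping}, $\mathcal{A}$ is so by assumption, and the lifting procedure is so by Lemma~\ref{lem:lem:caNlifting}. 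The main obstacle I expect is the unbreakability verification: lifting Lemma~\ref{lem:repsAreUnbreakabilityTiedToBoundary}, which is stated for a representative in isolation, to unbreakability-tied inside the glued $G^\star$, and matching the resulting quantitative bound against $\widehat{q}$.
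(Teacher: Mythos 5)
Your proposal follows the paper's proof essentially step for step: brute-force canonical leaf labelings in $s!\cdot n^{\cO(1)}$ time, weak unpumping with parameter $h=n_{\cal F}$ and $D=\emptyset$, unbreakability of $G^\star$ assembled from Lemma~\ref{lem:repsAreUnbreakabilityTiedToBoundary} and Lemma~\ref{lem:concludeGstarUnbreakable}, a single call to ${\cal A}$ on $(G^\star,{\cal F},k/2,q^\star)$, and the lifting procedure of Lemma~\ref{lem:lem:caNlifting}.

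The one step that is wrong as written is the claim that $\chi(b)$ is $(0,k)$-unbreakable because $\imp{G}{k}[\chi(b)]$ is a clique. A clique $A$ indeed cannot be split strictly across a separation $(X,Y)$, but the separator $X\cap Y$ may contain up to $|X\cap Y|$ vertices of $A$, and those vertices count toward both $|X\cap A|$ and $|Y\cap A|$; the correct statement is therefore that $\chi(b)$ is $(i,i)$-unbreakable for every $i\le k$, i.e.\ $q_1$-unbreakable with $q_1(i)=i$, which is exactly what the paper uses. Feeding $q_1(i)=i$ rather than the zero function into Lemma~\ref{lem:concludeGstarUnbreakable} yields $q^\star(i)=i+(2^i+i)\cdot q'(2i)$ rather than your $(1+i)q'(i)$ --- note in particular the argument shift from $q'(i)$ to $q'(i+q_1(i))=q'(2i)$, which is where the exponents $2^{10i}$ and $2^{8i}$ in the definition of $\hat q$ come from. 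With that correction the quantitative comparison against $\hat{q}({\cal F},i)$ goes through exactly as in the paper, so the slip is local and repairable, but both the premise and the resulting function are incorrect in the form you stated them.
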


\begin{proof}
Since $A = \imp{G}{k}[\chi(b)]$ is a clique, there is no separation $(L, R)$ of $G$ of order at most $k$ such that both $(L \cap \chi(b)) \setminus R$ and $(R \cap A) \setminus L$ are non-empty.
Thus for the function $q_1 : [k] \rightarrow \mathbb{N}$ defined as $q(i) = i$, $\chi(b)$ is $q_1$-unbreakable in $G$. For each leaf $\ell$ the algorithm computes for every permutation $\pi : \sigma(\ell) \rightarrow [|\sigma(\ell)|]$ a canonical labeling $\lambda_{\pi, \ell}$ of $G[\chi(\ell)]$ such that $\lambda_{\pi, \ell}(v) = \pi(v)$ for every $v \in \sigma(\ell)$, by choosing the proper labeling $\lambda_{\pi, \ell}(v) : \chi(\ell) \rightarrow [|\chi(\ell)|]$ that makes $G[\chi(\ell)]$ with this labeling lexicographically smallest.
This takes at most $s!n^{\cO(1)}$ time and the family $\{ \lambda_{\pi, \ell} \}$ of labelings is a locally weakly isomorphism invariant leaf labeling of $(T, \chi)$.

Recall that, $n_{\cal F}$ denotes the maximum size of a graph (in terms of the number of vertices) in the given family of topological minor free graphs ${\cal F}$.  

We set $D = \emptyset$ and apply the weak unpumping operation on $(G, D, (T,\chi),b,\{ \lambda_{\pi, \ell} \}))$ and parameter $h = n_{\cal F}$. 
We aim to call the algorithm ${\cal A}$ on $G^\star$. 
To that end we need to ensure that $G^\star$ satisfies the input requirements of  ${\cal A}$, in particular that it is sufficiently unbreakable.  We first define a set of functions, that will be used in the proof.  
\begin{itemize}
\setlength{\itemsep}{-2pt}
\item $q_1 : [k] \rightarrow \mathbb{N}$,   $q_1(i) = i $.
\item $q_2 : \mathbb{N} \rightarrow \mathbb{N}$ defined as $q_2(i) = (2^{i(12^i + 1)} + 1)\eta_w(i^2 4^i, h)$. 
Here, ${\eta}_w(f, t)$ denotes the smallest $\ell$ such that every $t$-boundaried graph has an $f$-weak representative on at most $\ell$ vertices. Therefore, $q_2(i) \leq 2^{2^{5i}}\eta_w(2^{4i},h)$

\item $q_3 : [k/2] \rightarrow \mathbb{N}$ is defined as $q_3(i) = q_1(i) + (2^{q_1(i)} + i) \cdot q_2(q_1(i) + i)$. Therefore, $q_3(i) \leq 2^{i+2} \cdot 2^{2^{10i}}\eta_w(2^{8i},h) = \hat{q}({\cal F},i)$.
\end{itemize}

Observe that $A$ is $q_1$-unbreakable in $G$.
By Property~\ref{itm:unpumpKeepsUnbreak} of Lemma~\ref{lem:unpumpProperties}, $A$ is $q_1$-unbreakable in $G^\star$.
For every leaf $\ell$ of $G^\star$ the definition of weak unpumping yields that $(G^\star[\chi^\star(\ell)], \pi\langle \ell \rangle)$ is an $h$-weak representative of $(G[\chi'(\ell)], \pi\langle \ell \rangle)$.
From Lemma~\ref{lem:repsAreUnbreakabilityTiedToBoundary} it follows that for every leaf $\ell$, $\chi^\star(\ell)$ is $q_2$-unbreakability tied to $\sigma^\star(\ell)$.
By Property~\ref{itm:connSensitiveGstar} of Lemma~\ref{lem:unpumpProperties} the star decomposition $(T, \chi^\star)$ is connectivity-sensitive. 
Now Lemma~\ref{lem:concludeGstarUnbreakable} yields that $G^\star$ is $q_3$-unbreakable.  

The algorithm calls ${\cal A}$ on $(G^\star, {\cal F}, k/2, q_3)$. Because  $G^\star$ is a $q_3$-unbreakable ${\cal F}$-free graph this is a well-formed input for the algorithm ${\cal A}$. If ${\cal A}$ fails (returns $\bot$) then our algorithm fails as well (and this is weakly isomorphism invariant). However, when $k/2 \geq \kappa_{\cal A}(\cal F)$, none of the calls to ${\cal A}$ fail.  
The call to the algorithm $\cal A$ returns a weakly isomorphism invariant proper labeling $\lambda^\star$ of $G^\star$.  
The algorithm now sets $\iota : D \rightarrow [|D|]$ to be the null labeling, and applies the lifting procedure (see Section~\ref{sec:unpumping}) on the lifting bundle $(G, D, (T, \chi), b, \{\lambda_{\pi, \ell}\}, G^\star, (T, \chi^\star), \iota, \lambda^\star)$ and obtains a labeling $\lambda$ of $(G, \iota) = G$. By Lemma~\ref{lem:lem:caNlifting},  $\lambda$ is weakly isomorphism invariant in $G$ (and ${\cal F}$, $s$ and $k$), and by Observation~\ref{obs:liftRunTime} the procedure takes time polynomial in the input. 

After computing the locally weakly isomorhism invariant leaf labeling of $(T, \chi)$ the algorithm spends polynomial time and makes a single invocation of the algorithm ${\cal A}$. Thus the algorithm runs within the claimed running time. This concludes the proof.
\end{proof}

The lemma below is the main engine of the algorithm of Lemma~\ref{lem:unbreakToClique}.

\begin{lemma}\label{lem:unbreakableToCliqueUnbreakable}
There exists a weakly isomorphism invariant algorithm with the following specifications.
The algorithm takes as input 
a set ${\cal F} \in \mathbb{F}$, integers $k, s$ and a graph $G$ with a distinguished set $D$, such that $|D| \leq 4^k k^2$, with the property that there exists an $\iota : D \rightarrow [|D|]$ and a $[|D|]$-boundaried graph $(H,\iota)$ so that 
\begin{enumerate}
\item $(G,\iota) \oplus (H,\iota)$ is ${\cal F}$-free,
\item The $k$-improved graph of $(G,\iota) \oplus (H,\iota)$ has a connectivity sensitive star decomposition $(T, \chi)$ with center bag $b$ such that $\chi(b)$ is either a $k$-atom or a clique of size at most $k$, $\chi(b) \setminus V(G) \neq \emptyset$, and for all  leaves $\ell \in V(T)$, $|\chi(\ell)|\leq s$. 
\end{enumerate}
The algorithm either fails (i.e. outputs $\bot$) or succeeds.  If $k/2 \geq \kappa_{\cal A}({\cal F})$ then the algorithm succeeds  and 
it outputs for every permutation $\pi : D \rightarrow [|D|]$ a proper labeling $\lambda_\pi$ of $(G, \pi)$ such that $\lambda_\pi(v) = \pi(v)$ for every $v \in D$. The labeling $\lambda_\pi$ is weakly isomorphism invariant in $(G, \pi)$ (and ${\cal F}$, $k$ and $s$).
The running time of the algorithm is upper bounded by $g({\cal F}, k, s)n^{\cO(1)}$ plus the time taken by $h({\cal F},k)n^{\cO(1)}$ invocations of ${\cal A}$ on instances $(G^\star, {\cal F}, k/2, q^\star)$ where $G^\star$ is a $q^\star$-unbreakable, ${\cal F}$-topological minor free graph on at most $n$ vertices and $q^\star : [k/2] \rightarrow \mathbb{N}$ is a function such that
$q^\star \leq 2^{2^{2^{14i}}}\eta_w(2^{2^{13i}},n_{\cal F})$. %
\end{lemma}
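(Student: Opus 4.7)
The plan is to enumerate candidate boundaried graphs $(H, \pi)$ compatible with the premise and, for each candidate, reduce to Lemma~\ref{lem:cliqueUnbreakableCentralClique}. For each permutation $\pi : D \to [|D|]$, we enumerate all $[|D|]$-boundaried graphs $(H, \pi)$ in increasing order of $|V(H)|$ and test whether $(G, \pi) \oplus (H, \pi)$ is ${\cal F}$-free and admits the star decomposition of the $k$-improved graph promised by the lemma (${\cal F}$-freeness via the topological-minor testing algorithm of \cite{GroheKMW11}, the decomposition via the Elberfeld--Schweitzer procedure of Lemma~\ref{lem:c-to-c++-atom-decomposition}, and the ``leaves of size at most $s$'' and ``clique/$k$-atom central bag containing a vertex outside $V(G)$'' conditions checked explicitly). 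We stop at the minimal size $n_\pi$ for which at least one valid $(H, \pi)$ exists; a pigeonhole argument over the finitely many combinatorial types of such boundaried graphs bounds $n_\pi$, and hence the size $|{\cal H}_\pi|$ of the resulting candidate set, by a function of ${\cal F}$, $k$, and $s$. For each $(H, \pi) \in {\cal H}_\pi$, we form $\tilde{G}_{H,\pi} = (G, \pi) \oplus (H, \pi)$ and its $k$-improved graph $G'_{H,\pi}$, and apply Lemma~\ref{lem:c-to-c++-atom-decomposition} to $G'_{H,\pi}$ to obtain an isomorphism-invariant tree decomposition into $k$-atoms and clique separators of size at most $k$. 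We identify the central bag $B$ as the unique bag containing the lexicographically smallest vertex of $V(H) \setminus V(G)$ (under the canonical labelling of $H$ derived from the enumeration), and build a connectivity-sensitive star decomposition $(T^\star, \chi^\star)$ of $\tilde{G}_{H,\pi}$ rooted at $B$, analogously to the construction in Lemma~\ref{lem:unbreakbleAndstarDeco}.

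In the easy case, $B$ is a clique of size at most $k$ in $G'_{H,\pi}$, so $\imp{\tilde{G}_{H,\pi}}{k}[B]$ is a clique. Combined with the promise that every leaf $\chi^\star(\ell)$ has size at most $s$ and every adhesion has size at most $k$, this places $(\tilde{G}_{H,\pi}, {\cal F}, k, s, (T^\star, \chi^\star))$ within the input specification of Lemma~\ref{lem:cliqueUnbreakableCentralClique}, which we invoke to obtain a weakly isomorphism-invariant labeling of $\tilde{G}_{H,\pi}$ via a single call to ${\cal A}$ on an $\hat{q}$-unbreakable graph.

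The principal obstacle is the $k$-atom case, where $B$ is a $k$-atom of $G'_{H,\pi}$ but not necessarily a clique. To reduce it to the clique case, we pick a canonical vertex $v^\star \in B \setminus V(G)$ and compute the $k'$-important separator extension $A$ of $\{v^\star\}$ in $\tilde{G}_{H,\pi}$ (Lemma~\ref{lem:impSepExtensionAlgorithm}) for a parameter $k' = \hat{q}({\cal F}, k)$ that is large enough to turn $A$ into a clique in $\imp{\tilde{G}_{H,\pi}}{k'}$: by Lemma~\ref{lem:extensionIsTied} combined with Lemma~\ref{lem:tiedPlusUnbreakGivesUnbreak}, $A$ is $q'$-unbreakable in $\tilde{G}_{H,\pi}$ for a computable $q'$, and by Lemma~\ref{lem:extSmallComponentBoundary} every component of $\tilde{G}_{H,\pi} - A$ has boundary of size at most $(k')^2 4^{k'}$, which together with the $k$-atom structure of $B$ (no small clique separator) allows us to further subdivide the components of $B - A$ so that all leaves remain of controlled size. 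We then invoke Lemma~\ref{lem:cliqueUnbreakableCentralClique} with the \emph{inflated} parameter $k'$ instead of $k$. This nested application of Lemma~\ref{lem:cliqueUnbreakableCentralClique} is exactly what forces the worse doubly-exponential unbreakability bound $q^\star(i) \le 2^{2^{2^{14i}}} \eta_w(2^{2^{13i}}, n_{\cal F})$ in the statement, which is of the shape $\hat{q}(\hat{q}(i))$ with $\hat{q}({\cal F}, i) = 2^{2^{10i}} \eta_w(2^{8i}, n_{\cal F})$.

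Finally, for each $\pi$, we restrict the labeling of $\tilde{G}_{H,\pi}$ produced above to $V(G)$ and select, over all $(H, \pi) \in {\cal H}_\pi$, the one that makes the resulting properly labeled graph $(G, \lambda_\pi)$ lexicographically smallest. Weak isomorphism invariance of $\lambda_\pi$ in $(G, \pi)$ follows because (i)~the set ${\cal H}_\pi$ is isomorphism-invariant in $(G, \pi)$, (ii)~the ES-decomposition and the canonical choice of $v^\star$ make the central bag $B$ isomorphism-invariant in $\tilde{G}_{H,\pi}$, (iii)~both invocations of Lemma~\ref{lem:cliqueUnbreakableCentralClique} are weakly isomorphism invariant, and (iv)~lexicographic selection is weakly isomorphism invariant by Lemma~\ref{lem:selectWeaklyIsoInvariant}. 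The running-time bound is met because $|{\cal H}_\pi|$, $n_\pi$, and the number of permutations $|D|! \leq (4^k k^2)!$ are all bounded by a function of $({\cal F}, k, s)$, and each candidate triggers a single invocation of Lemma~\ref{lem:cliqueUnbreakableCentralClique}, hence a single invocation of~${\cal A}$.
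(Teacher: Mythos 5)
Your proposal does not follow the paper's route, and it has a genuine gap that I do not see how to repair. The paper proves this lemma by a \emph{recursive} algorithm governed by the potential $2|V(G)|-|D|$, with a case analysis on whether $D$ is small and whether it is (un)breakable. In the main case ($D$ large and unbreakable) it takes the $k$-important separator extension $A$ of $D$ itself, recursively canonizes every piece hanging off $A$, uses those recursive labelings to \emph{weakly unpump} the leaves down to representatives of bounded size, shows the unpumped graph $G^\star$ is $q^\star$-unbreakable by chaining Lemma~\ref{lem:extensionIsTied}, Lemma~\ref{lem:tiedPlusUnbreakGivesUnbreak}, Lemma~\ref{lem:repsAreUnbreakabilityTiedToBoundary} and Lemma~\ref{lem:concludeGstarUnbreakable}, calls ${\cal A}$ once on $G^\star$, and lifts the labeling back. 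The guessed graph $H$ is never enumerated; its existence is only a promise used to justify the case analysis (notably in Case~II, to conclude $|V(G)|\leq 2^k s+k$). Lemma~\ref{lem:cliqueUnbreakableCentralClique} is not invoked here at all; it is a sibling ingredient used alongside this lemma inside the proof of Lemma~\ref{lem:unbreakToClique}.

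The fatal step in your argument is the ``$k$-atom case.'' You compute the $k'$-important separator extension $A$ of a single vertex $v^\star$ and assert that for $k'$ large enough $A$ becomes a clique in $\imp{\tilde{G}_{H,\pi}}{k'}$. None of the cited lemmas give anything of the sort: Lemma~\ref{lem:extensionIsTied} only yields that $A$ is unbreakability-tied to $\{v^\star\}$ (hence, via Lemma~\ref{lem:tiedPlusUnbreakGivesUnbreak}, unbreakable), which is a far weaker property than being a clique in an improved graph, and two vertices of $A$ can perfectly well be separated by a small separator. Moreover, even granting some unbreakable central bag, the leaves of the star decomposition you build around $A$ are components of $\tilde{G}_{H,\pi}-A$, for which you have no size bound whatsoever; Lemma~\ref{lem:cliqueUnbreakableCentralClique} needs leaves of size at most $s$ precisely because it brute-forces their labelings in $s!$ time, and your proposal contains no recursion (and no unpumping/lifting) that could supply canonical labelings of large leaves. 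This is exactly the hole that the paper's recursive-understanding machinery exists to fill. Finally, your explanation of the bound $q^\star(i)\leq 2^{2^{2^{14i}}}\eta_w(2^{2^{13i}},n_{\cal F})$ as arising from a ``nested application'' of Lemma~\ref{lem:cliqueUnbreakableCentralClique} is not how that bound arises; in the paper it comes from composing the unbreakability-tiedness of the separator extension of $D$ with that of the weak representatives via Lemma~\ref{lem:concludeGstarUnbreakable}.
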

\begin{proof}
 In what follows, for an integer $k$, let us define the following functions of $k$, that will be used in the proof.  
\begin{eqnarray*}
\Upsilon &=& k, \\
\eta & = & k^2 \cdot 4^k, \\
\zeta & = & 2^{2\eta}\eta,\\
|D|& = &h\leq 4^k k^2, \mbox{ and let}\\
G^\dagger & =&  \imp{(G,\iota) \oplus (H,\iota)}{k},  \mbox{ and}\\
\pot(G,X) & = &2(|V(G)|-|X|)+|X|. 
\end{eqnarray*}
We ensure that the algorithm does not loop using a potential $\pot(G,X)=2(|V(G)|-|X|)+|X|$. Further,  in every recursive call, we ensure that the boundary is upper bounded by $\eta =k^2 \cdot 4^k$.  

Formally, every call of the algorithm will use only calls for inputs with a strictly smaller potential, and the potential is always nonnegative. Initial potential is $\pot(G,D)$. 
In other words, it is an induction on $2(n-|D|)+|D|=2n-|D|$.

For the running time analysis we will upper bound the time spent by the algorithm in each individual node of the recursion tree, without accounting for the time spent inside the recursive calls of the descendants of that node. These bounds are then added up over all the nodes in the recursion tree (this analysis is at the end of the proof). 

\paragraph{Base case: $2n-|D|\leq 0$.} In this situation we know that the whole graph is upper bounded by $h$. 
We first apply Lemma~\ref{lem:compact} and obtain an isomorphism invariant  compact coloring of $G$. Then, we define $(T, \chi)$, an  isomorphism invariant (with respect to $G$ and $D$) 
star decomposition of $G$ with central bag $b \in V(T)$. The central bag $\chi(b)=V(G)$ and $T$ has no leaves.   
Now we apply Lemma~\ref{lem:smallCentersAlgorithm}, with $(G, D,  (T,\chi,b,\{ \lambda_{\pi, \ell} \}))$  
and for every labeling $\pi : D \rightarrow [|D|]$,  obtain 
 a proper labeling $\lambda_\pi$ of $(G, \pi)$ such that $\lambda_\pi(v) = \pi(v)$. 
 The labeling $\lambda_\pi$ is weakly isomorphism invariant in $(G, \pi)$. The running time of this step is upper bounded by $h!n^{\cO(1)}$. Thus, the running time is upper bounded by 
 $2^{\cO(h \log h)} n^{\cO(1)} $. We now proceed with the case that $2n-|D|>0$.

\paragraph{A remark on the case distinction:} We have $4$ main cases, 
Case I is that $|D| \leq \Upsilon$  and $D$ is not a clique in  $\imp{G}{k}$,
Case II is that $|D| \leq \Upsilon$   and $D$ is a clique in   $\imp{G}{k}$.
Case III is that $ \Upsilon<|D| \leq \eta $   and $D$ is breakable in  $G$.
Case IV is that  $ \Upsilon < |D| \leq \eta $    and $D$ is unbreakable in  $G$.
Here Case IV is the ``main interesting case'' for which we built all the tools in Subsection~\ref{sec:lastDecomp}. For these tools to apply we need $D$ to be sufficiently large and unbreakable.
All the preceeding cases handle the ways in which $D$ could fail to meet these requirements. 
Case III handles the case when $D$ is breakable (for Case III we require $D$ to be sufficiently large and breakable, we do not actually need to do this, but it makes the four cases mutually exclusive), while %
Cases I and II handle the case when $D$ is too small. Case II is the only place we use $k$-improved unbreakability.
The proofs of Case I and III are very similar, but different enough to warrant separate discussion.

  We apply these cases in the order of appearance. 
\paragraph{Case I: $|D| \leq \Upsilon$  and $D$ is not a clique in  $\imp{G}{k}$. }
Consider any pair of vertices $x,y\in D$ such that $xy\notin E(\imp{G}{k})$. In  $\imp{G}{k}$, we have the property that 
$\conn_{\imp{G}{k}}(x,y)\leq k$, and hence the minimum-order separations that separate $x$ and $y$ have order at most $k$. Let $(L_{xy},R_{xy})$ be the minimum-order separation separating $x$ and $y$ that are pushed towards $y$. Recall that, by the definition of an $x-y$ separation, the separator $L_{xy} \cap R_{xy}$ does not contain
$x$ or $y$. We define the set $X$ as follows:
\begin{eqnarray}\label{eq:casesmall}
B := D\cup \bigcup_{\substack{x,y\in D,\\ xy\notin E(\imp{G}{k})}}  (L_{xy}\cap R_{xy})
\end{eqnarray}
In other words, we enhance $D$ by adding, for every ordered pair $(x,y)$ of nonadjacent vertices from $D$,  the extreme minimum separator, pushed towards $y$, separating them. By construction $B$ is isomorphism invariant  (with respect to $G$ and $D$). Observe also that $|B|\leq |D|+k |D|^2=h+k h^2 \leq k+k^3\leq \bigadh$, since $|D|\leq \bagthresh$. 
Note that in this case $B$ can be computed in time $k^{\Oh(1)}\cdot (|V(\imp{G}{k})|+|E(\imp{G}{k})|)$, by considering every pair of non-adjacent vertices of $D$,
and for each of them running Lemma~\ref{lem:f-f} with $k$ as the bound on the order of the separation.

In this case we recurse on $\widetilde{G}=G[N[C]]$ for each connected component $C$ of $G-B$. However, to do this we need to show that $\widetilde{G}$ satisfies the premise of Lemma~\ref{lem:unbreakableToCliqueUnbreakable}. We take restriction of ${\sf col}_G$ to $\widetilde{G}$ as the coloring ${\sf col}_{\widetilde{G}}$. 
The set $\widetilde{D}=N(C)$ is defined to be the distinguished set.  Since $B$ is bounded by $\bigadh$ and $N(C)\subseteq B$, we have that 
$|\widetilde{D}|=|N(C)|\leq \bigadh$. In fact, note that $N(C)\subsetneq B$. Indeed, since there exists a pair of vertices $x,y\in D$  such that  $xy\notin E(\imp{G}{k})$, we added a separator between $x$ and $y$ to $B$. This implies that both $x,y$ cannot belong to $N(C)$. Hence, $N(C)\subsetneq B$. Let $\tilde{\iota} \colon \widetilde{D} \to [|\widetilde{D}|]$, be an arbitrary bijection from $\widetilde{D} \to [|\widetilde{D}|]$.

 Let $G_C=G-C$. That is, the graph obtained by deleting the vertices of connected component $C$ from $G$. Observe that since $D \cap C=\emptyset$, we can define $\widetilde{H}$ as  
$(G_C,\iota)\oplus (H,\iota)$.   We view $(\widetilde{H},\tilde{\iota})$ as a $[|\widetilde{D}|]$-boundaried graph with boundary $\widetilde{D}$ and the labelling $\tilde{\iota} \colon \widetilde{D} \to [|\widetilde{D}|]$, defined above. 
Observe that $(\widetilde{G},\tilde{\iota})\oplus (\widetilde{H},\tilde{\iota})$ {\em is same as} $(G,\iota)\oplus (H,\iota)$. This implies that $(\widetilde{G},\tilde{\iota})\oplus (\widetilde{H},\tilde{\iota})$ is ${\cal F}$-free, as $(G,\iota)\oplus (H,\iota)$  is ${\cal F}$-free.  Finally, we need to show that  the $k$-improved graph of  $(\widetilde{G},\tilde{\iota})\oplus (\widetilde{H},\tilde{\iota})$ has a connectivity sensitive star decomposition $(T', \chi')$ satisfying the properties stated in the statement of the lemma. Since, $(\widetilde{G},\tilde{\iota})\oplus (\widetilde{H},\tilde{\iota})$ is same as $(G,\iota)\oplus (H,\iota)$, we 
take the connectivity sensitive star decomposition $(T, \chi)$ of $(G,\iota)\oplus (H,\iota)$ as the connectivity sensitive star decomposition $(T', \chi')$ of  $(\widetilde{G},\tilde{\iota})\oplus (\widetilde{H},\tilde{\iota})$. Using the fact that $V(\widetilde{G})\subseteq V(G)$, we can can easily observe that $(T', \chi')$ satisfies all the desired properties. 

  To apply the inductive argument we need to show that 
 $\pot(\widetilde{G},\widetilde{D})<\pot(G,D)$.  
 Observe that, 
\begin{eqnarray*}
 \pot(\widetilde{G},\widetilde{D})   &\leq & 2( n-|B|)+ |\widetilde{D}|\\
&=&  2n-2|B|+|D|-|D|+|\widetilde{D}| \\
& = & 2n-|D|-(2|B|-|D|- |\widetilde{D}|) \\
& < & \pot(G,D)~~~~~~~~~~~~~~~~~~~~~~~~~~~~~~~~~~~~~~~~~~~~(\widetilde{D} \subsetneq B \mbox{ and } D \subseteq B )
 \end{eqnarray*}

Finally, the algorithm either outputs $\bot$ or it outputs for every labelings $\pi : \widetilde{D} \rightarrow [|\widetilde{D}|]$ a proper labeling $\lambda_\pi$ of $(\widetilde{G},\pi)$ such that $\lambda_\pi(v) = \pi(v)$. The labeling $\lambda_\pi$ is weakly isomorphism invariant in $(\widetilde{G}, \pi)$. If the algorithm outputs $\bot$, we output $\bot$ on the original instance $({\cal F},k,G,D)$ and terminate (and this is weakly isomorphism invariant).  Thus, we assume that the algorithm succeeds on each recursive call made on the children. 

Before we go further, we first apply Lemma~\ref{lem:compact}  on $G$ and ${\sf col}_{G}$  and obtain an isomorphism invariant  compact coloring of $G$. We also need to compact the labelings obtained from applying recursive steps. To do this we again apply Lemma~\ref{lem:compact}   and obtain compact labelings. 

Next, we define $(T^\star, \chi^\star)$, an isomorphism invariant (with respect to $G$ and $B$) star decomposition of 
$G$ with central bag $b \in V(T^\star)$. The central bag $\chi(b)=B$. Further, for every connected component 
$C$ of $G-B$, there is a leaf $\ell\in V(T^\star)$, such that $\chi^\star(\ell)=N[C]$.  Using the output of the recursive 
calls, we define a leaf-labeling of $(T^\star, \chi^\star)$.  That is, it  is a set of labelings $\{ \lambda_{\pi, \ell} \}$ with a proper labeling  $\lambda_{\pi, \ell}$ of $G[\chi^\star(\ell)]$ for every leaf $\ell$ of $T$ and every permutation $\pi : \sigma(\ell) \rightarrow [|\sigma(\ell)|]$. For every $\ell$ and $\pi$ the labeling $\lambda_{\pi, \ell}$ satisfy $\lambda_{\pi, \ell}(v) = \pi(v)$ for every $v \in \sigma(\ell)$. By construction, leaf labeling are locally weakly isomorphism invariant. Now we apply Lemma~\ref{lem:smallCentersAlgorithm}, with $(G, B,  (T^\star,\chi^\star,b,\{ \lambda_{\pi, \ell} \}))$  
and for every labeling $\pi : B \rightarrow [|B|]$,  obtain 
 a proper labeling $\lambda_\pi$ of $(G, \pi)$ such that $\lambda_\pi(v) = \pi(v)$. 
 The labeling $\lambda_\pi$ is weakly isomorphism invariant in $(G, \pi)$. 
For each labeling $\pi' : D \rightarrow [|D|]$ the algorithm 
iterates over all bijections $\pi : B \rightarrow [|B|]$ that coincide with $\pi'$ on $D$ and sets $\lambda_{\pi'}$ to be the labeling $\lambda_\pi$ of $(G, \pi')$ that makes the labeled graph $(G, \lambda_{\pi'})$ lexicographically smallest. 
The labeling $\lambda_{\pi'}$ is weakly isomorphism invariant in $(G, \pi')$ (and ${\cal F}$, $k$ and $s$), hence it can be output by the algorithm.
 The running time of this step is upper bounded by $|B|!n^{\cO(1)}=(k+k^3)!n^{\cO(1)} =2^{\cO(k \log k)} n^{\cO(1)} $. %

\paragraph{Case II:  $|D| \leq \Upsilon$   and $D$ is a clique in   $\imp{G}{k}$.}

In this case we will show that $|V(G)|$ is bounded by $k+3s$. We know that the $k$-improved graph $G^\dagger$ of $(G,\iota) \oplus (H,\iota)$ has a star decomposition $(T, \chi)$ with center bag $b$ such that $\chi(b)$ is either a $k$-atom or a clique of size at most $k$, $\chi(b) \setminus V(G) \neq \emptyset$, and for all  leaves $\ell \in V(T)$, $|\chi(\ell)|\leq s$.

Since $D \leq \Upsilon = k$ we have that $(V(G), V(H))$ is a clique separation of size at most $k$ in $G^\dagger$. Indeed, it is a clique by assumption, and it is a separation of order at most $k$ in $(G,\iota) \oplus (H,\iota)$ and therefore also a separation in $G^\dagger$.
But $\chi(b)$ is either a clique or a $k$-atom in $G^\dagger$ and therefore $\chi(b) \subseteq V(G)$ or $\chi(b) \subseteq V(H)$. However, by assumption, $\chi(b) \setminus V(G) \neq \emptyset$ and therefore $\chi(b) \subseteq V(H)$. 

It follows that every connected component of $G - D$ is also a connected component of $G^\dagger - \chi(b)$. Thus, for every subset $X \subseteq D$ the union $\hat{C}$ of all connected components of $G - D$ with neighborhood precisely $X$ is a subset of $\chi(\ell) \setminus \chi(b)$ for some leaf $\ell$ of $T$. Thus $|\hat{C}| \leq s$. Since there are $2^k$ choices for $X$ we conclude that $|V(G) \setminus D| \leq 2^ks$, and therefore $|V(G)| \leq 2^ks+k$.
Once we have upper bounded the $|V(G)|$ of the graph, the rest of the arguments are identical to the Base Case and thus omitted.

\paragraph{Case III:  $ \Upsilon<|D| \leq \eta $   and $D$ is breakable in  $G$.}
In this case $D$ is breakable in $G$. We construct the set $B$ as follows. Consider all the pairs $(L,R)$ of subsets of $D$ such that $L\cap R=\emptyset$ and $|L|\leq |R|$. For every such pair, let us verify whether the minimum order of a separation separating $L$ and $R$ is at most $|L|-1$, and in this case let us compute the minimum-order $L$-$R$ separation $(A_{LR},B_{LR})$ that are pushed towards $R$ (note that here the vertices of $L$ and $R$ can be included in the separator). We now define $B$ similar to the previous case:
\begin{eqnarray}\label{eq:casebig}
B := D\cup \bigcup_{\substack{L,R\subseteq S,\ L\cap R=\emptyset,\\ |L|\leq |R|,\ \conn(L,R)\leq |L|-1}} 
(A_{LR}\cap B_{LR}). 
\end{eqnarray}
By construction $B$ is isomorphism invariant  (with respect to $G$ and $D$). Further, $|B|\leq \bigbag$. 
This follows directly from the definition and the fact that $|D|\leq \bigadh$. Indeed, there are at most $2^{2\eta}$ choices of $|L|$ and $|R|$, and the size of the separator is bounded by $|L|\leq |D|\leq \eta$. 
We proceed to show that for every component $C$ of $G-B$, we have that $|N(C)|\leq \bigadh$. 

Towards this consider any pair $(L,R)$ considered in the union in (\ref{eq:casebig}), and let us look at a separation 
$(A_{LR},B_{LR})$. 
We can easily construct a partition $(D_L,D_R)$ of $D$ such that $L\subseteq D_L\subseteq A_{LR}$ and $R\subseteq D_R\subseteq B_{LR}$, by assigning vertices of $L$ to $D_L$, vertices of $R$ to $D_R$, and assigning vertices of $D\setminus (L\cup R)$ according to their containment to $ A_{LR}$ or $ B_{LR}$ (thus, we have a unique choice for all the vertices apart from $(D\setminus (L\cup R))\cap (A_{LR}\cap B_{LR})$). Then $(A_{LR},B_{LR})$ is an $D_L$-$D_R$ separation, and since it was a minimum-order $L$-$R$ separation, it must be also a minimum-order $D_L$-$D_R$ separation. Hence, $(A_{LR},B_{LR})$ is a $D$-stable separation. We infer that all the separations considered in the union in (\ref{eq:casebig}) are $D$-stable. From Lemma~\ref{lem:magical} it follows that $|N(C)|\leq |D|\leq \bigadh$, for every connected component $C$ of $G-B$. This concludes that $|N(C)|\leq \bigadh$.

Finally, we show that $B\supsetneq N(C)$, that is, $B$ is a proper superset of $N(C)$. Recall that  $D$ is breakable in $G$ and let $(A,B)$ be a witnessing separation. Then, we have that   $|A \cap B| < p=\min\{|A \cap D|, |B \cap D|\}$. 
Let  $L$ and $R$ be any two subsets of $(A\setminus B)\cap D$ and $(B\setminus A)\cap D$, respectively, that have sizes $p$.  Observe that the existence of separation $(A,B)$ certifies that $\conn(L,R)< p$, and hence the pair $(L,R)$ is considered in the union in (\ref{eq:casebig}). Recall that $(A_{LR},B_{LR})$ is then the corresponding $L$-$R$ separation of minimum order that is pushed towards $R$. Since $|A_{LR}\cap B_{LR}|\leq p-1$ and $|L|,|R|=p$, there exist some vertices $x,y$ such that $x\in L\setminus (A_{LR}\cap B_{LR})$ and $y\in R\setminus (A_{LR}\cap B_{LR})$. Similar to Case $I$, observe that we added a separator between $x$ and $y$ to $B$. This implies that both $x,y$ cannot belong to $N(C)$. Hence, $N(C)\subsetneq B$. %

Note that in this case $B$ can be computed in time $\Oh(2^{2\eta}\eta \cdot (|V(G)|+|E(G)|))$ as follows. There are at most $(2^\bigadh)^2=2^{2\bigadh}$ possible choices for $L$ and $R$ such that $|L|\leq |R|$, and for each of them we may check whether $\conn(L,R)\leq |L|-1$ (and compute $(A_{LR},B_{LR})$, if needed) using Lemma~\ref{lem:f-f}.

In this case we recurse on $\widetilde{G}=G[N[C]]$ for each of the connected component $C$ of $G-B$. However, to do this we need to show that $\widetilde{G}$ satisfies the premise of Lemma~\ref{lem:unbreakableToCliqueUnbreakable}. We take restriction of ${\sf col}_G$ to $\widetilde{G}$ as the coloring ${\sf col}_{\widetilde{G}}$. 
The set $\widetilde{D}=N(C)$ is defined to be the distinguished set.  By an earlier argumentation we have that 
$N(C)\subsetneq B$, and hence, $|\widetilde{D}|=|N(C)|\leq \bigadh$. 
Let $\tilde{\iota} \colon \widetilde{D} \to [|\widetilde{D}|]$, be an arbitrary bijection from $\widetilde{D} \to [|\widetilde{D}|]$.

 Let $G_C=G-C$. That is, the graph obtained by deleting the vertices of connected component $C$ from $G$. Observe that since $D \cap C=\emptyset$, we can define $\widetilde{H}$ as  
$(G_C,\iota)\oplus (H,\iota)$.   We view $(\widetilde{H},\tilde{\iota})$ as a $[|\widetilde{D}|]$-boundaried graph with boundary $\widetilde{D}$ and the labelling $\tilde{\iota} \colon \widetilde{D} \to [|\widetilde{D}|]$, defined above. 
Observe that $(\widetilde{G},\tilde{\iota})\oplus (\widetilde{H},\tilde{\iota})$ {\em is same as} $(G,\iota)\oplus (H,\iota)$. This implies that $(\widetilde{G},\tilde{\iota})\oplus (\widetilde{H},\tilde{\iota})$ is ${\cal F}$-free, as $(G,\iota)\oplus (H,\iota)$  is ${\cal F}$-free.  Finally, we need to show that  the $k$-improved graph of  $(\widetilde{G},\tilde{\iota})\oplus (\widetilde{H},\tilde{\iota})$ has a connectivity sensitive star decomposition $(T', \chi')$ satisfying the properties stated in the statement of the lemma. Since, $(\widetilde{G},\tilde{\iota})\oplus (\widetilde{H},\tilde{\iota})$ is same as $(G,\iota)\oplus (H,\iota)$, we 
take the connectivity sensitive star decomposition $(T, \chi)$ of $(G,\iota)\oplus (H,\iota)$ as the connectivity sensitive star decomposition $(T', \chi')$ of  $(\widetilde{G},\tilde{\iota})\oplus (\widetilde{H},\tilde{\iota})$. Using the fact that $V(\widetilde{G})\subseteq V(G)$, we can can easily observe that $(T', \chi')$ satisfies all the desired properties. 

  To apply the inductive argument we need to show that 
 $\pot(\widetilde{G},\widetilde{D})<\pot(G,D)$.  
 Observe that, 
\begin{eqnarray*}
 \pot(\widetilde{G},\widetilde{D})   &\leq & 2( n-|B|)+ |\widetilde{D}|\\
&=&  2n-2|B|+|D|-|D|+|\widetilde{D}| \\
& = & 2n-|D|-(2|B|-|D|- |\widetilde{D}|) \\
& < & \pot(G,D)~~~~~~~~~~~~~~~~~~~~~~~~~~~~~~~~~~~~~~~~~~~~(\widetilde{D} \subsetneq B \mbox{ and } D \subseteq B )
 \end{eqnarray*}

Finally, the algorithm either outputs $\bot$ or it outputs for every labeling $\pi : \widetilde{D} \rightarrow [|\widetilde{D}|]$ a proper labeling $\lambda_\pi$ of $(\widetilde{G},\pi)$ such that $\lambda_\pi(v) = \pi(v)$. The labeling $\lambda_\pi$ is weakly isomorphism invariant in $(\widetilde{G}, \pi)$. If the algorithm outputs $\bot$, we output $\bot$ on the original instance $({\cal F},k,G,D)$ and terminate (and this is weakly isomorphism invariant).  Thus, we assume that the algorithm succeeds on each recursive call made on the children. 

Before we go further, we first apply Lemma~\ref{lem:compact}  on $G$ and ${\sf col}_{G}$  and obtain an isomorphism invariant  compact coloring of $G$. We also need to compact the labelings obtained from applying recursive steps. To do this we again apply Lemma~\ref{lem:compact}   and obtain compact labelings. 

Next, we define $(T^\star, \chi^\star)$, an isomorphism invariant (with respect to $G$ and $B$) star decomposition of 
$G$ with central bag $b \in V(T^\star)$. The central bag $\chi(b)=B$. Further, for every connected component 
$C$ of $G-B$, there is a leaf $\ell\in V(T^\star)$, such that $\chi^\star(\ell)=N[C]$.  Using the output of the recursive 
calls, we define a leaf-labeling of $(T^\star, \chi^\star)$.  That is, it  is a set of labelings $\{ \lambda_{\pi, \ell} \}$ with a proper labeling  $\lambda_{\pi, \ell}$ of $G[\chi^\star(\ell)]$ for every leaf $\ell$ of $T$ and every permutation $\pi : \sigma(\ell) \rightarrow [|\sigma(\ell)|]$. For every $\ell$ and $\pi$ the labeling $\lambda_{\pi, \ell}$ satisfies $\lambda_{\pi, \ell}(v) = \pi(v)$ for every $v \in \sigma(\ell)$. By construction, the leaf labeling is locally weakly isomorphism invariant. Now we apply Lemma~\ref{lem:smallCentersAlgorithm}, with $(G, B,  (T^\star,\chi^\star,b,\{ \lambda_{\pi, \ell} \}))$  
and for every labeling $\pi : B \rightarrow [|B|]$,  obtain 
 a proper labeling $\lambda_\pi$ of $(G, \pi)$ such that $\lambda_\pi(v) = \pi(v)$. 
 The labeling $\lambda_\pi$ is weakly isomorphism invariant in $(G, \pi)$. 
For each labeling $\pi' : D \rightarrow [|D|]$ the algorithm 
iterates over all bijections $\pi : B \rightarrow [|B|]$ that coincide with $\pi'$ on $D$ and sets $\lambda_{\pi'}$ to be the labeling $\lambda_\pi$ of $(G, \pi')$ that makes the labeled graph $(G, \lambda_{\pi'})$ lexicographically smallest. 
The labeling $\lambda_{\pi'}$ is weakly isomorphism invariant in $(G, \pi')$ (and ${\cal F}$, $k$ and $s$), hence it can be output by the algorithm.
 The running time of this step is upper bounded by $|B|!n^{\cO(1)}=2^{\cO(\bigbag \log \bigbag)} n^{\cO(1)} $.

\paragraph{Case IV:  $ \Upsilon < |D| \leq \eta $    and $D$ is unbreakable in  $G$.}
In this case $D$ is unbreakable in $G$. We first apply Lemma~\ref{lem:extensionIsTied} and construct a  $k$-important separator extension $A$ of $D$, which is isomorphism invariant in $(G, D)$. For each $v\in D$, the unique $\{v\}-D$, separator is $\{v\}$ itself and thus, $D\subseteq A$.  Observe that for any vertex $v \notin D$,  either $v$ is added to $A$ or some $v$-$D$ separator $S$ of size at most $k$ is added to $A$ and $v\notin S$. Using this we can show that for each component $C$ of $G-A$, $N(C)\subsetneq A$ as follows. 
Let $v\in C$, then we claim that $N(C)\cap D \subseteq S$. This follows from the fact that $S$ is a subset of $A$ and 
$S$ is a $v$-$D$ separator. Furthermore, since $D\subseteq A$, $|D|>k$ and $|S|\leq k$, we have that $A$ contains at least one vertex that is not present in $N(C)$. Hence, $N(C)\subsetneq A$.

 For each subset $X$ of $A$ so that $X = N(C)$ for some component $C$ of $G-A$, let $\hat{C}$ be the union of all such components. In this case we recurse on $\widetilde{G}=G[N[\hat{C}]]$ for each $\hat{C}$. However, to do this we need to show that $\widetilde{G}$ satisfies the premise of Lemma~\ref{lem:unbreakableToCliqueUnbreakable}. We take restriction of ${\sf col}_G$ to $\widetilde{G}$ as the coloring ${\sf col}_{\widetilde{G}}$.  The set $\widetilde{D}=X=N(C)=N(\hat{C})$ is defined to be the distinguished set for $\widetilde{G}$.  By Lemma~\ref{lem:extSmallComponentBoundary} we have that 
 $|N(C)|\leq \bigadh$, and thus   $|\widetilde{D}|\leq \bigadh$.  
Let $\tilde{\iota} \colon \widetilde{D} \to [|\widetilde{D}|]$, be an arbitrary bijection from $\widetilde{D} \to [|\widetilde{D}|]$.

Let $G_C=G-\hat{C}$. That is, the graph obtained by deleting the vertices of connected components in $\hat{C}$ from $G$. Observe that since $D \cap \hat{C}=\emptyset$, we can define $\widetilde{H}$ as  
$(G_C,\iota)\oplus (H,\iota)$.   We view $(\widetilde{H},\tilde{\iota})$ as a $[|\widetilde{D}|]$-boundaried graph with boundary $\widetilde{D}$ and the labelling $\tilde{\iota} \colon \widetilde{D} \to [|\widetilde{D}|]$, defined above. 
Observe that $(\widetilde{G},\tilde{\iota})\oplus (\widetilde{H},\tilde{\iota})$ {\em is same as} $(G,\iota)\oplus (H,\iota)$. This implies that $(\widetilde{G},\tilde{\iota})\oplus (\widetilde{H},\tilde{\iota})$ is ${\cal F}$-free, as $(G,\iota)\oplus (H,\iota)$  is ${\cal F}$-free.  Finally, we need to show that  the $k$-improved graph of  $(\widetilde{G},\tilde{\iota})\oplus (\widetilde{H},\tilde{\iota})$ has a connectivity sensitive star decomposition $(T', \chi')$ satisfying the properties stated in the statement of the lemma. Since, $(\widetilde{G},\tilde{\iota})\oplus (\widetilde{H},\tilde{\iota})$ is same as $(G,\iota)\oplus (H,\iota)$, we 
take the connectivity sensitive star decomposition $(T, \chi)$ of $(G,\iota)\oplus (H,\iota)$ as the connectivity sensitive star decomposition $(T', \chi')$ of  $(\widetilde{G},\tilde{\iota})\oplus (\widetilde{H},\tilde{\iota})$. Using the fact that $V(\widetilde{G})\subseteq V(G)$, we can can easily observe that $(T', \chi')$ satisfies all the desired properties. 

  To apply the inductive argument we need to show that 
 $\pot(\widetilde{G},\widetilde{D})<\pot(G,D)$.  
 Observe that, 
\begin{eqnarray*}
 \pot(\widetilde{G},\widetilde{D})   &\leq & 2( n-|A|)+ |\widetilde{D}|\\
&=&  2n-2|A|+|D|-|D|+|\widetilde{D}| \\
& = & 2n-|D|-(2|A|-|D|- |\widetilde{D}|) \\
& < & \pot(G,D)~~~~~~~~~~~~~~~~~~~~~~~~~~~~~~~~~~~~~~~~~~~~(\widetilde{D} \subsetneq A \mbox{ and } D \subseteq A )
 \end{eqnarray*}

Finally, the algorithm either outputs $\bot$ or it outputs for every bijection $\pi : \widetilde{D} \rightarrow [|\widetilde{D}|]$ a proper labeling $\lambda_\pi$ of $(\widetilde{G},\pi)$ such that $\lambda_\pi(v) = \pi(v)$. The labeling $\lambda_\pi$ is weakly isomorphism invariant in $(\widetilde{G}, \pi)$. If the algorithm outputs $\bot$, we output $\bot$ on the original instance $({\cal F},k,G,D)$ and terminate (and this is weakly isomorphism invariant).  Thus, we assume that the algorithm succeeds on each recursive call made on the children. 

Before we go further, we first apply Lemma~\ref{lem:compact}  on $G$ and ${\sf col}_{G}$  and obtain an isomorphism invariant  compact coloring of $G$. We also need to compact the labelings obtained from applying recursive steps. To do this we again apply Lemma~\ref{lem:compact}   and obtain compact labelings.

Next, we define $(T', \chi')$, an isomorphism invariant (with respect to $G$ and $A$) star decomposition of 
$G$ with central bag $b \in V(T')$. The central bag $\chi'(b)=A$. Furthermore, for each subset $X$ of $A$ so that $X = N(C)$ for some component $C$ of $G-A$, let $\hat{C}$ be the union of all such components. For each  $\hat{C}$ there is a leaf $\ell\in V(T')$, such that $\chi'(\ell)=N[\hat{C}]$. Using the output of the recursive 
calls, we define a leaf-labeling of $(T', \chi')$.  That is, it  is a set of labelings $\{ \lambda_{\pi, \ell} \}$ with a labeling  $\lambda_{\pi, \ell}$ of $G[\chi'(\ell)]$ for every leaf $\ell$ of $T'$ and every permutation $\pi : \sigma(\ell) \rightarrow [|\sigma(\ell)|]$. For every $\ell$ and $\pi$ the labeling $\lambda_{\pi, \ell}$ satisfy $\lambda_{\pi, \ell}(v) = \pi(v)$ for every $v \in \sigma(\ell)$. By construction, the leaf labelings are locally weakly isomorphism invariant. Further, by construction we have that  $\chi'(\ell) \cap \chi'(b) \neq \chi'(\ell') \cap \chi'(b)$ for every pair of distinct leaves $\ell, \ell'$ in $T'$, and for every leaf $\ell'$ of $T'$ and every connected component $C$ of $G[\chi'(\ell) \setminus \chi'(b)]$ it holds that $N(C) = \chi'(\ell) \setminus \chi'(b)$. Thus, the constructed star decomposition $(T', \chi')$ with central bag $b$ is connectivity-sensitive.  

Recall that, $n_{\cal F}$ denotes the maximum size of a graph (in terms of the number of vertices) in the given family of topological minor free graphs 
${\cal F}$.  
We apply the weak unpumping operation on $(G, D, (T',\chi'),b,\{ \lambda_{\pi, \ell} \}))$ and parameter $h = n_{\cal F}$.

We aim to call the algorithm ${\cal A}$ on $(G^\star, \iota)$ for every bijection $\iota : D \rightarrow [|D|]$.
To that end we need to ensure that $G^\star$ satisfies the input requirements of  ${\cal A}$, in particular that it is sufficiently unbreakable.  We first define a set of functions, that will be used in the proof.  
\begin{itemize}
\setlength{\itemsep}{-2pt}
\item $q_1 : \mathbb{N} \rightarrow \mathbb{N}$,   $q_1(i) = i $.
\item $q_2 : [k] \rightarrow \mathbb{N}$,   $q_2(i) = i \cdot (12^i + 1)$. Therefore, $q_2(i) \leq 2^{5i}$.
\item  $q_3 : [\frac{k}{2}] \rightarrow \mathbb{N}$ is defined as $q_3(i) = q_2(2i)$. Therefore $q_3(i) \leq 2^{10i}$.
\item $q_4 : \mathbb{N} \rightarrow \mathbb{N}$ defined as $q_4(i) = (2^{i(12^i + 1)} + 1)\eta_w(i^2 4^i, h)$. 
Here, ${\eta}_w(f, t)$ denotes the smallest $\ell$ such that every $t$-boundaried graph has an $f$-weak representative on at most $\ell$ vertices. Therefore, $q_4(i) \leq 2^{2^{5i}}\eta_w(2^{4i},h)$

\item $q^\star : [\frac{k}{2}] \rightarrow \mathbb{N}$ is defined as $q^\star(i) = q_3(i)+(2^{q_3(i)}+i)\cdot q_4(q_3(i)+i)$. Therefore, $q^\star(i) \leq 2^{2^{2^{14i}}}\eta_w(2^{2^{13i}},h)$. 
\end{itemize}

Observe that $D$ is $q_1$-unbreakable in $G$. 
By Lemma~\ref{lem:extensionIsTied}, $A$ is $q_2$-unbreakability tied to $D$ in $G$.
By Lemma~\ref{lem:tiedPlusUnbreakGivesUnbreak}, $A$ is $q_3$-unbreakable in $G$.
By Property~\ref{itm:unpumpKeepsUnbreak} of Lemma~\ref{lem:unpumpProperties}, $A$ is $q_3$-unbreakable in $G^\star$.
For every leaf $\ell$ of $G^\star$ the definition of weak unpumping yields that $(G^\star[\chi^\star(\ell)], \pi\langle \ell \rangle)$ is an $h$-weak representative of $(G[\chi'(\ell)], \pi\langle \ell \rangle)$.
From Lemma~\ref{lem:repsAreUnbreakabilityTiedToBoundary} it follows that for every leaf $\ell$, $\chi^\star(\ell)$ is $q_4$-unbreakability tied to $\sigma^\star(\ell)$.
By Property~\ref{itm:connSensitiveGstar} of Lemma~\ref{lem:unpumpProperties} the star decomposition $(T', \chi^\star)$ is connectivity-sensitive. 
Now Lemma~\ref{lem:concludeGstarUnbreakable} yields that $G^\star$ is $q^\star$-unbreakable.  

For every bijection $\iota : D \rightarrow [|D|]$ the algorithm calls ${\cal A}$ on $((G^\star, \iota), {\cal F}, k/2, q^\star)$. 
Because  $G^\star$ is a $q^\star$-unbreakable ${\cal F}$-free graph this is a well-formed input for the algorithm ${\cal A}$ (Technically the algorithm ${\cal A}$ takes as input a normal graph, rather than a boundaried graph. Therefore, instead of passing $(G^\star, \iota)$ directly to ${\cal A}$ the algorithm passes the compact color encoding of $(G^\star, \iota)$ instead.) 
If ${\cal A}$ fails (returns $\bot$) then our algorithm fails as well (and this is weakly isomorphism invariant). However, assuming that $k/2 \geq \kappa_{\cal A}$ none of the calls to ${\cal A}$ fail.  

For each $\iota$, the call to the algorithm $\cal A$ returns a weakly isomorphism invariant proper labeling $\lambda_\iota^\star$ of $(G^\star, \iota)$.  
The algorithm applies the lifting procedure (see Section~\ref{sec:unpumping}) on the lifting bundle $(G, D, (T', \chi'), b, \{\lambda_{\pi, \ell}\}, G^\star, (T', \chi^\star), \iota, \lambda_\iota^\star)$ and obtains a labeling $\lambda_\iota$ of $(G, \iota)$. By Lemma~\ref{lem:lem:caNlifting},  $\lambda_\iota$ is weakly isomorphism invariant in $(G, \iota)$, and by Observation~\ref{obs:liftRunTime} the procedure takes time polynomial in the input. We call the lifting procedure $|D|!$ times, one for each $\pi : D \rightarrow [|D|]$. This concludes the running time analysis for this case.

Finally, we analyze the running time of the whole algorithm. The algorithm is a recursive algorithm and hence has a recursion tree $\cal T$. We have already proved that in each node of the recursion tree the algorithm spends time at most  $g({\cal F}, k, s)n^{\cO(1)}$ and makes at most $g(k,{\cal F})n^{\cO(1)}$ ($\eta! = ( k^2 \cdot 4^k)!$) calls to ${\cal A}$. 
Thus, to prove the claimed upper bound on the running time it suffices to upper bound the number of nodes in ${\cal T}$ by $\cO(n^2)$.

First, observe that in every node of the recursion tree the algorithm only makes calls to instances with strictly lower potential value, where the potential is  $\pot(G,D)= 2n-|D| \leq 2n$. Since the potential value is non-zero for every node of the recursion tree, the length of the longest root-leaf path in ${\cal T}$ is at most $2n$.

A {\em level} of the recursion tree is a set of nodes of ${\cal T}$ at the same distance from the root. For every level we upper bound the number of nodes in that level by $\cO(n)$. Towards this goal label each node of ${\cal T}$ with the graph $G$ and a set $B$ (or $A$; both $A$ and $B$ contain the distinguished set $D$) that is processed in that node. A vertex $v \in V(G) \setminus B$ or $v \in V(G) \setminus A$ is called a {\em non-boundary} vertex. 

An inspection of the algorithm reveals the following observations: (i) every node of ${\cal T}$ has at least one non-boundary vertex, (ii) a non-boundary vertex of a node is also a non-boundary vertex of its parent node, (iii) for every node $(G, B)$ ($(G, A)$) of ${\cal T}$ and every non-boundary vertex $v \in V(G) \setminus B$ ($v \in V(G) \setminus A$), $v$ is a non-boundary node in at most one child of the call $(G, B)$ ($(G, A)$). The observations (i), (ii) and (iii) together yield that the size of every level of the recursion tree is upper bounded by $n$. This concludes the proof.
\end{proof}

We are now in position to prove Lemma~\ref{lem:unbreakToClique}.

\medskip
\noindent
{\bf Lemma~\ref{lem:unbreakToClique} (restated). }{\em 
There exists a function $q^\star: \mathbb{F}^\star \times \mathbb{N} \rightarrow \mathbb{N}$ such that the following holds. 
For every collection of finite sets of graphs $\mathbb{F} \subseteq \mathbb{F}^\star$, %
if there exists a canonization algorithm ${\cal A}$ for $(q^\star, \kappa)$-unbreakable $\mathbb{F}$-free classes (for some  $\kappa : \mathbb{F} \rightarrow \mathbb{N}$) 
then there exists a function $\kappa_{\cal B} : \mathbb{F} \rightarrow \mathbb{N}$ and a canonization algorithm ${\cal B}$ for $\kappa_{\cal B}$-improved-clique-unbreakable $\mathbb{F}$-free classes.
The running time of ${\cal B}$ on an instance $(G,{\cal F},k,s)$ is upper bounded by $g({\cal F},k,s)n^{\cO(1)}$ and the total time taken by at most $h({\cal F},k)n^{\cO(1)}$ invocations of ${\cal A}$ on $({\cal F}, G, k/2, q)$, where ${\cal F} \in \mathbb{F}$, $q : [k/2] \rightarrow \mathbb{N}$ is a function such that $q(i) \leq q^\star(i)$ for $i \leq k/2$, and $G$ is an ${\cal F}$-free, $q$-unbreakable graph on at most $n$ vertices.
}

\begin{proof}[Proof of Lemma~\ref{lem:unbreakToClique}]
Let $q^\star: \mathbb{F}^\star \times \mathbb{N} \rightarrow \mathbb{N}$ be the function defined as $q^\star({\cal F},i)= 2^{2^{2^{14i}}}\eta_w(2^{2^{13i}},h)$. We claim it satisfies the conclusion of the Lemma. 
Let  $\kappa_{\cal A} : \mathbb{F} \rightarrow \mathbb{N}$ be a function and let ${\cal A}$ be a canonization algorithm  for $(q^\star, \kappa_{\cal A})$-unbreakable $\mathbb{F}$-free classes. We define a function $\kappa_{\cal B} : \mathbb{F} \rightarrow \mathbb{N}$ as $\kappa_{\cal B}({\cal F})=2\kappa_{\cal A}({\cal F})$.  Next we design a canonization algorithm ${\cal B}$ for $\kappa_{\cal B}$-improved-clique-unbreakable $\mathbb{F}$-free classes using $\cal A$. 
${\cal B}$ takes as input a finite list ${\cal F} \in \mathbb{F}$, integers $k$ and $s$, and an ${\cal F}$-topological minor free, $(s, k, k)$-improved-clique-unbreakable colored graph $G$, and proceeds as follows.
${\cal B}$ obtains in polynomial time, by using Lemma~\ref{lem:unbreakbleAndstarDeco} on $\imp{G}{k}$ with parameters $k$ and $s$, an isomorphism invariant star decomposition $(T,\chi)$ of $\imp{G}{k}$, with the following properties. 
\begin{enumerate}
\setlength{\itemsep}{-2pt}
\item The central bag $\chi(b)$ is either a clique of size at most $k$ or a $k$-atom. 
\item For all $t\in V(T)$, $\sigma(t)\leq k$ (adhesion size is bounded by $k$). 
\item For all leaves $\ell \in V(T)$, $|\chi(\ell)|\leq 3s$.
\item $(T,\chi)$ is connectivity sensitive. 
\end{enumerate}

There are two cases, either $\chi(b)$ is a clique in $\imp{G}{k}$ or it is not. If $\chi(b)$ is a clique in $\imp{G}{k}$ then ${\cal B}$ calls the algorithm of Lemma~\ref{lem:cliqueUnbreakableCentralClique} on ${\cal F}$, $k$, $3s$, $G$, and $(T, \chi)$. If $k \geq \kappa_{\cal B} \geq 2\kappa_{\cal A}$ the algorithm of Lemma~\ref{lem:cliqueUnbreakableCentralClique} succeeds and returns a labeling $\lambda$ of $G$, which is weakly isomorphism invariant in~${\cal F}$, $k$, $3s$, $G$, and $(T, \chi)$ and therefore in ${\cal F}$, $k$, $s$ and $G$, using time at most $\cO((3s)!n^{\cO(1)})$ and at most one invocation of the algorithm ${\cal A}$ on an instance $({\cal F}, G, k/2, q_3)$, where ${\cal F} \in \mathbb{F}$, $q_3 : [k/2] \rightarrow \mathbb{N}$ is a function such that $q_3(i) \leq \hat{q}({\cal F}, i) \leq q^\star({\cal F}, i)$ for $i \leq k/2$, and $G$ is an ${\cal F}$-free, $q_3$-unbreakable graph on at most $n$ vertices. Here $\hat{q}$ is the function in the statement of Lemma~\ref{lem:cliqueUnbreakableCentralClique}, which satisfies $\hat{q}({\cal F}, i) \leq q^\star({\cal F}, i)$ for all ${\cal F}$ and $i \geq 1$.

Suppose now that $\chi(b)$ is not a clique in $\imp{G}{k}$. Then there exists a pair $u, v$ of distinct vertices in  $\chi(b)$, such that there exists a $u$-$v$ separation in $G$ of order at most $k$. This is just a non-edge in $\imp{G}{k}[\chi(b)]$. In this case, the algorithm proceeds as follows.
For every (ordered) pair $u$, $v$ of distinct vertices in  $\chi(b)$, such that there exists a $u$-$v$ separation in $G$ of order at most $k$ the algorithm computes the minimum order $u$-$v$ separation $(A, B)$ pushed towards $v$. Let $D = A \cap B$.
The algorithm ${\cal B}$ calls the algorithm of Lemma~\ref{lem:unbreakableToCliqueUnbreakable} on ${\cal F}$ and $G[A]$ with distinguished set $D$ and parameters $k$ and $3s$, and on ${\cal F}$ and $G[B]$ with distinguished set $D$ and parameters $k$ and $3s$.
Note that these are well-formed instances to the algorithm of Lemma~\ref{lem:unbreakableToCliqueUnbreakable} because, for an arbitrary $\iota : D \rightarrow [|D|]$ we have that $G = (G[A], \iota) \oplus (G[B], \iota)$ and the star decomposition $(T, \chi)$ of $G$ satisfies the premise of Lemma~\ref{lem:unbreakableToCliqueUnbreakable}.

If at least one of the calls to Lemma~\ref{lem:unbreakableToCliqueUnbreakable} outputs $\bot$ then ${\cal B}$ outputs $\bot$ as well (and this is weakly isomorphism invariant). However, if $k \geq \kappa_{\cal B}$ then $k/2 \geq \kappa_{\cal A}$ and therefore all calls to Lemma~\ref{lem:unbreakableToCliqueUnbreakable} succeed. We now assume that all calls to Lemma~\ref{lem:unbreakableToCliqueUnbreakable} succeed.
The invocations of Lemma~\ref{lem:unbreakableToCliqueUnbreakable} returned for every bijection $\pi : D \rightarrow [|D|]$ a weakly isomorphism invariant proper labeling $\lambda_\pi^A$ of $(G[A], \pi)$ and a weakly isomorphism invariant proper labeling $\lambda_\pi^B$ of $(G[B], \pi)$.

We now turn the separation $(A, B)$ into a star decomposition $(T, \chi)$ with central bag $b$ and two leaves $\ell_A$ and $\ell_B$.
We set $\chi(b) = S$, $\chi(\ell_A) = A$ and $\chi(\ell_B) = B$, and observe that the sets of labelings $\{\lambda_\pi^A ~:~ \pi : S \rightarrow [|S|]\}$ and  $\{\lambda_\pi^B ~:~ \pi : S \rightarrow [|S|]\}$ is a weakly isomorphism invariant leaf labeling of $(T, \chi)$. 

The algorithm invokes Lemma~\ref{lem:smallCentersAlgorithmWrapper} and obtains for every $\pi : S \rightarrow [|S|]$ a proper labeling $\lambda_\pi^{A,B}$ which is weakly isomorphism invariant in $G$, $A$, $B$ and $\pi$.
Finally, it chooses among all proper labelings $\lambda_\pi^{A,B}$ over all choices of $(A, B)$ and $\pi$ the proper labeling $\lambda$ that makes $G$ with labeling $\lambda$ lexicographically smallest. 

The set of ordered pairs $(u, v)$ such that $uv \notin E(\imp{G}{k})$ is isomorphism invariant in $G$ and $k$.
The separation $(A, B)$ pushed towards $v$ is isomorphism invariant in $G$, $u$ and $v$ by the uniqueness of pushed separators. 
Therefore the set of separations $(A, B)$ iterated over by the algorithm is isomorphism invariant in $G$ and $k$. We conclude that the set $\lambda_\pi^{A,B}$ of labelings is weakly isomorphism invariant in $G$, ${\cal F}$, $k$ and $s$. Since selection of a lexicographically smallest graph from a set of graphs is a weakly isomorphism invariant operation, $\lambda$ is weakly isomorphism invariant in $G$, ${\cal F}$, $k$ and $s$. 

The running time of the algorithm ${\cal B}$ is upper bounded by $n^{\cO(1)}k!$ plus $n^{2}$ invocations of the algorithm of Lemma~\ref{lem:unbreakableToCliqueUnbreakable}. Each of those invocations in turn take at most time 
$g({\cal F}, k, s)n^{\cO(1)}$ plus the time taken by $h({\cal F},k)n^{\cO(1)}$ invocations of ${\cal A}$ for some functions $g$, $h$. This concludes the proof.

\end{proof}

\section{Canonizing unbreakable graphs with an excluded minor}\label{sec:unbrk-minor}
In this section we state formally the main result of the subordinate paper~\cite{subordinate}
(which was informally announced as Theorem~\ref{thm:intro:rigid}) and then deduce from it
Theorem~\ref{thm:unbrk-minor}.
In a sense, the main result of~\cite{subordinate} strips Theorem~\ref{thm:unbrk-minor}
it of its algorithmic layer, in order to focus
on the underlying structural graph theory part. To this end, it is a bit more convenient
to phrase the unbreakability assumption as an \emph{unbreakability chain}. 

\begin{definition}[unbreakability chain]
Let $G$ be a graph and $f\colon \N\to \N$ be a nondecreasing function with $f(x) > x$ for every $x \in \N$.
An {\em{unbreakability chain}} of $G$ with \emph{length} $\zeta$ and \emph{step} $f$ is any sequence $((q_i,k_i))_{i=0}^\zeta$ of pairs of integers satisfying the following:
\begin{itemize}[nosep]
 \item $G$ is $(q_i,k_i)$-unbreakable, for each $0\leq i\leq \zeta$; and
 \item $k_i=f(q_{i-1} + k_{i-1})$, for each $1\leq i\leq \zeta$.
\end{itemize}
We shall often say that such a chain {\em{starts}} at $k_0$.
\end{definition}

We are now ready to state the main result of~\cite{subordinate}.

\begin{theorem}[\cite{subordinate}]\label{thm:rigid}
There exist constants $\ctime$, $\cgenus$, computable functions $\funtw$, $\funstep$, $\funchain$, $\funtime$, $\funcut$, $\fungenus$ (with $\funstep 
\colon \N \to \N$ being nondecreasing
    and satisfying $\funstep(x) > x$ for every $x \in \N$), and an algorithm that, given
a graph $H$ and an $H$-minor-free uncolored graph $G$
together with $((q_i,k_i))_{i=0}^\zeta$ being an unbreakability chain of $G$ of length $\zeta \coloneqq \funchain(H)$ with step $\funstep$ starting at $k_0 \coloneqq \funcut(H)$,
works in time bounded by $\funtime(H,\sum_{i=0}^\zeta q_i) \cdot |V(G)|^{\ctime}$ and computes a partition
$V(G) = \Vtw \uplus \Vgenus$ and a family $\famgenus$ of bijections $\Vgenus \mapsto [|\Vgenus|]$ 
such that 
\begin{enumerate}
\item the partition $V(G) = \Vtw \uplus \Vgenus$ is isomorphism-invariant;
\item $G[\Vtw]$ has treewidth bounded by $\funtw(H,\sum_{i=0}^\zeta q_i)$;
\item $|\famgenus| \leq \fungenus(H,\sum_{i=0}^\zeta q_i) \cdot |V(G)|^{\cgenus}$; and
\item $\famgenus$ is isomorphism-invariant.
\end{enumerate}
\end{theorem}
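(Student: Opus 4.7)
The strategy is to exploit the Robertson--Seymour structure theorem together with the strong unbreakability chain to find a near-embedding of $G$ whose ``embedded core'' is essentially canonical. The intuition is that an unbreakable $H$-minor-free graph admits a near-embedding in a surface of bounded Euler genus in which the roles of apices, vortices, and the main embedded part are determined up to bounded ambiguity by the global connectivity structure that unbreakability enforces. The rigidity of $\Vgenus$ then comes from the classical fact that embeddings in surfaces of bounded Euler genus are, for sufficiently connected graphs, unique up to the action of a finite group of combinatorial symmetries of the surface, and this group can be enumerated explicitly.

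The first step is to invoke an algorithmic Graph Minors structure theorem to compute a tree decomposition of $G$ whose torsos admit near-embeddings in surfaces of Euler genus bounded by a function of $H$, with bounded apex sets, bounded number of vortices of bounded adhesion width. Using the unbreakability assumption and the first element $k_0 = \funcut(H)$ of the chain -- chosen to exceed the adhesion bound guaranteed by the structure theorem -- one argues that precisely one bag, the ``main'' one, absorbs essentially all of $V(G)$, while every other bag has size bounded in terms of $q_0 + k_0$. All vertices outside the main bag can be placed into $\Vtw$, contributing bounded treewidth. This reduces us to the case of a single nearly embedded $H$-minor-free graph.

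The second step is to canonicalize the near-embedding itself. Declare the apex set $\Apices$, the vortex vertices, and a bounded-radius ``safety buffer'' around each vortex boundary to belong to $\Vtw$; declare $\Vgenus$ to consist of the remaining vertices of the main embedded part $\MainA$. The treewidth bound for $G[\Vtw]$ is then routine: $|\Apices|$ is bounded, each vortex has bounded pathwidth by definition, and the buffers around them have bounded treewidth because of the embedding. For the family $\famgenus$, one uses that a 3-connected graph embedded in a surface of bounded Euler genus has, up to reflections and rotations of the surface, a unique combinatorial embedding; from any fixed embedding, one obtains a proper labeling of $\Vgenus$ by a BFS-like traversal starting from each pair of a chosen base vertex and a chosen incident edge, giving $|\famgenus| \leq \fungenus(\cdot) \cdot |V(G)|^{\cgenus}$ candidate labelings.

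The principal obstacle is making the entire pipeline isomorphism-invariant, since near-embeddings and their decompositions into apex/vortex/main parts are not canonical. This is precisely where one needs the full unbreakability chain rather than a single unbreakability guarantee. The plan is to exhibit a canonical family of near-embeddings, refined iteratively: at level $i$, one uses the guarantee $(q_i,k_i)$-unbreakability with $k_i = \funstep(q_{i-1}+k_{i-1})$ to rule out a class of ambiguities in the previous level's candidate embeddings, such as vortices being confusable with high-treewidth subgraphs attached along a small adhesion, or apices being confusable with interior vertices of high degree. The step function $\funstep$ must grow sufficiently fast to guarantee that new separations discovered at level $i$ cannot affect choices already locked in at lower levels; the length $\zeta = \funchain(H)$ is chosen so that, after $\zeta$ refinements, every remaining choice in the near-embedding is determined up to the bounded symmetry of the surface, and the resulting $(\Vtw,\Vgenus,\famgenus)$ depends only on the isomorphism type of $G$. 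Carrying out this iterative canonicalization while keeping the running time of the form $\funtime(H,\sum_i q_i)\cdot |V(G)|^{\ctime}$ -- in particular, avoiding any dependence on the $q_i$ in the exponent of $|V(G)|$ -- constitutes the bulk of the technical work in~\cite{subordinate}.
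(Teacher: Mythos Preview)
This theorem is not proved in the present paper at all: it is stated as the main result of the companion manuscript~\cite{subordinate} and simply imported here (see the sentence immediately preceding the theorem statement, and the remark in the introduction that ``This proof is the main result of the subordinate paper~\cite{subordinate}''). There is therefore no proof in this paper to compare your proposal against.

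That said, your high-level outline is broadly consistent with the informal description the paper does provide: the introduction explains that $\Vtw$ is meant to absorb the apices, vortices, and other ``near-'' features of a near-embedding, while $\Vgenus$ is the embedded core whose rigidity is witnessed by the embedding. Your sketch correctly identifies the two main ingredients (using unbreakability with $k_0$ exceeding the RS-decomposition adhesion bound to localize to a single near-embeddable piece, and then using bounded-genus rigidity to produce $\famgenus$) and correctly flags the isomorphism-invariance of the near-embedding as the crux. But your final paragraph is not a proof so much as a restatement of the difficulty: you assert that the unbreakability chain can be used to iteratively canonicalize the choices of apices and vortices without saying concretely how any single refinement step works, what invariant is maintained across levels, or why $\funchain(H)$ levels suffice. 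Since that iterative canonicalization \emph{is} the theorem, what you have written is a plausible table of contents for~\cite{subordinate} rather than a proof.
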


It remains to show how Theorem~\ref{thm:rigid} implies Theorem~\ref{thm:unbrk-minor}

\begin{proof}[Proof of Theorem~\ref{thm:unbrk-minor} assuming Theorem~\ref{thm:rigid}.]
To describe the algorithm $\mathcal{A}$ of Theorem~\ref{thm:unbrk-minor},
assume we are given $H$, an $H$-minor-free graph $G$,
an integer $k$, and a function $q \colon [k] \to \N$ such that $G$ is $(q(i),i)$-unbreakable
for every $1 \leq i \leq k$. 

Compute $\zeta = \funchain(H)$ and $k_0 = \funcut(H)$. 
Iteratively, for every $i=0,1,\ldots,\zeta$, proceed as follows. If $k_i > k$, stop and return $\bot$. 
Otherwise, look up $q_i \coloneqq q(k_i)$ and compute $k_{i+1} \coloneqq \funstep(q_i+k_i)$. 
If the iteration finished the step $i = \zeta$ without returning $\bot$, we proceed to the next steps of the algorithm.

In the subsequent steps, the algorithm will always return a canonical labeling of $G$ (i.e., it will no longer have an option of returning $\bot$). Before we proceed to the description of these steps,
let us argue about the existence of the bound $k_{H,\funq}$ from the second point in the statement of Theorem~\ref{thm:unbrk-minor}. To this end, let $\funq : \mathbb{N} \to \mathbb{N}$ be such
that $\funq(i) \geq q(i)$ for every $i \in [k]$. 

Consider a sequence $k_i'$ defined as $k_0' \coloneqq k_0 = \funcut(H)$ and $k_{i+1}' = \funstep(\funq(k_i') + k_i')$. 
Note that the assumptions on $\funstep$ imply that $k_{i+1}'$ is strictly increasing
and $\funq(i) \geq q(i)$ implies that $k_{i}' \geq k_i$ for all $i$ for which $k_i$ is defined. 
Recall that $\zeta = \funchain(H)$ and 
observe that if $k \geq k_\zeta'$, then the above iteration runs for all steps $i=0,\ldots,\zeta$ and does not return $\bot$. Thus, we can set $\kappa_{H,\funq} \coloneqq k_\zeta'$. 

We return to the description of the algorithm.
Note that now $((q_i,k_i))_{i=0}^\zeta$ is an unbreakability chain of $G$ of length $\zeta$.
We pass $H$, $G$ (without colors),
and the unbreakability chain $((q_i,k_i))_{i=0}^\zeta$ to the algorithm
of Theorem~\ref{thm:rigid}.
The algorithm returns a partition $V(G) = \Vtw \uplus \Vgenus$ and a family $\famgenus$.

For every $\lambda \in \famgenus$, we additionally color every vertex $v \in \Vtw$
with color $\{\lambda(w)~|~w \in \Vgenus \cap N_G(v)\}$. 
Then, with this enhanced coloring, we run the canonization algorithm for graphs of bounded
treewidth of~\cite{LokshtanovPPS17}\footnote{Formally, in~\cite{LokshtanovPPS17} only canonization for uncolored graphs is discussed. However, it is straightforward to derive from it also a canonization algorithm for colored graphs of bounded treewidth, for instance by attaching to every vertex a suitable number of degree-$1$ vertices to signify its color.},
obtaining a labeling $\Lambda_\lambda'$
of $G[\Vtw]$. Then, we consider a labeling $\Lambda_\lambda$ of $V(G)$
defined as $\Lambda_\lambda(v) = \Lambda_\lambda'(v)$ for $v \in \Vtw$ and $\Lambda_\lambda(v) = |\Vtw| + \lambda(v)$ for $v \in \Vgenus$. 

The above operation results in a family of labelings $\{\Lambda_\lambda~|~\lambda \in \famgenus\}$
of size at most $|\famgenus|\leq \fungenus(H,\sum_{i=0}^\zeta q_i) \cdot |V(G)|^{\cgenus}$. 
We choose and return
the labeling that yields the lexicographically-minimum labeled graph $G$.

The fact that the partition $V(G) = \Vtw \uplus \Vgenus$ and the family $\famgenus$ 
is isomorphism-invariant and the properties of the canonization algorithm
of~\cite{LokshtanovPPS17} ensures that the returned labeling is canonical. 

For the running time bound, observe that the computability of $\funchain$, $\funstep$,
and $\funcut$ implies that the chain $((q_i,k_i))_{i=0}^\zeta$ is 
computable in time depending only on the values of $q$ and the graph $H$. 
Also, the exact values of $\zeta$ and $((q_i,k_i))_{i=0}^\zeta$ depend on $q$ and the graph $H$ only.
Hence, the running time of the call to the algorithm of Theorem~\ref{thm:rigid}
is within the promised running time bound.
Furthermore, the bound on the treewidth of $G[\Vtw]$ depends only on the values of $q$ and the graph
$H$, hence the call to the algorithm of~\cite{LokshtanovPPS17} is also within the promised
running time bound. 
Finally, the bound on the size of $\famgenus$ implies that the final computation runs also 
within the promised time complexity bound.
This finishes the proof of Theorem~\ref{thm:unbrk-minor}, assuming Theorem~\ref{thm:rigid}.
\end{proof}

\bibliographystyle{alpha}
\bibliography{genusIsomorphism,tw-iso,references}

\newcommand{\etalchar}[1]{$^{#1}$}
\begin{thebibliography}{GKMW11}

\bibitem[Bab16]{Babai16}
L{\'{a}}szl{\'{o}} Babai.
\newblock Graph isomorphism in quasipolynomial time [extended abstract].
\newblock In {\em Proceedings of the 48th Annual {ACM} {SIGACT} Symposium on
  Theory of Computing, {STOC} 2016}, pages 684--697. {ACM}, 2016.

\bibitem[BBC00]{BerryBC00}
Anne Berry, Jean~Paul Bordat, and Olivier Cogis.
\newblock Generating all the minimal separators of a graph.
\newblock {\em Int. J. Found. Comput. Sci.}, 11(3):397--403, 2000.

\bibitem[BL83]{BabaiL83}
L{\'a}szl{\'o} Babai and Eugene~M. Luks.
\newblock Canonical labeling of graphs.
\newblock In {\em STOC}, pages 171--183, 1983.

\bibitem[BLS99]{brandstadt1999graph}
Andreas Brandst{\"a}dt, Van~Bang Le, and Jeremy~P Spinrad.
\newblock {\em Graph classes: a survey}.
\newblock SIAM, 1999.

\bibitem[Bod90]{Bodlaender90}
Hans~L. Bodlaender.
\newblock Polynomial algorithms for {G}raph {I}somorphism and {C}hromatic
  {I}ndex on partial $k$-trees.
\newblock {\em J. Algorithms}, 11(4):631--643, 1990.

\bibitem[Bod03]{Bodlaender03}
Hans~L. Bodlaender.
\newblock Necessary edges in k-chordalisations of graphs.
\newblock {\em J. Comb. Optim.}, 7(3):283--290, 2003.

\bibitem[BPS10]{BerryPS10}
Anne Berry, Romain Pogorelcnik, and Genevi{\`e}ve Simonet.
\newblock An introduction to clique minimal separator decomposition.
\newblock {\em Algorithms}, 3(2):197--215, 2010.

\bibitem[CCH{\etalchar{+}}16]{ChitnisCHPP16}
Rajesh Chitnis, Marek Cygan, MohammadTaghi Hajiaghayi, Marcin Pilipczuk, and
  Michal Pilipczuk.
\newblock Designing {FPT} algorithms for cut problems using randomized
  contractions.
\newblock {\em {SIAM} J. Comput.}, 45(4):1171--1229, 2016.

\bibitem[CCMN03]{ClautiauxCMN03}
Fran{\c{c}}ois Clautiaux, Jacques Carlier, Aziz Moukrim, and St{\'{e}}phane
  N{\`{e}}gre.
\newblock New lower and upper bounds for graph treewidth.
\newblock In {\em Experimental and Efficient Algorithms, Second International
  Workshop, {WEA} 2003, Ascona, Switzerland, May 26-28, 2003, Proceedings},
  volume 2647 of {\em Lecture Notes in Computer Science}, pages 70--80.
  Springer, 2003.

\bibitem[CFK{\etalchar{+}}15]{platypus}
Marek Cygan, Fedor~V. Fomin, \L{}ukasz Kowalik, Daniel Lokshtanov, D{\'{a}}niel
  Marx, Marcin Pilipczuk, Micha\l{} Pilipczuk, and Saket Saurabh.
\newblock {\em Parameterized Algorithms}.
\newblock Springer, 2015.

\bibitem[CKL{\etalchar{+}}21]{lean-decomp}
Marek Cygan, Paweł Komosa, Daniel Lokshtanov, Marcin Pilipczuk, Michał
  Pilipczuk, Saket Saurabh, and Magnus Wahlstr{\"{o}}m.
\newblock Randomized contractions meet lean decompositions.
\newblock {\em {ACM} Trans. Algorithms}, 17(1):6:1--6:30, 2021.

\bibitem[CLL09]{imp-seps-chen}
Jianer Chen, Yang Liu, and Songjian Lu.
\newblock An improved parameterized algorithm for the minimum node multiway cut
  problem.
\newblock {\em Algorithmica}, 55(1):1--13, 2009.

\bibitem[CLP{\etalchar{+}}14]{CyganLPPS13}
Marek Cygan, Daniel Lokshtanov, Marcin Pilipczuk, Michal Pilipczuk, and Saket
  Saurabh.
\newblock Minimum bisection is fixed parameter tractable.
\newblock In {\em STOC}, pages 323--332, 2014.

\bibitem[CLP{\etalchar{+}}19]{CyganLPPS19}
Marek Cygan, Daniel Lokshtanov, Marcin Pilipczuk, Michal Pilipczuk, and Saket
  Saurabh.
\newblock Minimum bisection is fixed-parameter tractable.
\newblock {\em {SIAM} J. Comput.}, 48(2):417--450, 2019.

\bibitem[Die05]{diestel}
Reinhard Diestel.
\newblock {\em Graph Theory}.
\newblock Springer, 2005.

\bibitem[ES17]{ElberfeldS17}
Michael Elberfeld and Pascal Schweitzer.
\newblock Canonizing graphs of bounded tree width in logspace.
\newblock {\em {ACM} Trans. Comput. Theory}, 9(3):12:1--12:29, 2017.

\bibitem[FM80]{FilottiM80}
I.~S. Filotti and Jack~N. Mayer.
\newblock A polynomial-time algorithm for determining the isomorphism of graphs
  of fixed genus (working paper).
\newblock In {\em STOC}, pages 236--243, 1980.

\bibitem[GKMW11]{GroheKMW11}
Martin Grohe, {Ken-ichi} Kawarabayashi, D{\'a}niel Marx, and Paul Wollan.
\newblock Finding topological subgraphs is fixed-parameter tractable.
\newblock In {\em STOC}, pages 479--488, 2011.

\bibitem[GM12]{marx-grohe}
Martin Grohe and D{\'a}niel Marx.
\newblock Structure theorem and isomorphism test for graphs with excluded
  topological subgraphs.
\newblock In {\em STOC}, pages 173--192, 2012.

\bibitem[GN20]{GN-survey}
Martin Grohe and Daniel Neuen.
\newblock Recent advances on the graph isomorphism problem.
\newblock {\em CoRR}, abs/2011.01366, 2020.

\bibitem[GN21]{rankwidth2}
Martin Grohe and Daniel Neuen.
\newblock Isomorphism, canonization, and definability for graphs of bounded
  rank width.
\newblock {\em Commun. {ACM}}, 64(5):98--105, 2021.

\bibitem[GNS18]{GroheNS18}
Martin Grohe, Daniel Neuen, and Pascal Schweitzer.
\newblock A faster isomorphism test for graphs of small degree.
\newblock In Mikkel Thorup, editor, {\em 59th {IEEE} Annual Symposium on
  Foundations of Computer Science, {FOCS} 2018}, pages 89--100. {IEEE} Computer
  Society, 2018.

\bibitem[GNSW20]{GroheNSW20}
Martin Grohe, Daniel Neuen, Pascal Schweitzer, and Daniel Wiebking.
\newblock An improved isomorphism test for bounded-tree-width graphs.
\newblock {\em {ACM} Trans. Algorithms}, 16(3):34:1--34:31, 2020.

\bibitem[GS15]{rankwidth1}
Martin Grohe and Pascal Schweitzer.
\newblock Isomorphism testing for graphs of bounded rank width.
\newblock In {\em {IEEE} 56th Annual Symposium on Foundations of Computer
  Science, {FOCS} 2015}, pages 1010--1029. {IEEE} Computer Society, 2015.

\bibitem[GWN20]{GroheNW20}
Martin Grohe, Daniel Wiebking, and Daniel Neuen.
\newblock Isomorphism testing for graphs excluding small minors.
\newblock In {\em 61st {IEEE} Annual Symposium on Foundations of Computer
  Science, {FOCS} 2020}, pages 625--636. {IEEE}, 2020.

\bibitem[HT72]{HopcroftT72}
John~E. Hopcroft and Robert~Endre Tarjan.
\newblock Isomorphism of planar graphs.
\newblock In {\em Complexity of Computer Computations}, pages 131--152, 1972.

\bibitem[HT73]{HopcroftT73}
John~E. Hopcroft and Robert~Endre Tarjan.
\newblock A $v \log v$ algorithm for isomorphism of triconnected planar graphs.
\newblock {\em J. Comput. Syst. Sci.}, 7(3):323--331, 1973.

\bibitem[HW74]{HopcroftW74}
John~E. Hopcroft and J.~K. Wong.
\newblock Linear time algorithm for isomorphism of planar graphs (preliminary
  report).
\newblock In {\em STOC}, pages 172--184, 1974.

\bibitem[Kaw15]{Kawarabayashi15}
Ken{-}ichi Kawarabayashi.
\newblock Graph isomorphism for bounded genus graphs in linear time.
\newblock {\em CoRR}, abs/1511.02460, 2015.

\bibitem[KK94]{KloksK94}
Ton Kloks and Dieter Kratsch.
\newblock Finding all minimal separators of a graph.
\newblock In {\em {STACS} 94, 11th Annual Symposium on Theoretical Aspects of
  Computer Science}, volume 775 of {\em Lecture Notes in Computer Science},
  pages 759--768. Springer, 1994.

\bibitem[KT11]{KawarabayashiT11}
{Ken-ichi} Kawarabayashi and Mikkel Thorup.
\newblock The minimum $k$-way cut of bounded size is fixed-parameter tractable.
\newblock In {\em FOCS}, pages 160--169, 2011.

\bibitem[Lei93]{Leimer93}
Hanns{-}Georg Leimer.
\newblock Optimal decomposition by clique separators.
\newblock {\em Discret. Math.}, 113(1-3):99--123, 1993.

\bibitem[LPPS17]{LokshtanovPPS17}
Daniel Lokshtanov, Marcin Pilipczuk, Michal Pilipczuk, and Saket Saurabh.
\newblock Fixed-parameter tractable canonization and isomorphism test for
  graphs of bounded treewidth.
\newblock {\em {SIAM} J. Comput.}, 46(1):161--189, 2017.

\bibitem[LPPS22]{subordinate}
Daniel Lokshtanov, Marcin Pilipczuk, Micha\l{} Pilipczuk, and Saket Saurabh.
\newblock Highly unbreakable graph with a fixed excluded minor are almost
  rigid, 2022.
\newblock Manuscript.

\bibitem[LRSZ18]{LokshtanovR0Z18}
Daniel Lokshtanov, M.~S. Ramanujan, Saket Saurabh, and Meirav Zehavi.
\newblock Reducing {CMSO} model checking to highly connected graphs.
\newblock In Ioannis Chatzigiannakis, Christos Kaklamanis, D{\'{a}}niel Marx,
  and Donald Sannella, editors, {\em 45th International Colloquium on Automata,
  Languages, and Programming, {ICALP} 2018, July 9-13, 2018, Prague, Czech
  Republic}, volume 107 of {\em LIPIcs}, pages 135:1--135:14. Schloss Dagstuhl
  - Leibniz-Zentrum f{\"{u}}r Informatik, 2018.

\bibitem[LSS20]{Lokshtanov0S20}
Daniel Lokshtanov, Saket Saurabh, and Vaishali Surianarayanan.
\newblock A parameterized approximation scheme for min $k$-cut.
\newblock In {\em 61st {IEEE} Annual Symposium on Foundations of Computer
  Science, {FOCS} 2020}, pages 798--809. {IEEE}, 2020.

\bibitem[Luk82]{Luks82}
Eugene~M. Luks.
\newblock Isomorphism of graphs of bounded valence can be tested in polynomial
  time.
\newblock {\em J. Comput. Syst. Sci.}, 25(1):42--65, 1982.

\bibitem[Mil80]{Miller80}
Gary~L. Miller.
\newblock Isomorphism testing for graphs of bounded genus.
\newblock In {\em STOC}, pages 225--235, 1980.

\bibitem[MR11]{multicut}
D{\'a}niel Marx and Igor Razgon.
\newblock Fixed-parameter tractability of multicut parameterized by the size of
  the cutset.
\newblock In Lance Fortnow and Salil~P. Vadhan, editors, {\em STOC}, pages
  469--478. ACM, 2011.

\bibitem[Neu20]{Neuen20}
Daniel Neuen.
\newblock Hypergraph isomorphism for groups with restricted composition
  factors.
\newblock In {\em 47th International Colloquium on Automata, Languages, and
  Programming, {ICALP} 2020}, volume 168 of {\em LIPIcs}, pages 88:1--88:19.
  Schloss Dagstuhl --- Leibniz-Zentrum f{\"{u}}r Informatik, 2020.

\bibitem[Neu21]{Neuen21}
Daniel Neuen.
\newblock Isomorphism testing parameterized by genus and beyond.
\newblock In {\em Proceedings of the 29th Annual European Symposium on
  Algorithms, {ESA} 2021}, volume 204 of {\em LIPIcs}, pages 72:1--72:18.
  Schloss Dagstuhl --- Leibniz-Zentrum f{\"{u}}r Informatik, 2021.

\bibitem[Pon91]{ponomarenko}
I.N. Ponomarenko.
\newblock The isomorphism problem for classes of graphs closed under
  contraction.
\newblock {\em Journal of Soviet Mathematics}, 55(2):1621--1643, 1991.

\bibitem[PSS{\etalchar{+}}21]{PilipczukSSTV21}
Michał Pilipczuk, Nicole Schirrmacher, Sebastian Siebertz, Szymon Toruńczyk,
  and Alexander Vigny.
\newblock Algorithms and data structures for problems definable in first-order
  logic with connectivity predicates, 2021.
\newblock Manuscript.

\bibitem[RS03]{GM16}
Neil Robertson and Paul~D. Seymour.
\newblock Graph {M}inors {XVI}. {E}xcluding a non-planar graph.
\newblock {\em J. Comb. Theory, Ser. B}, 89(1):43--76, 2003.

\bibitem[Wei66]{Weinberg}
H.~Weinberg.
\newblock A simple and efficient algorithm for determining isomorphism of
  planar triply connected graphs.
\newblock {\em Circuit Theory}, 13:142--148, 1966.

\bibitem[Wie20]{Wiebking20}
Daniel Wiebking.
\newblock Graph isomorphism in quasipolynomial time parameterized by treewidth.
\newblock In {\em 47th International Colloquium on Automata, Languages, and
  Programming, {ICALP} 2020}, volume 168 of {\em LIPIcs}, pages 103:1--103:16.
  Schloss Dagstuhl --- Leibniz-Zentrum f{\"{u}}r Informatik, 2020.

\end{thebibliography}

\end{document}